
\newif\ifshowproofs
\showproofsfalse

\newif\iffixlater
\fixlaterfalse

\newif\iftr
\trtrue
\iftr\showproofstrue\fi

\newcommand{\commentout}[1]{} 

\newcommand{\numof}{\#}

\iftr
\documentclass{article} 
\else
\documentclass{ieee}
\fi

\iftr
\newenvironment{proof}{{\bf Proof:} }{\qed}
\fi

\usepackage{amsmath,amssymb}
\usepackage{graphicx}
\usepackage{xcolor}
\usepackage{ifthen}
\usepackage[square]{natbib}
\usepackage{enumerate} 
\usepackage{centernot} 
\bibliographystyle{abbrvnat}
\usepackage[bookmarks=true,pdfpagelabels=true]{hyperref}
\hypersetup{
    bookmarksnumbered=true,         
    plainpages=false,
    final=true,
    unicode=false,          %
    breaklinks=true,
    colorlinks=false,       %
    pdfborder={0 0 0},      %
}

\newlength{\diagsize}
\setlength{\diagsize}{0.8cm}

\newcommand{\flowsto}{\ensuremath{\rightarrowtail}}
\newcommand{\obsflowsto}{\ensuremath{\raisebox{-.35ex}[0pt][0pt]{$\stackrel{o}{\rightarrowtail}$}}}
\newcommand{\obsflowstoin}[2]{\ensuremath{#1\obsflowsto{#2}}}
\newcommand{\flowstoin}[3][]{\flowstoinf[#1]{#2}{#3}{\flowsto}}
\newcommand{\flowstoinf}[4][]{\ensuremath{#2 %
\ifthenelse{\equal{#1}{}}{#4}{\raisebox{-.35ex}{$\stackrel{\mbox{\raisebox{-.5ex}[1ex][0pt]{\scriptsize $#1$}}}{#4}$}}%
 #3}}
\newcommand{\notflowstoin}[3][\flowsto]{\ensuremath{#2{\centernot{#1}}#3}}
\newcommand{\machinespec}[1]{\ensuremath{\mathtt{#1}}}
\newcommand{\obs}{\machinespec{obs}}
\newcommand{\step}{\machinespec{step}}
\newcommand{\dom}{\machinespec{dom}}

\newcommand{\run}[2]{\ensuremath{#1 \cdot #2}}
\newcommand{\ta}[1]{\ensuremath{\mathtt{ta}_{#1}}}

\newcommand{\TFF}{FTA}
\newcommand{\tff}[1]{\ensuremath{\mathtt{fta}_{#1}}}

\newcommand{\view}[1]{\ensuremath{\mathtt{view}_{#1}}}

\newcommand{\sysvar}{\ensuremath{M}}

\newcommand{\archvar}{\ensuremath{\mathcal{A}}}

\newcommand{\concat}{\ensuremath{\,\hat{~}}\,}
\newcommand{\aconcat}{\ensuremath{\circ}}

\newcommand{\compose}{\ensuremath{\circ}}

\newcommand{\restrict}{\upharpoonright}

\newcommand{\Nat}{\ensuremath{\mathbb{N}}}

\newcommand{\note}[1]{{\color{blue}{\bf [[ #1 ]]}}}

\newcommand{\rimp}{\Rightarrow} 
\newcommand{\knows}[2][K]{\ensuremath{{#1}_{#2}}}

\newcommand{\sat}{\ensuremath{\vDash}}
\newcommand{\interpsysf}[3]{#1, #2, #3}
\newcommand{\interpsys}{\interpsysf{\sysvar}{\pi}{\alpha}}
\newcommand{\interpsysvf}[2]{#1, #2}
\newcommand{\interpsysv}{\interpsysvf{\sysvar}{\pi}}

\newcommand{\refines}{\leq}
\newcommand{\srefines}{\preceq}
\newcommand{\trefines}{\sqsubseteq}

\newcommand{\source}{{\cal S}}
\newcommand{\intn}{{\tt  I}}

\newcommand{\intarch}{{\cal AI}}

\newcommand{\accesscontrolname}[1]{\ensuremath{\mathtt{#1}}}
\newcommand{\contents}{\accesscontrolname{contents}}
\newcommand{\alter}{\accesscontrolname{alter}}
\newcommand{\observe}{\accesscontrolname{observe}}

\newtheorem{definition}{Definition}
\newtheorem{lemma}{Lemma}
\newtheorem{theorem}{Theorem}
\newtheorem{proposition}{Proposition}

\newcommand{\fdefault}{f^\mathit{def}}

\newenvironment{noshow}{\setbox0=\vbox\bgroup}{\egroup}

\newenvironment{example}{\medskip \noindent {\bf Example:}}{\hfill$\Box$\medskip}

\renewenvironment{proof}%
{\ifshowproofs\noindent\emph{Proof:}~\else\begin{noshow}\fi}%
{\ifshowproofs\hfill$\Box$\medskip\else\end{noshow}\fi}
\newenvironment{simpleproof}{\medskip \noindent\emph{Proof:}~}{\hfill$\Box$\medskip}
\newenvironment{proofof}[1]{\medskip\noindent\textbf{Proof of {#1}:}~}{\hfill$\Box$\medskip}

\newcommand{\inff}[1]{\mathtt{T}_{#1}}

\title{Using Architecture to Reason About Information Security%
\footnote{Version of \today. An early version of this paper circulated 
in 2009 and has been cited in the literature. The present version 
includes a significant reworking of results on architectural refinement from that version. 
A version of the paper introducing the access control implementation results 
was presented at the Layered Assurance Workshop at ACSAC 2012. The present 
version adds to these earlier versions full proofs of all results and an extended set of 
examples.}}
\author{Stephen Chong \\ 
  {Harvard University} 
\and Ron van der Meyden \\
{UNSW Australia}}

\date{}

\begin{document}

\maketitle

\begin{abstract}
We demonstrate, 
by a number of examples, that 
information-flow security properties can be proved 
from abstract architectural descriptions, that describe 
only
the causal structure of a system and local properties of 
trusted components. 
We specify these architectural descriptions of systems 
by generalizing intransitive noninterference policies to 
admit the ability to filter information passed between communicating domains.
A notion of refinement of such 
system
architectures 
is developed that 
supports top-down development of architectural specifications 
and proofs by abstraction of information security properties.    
We also show that, in a  concrete setting where the causal structure is enforced
by access control, a static check of the access control setting plus local verification of the 
trusted components is sufficient to prove that a generalized intransitive noninterference 
policy is satisfied.
\end{abstract}

\iftr
\tableofcontents
\fi

\section{Introduction}
 
System architectures are high-level designs that describe the overall
structure of a system in terms of its components and their
interactions. 
Proposals for architectural modeling languages 
(e.g., 
\citet{aadlv2} and Acme \citep{Garlan00AcmeChapter}) vary with
respect to their level of detail and contents, but at the most
abstract level, architectures 
specify the {\em 
causal 
structure
of a system}.

The MILS (Multiple Independent Levels of Security and Safety)
initiative \citep{AHOT05,vBCLTU,BDRF08} of the US Air Force proposes to
use architecture as a key part of the assurance case for
high-assurance systems.  The details of the MILS vision are still
under development but, as articulated by \citet{BDRF08}, it encompasses
a 2-level design process, consisting of a policy level and a resource
sharing level. 

At the policy level, the system is described by an architecture in the
form of a graph, in which vertices correspond to components and the
edges specify 
permitted communication between 
components.  In this respect, the architecture is like an {\em
intransitive noninterference} security policy
\citep{HaighY87,Rushby_92,wiini}. At the policy level, one might also specify
which components are trusted, and the local policies that these
components are trusted to enforce. 
According to the MILS vision, building a system according to the
architecture, by composing components that satisfy their local
policies, should result in 
the system satisfying {\em global} security and safety properties.

At the resource sharing level, MILS envisages the use of a range of
infrastructural mechanisms to ensure that the architectural
information-flow policy is enforced despite components sharing
resources such as processors, file systems, and network links.  
These mechanisms might include physical isolation, separation
kernels, periods processing, cryptography and separating network
infrastructure.  
It is intended that this infrastructure will be developed to a high level of
assurance, so that a systems assurance case can be obtained by the
composition of the assurance cases for trusted components and systems
infrastructure. It is hoped this will enable a COTS-like market for
infrastructural mechanisms and trusted components.

The key contribution of this paper is to demonstrate, through several
examples, that it is in fact 
possible, as envisaged in the MILS literature, to derive interesting
information security properties compositionally from a high-level
specification of trusted components and their
architectural structure. 
We
focus
 on compositional reasoning about in\-for\-ma\-tion-fl\-ow security
properties.

We present a framework that allows the specification of a system
architecture with local constraints on some system components. 
To give a precise meaning to the architectural structure, we 
extend 
the semantics for intransitive noninterference developed by
\citet{wiini}.  An architectural interpretation of this semantics has
previously been given 
\citep{arch-refinement}.  
In order to express constraints on trusted components, we 
extend architectures by labeling edges between components with functions
that further restrict the information permitted to flow along edges.
One of the contributions of the paper is to give a formal semantics to the
enriched 
architectures that include these new types of edges.
We also develop a theory of refinement for these enriched architectures,
which enables top-down, correctness-preserving development of 
architectural specifications.
It also
enables simple proofs of information security properties on complex 
architectures to be obtained 
using
an abstraction of that 
architecture. 

We demonstrate the use of the framework through examples motivated by
systems with interesting security requirements.
These include 
multi-level secure databases,
the Starlight Interactive Link
\citep{starlight-interactive},
a trusted downgrader, 
and a simple electronic election system.

In each example, we identify an architectural structure and a
mathematically precise set of local 
constraints
 on the trusted
components.  We then show that information-flow properties expressed
in a logic of knowledge 
arise
as a consequence of the interaction of the local 
constraints and the architectural structure.
Our results show that for {\em any} system that is compliant with the
architecture, if the trusted components satisfy their local 
constraints
then the system 
 satisfies the global information-flow
properties.

The information security properties presented in the examples provide information-theoretic and application-specific guarantees. Thus, there are no covert channels that can violate the information security properties, and  the negation of each information security property would constitute an application-specific attack.

Only a few 
examples have been presented to date to formally justify the MILS 
approach to high-assurance secure systems development. One example 
is developed in \citet{GreveWV03}, but with respect to 
a more concrete model (based on a separation kernel formal security 
policy that deals with access control on memory segments) than the abstract, ``noninterference''
style semantics we consider. 
Our policy level model is more abstract, and allows greater flexibility for implementations. 
However, we also consider a more concrete model, 
systems with structured state subject to ``reference monitor conditions"  \citep{Rushby_92}. 
We show that in this setting, to prove that a system complies with one of our extended architectures, 
it suffices to check a simple condition on the access control setting and to  prove local 
properties of the trusted components. 

By developing an abstract semantics for architectures and specifications of trusted
components, and by developing additional examples, our work 
advances the case that 
global information-flow security properties can be  derived from a high-level systems 
architecture and local constraints on trusted components within this architecture, 
in the style of reasoning envisaged by \citet{BDRF08}.

The structure of the paper is as follows. In
Section~\ref{sec:archandsem} we review architectures 
and 
their semantics.
In Section~\ref{sec:infosec}, we introduce the epistemic logic we use
to express information security properties.  In
Section~\ref{sec:extended} we extend architectures with \emph{filter
  functions} that allow fine-grained specification of what information
flows between components. The extended architectures enable the proof
of additional information security properties.
Section~\ref{sec:refine} extends 
previous work on 
architectural refinement to account
for filter functions. 
The concrete model based on access control is developed in
Section~\ref{sec:access-control} and 
we consider possible platforms and techniques that might be used to
show the access control model holds in Section~\ref{sec:enforce}.
We discuss related work in Section~\ref{sec:related}.
Section~\ref{sec:conclusion} concludes.
We use examples throughout, to illustrate and
motivate the definitions and results.

\section{Architectures and semantics}\label{sec:archandsem}

Architectures give a policy level description of the 
structure of a system. 
We begin with a simple notion of architecture, following \citet{arch-refinement}. 
A richer notion will be introduced later. 

An architecture is a pair $\archvar=(D, \flowsto)$, where $D$ is a set
of security domains, and the binary relation $\flowsto \subseteq D
\times D$ is an \emph{information-flow policy}. The relation
$\flowsto$ is reflexive but not necessarily transitive. Intuitively,
information is allowed to flow from domain $u$ to domain $v$ only if
$\flowstoin{u}{v}$. The relation is reflexive as it is assumed that
information flow within a domain
cannot be prevented, so 
 is always allowed.

\iftr 
In the literature on information flow policies, 
domains are generally understood to correspond to security levels. 
We use a more general interpretation, in which 
domains can may also correspond to system components or agents in the system. 
In an implementation of an
architecture, separate domains do not necessarily utilize separate
resources. Hardware, code, and data may be shared between
domains. Indeed, a key challenge is ensuring that the information-flow
policy is respected despite the shared use of resources.
\fi 

\newcommand{\hlarch}{\ensuremath{\mathcal{HL}}}

\subsection{Example: {\hlarch} architecture}
The architecture $\hlarch = (\{{H,L\}}, \{(L,L),$ $(H,H),$ $(L,H)\})$ consists
of two security domains $H$ and $L$, and the in\-for\-ma\-tion-flow policy
indicates that information is allowed to flow from $L$ to $H$, in
addition to the reflexive information flows. We can depict $\hlarch$
graphically, indicating security domains with rectangles, and the
information-flow policy with arrows. We omit arrows for reflexive
information flows.

\begin{center}
  \includegraphics[height=\diagsize]{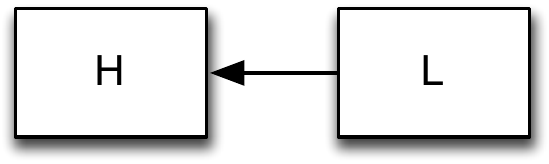}
\end{center}

\subsection{Example: Hinke-Schaefer} \label{sec:hinke-schaefer}

A variety of architectures have been proposed for 
multi-level secure database management systems (MLS/DBMS) \citep{Thuraisingham05}. 
In the Hinke-Schaefer architecture \citep{HinkeS75}, 
several (untrusted) single-level DBMSs are composed together in
a trusted operating system. Each user interacts with a single-level
DBMS. The operating system enforces access control between the
single-level DBMSs, allowing more restrictive DBMSs to read the
storage files of less restrictive DBMSs, but not vice versa.

\newcommand{\hsarch}{\ensuremath{\mathcal{HS}}}
\newcommand{\Hdbms}{\ensuremath{H_{\mathit{DBMS}}}}
\newcommand{\Ldbms}{\ensuremath{L_{\mathit{DBMS}}}}
\newcommand{\Huser}{\ensuremath{H_{\mathit{user}}}}
\newcommand{\Luser}{\ensuremath{L_{\mathit{user}}}}

The following diagram shows architecture $\hsarch$, 
which represents the Hinke-Schaefer architecture for two security levels
at the  MILS policy level.
\begin{center}
  \includegraphics[height=3.5\diagsize]{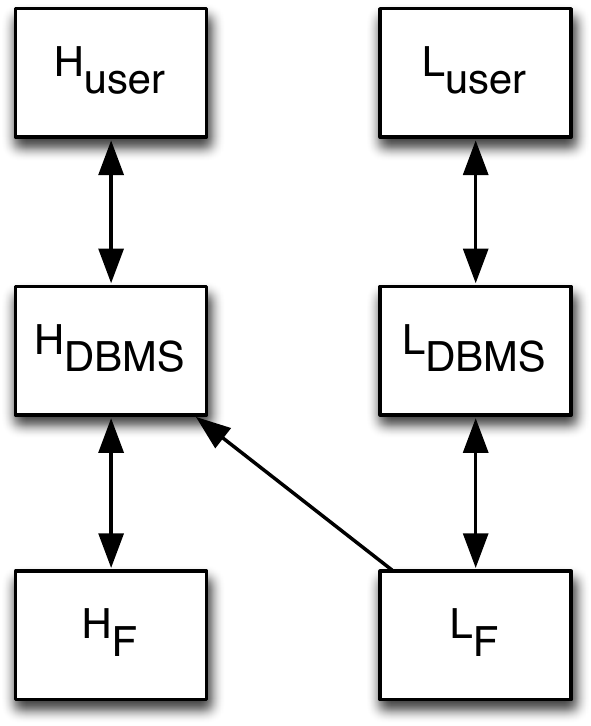}
\end{center}
Domains $\Huser$ and $\Luser$ represent users of a
high-security 
and
low-security DBMS respectively; they interact
with the single-level DBMSs $\Hdbms$ and $\Ldbms$
respectively. The single-level DBMSs store their data in database
files denoted $H_{F}$ and $L_{F}$. Note that information is allowed to
flow to $\Hdbms$ from both $H_{F}$ and $L_{F}$, as the high-security
DBMS is allowed to read the storage files of both the high-security
and low-security DBMSs.

The Hinke-Schaefer architecture is also known as the ``operating
system providing mandatory access control'' architecture 
\citep{Thuraisingham05}, as the operating system is trusted to
enforce the information flows specified in the architecture.
This amounts to a decision to implement the policy level 
architecture $\hsarch$ at the resource sharing level 
by means of a trusted separation kernel.

\subsection{Machine model}

To specify what it means for an implementation to satisfy an
architecture, we must first define what an implementation is.  We use
the \emph{state-observed} machine model~\citep{Rushby_92}, which
defines deterministic state-based machines. 
A machine has a set of actions $A$, and each
action is associated with a security domain. Intuitively, if action
$a$ is associated with domain $u$, then $a$ represents a decision,
choice, or action taken by the system component represented by $u$.
Actions deterministically alter the state of the machine, and we
assume that the observations of each security domain are determined by
the current machine state.

A machine is a tuple
$\sysvar =\langle S, s_0, A,D, \step, \obs, \dom\rangle$ 
where $S$ is a set of states, $s_0\in S$ is the
\emph{initial state}, $A$ is a set of \emph{actions}, 
$D$ is a set of domains, 
$\step:S \times
A \rightarrow S$ is a deterministic transition 
function, 
$\dom: A
\rightarrow D$ associates a domain with each action, and
observation function $\obs: D \times S \rightarrow O$ describes for
each state what observations can be made by each domain, for some set
of observations $O$.

We assume that it is possible to execute any action in any state: the
function $\step$ is total.  Given sequence of actions $\alpha \in
A^*$, we write $\run{s}{\alpha}$ for the state reached by performing
each action in turn, starting in state $s$. We define
$\run{s}{\alpha}$ inductively defined using the transition function
$\step$, by 
\begin{align*}
 \run{s}{\epsilon} &= s \\
\run{s}{\alpha a} &= \step(\run{s}{\alpha}, a)~
\end{align*}
for $\alpha \in A^*$ and $a\in A$. 
 (Here $\epsilon$ denotes the empty sequence.)
For notational convenience, we write $\obs_u$ for the function
$\obs(u, \cdot)$, and $\obs_u(\alpha)$ for
$\obs_u(\run{s_0}{\alpha})$, where $\alpha \in A^*$.

Given a sequence $\alpha \in A^*$, the \emph{view} of 
a group of domains $G$ 
of
$\alpha$ is the sequence of 
the group's 
observations and the actions that
belong to 
members of the group. 
Intuitively, 
$G$'s 
view is the history of its
observations and the actions it has performed.
The function $\view{G}$ defines the view of domain $G$.
We first define the observation of group $G$ at state $s$ by 
$\obs_G(s) = \langle \obs_u(s)\rangle_{u\in G}$, i.e. the tuple of 
observations of individuals $u\in G$. 
The view function is then 
defined inductively by 
\begin{align*}
  \view{G}(\epsilon) &= \obs_G(s_0) \\ 
  \view{G}(\alpha a) &= 
\begin{cases}
\view{G}(\alpha) \,a\, 
\obs_G(
{\alpha a})  & \text{if }\dom(a) \in G\\   
\view{G}(\alpha) \aconcat \obs_G(
{\alpha a})  & \text{otherwise}  ~
\end{cases}
\end{align*}
for $\alpha \in A^*$ and $a\in A$. 
To capture that the semantics is asynchronous and insensitive to stuttering of observations, the 
definition uses the 
absorptive 
concatenation operator
$\aconcat$: for set $X$, sequence $\alpha \in X^*$, and element $x\in
X$, $\alpha \aconcat x = \alpha$ if $x$ is equal to the last element
of $\alpha$, and $\alpha \aconcat x = \alpha x$ otherwise.
When $G = \{u\}$ is a singleton, 
we write $\view{u}$ for $\view{G}$.

Finally, for any sequence of actions $\alpha \in A^*$, we write $\alpha
\restrict G$ for  the subsequence of $\alpha$ of actions whose domain is
in the set $G$.

\subsection{Semantics}

A machine satisfies an architecture if, in all possible executions of
the machine, information flow 
is in accordance with the
architecture's information-flow policy.  
We formalize this using an approach proposed by \citet{wiini}, 
which involves the use of a concrete operational model to  define an upper bound
on the information that a domain is permitted to learn.

The operational model is captured using a function $\ta{u}$, 
which maps a sequence of actions $\alpha \in A^*$ to a representation of 
the maximal information that domain $u$ is permitted to have after $\alpha$, 
according to the policy $\flowsto$. 
(Term ``ta'' is derived from \emph{transmission} of information about \emph{actions};
the definition corrects problems identified by \citet{wiini} with earlier
``intransitive purge'' based semantics \citep{Rushby_92}.) 

An action of $v$ should convey information to $u$
only if $\flowstoin{v}{u}$. Moreover, the information conveyed should
be no more than the information that $v$ is permitted to have. Given
machine 
$\sysvar =\langle S, s_0, A, D, \step, \obs, \dom\rangle$,
function $\ta{u}$ is defined inductively by $\ta{u}(\epsilon) =
\epsilon$, and, for $\alpha \in A^*$ and $a \in A$,
\begin{align*}
  \ta{u}(\alpha a) &= 
  \begin{cases}
    \ta{u}(\alpha) & \text{if } \notflowstoin{\dom(a)}{u} \\
    (\ta{u}(\alpha), \ta{\dom(a)}(\alpha), a) & \text{otherwise}~. 
  \end{cases}
\end{align*}

Note that if information is not allowed to flow from $\dom(a)$ to $u$,
then $\ta{u}(\alpha a) = \ta{u}(\alpha)$, 
i.e., the maximal information permitted to $u$ does not change. 
If information is allowed to
flow from $\dom(a)$ to $u$, then the information conveyed is at most
the information that domain $\dom(a)$ is permitted to have
(i.e., $\ta{\dom(a)}(\alpha)$), 
and the action $a$ that was performed.
Thus, in this case we add the tuple $(\ta{\dom(a)}(\alpha),a)$ to the maximal information 
$\ta{u}(\alpha)$ that $u$ was permitted to have before the action $a$ was performed.

A machine is 
\emph{TA-compliant} with an architecture if it has an appropriate set of domains, and 
for each domain $u$, what $u$ observes in
state $\run{s_0}{\alpha}$ is determined by $\ta{u}(\alpha)$. That is,
$\ta{u}$ describes the maximal information that $u$ may learn:
if in two runs $\alpha$ and $\alpha'$ the maximal information that $u$ may learn is identical ($\ta{u}(\alpha) = \ta{u}(\alpha')$), then $u$'s observations in each run must be identical ($\obs_{u}(\alpha) = \obs_{u}(\alpha')$).

\begin{definition}[TA-compliance] 
A system 
$\sysvar$ 
  is TA-com\-pli\-ant with 
  architecture $(D,
  \flowsto)$ if 
  it has domains $D$ and for all $u\in D$ 
  and all sequences $\alpha, \alpha'
  \in A^*$ such that $\ta{u}(\alpha) = \ta{u}(\alpha')$, we have
  $\obs_{u}(\alpha) = \obs_{u}(\alpha')$.
\end{definition}

\iftr

TA-compliance requires that if $\ta{u}(\alpha) = \ta{u}(\alpha')$ then
the observations of $u$ in state $\run{s_0}{\alpha}$ and in state
$\run{s_0}{\alpha'}$ are equal. The following lemma shows that in fact
TA-compliance implies that if $\ta{u}(\alpha) = \ta{u}(\alpha')$ then
$\view{u}(\alpha) = \view{u}(\alpha')$.

\begin{lemma}\label{lem:ta-view}
  If $\sysvar$ is TA-compliant with respect to architecture
  $(D, \flowsto)$, then for all agents $u$ and all $\alpha,
  \alpha' \in A^*$ such that $\ta{u}(\alpha) = \ta{u}(\alpha')$ we
  have $\view{u}(\alpha) = \view{u}(\alpha')$.
\end{lemma}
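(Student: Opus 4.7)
The plan is to prove the claim by strong induction on $n = |\alpha| + |\alpha'|$. The base case $n = 0$ forces both sequences to be empty, so $\view{u}(\alpha) = \obs_u(s_0) = \view{u}(\alpha')$ immediately. For the inductive step I will perform a case analysis driven by the structure of $\ta{u}$, peeling off trailing actions to reduce to shorter pairs.

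The central supporting observation is that, under TA-compliance, any trailing action $a$ with $\notflowstoin{\dom(a)}{u}$ leaves $u$'s view unchanged. Indeed, by the definition of $\ta{u}$, $\ta{u}(\beta a) = \ta{u}(\beta)$, so TA-compliance yields $\obs_u(\beta a) = \obs_u(\beta)$; a straightforward side induction shows that $\view{u}(\gamma)$ always ends with $\obs_u(\gamma)$, whence the absorptive concatenation in the definition of $\view{u}(\beta a)$ collapses and $\view{u}(\beta a) = \view{u}(\beta)$.

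With this in hand I split the inductive step into two cases. If either sequence, say $\alpha = \beta a$, ends in an action with $\notflowstoin{\dom(a)}{u}$, then $\ta{u}(\beta) = \ta{u}(\alpha) = \ta{u}(\alpha')$; the inductive hypothesis applied to the shorter pair $(\beta, \alpha')$ gives $\view{u}(\beta) = \view{u}(\alpha')$, and combining with $\view{u}(\alpha) = \view{u}(\beta)$ closes this case. This case also subsumes the situation where one sequence is empty, since any nonempty sequence whose last action is visible to $u$ has a tuple-shaped $\ta{u}$-value, which cannot equal $\epsilon$. Otherwise both $\alpha = \beta a$ and $\alpha' = \beta' a'$ end in visible actions, so $\ta{u}(\alpha)$ and $\ta{u}(\alpha')$ are tuples; equating their components forces $a = a'$ and $\ta{u}(\beta) = \ta{u}(\beta')$. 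The inductive hypothesis then gives $\view{u}(\beta) = \view{u}(\beta')$, TA-compliance gives $\obs_u(\alpha) = \obs_u(\alpha')$, and a single application of the defining clause for $\view{u}$ (the $\dom(a) \in G$ clause or the absorptive clause, according to whether $\dom(a) = u$) yields $\view{u}(\alpha) = \view{u}(\alpha')$. The main bookkeeping obstacle is the interaction between the TA-equality assumption and the absorptive concatenation operator $\aconcat$, but the side observation that $\view{u}(\gamma)$ ends in $\obs_u(\gamma)$ disposes of it cleanly.
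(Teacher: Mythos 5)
Your proof is correct and takes essentially the same route as the paper's: induction on $|\alpha|+|\alpha'|$, a case split on whether the trailing action's domain may interfere with $u$, TA-compliance supplying equality of observations, and the absorptive concatenation handled via the observation that $\view{u}(\gamma)$ always ends in $\obs_u(\gamma)$ (a fact the paper uses implicitly). The only step you leave tacit is that in the $\notflowstoin{\dom(a)}{u}$ case, reflexivity of $\flowsto$ forces $\dom(a)\neq u$, which is what licenses applying the absorptive clause of $\view{u}$ rather than the clause for $u$'s own actions; the paper states this explicitly and it deserves one line.
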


\begin{proof}
  By induction on $|\alpha| + |\alpha'|$. The base case is
  trivial. Suppose that the result holds for all sequences of shorter
  combined length, and consider strings $\alpha a$ and $\alpha'$ such that 
  $\ta{u}(\alpha a) = \ta{u}(\alpha')$.

We consider several cases. 

\begin{itemize}
\item If $\notflowstoin{\dom(a)}{u}$ then $\ta{u}(\alpha) =
  \ta{u}(\alpha a) = \ta{u}(\alpha')$, so $\view{u}(\alpha) =
  \view{u}(\alpha')$, by the induction hypothesis, and
  $\obs_{u}(\alpha a) = \obs_u(\alpha)$, by TA-compliance.  By
  reflexivity of $\flowsto$, we have $\dom(a) \neq u$, so it follows
  that
  \begin{align*}
    \view{u}(\alpha a) &= \view{u}(\alpha) \aconcat \obs_{u}(\alpha a) \\
 &= \view{u}(\alpha)\\
&= \view{u}(\alpha')
  \end{align*}

\item If $\flowstoin{\dom(a)}{u}$ 
  then $\ta{u}(\alpha a) = (\ta{u}(\alpha), \ta{\dom(a)}(\alpha), a) =
  \ta{u}(\alpha')$. This means that $\alpha' \ne \epsilon$. Assume
  $\alpha' = \beta b$.  Without loss of generality, we may assume
  $\flowstoin{\dom(b)}{u}$, since otherwise we may apply the previous
  case with the roles of $\alpha a$ and $\beta b$ reversed.  Let
  $\ta{u}(\beta b) = (\ta{u}(\beta), \ta{\dom(b)}(\beta), b)$.  Since
  $\ta{u}(\alpha a) = \ta{u}(\beta b)$, we have $a = b$, and by TA-compliance we have
  $\obs_u(\alpha a) = \obs_u(\beta b)$. Also, $\ta{u}(\alpha) =
  \ta{u}(\beta)$, and so by the inductive hypothesis,
  $\view{u}(\alpha) = \view{u}(\beta)$.
  \begin{itemize}
  \item If $\dom(a) \ne u$, then
  \begin{align*}
    \view{u}(\alpha a) &= \view{u}(\alpha) \aconcat \obs_u(\alpha a) \\ 
       &= \view{u}(\beta) \aconcat \obs_u(\beta b) \\
       &= \view{u}(\beta b).
 \end{align*}

 \item If $\dom(a) = u$ then 
  \begin{align*}
    \view{u}(\alpha a) &= \view{u}(\alpha)\, a\, \obs_u(\alpha a) \\ 
       &= \view{u}(\beta) \,b\, \obs_u(\beta b) \\
       &= \view{u}(\beta b).
 \end{align*}
\end{itemize}
\end{itemize}
\end{proof}
\fi %

\section{Information security properties}\label{sec:infosec}

We 
use 
a 
(fairly standard) 
propositional epistemic logic \citep{fhmv95} 
 to express information
security properties. 
The syntax 
is defined as follows:
\begin{align*}
\phi, \psi &::= \top \mid p \mid \neg \phi \mid \phi \wedge \psi \mid
\knows{G} \phi \\
G & \text{ ranges over groups of domains.} 
\end{align*}
In case $G= \{u\}$ is a singleton, we write simply $\knows{u}\phi$ for $\knows{G}\phi$. 

Formulas $\top$, $p$, $\neg \phi$, and $\phi \wedge \psi$ are standard
from propositional logic: $\top$ is always satisfied, and $p$ is a
propositional constant. 
Epistemic formula 
$K_G\phi$ says that the group of domains $G$, considered as a single domain, knows $\phi$.

Formulas are interpreted using a possible worlds semantics, where a
world is a sequence of actions $\alpha \in A^*$. 
A {\em proposition} is a set $X\subseteq A^*$.  We say proposition $X$
is \emph{non-trivial} if $X\ne \emptyset$ and $X\ne A^*$.  An
\emph{interpretation function} $\pi$ is a function from propositional
constants to propositions. 

We define the semantics of the logic using satisfaction relation
$\interpsys \sat \phi$, which intuitively means that formula $\phi$ is
true given interpretation function $\pi$, and machine $\sysvar$ that
has executed sequence $\alpha \in
A^*$. Figure~\ref{fig:satisfaction-relation} defines 
relation 
$\interpsys \sat \phi$.  We write $\interpsysv \sat \phi$ if
for all $\alpha \in A^*$ we have $\interpsys \sat \phi$.
We say that $\phi$ is {\em valid} if $\interpsysv \sat \phi$ for all 
systems $M$ and interpretations $\pi$.

\begin{figure}
\[
\begin{array}{ll}
\interpsys \sat \top \\
\interpsys \sat p
    &\text{iff } \alpha \in \pi(p) \\
\interpsys  \sat \neg \phi
    &\text{iff } \interpsys \centernot\sat \phi\\
\interpsys  \sat \phi \wedge \psi
    &\text{iff } \interpsys \sat \phi \text{ and } \interpsys \sat \psi \\
\interpsys  \sat 
\knows{G} \phi    &\text{iff } \interpsysf{\sysvar}{\pi}{\alpha'} \sat \phi \text{ for all } 
 \alpha' \in  A^* \text{ s.t. } \alpha \approx_G\alpha' 
\end{array}\]
  \caption{Epistemic Logic Semantics}
  \label{fig:satisfaction-relation}
\end{figure}

To interpret 
epistemic formulas
$\knows{G}\phi$, 
we 
use 
an indistinguishability relation for
each group of domains $G$, that describes what sequences of actions $G$ considers
possible given its view of the actual sequence of actions. Two
sequences of actions $\alpha \in A^*$ and $\alpha' \in A^*$ are
indistinguishable to group of domains $G$, written $\alpha \approx_{G} \alpha'$,
if $G$'s 
views
of the two sequences are identical:
$\alpha \approx_{G}\alpha' \iff 
\view{G}(\alpha) = \view{G}(\alpha').$
\subsection{Group knowledge vs. Distributed knowledge}

We note that 
the 
notion of group knowledge 
$\knows{G}$ 
differs from 
{\em distributed knowledge} \citep{fhmvbook},
the notion most commonly used in the literature on 
epistemic logic for the knowledge that a group would have if they pooled
their local information. 
The distributed knowledge operator $\knows[D]{G}$ for group $G$ is
given semantics  by 
\begin{align*}
\interpsys  \sat \knows[D]{G} \phi
    &~ \text{iff } \interpsysf{\sysvar}{\pi}{\alpha'} \sat \phi \text{ for all } \alpha' \in  A^* \text{ s.t. } \alpha \approx^{D}_G \alpha' 
\end{align*}
using a different indistinguishability relation
$ \approx^{D}_G$, defined as the intersection of  $\approx_u$  for $u \in G$. 
The reason we use group knowledge is that it proves to have a stronger relationship to 
a type of architectural abstraction  that we consider below. 
The two notions are related by the following 
result.

\begin{lemma} 
For $u\in G$, the formulas $\knows{u}\phi \rimp \knows[D]{G}\phi$ and 
$\knows[D]{G}\phi \rimp \knows{G}\phi$ are valid. 
\end{lemma}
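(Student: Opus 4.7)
The plan is to recast both implications as inclusions between the indistinguishability relations and then to discharge each inclusion. Writing $\alpha \approx_u \alpha'$ for $\alpha \approx_{\{u\}} \alpha'$, the two claims become: (i) $\approx^{D}_G \,\subseteq\, \approx_u$ for each $u \in G$, and (ii) $\approx_G \,\subseteq\, \approx_u$ for each $u \in G$. Once these inclusions are established, the validities follow by unwinding the semantics of $\knows{u}$, $\knows[D]{G}$, and $\knows{G}$: a smaller indistinguishability relation makes the universal quantification in the semantics range over fewer alternatives, so it yields a stronger knowledge operator.

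The first inclusion (i) is immediate from the definition of $\approx^{D}_G$ as the intersection $\bigcap_{v \in G} \approx_v$: for any $u \in G$, this intersection is a subset of $\approx_u$. This gives validity of $\knows{u}\phi \rimp \knows[D]{G}\phi$ without any work on the machine model.

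The main content lies in (ii), where I need to argue that the joint view $\view{G}$ already determines each individual view $\view{u}$ with $u \in G$. I would prove this by induction on $|\alpha|$, establishing the stronger claim that $\view{u}(\alpha)$ can be recovered from $\view{G}(\alpha)$ by a fixed deterministic procedure that depends only on $u$, $G$, and $\dom$. Concretely, read $\view{G}(\alpha)$ as an alternating sequence of tuple-observations and actions; project each tuple-observation to its $u$-component; discard every action $a$ with $\dom(a) \neq u$; finally apply absorptive concatenation to the resulting sequence of $u$-observations (collapsing adjacent duplicates, including those that become adjacent after dropping a non-$u$ action). An induction on the length of $\alpha$, splitting on whether the last action lies in $G$ and whether it equals $u$, shows that this procedure applied to $\view{G}(\alpha)$ yields exactly $\view{u}(\alpha)$; hence $\view{G}(\alpha) = \view{G}(\alpha')$ entails $\view{u}(\alpha) = \view{u}(\alpha')$.

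The step I expect to be subtle is reconciling the absorptive concatenation used at the group level with that used at the individual level. A single action of some $v \in G \setminus \{u\}$ may change $\obs_G$ (so $\view{G}$ records it with a new tuple appended) while leaving $\obs_u$ unchanged (so $\view{u}$ absorbs it). Similarly, an action performed outside $G$ may change $\obs_u$ without changing $\obs_G$'s record in a way that separately lists $u$'s observation, because the $u$-component sits inside a tuple. Checking that the projection-then-absorb procedure handles these mismatches correctly in each inductive case is the crux; once that is verified the two validities drop out formally.
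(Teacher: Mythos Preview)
Your proposal is correct and follows essentially the same approach as the paper: the paper also reduces both implications to inclusions between the indistinguishability relations, dispatches the first by definition of $\approx^{D}_G$ as an intersection, and for the second constructs an explicit function $h$ (your ``projection-then-absorb procedure'') with $h(\view{G}(\beta)) = \view{u}(\beta)$, proved by induction on $|\beta|$ with exactly the case analysis on $\dom(a)$ that you anticipate.
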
 

\begin{proof}
That $\knows{u}\phi \rimp \knows[D]{G}\phi$ is valid is a well-known fact 
of epistemic logic. It follows simply from the definitions 
by noting that if $M,\pi, \alpha \models  \neg  \knows[D]{G}\phi$ 
then there exists $\alpha' \approx^D_G \alpha$ such that
$M,\pi, \alpha' \models  \neg \phi$. Since $\alpha' \approx^D_G \alpha$
implies $\alpha' \approx_u \alpha$, we conclude $M,\pi, \alpha \models  \neg  \knows{u}\phi$.  

For $\knows[D]{G}\phi \rimp \knows{G}\phi$, we similarly consider the contrapositive. 
If $M,\pi, \alpha \models  \neg  \knows{G}\phi$
then there exists $\alpha' \approx_G \alpha$ such that 
$M,\pi, \alpha' \models  \neg \phi$. We claim that 
$\alpha' \approx_G \alpha$ implies $\alpha' \approx_u \alpha$ for all 
$u \in G$. It then follows that $\alpha' \approx^D_G \alpha$, and hence 
$M,\pi, \alpha \models  \neg  \knows[D]{G}\phi$.  

To prove the claim, we show that there exists a function $h$ such that for any $\beta \in A^*$,
$h(\view{G}(\beta)) = \view{u}(\beta)$. The domain of this function is the set of possible views of $G$ 
in $M$. These are nonempty sequences of actions $a$ with $\dom (a) \in G$ and tuples 
$\langle \obs_u(s)\rangle_{u\in G}$ of observations. The first and last element of 
such a sequence is a tuple of observations, and the sequence does not contain  
any adjacent actions.   We define the function $h$ on such sequences as follows,
where $\delta$ ranges over sequences produced by $\view{G}$, $a \in A$, 
and $t$ is a tuple of observations indexed by $G$. We write $t_u$ for the component 
of $t$ corresponding to $u\in G$. (In particular, note that  if $t = \obs_G(s)$, then 
$t_u = \obs_u(s)$.)
  \begin{align*}
    h(t) &= t_u \\
     h(\delta \, a \, t) &= 
     \begin{cases}
        h(\delta) \, a\, t_u  & \text{if }\dom(a) = u\\
        h(\delta) \aconcat t_u  & \text{otherwise}
     \end{cases}\\
     h(\delta \, t) &= h(\delta) \aconcat t_u
  \end{align*}

  We prove that $h(\view{G}(\beta)) = \view{u}(\beta)$ by induction
  on $|\beta|$. The base case is trivial. Suppose that
  $h(\view{G}(\beta)) = \view{u}(\beta)$ and consider $\beta b$. 
  
  Suppose first that $\dom(b) \not \in G$, so also $\dom(b) \neq u$. 
  Then, $\view{G}(\beta b) = \view{G}(\beta) \aconcat \obs_G(\beta b)$. 
  \begin{itemize} 
  \item If $\obs_G(\beta b) = \obs_G(\beta)$, then 
  $\view{G}(\beta b)) = \view{G}(\beta)$, and 
  also $\obs_u(\beta b) = \obs_u(\beta)$. 
  Moreover, $\view{u}(\beta b) = \view{u}(\beta) \aconcat \obs_u(\beta b) = \view{u}(\beta)$.  
  Thus,   using the induction hypothesis, we have 
   $h(\view{G}(\beta b)) = h(\view{G}(\beta)) = \view{u}(\beta) = \view{u}(\beta b)$. 
  \item 
  Otherwise, if  $\obs_G(\beta b) \neq  \obs_G(\beta)$, then 
  $\view{G}(\beta b) = \view{G}(\beta)\, \obs_G(\beta b)$, 
  so  $h(\view{G}(\beta b)) = h(\view{G}(\beta))\aconcat \obs_u(\beta b) = 
  \view{u}(\beta)\aconcat \obs_u(\beta b) = \view{u}(\beta b)$. 
\end{itemize} 
  
 Alternately, suppose $\dom(b)\in  G$. Then $\view{G}(\beta b) = \view{G}(\beta)
  \, b \, \obs_{G}(\beta b)$. 
  If $\dom(b) = u$ then
  $h(\view{G}(\beta b)) = h(\view{G}(\beta)) \, b \,
  \obs_u(\beta b)$. If $\dom(b) \ne u$ then $h(\view{G}(\beta b)) =
  h(\view{G}(\beta)) \aconcat \obs_u(\beta b)$. Either way, by the
  inductive hypothesis, we have $h(\view{G}(\beta b)) =
  \view{u}(\beta b)$.

  Thus we have $h(\view{G}(\beta)) = \view{u}(\beta)$ for any
  sequence $\beta$. Let $\alpha,\alpha' \in A^*$, and assume
  $\view{G}(\alpha) = \view{G}(\alpha')$. Then
  \begin{align*}
 \view{u}(\alpha) &=
 h(\view{G}(\alpha))\\
 &=  h(\view{G}(\alpha'))\\
 &= \view{u}(\alpha').
  \end{align*}
\end{proof}

The converse relationship 
$\knows{G}\phi \rimp \knows[D]{G}\phi$ is not valid. 
For example, consider a system $M$ with exactly 
two domains $u,v$, which both make observation $\bot$ at all states. 
Let $G = \{u,v\}$. Consider the proposition $p$ with $\pi(p)$ consisting of all sequences in which 
there is an action of domain $u$ that precedes any action of domain $v$. 
Let $\alpha = a_u a_v$ and $\alpha' = a_v a_u$ 
where $a_u$ is an action of $u$ and $a_v$ is an action of $v$. 
Then $M,\pi,\alpha \models \knows{G}p$, since 
$\view{G}(\alpha) = \view{G}(\beta)$ implies that $\beta = \alpha$. 
However, we have $\view{u}(\alpha) = \bot a_u \bot = \view{u}(\alpha')$, 
and similarly for domain $v$, so $\alpha \approx^D_G\alpha'$. Since $M, \pi, \alpha' \not  \models  p$, 
we obtain that $M,\pi,\alpha \not \models \knows[D]{G}p$.%
\footnote{We remark that the example relies upon the assumption of asynchrony: it 
can be shown that in synchronous  systems we have $\knows{G}\phi \equiv \knows[D]{G}\phi$.}

\subsection{Knowledge and architectural refinement}

The value of 
using 
the less common notion $\knows{G}\phi$ of group knowledge 
rather than 
distributed knowledge $\knows[D]{G}\phi$ 
is 
that it 
captures the way that knowledge properties are 
preserved under  a particular type of architectural abstraction. 
Given a system $\sysvar= \langle S, s_0, A,D_1, \step, \obs, \dom\rangle$
and a surjective mapping $r: D_1 \rightarrow D_2$, define
$r(\sysvar)=$ $\langle S, s_0, A,D_2, \step,$ $\obs',$ $\dom'\rangle$ 
to be the system that is identical to $\sysvar$, except that it has domains $D_2$, and 
the functions $\dom'$ and $\obs'$ are defined 
by $\dom' = r \compose \dom$, and, for $u\in D_2$, 
$\obs'_{u}(s) = \obs_{G}(s)$, where $G = r^{-1}(u)$. 
Intuitively, each domain $u$ in $r(\sysvar)$ corresponds to the group of 
domains $r^{-1}(u)$ in $M$, with every action of a domain in $r^{-1}(u)$
treated as an action of $u$. 
Similarly, for a formula $\phi$ of the epistemic logic, we write $r^{-1}(\phi)$
for the formula obtained by replacing each occurrence of a group $G$ in a modal 
operator in $\phi$ by the group $r^{-1}(G)$. 

The existence of a surjective mapping $r: D_1 \rightarrow
D_2$ is one requirement for architectural refinement, which we discuss
in greater detail in later sections. 
The following result shows that abstracting a system $M$ to $r(M)$ 
(or, conversely, refining $r(M)$ to $M$) preserves satisfaction of formulas, 
subject to a corresponding abstraction on groups being applied in the formulas.

\begin{theorem} \label{thm:Kpullback}
Let $r: D_1 \rightarrow D_2$ be surjective and let $M$ be a system with domains $D_1$. 
Then for all interpretations $\pi$, sequences of actions $\alpha$ of $M$, and formulas $\phi$
(not including distributed knowledge operators $D_G$) 
 for agents $D_2$ 
we have $r(M), \pi,\alpha\models  \phi$ iff $M, \pi,\alpha\models  r^{-1}(\phi)$. 
\end{theorem}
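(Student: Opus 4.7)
The plan is to prove the equivalence by structural induction on $\phi$. The Boolean cases ($\top$, propositional constants, $\neg$, $\wedge$) are immediate, because $r(M)$ and $M$ share the same state space, initial state, transition function, and action set; so for any $\alpha \in A^*$, $\run{s_0}{\alpha}$ is the same state in both systems, and the interpretation $\pi$ evaluates atoms identically. The only non-trivial case is $\phi = \knows{G}\psi$ for $G \subseteq D_2$, which by the inductive hypothesis reduces to showing that the indistinguishability relation $\approx_G$ computed in $r(M)$ coincides with $\approx_{r^{-1}(G)}$ computed in $M$.

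To establish this, I would first observe that the action-filtering parts of the view definitions already agree: since $\dom' = r \circ \dom$, an action $a$ satisfies $\dom'(a) \in G$ iff $\dom(a) \in r^{-1}(G)$. The remaining issue is that $\obs'_G(s) = \langle \obs_{r^{-1}(u)}(s)\rangle_{u \in G}$ is a nested tuple indexed by $G$, whereas $\obs_{r^{-1}(G)}(s) = \langle \obs_v(s) \rangle_{v \in r^{-1}(G)}$ is a flat tuple indexed by $r^{-1}(G)$. These contain exactly the same entries up to a canonical reshaping, since $r^{-1}$ partitions $r^{-1}(G) \subseteq D_1$ into the disjoint blocks $\{r^{-1}(u)\}_{u\in G}$. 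Let $f$ denote this fixed bijection on tuples; then $\obs'_G(s) = f(\obs_{r^{-1}(G)}(s))$ for every $s$.

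I would then prove by induction on $|\alpha|$ that the view of $G$ in $r(M)$ is obtained from the view of $r^{-1}(G)$ in $M$ by applying $f$ pointwise to each observation tuple appearing in the view, while leaving actions untouched. In the inductive step, the case $\dom(a) \in r^{-1}(G)$ and the case $\dom(a) \notin r^{-1}(G)$ each proceed by unfolding both definitions of $\view{\cdot}$; crucially, because $f$ is a bijection applied pointwise, two successive observation tuples are equal in one view iff their $f$-images are equal in the other, so the absorptive concatenation $\aconcat$ behaves consistently on both sides. It follows that $\view{G}^{r(M)}(\alpha) = \view{G}^{r(M)}(\alpha')$ iff $\view{r^{-1}(G)}^{M}(\alpha) = \view{r^{-1}(G)}^{M}(\alpha')$, i.e.\ $\alpha \approx_G \alpha'$ in $r(M)$ iff $\alpha \approx_{r^{-1}(G)} \alpha'$ in $M$. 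Combining this with the inductive hypothesis applied to $\psi$ (whose modalities refer only to groups of $D_2$, and whose $r^{-1}$-rewrite refers to corresponding groups of $D_1$) yields the $\knows{G}$ case.

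The main obstacle is the bookkeeping for the knowledge case: keeping straight the distinction between the reshaping bijection $f$ on observation tuples and the absorptive concatenation, and verifying that $f$ preserves adjacent-equality so that $\aconcat$ produces corresponding views. Once this compatibility is checked, the rest of the argument is a routine induction. No subtlety arises from the restriction to formulas without $\knows[D]{G}$; that restriction is needed precisely because distributed knowledge uses $\approx^D_G = \bigcap_{u \in G} \approx_u$, and the intersection of the pointwise indistinguishabilities on $D_1$ for the refined agents in $r^{-1}(G)$ is generally strictly finer than $\approx_{r^{-1}(G)}$, so the correspondence established here would fail for $\knows[D]{G}$.
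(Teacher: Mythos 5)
Your proposal is correct and follows essentially the same route as the paper's proof: both establish that the view of $G$ in $r(M)$ and the view of $r^{-1}(G)$ in $M$ correspond under the tuple-reshaping map applied pointwise to observations (proved by induction on $|\alpha|$, with the absorptive concatenation handled via the fact that the reshaping preserves and reflects equality of observation tuples), and then conclude by structural induction on $\phi$, with the $\knows{G}$ case reducing to the coincidence of $\approx_G$ in $r(M)$ with $\approx_{r^{-1}(G)}$ in $M$. The only cosmetic difference is that the paper carries both the reshaping and its inverse explicitly (as $f_+$ and $f^{-1}_+$) where you invoke bijectivity once, which changes nothing of substance.
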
  

  \begin{proof} 
   We  first  claim that for all $\alpha,\beta\in A^*$ and groups $G\subseteq D_2$,
   we have   $\view{r^{-1}(G)}^M(\alpha) = \view{r^{-1}(G)}^M(\beta)$ iff 
  $\view{G}^{r(M)}(\alpha) = \view{G}^{r(M)}(\beta)$, where the 
  superscripts indicate the system within which views are computed. 
Note that, by definition, $$\obs_G^{r(M)}(s) =  \langle \obs^{r(M)}_u(s)\rangle_{u\in G} = 
\langle \langle \obs^M_v(s)\rangle_{v\in {r^{-1}(u)}} \rangle_{u\in G}$$ 
and 
$$   \obs_{r^{-1}(G)}^{M}(s) = \langle \obs^M_v(s)\rangle_{v\in r^{-1}(G)} = \langle \obs^M_v(s)\rangle_{v\in r^{-1}(u), ~ u\in G}
\mathpunct.$$
Since these expressions simply group the same collection of values indexed by $v$ in two different ways, 
 there exists functions $f$ and $f^{-1}$ such that 
for all states $s$, we have $f(\obs_{r^{-1}(G)}^{M}(s) )= \obs_G^{r(M)}(s) $ and 
$f^{-1}(\obs_G^{r(M)}(s) ) = \obs_{r^{-1}(G)}^{M}(s)$. Moreover, 
we have for states $s$ and $t$ that 
$\obs_G^{r(M)}(s) = \obs_G^{r(M)}(t)$ iff 
$\obs_{r^{-1}(G)}^{M}(s) = \obs_{r^{-1}(G)}^{M}(t)$. 

Given  a function $h$ on (group) observations, we generalize it to a function $h_+$ on view-like 
sequences of actions and observations, inductively  by 
$h_+(o) = h(o)$, for $o$ an observation,  $h_+(\sigma a o) =  h_+(\sigma) a h(o)$ 
for $\sigma $ a sequence, $a$ an action and $o$ an observation, and 
$h_+(\sigma o_1 o_2) =  h_+(\sigma o_1) \aconcat h(o_2)$ for $\sigma $ a sequence
and $o_1,o_2$ observations. 
   
We now prove the claim by showing that  for all sequences of actions $\alpha$, we have 
$f_+(\view{{r^{-1}(G)}}^M(\alpha)) = \view{G}^{r(M)}(\alpha))$ and $f^{-1}_+(\view{G}^{r(M)}(\alpha)) = \view{{r^{-1}(G)}}^M(\alpha)$. 
The proof is by induction on  $\alpha$. 
We consider just the case of $f_+$.      
   In case $\alpha = \epsilon$, we have 
  $f_+(\view{{r^{-1}(G)}}^M(\alpha)) = f_+(\obs^M_{r^{-1}(G)}(s_0)) = f(\obs^M_{r^{-1}(G)}(s_0)) =
  \obs_G^{r(M)}(s_0) =  \view{G}^{r(M)}(\alpha)$.
  For sequences $\alpha a$, there are two cases. 
  If $\dom^M(a) \in r^{-1}(G)$ then $\dom^{r(M)}(a) = r(\dom^M(a)) \in G$. 
  Thus, in this case, using the inductive hypothesis, 
  \begin{align*} 
  f_+(\view{{r^{-1}(G)}}^M(\alpha a)) 
   & = f_+(\view{{r^{-1}(G)}}^M(\alpha) \, a \, \obs^M_{r^{-1}(G)}(s_0\cdot \alpha a)) \\ 
 &   = f_+(\view{{r^{-1}(G)}}^M(\alpha)) \, a \, f(\obs^M_{r^{-1}(G)}(s_0\cdot \alpha a))\\ 
&   =  \view{G}^{r(M)}(\alpha) \, a \, \obs^{r(M)}_G(s_0\cdot \alpha a) \\ 
& =  \view{G}^{r(M)}(\alpha a)\mathpunct.
\end{align*}
Alternately, if 
$\dom^M(a) \not \in r^{-1}(G)$ then $\dom^{r(M)}(a) = r(\dom^M(a)) \not \in G$. 
In this case, there are two further possibilities. 
If $\obs^M_{r^{-1}(G)}(s_0\cdot \alpha) = \obs^M_{r^{-1}(G)}(s_0\cdot \alpha a)$, 
then also $\obs^{r(M)}_{G}(s_0\cdot \alpha) = \obs^{r(M)}_{G}(s_0\cdot \alpha a)$. Thus  
 \begin{align*} 
  f_+(\view{{r^{-1}(G)}}^M(\alpha a)) 
   & = f_+(\view{{r^{-1}(G)}}^M(\alpha) \circ \obs^M_{r^{-1}(G)}(s_0\cdot \alpha a)) \\ 
 &   = f_+(\view{{r^{-1}(G)}}^M(\alpha))\\ 
 &   =  \view{G}^{r(M)}(\alpha) \\ 
&   =  \view{G}^{r(M)}(\alpha) \circ \obs^{r(M)}_G(s_0\cdot \alpha a) \\ 
& =  \view{G}^{r(M)}(\alpha a)\mathpunct.
\end{align*}
The other possibility is that $\obs^M_{r^{-1}(G)}(s_0\cdot \alpha) \neq \obs^M_{r^{-1}(G)}(s_0\cdot \alpha a)$, 
where we have
 \begin{align*} 
  f_+(\view{{r^{-1}(G)}}^M(\alpha a)) 
   & = f_+(\view{{r^{-1}(G)}}^M(\alpha) \, \obs^M_{r^{-1}(G)}(s_0\cdot \alpha a)) \\ 
 &   = f_+(\view{{r^{-1}(G)}}^M(\alpha)) \circ  f(\obs^M_{r^{-1}(G)}(s_0\cdot \alpha a))\\ 
&   =  \view{G}^{r(M)}(\alpha) \circ \obs^{r(M)}_G(s_0\cdot \alpha a) \\ 
& =  \view{G}^{r(M)}(\alpha a)\mathpunct.
\end{align*}
Thus, in any case we have $ f_+(\view{{r^{-1}(G)}}^M(\alpha a)) = \view{G}^{r(M)}(\alpha a)$, 
completing the induction.  The argument for $f_+^{-1}$ is symmetric.

The result now follows by induction on the construction of $\phi$. The cases of 
atomic propositions and boolean operators are trivial. For formulas of the form $\knows{G}\psi$, 
we have $r(M),\pi, \alpha \models \knows{G}(\psi)$ iff $r(M), \pi, \beta\models \psi$ for all sequences of actions $\beta$ with 
$\view{G}^{r(M)}(\alpha)= \view{G}^{r(M)}(\beta)$. 
 By the induction hypothesis and the claim proved above, 
this is equivalent to $M, \pi, \beta\models r^{-1}(\psi)$ for all sequences of actions $\beta$ with 
$\view{r^{-1}(G)}^M(\alpha) = \view{r^{-1}(G)}^M(\beta)$. This is equivalent to $M, \pi, \alpha\models \knows{r^{-1}(G)} r^{-1}(\psi)$, i.e., 
$M, \pi, \alpha\models r^{-1}( \knows{G} \psi)$. 
\end{proof} 

We note that the example given above to show the difference between group and distributed knowledge also shows 
that Theorem~\ref{thm:Kpullback} would not hold if we were to include distributed knowledge in the language
and  analogously define $r^{-1}(\knows[D]{G} \phi) = \knows[D]{r^{-1}(G)}\phi$. 
For example, consider the function $r$ with $r(u) = r(v) = w$. Since $r^{-1}(\knows{w}p) = \knows{\{u,v\}}p$
and $M,\pi,\alpha \models  \knows{\{u,v\}}p$, we have $r(M), \pi, \alpha \models \knows{w} p$, hence 
$r(M), \pi, \alpha \models \knows[D]{w} p$. However, as shown above, we do not have 
$M,\pi, \alpha \models \knows[D]{\{u,v\}} p$, i.e., we do not have $M,\pi, \alpha \models r^{-1}(\knows[D]{w} p)$. 

\paragraph{$G$-dependent propositions}

\commentout{ 
A proposition 
in a system $M$
is $G$-action-local if the actions of domains in the group $G$ suffice
to decide the proposition.
$G$-action-local propositions are useful to specify 
confidential information in some of our examples. 

 Formally, for a group $G$, a set
$X\subseteq A^*$ is a \emph{$G$-action-local proposition} if for all
$\alpha,\alpha'\in A^*$ if $\alpha \restrict G = \alpha' \restrict G$,
then $\alpha\in X \iff \alpha' \in X$. We write ``$u$-action-local'' when $G=\{u\}$.
Note that if $G\subseteq G'$ and $X$ is $G$-action local then $X$ is $G'$-action local. 
}%

A proposition in a system $M$ depends on the actions of a group $G$ if its truth value 
can be affected by making changes only to the actions of domains in the group $G$. 
The notion of dependence of a proposition on a group $G$ is 
useful to specify confidential information in some of our examples. 

Formally,  for a group $G$ of domains and $\alpha \in A^*$, say that a proposition $X\subseteq A^*$  {\em depends on $G$ actions at $\alpha$} 
if there exists $\beta\in A^*$ such that $\alpha\restrict \overline{G} = \beta\restrict\overline{G}$ 
but $\alpha \in X$ iff $\beta \not \in X$. 
(Notation $\overline{G}$ is shorthand for the set $D \setminus G$, and denotes the set of all domains excluding those in $G$.)
Intuitively, this says that which $G$ actions have occurred, 
and their placement with respect to the actions of other domains, can affect whether or not
the proposition holds. We say that $X$ {\em depends everywhere on $G$ actions}  if 
$X$ depends on $G$ actions at $\alpha$ for all $\alpha \in A^*$.%
\footnote{In earlier versions of this work, 
we used nontrivial  $G$-action local propositions.  
A proposition $X$ is {\em $G$-action local} if  for all
$\alpha,\beta\in A^*$,  if $\alpha \restrict G = \beta \restrict G$,
then $\alpha\in X \iff \beta \in X$. 
It can easily be seen that a non-trivial $G$-action local proposition is everywhere 
dependent on $G$ actions. Consequently, a formulation of our results using 
propositions that depend on $G$ actions is more general.} 

We can also reason about how 
architectural abstraction affects 
$G$-dependent
propositions.

\begin{lemma}\label{lem:prop-refine}
Let $r:D_1\rightarrow D_2$ be surjective and let $\sysvar$ be a  system with 
domains $D_1$. 
Then  
proposition $X$ depends on $G$ actions at $\alpha$
in $r(\sysvar)$ iff $X$ 
depends on $r^{-1}(G)$ actions at $\alpha$
 in $\sysvar$.
\end{lemma}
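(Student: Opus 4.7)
The plan is to observe that $M$ and $r(M)$ share the same action set $A$ and the same set of possible action sequences $A^*$; the only difference is that $\dom' = r\compose \dom$. The proposition $X \subseteq A^*$ is therefore literally the same object in both systems, and the only thing that could in principle differ between the two notions of dependence is what the restriction $\alpha \restrict \overline{G}$ picks out.

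First I would unpack both sides of the biconditional using the definition: $X$ depends on $G$ actions at $\alpha$ in $r(M)$ means there exists $\beta$ with $\alpha \restrict_{\dom'} \overline{G} = \beta \restrict_{\dom'} \overline{G}$ but exactly one of $\alpha,\beta$ lies in $X$, where by $\restrict_{\dom'}$ I am emphasizing that the restriction uses the $r(M)$-domain function. Symmetrically, dependence on $r^{-1}(G)$ actions at $\alpha$ in $M$ means there exists $\beta$ with $\alpha \restrict_{\dom} \overline{r^{-1}(G)} = \beta \restrict_{\dom} \overline{r^{-1}(G)}$ and exactly one of $\alpha,\beta$ lies in $X$.

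The main (and essentially only) step is to check that these two restrictions coincide as operations on sequences. For any action $a \in A$, we have $\dom'(a) \in \overline{G}$ iff $r(\dom(a)) \notin G$ iff $\dom(a) \notin r^{-1}(G)$ iff $\dom(a) \in \overline{r^{-1}(G)}$. Since the restriction $\alpha \restrict H$ selects exactly those actions $a$ in $\alpha$ whose domain lies in $H$, this equivalence of membership tests applied letter-by-letter yields $\alpha \restrict_{\dom'} \overline{G} = \alpha \restrict_{\dom} \overline{r^{-1}(G)}$ for every $\alpha \in A^*$.

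With this identity in hand, the two existential statements are syntactically the same condition on the same object $\beta \in A^*$ — a witness for one side is a witness for the other — so the biconditional follows immediately. There is no real obstacle here; the result is a straightforward consequence of the fact that $r$ only relabels domains while leaving actions and the proposition $X$ untouched.
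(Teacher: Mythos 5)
Your proposal is correct and follows essentially the same route as the paper: both unpack the definition of dependence and reduce it to the observation that, since $\dom'(a)\in\overline{G}$ iff $\dom(a)\in\overline{r^{-1}(G)}$ (equivalently $r^{-1}(\overline{G})=\overline{r^{-1}(G)}$), the two restriction operations agree, so a witness $\beta$ for one side is a witness for the other. Your version is marginally sharper in noting the restrictions coincide as sequences rather than only that the corresponding equalities are equivalent, but this is the same argument.
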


\begin{proof}
Note that the restriction operation  is relative to a system $M$; to  emphasize 
this we write $\alpha \restrict^M G$ to indicate that we use the domain function from $M$. 
Note also that $$r^{-1}(\overline{G}) = \overline{r^{-1}(G)}~.$$
By definition,   $X$ depends on $G$ actions at $\alpha$
in $r(\sysvar)$ if there exists $\beta\in A^*$ such that 
$\alpha\restrict^{r(\sysvar)} \overline{G} = \beta\restrict^{r(\sysvar)} \overline{G}$ and 
$\alpha \in X$ iff $ \beta \not \in X$. 
Similarly, $X$ depends on $r^{-1}(G)$ actions at $\alpha$
in $\sysvar$ if there exists $\beta\in A^*$ such that 
$\alpha\restrict^{\sysvar} \overline{r^{-1}(G)} = \beta\restrict^{r(\sysvar)} \overline{r^{-1}(G)}$ and 
$\alpha \in X$ iff $ \beta \not \in X$.

Since $\dom^\sysvar(a) \in \overline{r^{-1}(G)}$ iff
$\dom^\sysvar(a) \in r^{-1}(\overline{G})$ iff
 $\dom^{r(\sysvar)}(a) \in \overline{G}$,
 we have 
 $$\alpha\restrict^{r(\sysvar)} \overline{G} = \beta\restrict^{r(\sysvar)} \overline{G}
\text{  iff } 
 \alpha\restrict^{\sysvar} \overline{r^{-1}(G)} = \beta\restrict^{\sysvar} \overline{r^{-1}(G)}~.$$
 It follows that $X$ depends on $G$ actions at $\alpha$
in $r(\sysvar)$ iff $X$ 
depends on $r^{-1}(G)$ actions at $\alpha$
 in $\sysvar$.
\end{proof}

\subsection{Example: {\hlarch} information security}

The logic allows us to state information security properties about
machines, in terms of the knowledge of domains. The architecture can
provide sufficient structure to prove that a given information
security property holds in all machines that 
comply with 
the architecture.

For example, using the $\hlarch$ architecture, we are able to show
that in any execution of any machine that 
complies with 
$\hlarch$, the
domain $L$ does not 
know 
any 
proposition that depends on $H$ actions. 
\iftr\else
\footnote{%
Proofs of all results are in the full version of this paper, available at \url{http://people.seas.harvard.edu/~chong/arch-filter-full.pdf}.
}
\fi

\begin{theorem}\label{thm:hl-h-action-local}
  If $\sysvar$ is 
  TA-compliant with 
   {\hlarch} and $\pi(p)$ 
   depends on $H$ actions at $\alpha$  then $\interpsysv, \alpha \sat \neg \knows {L} p$.
\end{theorem}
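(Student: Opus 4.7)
The plan is to exploit the dependence hypothesis to produce a sequence $\beta$ that $L$ cannot distinguish from $\alpha$, yet that flips the truth value of $p$. By the definition of ``depends on $H$ actions at $\alpha$'', we obtain $\beta \in A^*$ with $\alpha\restrict \overline{H} = \beta \restrict \overline{H}$ (which in $\hlarch$ means $\alpha \restrict \{L\} = \beta \restrict \{L\}$) and $\alpha \in \pi(p) \iff \beta \notin \pi(p)$.

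The first substantive step is to show that $\ta{L}(\alpha) = \ta{L}(\beta)$. This follows from a simple induction on the length of sequences, using the fact that in $\hlarch$ we have $\notflowstoin{H}{L}$, so in the defining recursion of $\ta{L}$ every $H$-action is absorbed (leaves $\ta{L}$ unchanged), while $L$-actions contribute only information about $\ta{L}$ and the action itself. Hence $\ta{L}$ is determined by $\alpha \restrict \{L\}$, giving $\ta{L}(\alpha) = \ta{L}(\beta)$. By TA-compliance and Lemma~\ref{lem:ta-view}, this yields $\view{L}(\alpha) = \view{L}(\beta)$, i.e., $\alpha \approx_L \beta$.

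It then remains to show $\interpsys \not\sat \knows{L} p$. We split into two cases. If $\alpha \notin \pi(p)$, then $\interpsys \not\sat p$, and since $\approx_L$ is reflexive the semantics of $K_L$ forces $\interpsys \not\sat \knows{L} p$. If instead $\alpha \in \pi(p)$, then $\beta \notin \pi(p)$ by the dependence hypothesis, and since $\alpha \approx_L \beta$ we have a world indistinguishable to $L$ at which $p$ fails, so again $\interpsys \not\sat \knows{L} p$.

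I do not anticipate a real obstacle: the only thing requiring care is the inductive verification that $\ta{L}$ depends only on $\alpha \restrict \{L\}$, and this is immediate from the two-case definition of $\ta{u}$ together with $\notflowstoin{H}{L}$. Everything else is a direct appeal to Lemma~\ref{lem:ta-view} and the semantics of $\knows{L}$.
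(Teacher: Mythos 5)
Your proposal is correct and follows essentially the same route as the paper's proof: show by induction (using $\notflowstoin{H}{L}$) that $\ta{L}$ is determined by the $L$-projection of the action sequence, deduce $\ta{L}(\alpha)=\ta{L}(\beta)$ for the witness $\beta$ supplied by the dependence hypothesis, and invoke TA-compliance with Lemma~\ref{lem:ta-view} to get $\alpha \approx_L \beta$. Your concluding case split merely spells out what the paper summarizes as ``the result follows immediately,'' so there is no substantive difference.
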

\begin{proof}
  We first show, for any sequence $\gamma\in A^*$,  that $\ta{L}(\gamma) = \ta{L}(\gamma \restrict \{L\})$. 
  The base case, $\gamma=\epsilon$
  is trivial. Consider $\gamma a$. If $\dom(a) \ne L$ then 
  \begin{align*}
    \ta{L}(\gamma a) &= \ta{L}(\gamma) & \text{by defn $\ta{L}$} \\
  &= \ta{L}(\gamma \restrict \{L\}) & \text{by IH} \\
  &= \ta{L}(\gamma a \restrict \{L\}) & \text{since $\dom(a) \ne L$}.
  \end{align*}

  If $\dom(a) = L$ then
  \begin{align*}
    \ta{L}(\gamma a) &= (\ta{L}(\gamma), \ta{L}(\gamma), a) & \text{by defn $\ta{L}$} \\
  &= (\ta{L}(\gamma \restrict \{L\}), \ta{L}(\gamma \restrict \{L\}), a) & \text{by IH} \\
  &= \ta{L}(\gamma a \restrict \{L\}) & \text{since $\dom(a) = L$}.
  \end{align*}

Now, since $\pi(p)$ depends on $H$ actions at $\alpha$ there exists $\beta\in A^*$ such that 
$\alpha \restrict \{ L\}  = \beta \restrict  \{L\}$ and $\alpha \in \pi(p)$ iff $\beta \not \in \pi(p)$. 
By what was shown above,  $\ta{L}(\alpha) = \ta{L}(\alpha \restrict \{L\}) =\ta{L}(\beta \restrict \{L\}) =  \ta{L}(\beta)$,  so 
  since $\sysvar$ is TA-compliant, by
  Lemma~\ref{lem:ta-view}, we have $\alpha \approx_L \beta$ and the
  result follows immediately.
\end{proof}

\subsection{Example: Hinke-Schaefer }

In the Hinke-Schaefer database architecture $\hsarch$, none of the
domains $\Luser$, $\Ldbms$, or $L_{F}$ 
know 
anything about the
domains $\Huser$, $\Hdbms$ or $H_{F}$. This is true even if we
consider the 
group knowledge of $\Luser$, $\Ldbms$, and
$L_{F}$.

\begin{theorem} \label{thm:hinke-schaefer} Let 
system $\sysvar$ be
  TA-compliant with 
  $\hsarch$, and let $G = \{\Luser,
  \Ldbms, L_{F}\}$. 
  If $\pi(p)$ 
depends on  $\{\Huser,$ $\Hdbms,$ $H_{F}\}$ actions at $\alpha$,  then $ \interpsysvf{\sysvar}{\pi}, \alpha \sat \neg \knows{G} p. $
\end{theorem}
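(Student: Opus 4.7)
The plan is to reduce this theorem to Theorem~\ref{thm:hl-h-action-local} by applying the architectural abstraction machinery of Theorem~\ref{thm:Kpullback} and Lemma~\ref{lem:prop-refine}. Specifically, I would define a surjective map $r: D_{\hsarch} \to \{H,L\}$ by $r(\Luser) = r(\Ldbms) = r(L_F) = L$ and $r(\Huser) = r(\Hdbms) = r(H_F) = H$, so that $r^{-1}(L) = G$ and $r^{-1}(H) = \{\Huser, \Hdbms, H_F\}$. The goal then becomes showing that $r(\sysvar), \pi, \alpha \sat \neg \knows{L} p$, which after translation via Theorem~\ref{thm:Kpullback} gives $\sysvar, \pi, \alpha \sat \neg \knows{G} p$.

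First I would translate the hypothesis: by Lemma~\ref{lem:prop-refine}, $\pi(p)$ depends on $\{\Huser, \Hdbms, H_F\} = r^{-1}(H)$ actions at $\alpha$ in $\sysvar$ iff $\pi(p)$ depends on $H$ actions at $\alpha$ in $r(\sysvar)$. So the dependence hypothesis carries over to $r(\sysvar)$. Next, I would verify that $r(\sysvar)$ is TA-compliant with $\hlarch$; this is the main obstacle. The key observation is that no $H$-domain in $\hsarch$ flows to any $L$-domain, so by an induction on sequence length mirroring the one in the proof of Theorem~\ref{thm:hl-h-action-local}, one shows $\ta{v}^\sysvar(\gamma) = \ta{v}^\sysvar(\gamma \restrict G)$ for every $v \in G$ and every $\gamma$, and similarly $\ta{L}^{r(\sysvar)}(\gamma)$ is an injective encoding of $\gamma \restrict G$. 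Consequently, $\ta{L}^{r(\sysvar)}(\gamma) = \ta{L}^{r(\sysvar)}(\gamma')$ forces $\gamma \restrict G = \gamma' \restrict G$, which forces $\ta{v}^\sysvar(\gamma) = \ta{v}^\sysvar(\gamma')$ for each $v \in G$; TA-compliance of $\sysvar$ then gives $\obs_v^\sysvar(\gamma) = \obs_v^\sysvar(\gamma')$ for each $v \in G$, and hence $\obs_L^{r(\sysvar)}(\gamma) = \obs_L^{r(\sysvar)}(\gamma')$ by the definition of observations in $r(\sysvar)$. The case $u = H$ in $\hlarch$ is easier because both $L$ and $H$ flow to $H$, so $\ta{H}^{r(\sysvar)}$ records enough of the history to recover $\ta{v}^\sysvar$ for every $v \in r^{-1}(H)$; a parallel argument then yields the required observation equality.

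With TA-compliance of $r(\sysvar)$ established, I apply Theorem~\ref{thm:hl-h-action-local} to obtain $r(\sysvar), \pi, \alpha \sat \neg \knows{L} p$. Finally, noting that $r^{-1}(\knows{L} p) = \knows{r^{-1}(L)} p = \knows{G} p$, Theorem~\ref{thm:Kpullback} yields $\sysvar, \pi, \alpha \sat \neg \knows{G} p$, which is the desired conclusion. The hardest step is the TA-compliance transfer in the second paragraph: it requires carefully matching up the two $\ta{}$ computations across the abstraction, which is really a small instance of a refinement-preservation lemma for the semantics of architectures.
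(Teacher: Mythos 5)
Your proposal is correct and takes essentially the same route as the paper: the paper likewise reduces the claim to Theorem~\ref{thm:hl-h-action-local} via the map $r$ collapsing the High and Low sides, using Lemma~\ref{lem:prop-refine} to transfer the dependence hypothesis and Theorem~\ref{thm:Kpullback} (via Theorem~\ref{thm:knowledge-refine}) to pull the conclusion back to $\knows{G}$. The only difference is that where you establish TA-compliance of $r(\sysvar)$ with $\hlarch$ by a direct induction on the $\ta{u}$ functions, the paper simply observes that $\hsarch \refines_r \hlarch$ and invokes the known preservation of TA-compliance under architectural refinement, so your second paragraph amounts to reproving that preservation lemma in this special case.
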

\iftr

Although we could prove Theorem~\ref{thm:hinke-schaefer}
directly, we defer the proof to
Section~\ref{sec:hink-schaefer-refinement}, where the result follows
easily from Theorem~\ref{thm:hl-h-action-local} and the relationship
between architectures $\hlarch$ and $\hsarch$.
(Specifically, architecture $\hsarch$ is a refinement of architecture $\hlarch$.)

\fi %

Since 
$  \knows[K]{u}p\rimp\knows{G}p$ is valid for $u\in G$, 
it follows that also $\interpsysvf{\sysvar}{\pi} \sat \neg \knows[K]{u} p$
for $u \in \{\Luser, \Ldbms, L_{F}\}$. In particular,  
$\Luser$ does not have any information about the High side of the system.

\section{Extended architecture}\label{sec:extended}

The architectures used so far impose coarse, global constraints on the
causal structure of systems. If $\flowstoin{u}{v}$ then TA-compliance
permits domain $u$ to send to domain $v$ any and all data it
has. However, in many systems, key security properties depend on
the fact that trusted components allow 
only certain information to flow from one domain to another.
Finer specification of information flows in the architecture allow us
to prove stronger information security properties.

In this section we extend the notion of architecture by introducing
\emph{filter functions} to allow fine-grained specification
of what information flows between domains.  We define semantics for
these extended architectures, and present examples where
the extended architectures allow us to prove strong information
security properties.

\subsection{Filter functions}

\newcommand{\labels}{{\cal L}}
\newcommand{\archint}{{\cal I}}
\newcommand{\archspec}{{\cal C}}

An extended architecture is a pair $\archvar=(D, \flowsto)$, where
$D$ is a set of security domains, and $\flowsto \subseteq D \times D \times (\labels\cup \{\top\})$, where $\labels $ is 
a set of
function names. 
We write  ${\flowstoin[f]{u}{v}}$ when $(u,v,f) \in \flowsto$, 
write $\flowstoin{u}{v}$ as shorthand for
$\exists f.~(u,v,f) \in \flowsto$,  and 
$\notflowstoin{u}{v}$ as shorthand for
$\neg \exists f.~(u,v,f) \in \flowsto$. 

Intuitively, ${\flowstoin[f]{u}{v}}$ represents that information flow from 
$u$ to $v$ is permitted, but may be subject to 
constraints. 
In case $f= \top$, there are no 
constraints on information flow from $u$ to $v$: 
any information that may be possessed by $u$ is permitted to be passed
to $v$ when  $u$ acts, just as in the definition of 
TA-compliance. 
If $f\in \labels$ then information is allowed to
flow from domain $u$ to domain $v$, but it needs to be {\em filtered} through the 
function denoted by $f$: 
only information output by this function may be 
transmitted from $u$ to $v$. If $\notflowstoin{u}{v}$ then no 
direct 
flow of information from $u$ to $v$ is permitted. 

In some cases, it may be possible for the operating system or network
infrastructure to enforce a given filter function. However, in general, a
filter function is a local constraint on a trusted component of the
system. That is, if ${\flowstoin[f]{u}{v}}$ for $f \ne \top$, then
component $u$ is trusted to enforce that information sent to $v$ is
filtered appropriately.

We require that extended architectures have the following
properties:
 \begin{enumerate}  
 \item For all $u,v\in D$, there exists at most one 
 $f\in \labels \cup \{\top\}$
 such that
   \flowstoin[f]{u}{v}.
 \item 
   The relation $\flowsto$ is reflexive in that for all $u \in D$
we have  $(u,u,\top)\in \flowsto$. 
\end{enumerate} 
The first condition requires that all permitted flows of information
from $u$ to $v$ are represented using a single 
labeled edge. 
Intuitively, any policy with multiple such 
edges
can always be transformed into one satisfying this condition, 
by combining the pieces of information flowing across these edges  
into a tuple that flows across a single edge. 
The second condition is motivated from the fact, already noted above, 
that information flow from a domain to itself cannot be prevented. 

\iftr
For example, the following diagram shows an extended architecture with domains $H$, $D$, and $L$, intended to represent a high-security domain, a trusted declassifier, and a low-security domain. The arrows indicate permitted flow between domains. The label on the edge from $D$ to $L$ indicates that information going from $D$ to $L$ should be filtered by function $\mathit{rel}$. 
When drawing extended architectures, we annotate arrows
between domains with the filter function names. For arrows drawn without
a label,  
and elided reflexive arrows, the implied 
label is $\top$. 

\begin{center}
  \includegraphics{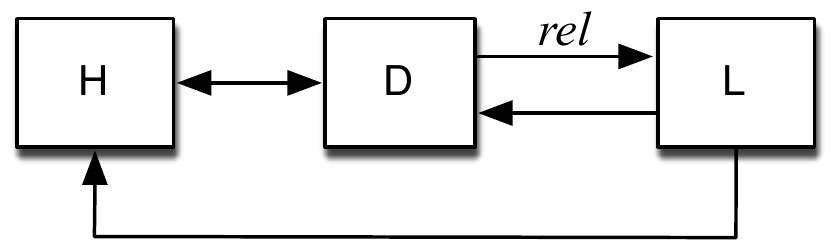}
\end{center}

\fi 

Extended architectures do not define the interpretations of 
the 
function names 
$\labels$. If $\archvar=(D, \flowsto)$ is an extended 
architecture, an {\em interpretation}  for $\archvar$ is a tuple
$\archint = (A, \dom, \intn)$, where $A$ is a set of actions, $\dom:A\rightarrow D$ assigns these
actions to domains of $\archvar$, and $\intn$ is a function mapping each $f\in \labels$
to a function with domain $A^* \times A$ (and arbitrary codomain). 
We call the pair $(\archvar, \archint)$ an {\em interpreted extended
  architecture}, 
or simply an {\em interpreted architecture}.

Intuitively,  if ${\flowstoin[f]{u}{v}}$ and  $\alpha\in A^*$ and $a\in A$ is an action with $\dom(a) = u$, then 
$\intn(f)(\alpha, a)$ is the information that is permitted  to flow from $u$ to $v$ 
when the action $a$ is performed after occurrence of the sequence 
of actions 
$\alpha \in A^*$. 

Given extended architecture $(D, \flowsto)$ and 
an architectural interpretation $\archint = (A,\dom,\intn)$, 
we define a function 
$\tff{u}$ with domain $A^*$ that,  like $\ta{u}$, captures 
the
maximal information 
that domain $u$ is permitted to have after a sequence of actions has been executed. 
The definition is recursive with a function $\inff{v,u}$ for $u,v\in D$, 
mapping a sequence $\alpha\in A^*$ and an action $a\in A$ with $\dom(a) = u$ 
to 
\begin{align*}  
\inff{v,u}(\alpha, a) &= 
\begin{cases}
  \epsilon & \text{if  $\notflowstoin{v}{u}$} \\
  (\tff{v}(\alpha), a)  & \text{if  $\flowstoin[\top]{v}{u}$} \\
  \intn(f)(\alpha, a)  & \text{if $\flowstoin[f]{v}{u}$.} 
\end{cases}
\end{align*}
Intuitively, $\inff{v,u}(\alpha, a)$ represents the new information permitted to be known by $u$ 
when action $a$ is performed after sequence $\alpha$. 

The function $\tff{u}$ is defined 
by
$\tff{u}(\epsilon) = \epsilon$, and, for $\alpha \in A^*$ and $a \in
A$,
$\tff{u}(\alpha a)  = \tff{u}(\alpha) \concat \inff{\dom(a),u}(\alpha,a)$ 
where $\concat$ is the operation of appending an element to the end of a sequence.
Some important technical points concerning the append operation are that 
for any sequence $\sigma$, we define $\sigma \concat \epsilon = \sigma$
(i.e., appending the empty sequence $\epsilon$ has no effect), 
and if $\delta$ happens to be a nonempty sequence, then $\sigma\concat \delta$ 
is the sequence that extends the sequence $\sigma$ by the single additional element 
$\delta$. For example if $\delta$ is the sequence $ab$, then $\sigma\concat \delta$ has 
final element equal to the sequence $ab$ rather than $b$.

Unfolding the definition, we obtain 
\begin{align*}  
\tff{u}(\alpha a) &= 
\begin{cases}
  \tff{u}(\alpha) & \text{if  $\notflowstoin{\dom(a)}{u}$} \\
  \tff{u}(\alpha) \concat (\tff{\dom(a)}(\alpha), a)  & \text{if  $\flowstoin[\top]{\dom(a)}{u}$} \\
  \tff{u}(\alpha) \concat \intn(f)(\alpha, a)  & \text{if $\flowstoin[f]{\dom(a)}{u}$.} 
\end{cases}
\end{align*}
The first two  clauses resemble the definition of $\ta{u}$; 
the third adds to this that the information flowing along
an edge labeled by a function name $f$ is filtered by the 
interpretation $\intn(f)$. Note that if $\intn(f)(\alpha, a) = \epsilon$, where
$\flowstoin[f]{\dom(a)}{u}$, then $\tff{u}(\alpha a) =
\tff{u}(\alpha)$. That is, filter function 
$\intn(f)$ 
can specify that no
information should flow under certain conditions.
Note also that $\tff{u}$ 
and $\inff{v,u}$ have 
implicit parameters, viz., 
an information flow policy $\flowsto$
and an  architectural interpretation  $\archint = (A,\dom,\intn)$. 
When we need to make some of these parameters explicit, we write 
expressions such as 
$\tff{u}^{(\flowsto,\archint)}$, or $\tff{u}^{\flowsto}$.

The function $\tff{u}$ is used analogously
to $\ta{u}$ to define the maximal information that a domain is
permitted to observe for a given sequence of actions. However,
$\tff{u}$ is a more precise bound than $\ta{u}$, as it uses filter
functions to bound the information sent between domains. 

It is reasonable to assume that  information sent from $u$ to $v$ is information that
$u$ is permitted to have. We say that a function is
$\tff{u}$-compatible when the information it conveys is determined by
information that $u$ is permitted to have.
\begin{definition}
  Function $h$ with domain $A^* \times A$ 
is  $\tff{u}$-compatible when  for all sequences $\alpha,\beta\in A^*$, 
$\tff{u}(\alpha)=\tff{u}(\beta)$ implies 
that for all $a\in A$ with $\dom(a) =u$ we have $h(\alpha, a) = h(\beta, a)$.
\end{definition}
We say that the interpretation $\archint= (A,\dom,\intn)$ is 
{\em compatible} with 
\mbox{$\archvar = (D, \flowsto)$} if 
for all $u\in D$ and edges $\flowstoin[f]{u}{v}$ with $f\in \labels$, 
the function $\intn(f)$ is $\tff{u}$-compatible.  In what follows, we 
require that interpretations be compatible with their architectures.

A machine complies with an interpreted extended architecture 
if it has appropriate domains and actions, and 
for each domain $u$, what $u$ observes in state
$\run{s_0}{\alpha}$ is determined by $\tff{u}(\alpha)$. We call such a
machine 
\emph{\TFF-compliant}. 
(``\TFF'' is derived from \emph{filtered transmission} of information about \emph{actions}.)

\begin{definition}[\TFF-compliant]
  A machine $\sysvar$ $=$ $\langle S,$ $s_0,$ $A,$ $D,$ $\step,$ $\obs, \dom\rangle $ is \TFF-compliant 
  with  an interpreted architecture
  $(\archvar, \archint)$, 
  with  $\archvar = (D',\flowsto)$ and $\archint =$ $(A',$ $\dom',$ $\intn)$, 
   if $A = A'$, $D= D'$, $\dom = \dom'$ and   
   for all agents $u\in D$ and all $\alpha, \alpha' \in
  A^*$, if $\tff{u}(\alpha) = \tff{u}(\alpha')$ then
  $\obs_u(\alpha) = \obs_u(\alpha')$.
\end{definition}

For an  interpreted architecture $\intarch$, 
with actions $A$ and domains $D$, if $\pi$ is an interpretation with $\pi(p) \subseteq A^*$, 
we write $\intarch, \pi, \alpha \models \phi$ if 
$\sysvar, \pi, \alpha \models \phi$ for all systems $\sysvar$ that are \TFF-compliant with 
$\intarch$. Similarly, we write $\intarch, \pi \models \phi$ if $\intarch, \pi, \alpha \models \phi$
for all $\alpha \in A^*$. 

Separating extended architectures from their interpretations
ensures that extended architectures can be completely 
represented by graphical diagrams with labeled edges. It also 
allows us to deal with examples where an extended architecture 
can be implemented in a variety of ways, and weak constraints on the 
set of actions and the set of filter functions  suffice to enforce 
the security properties of interest. We will present a number of 
examples of this in what follows. To capture the constraints on the 
architectural interpretations at the semantic level, we 
use the notion of an {\em architectural specification}, which 
is a pair $(\archvar, \archspec)$ where $\archvar$ is an extended architecture
and $\archspec$ is a set of architectural interpretations for $\archvar$. 
(We will not attempt in this paper to develop any syntactic 
notation for architectural specifications.) 

\begin{definition}
  A machine $\sysvar$ is \TFF-compliant with an 
 architectural specification $(\archvar, \archspec)$ if there 
  exists an interpretation $\archint\in \archspec$ such that 
  $\sysvar$ is \TFF-compliant with the interpreted architecture
  $(\archvar,\archint)$. 
\end{definition}

The following theorem shows that \TFF-compliance generalizes TA-compliance. 
Thus, we are free to interpret a given
architecture as an extended architecture.

\begin{theorem}\label{thm:archtoextarch}
  Let $\archvar_1=(D, \flowsto_1)$ be an architecture, and let
  $\archvar_2=(D,\flowsto_2)$ be the extended architecture such that
  $(u,v,f) \in \flowsto_2$ if and only if $f = \top$ and $(u,v)
  \in \flowsto_1$.  
  Let $\sysvar$ be a machine with domains $D$, actions $A$ and domain function $\dom$. 
  Let $\archint = (A,\dom, \intn)$ be any interpretation for $\archvar_2$
  with this set of actions and domain  function. 
  Then 
   $\sysvar$ is TA-compliant with 
  $\archvar_1$ if and only if $\sysvar$ is \TFF-compliant with 
  $(\archvar_2,\archint)$.
\end{theorem}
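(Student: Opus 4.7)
The plan is to observe that when the only edge label in the extended architecture is $\top$, the recursive definitions of $\ta{u}$ and $\tff{u}$ encode exactly the same information about each sequence in $A^*$, differing only in how they package that information (nested tuple vs.\ appended sequence element). From this I will derive that the two functions induce the same equivalence relation on $A^*$, making the two compliance conditions literally identical.

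First I would note that because every edge of $\flowsto_2$ carries the label $\top$, the third clause of the definition of $\tff{u}$ (the labeled-filter case) is never triggered. Moreover $\notflowstoin{v}{u}$ holds in $\flowsto_2$ iff $(v,u)\notin \flowsto_1$, and $\flowstoin[\top]{v}{u}$ holds in $\flowsto_2$ iff $(v,u)\in \flowsto_1$. Hence the recursion defining $\tff{u}$ specializes to the two-case recursion used for $\ta{u}$, except that the ``flow allowed'' branch appends $(\tff{\dom(a)}(\alpha),a)$ to the sequence $\tff{u}(\alpha)$, while $\ta{u}$ forms the triple $(\ta{u}(\alpha),\ta{\dom(a)}(\alpha),a)$. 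In particular the interpretation $\archint$ plays no role, which explains why the result is claimed for every $\archint$.

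The core lemma I would establish is that for every $u\in D$ and all $\alpha,\beta\in A^*$,
\[
\ta{u}(\alpha)=\ta{u}(\beta) \iff \tff{u}(\alpha)=\tff{u}(\beta).
\]
I would proceed by induction on $|\alpha|+|\beta|$, treating all domains $u$ simultaneously. The base case is trivial. In the inductive step I split on the last action of each sequence: if the last action of either sequence has a domain that does not flow to $u$, both $\ta{u}$ and $\tff{u}$ ignore it and the claim follows from the induction hypothesis applied to the shorter sequence. Otherwise, writing $\alpha=\alpha'a$ and $\beta=\beta'b$ with $\flowstoin{\dom(a)}{u}$ and $\flowstoin{\dom(b)}{u}$, equality of the triples forces $a=b$, $\ta{u}(\alpha')=\ta{u}(\beta')$, and $\ta{\dom(a)}(\alpha')=\ta{\dom(b)}(\beta')$; the induction hypothesis transports these equalities to the $\tff{}$ side, whence $\tff{u}(\alpha)=\tff{u}(\beta)$. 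The reverse direction is symmetric.

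Once the core lemma is in hand the theorem is immediate: TA-compliance with $\archvar_1$ requires, for each $u$, that $\ta{u}(\alpha)=\ta{u}(\beta)$ implies $\obs_u(\alpha)=\obs_u(\beta)$, while \TFF-compliance with $(\archvar_2,\archint)$ requires the same condition with $\tff{u}$ in place of $\ta{u}$; since the two functions induce the same equivalence relation on $A^*$, the two conditions coincide. I expect the main obstacle to be the ``mixed'' case in the inductive step: from $\tff{u}(\alpha a)=\tff{u}(\beta b)$ one must rule out the possibility that one side's last action flows to $u$ while the other's does not. This is handled by noting that $\tff{u}(\gamma c)$ strictly extends $\tff{u}(\gamma)$ precisely when $\dom(c)$ flows to $u$ (since a $\top$-edge always produces a nonempty appended element), so the shape of $\tff{u}$ matches that of $\ta{u}$ case-for-case and spurious matches by empty appends are ruled out.
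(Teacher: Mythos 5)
Your proof is correct, but it takes a different route from the paper's. The paper makes the ``same information, different packaging'' idea explicit: it defines a recoding function $g$ from nested triples to sequences (with $g(\epsilon)=\epsilon$ and $g((t_1,t_2,a)) = g(t_1)\concat(g(t_2),a)$), proves $g(\ta{u}(\alpha)) = \tff{u}(\alpha)$ by a single induction on $\alpha$ (simultaneously for all $u$), and separately exhibits an inverse $h$ to show $g$ is injective; the equivalence $\ta{u}(\alpha)=\ta{u}(\beta) \iff \tff{u}(\alpha)=\tff{u}(\beta)$ then falls out by applying $g$ and injectivity. You instead prove that kernel equality directly, by induction on $|\alpha|+|\beta|$ over pairs of sequences, with the case analysis on whether the last action's domain flows to $u$ --- structurally closer to the paper's proofs of Lemmas~\ref{lem:ta-view} and~\ref{lem:tff-view} than to its proof of this theorem. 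The trade-off: the paper's argument confines the induction to a one-sequence statement and isolates the bookkeeping in the coding functions $g,h$, while yours avoids introducing and verifying those functions at the cost of a two-sequence induction in which you must strip non-flowing final actions and rule out spurious matches; your observation that a $\top$-edge always appends a nonempty element $(\tff{\dom(a)}(\alpha),a)$, so $\tff{u}$ strictly extends exactly when $\ta{u}$ gains a component, is precisely the point that makes that case analysis go through. Your remark that $\archint$ is never consulted (so the theorem holds for every interpretation) matches the paper, which likewise never invokes $\intn$ in this proof.
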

\begin{proof} 
  We show that for all $u \in D$, $\ta{u}$ is isomorphic to
  $\tff{u}$, 
  in the sense that $\ta{u}(\alpha) = \ta{u}(\beta)$ iff 
  $\tff{u}(\alpha) = \tff{u}(\beta)$. 
  Let 
  $g$ be the function mapping nested triples to sequences,  defined
  inductively by 
  $g(\epsilon) = \epsilon$ and 
$g((t_1, t_2, a)) = g(t_1)\concat(g(t_2),a)$. 
  Conversely, let $h$ be the function from the range of $g$ to nested tuples, 
  defined inductively by $h(\epsilon) = \epsilon$ and 
  $h(\sigma\concat (x,a)) = (h(\sigma),h(x), a))$. 
  A straightforward induction shows that $h\circ g$ is the identity function, so $g$ is 1-1. 
  We show by induction
  on $\alpha$ that for all $u\in D$, $g(\ta{u}(\alpha)) = \tff{u}(\alpha)$.

  The base case is trivial. Assume that for all $u \in D$ we have
  $g(\ta{u}(\alpha)) = \tff{u}(\alpha)$ and consider $\alpha a$. Let
  $u \in D$. If $\dom(a) \not\flowsto_1 u$ then $\dom(a)
  \not\flowsto_2 u$ and
  \begin{align*}
    g(\ta{u}(\alpha a)) &= g(\ta{u}(\alpha)) &\text{by defn $\ta{u}$} \\
    &= \tff{u}(\alpha) &\text{by IH} \\
    &= \tff{u}(\alpha a) &\text{by defn $\tff{u}$}
  \end{align*}

  If $\dom(a) \flowsto_1 u$ then $(\dom(a), u, \top) \in \flowsto_2$, and 
  \begin{align*}
    g(\ta{u}(\alpha a)) &= g((\ta{u}(\alpha), \ta{\dom(a)}(\alpha), a)) &\text{by defn $\ta{u}$} \\
    &= g(\ta{u}(\alpha))\concat(g(\ta{\dom(a)}(\alpha)),a) &\text{by defn $g$} \\
    &= \tff{u}(\alpha) \concat(\tff{\dom(a)}(\alpha), a) &\text{by IH} \\
    &= \tff{u}(\alpha a) &\text{by defn $\tff{u}$}
  \end{align*}
  
  It now follows that $\ta{u}(\alpha) = \ta{u}(\beta)$ iff $\tff{u}(\alpha) = \tff{u}(\beta)$. 
  For, if $\ta{u}(\alpha) = \ta{u}(\beta)$ then $\tff{u}(\alpha) = g(\ta{u}(\alpha)) =
  g(\ta{u}(\beta)) = \tff{u}(\beta)$. Conversely, if $\tff{u}(\alpha) = \tff{u}(\beta)$
  then $g(\ta{u}(\alpha)) =  g(\ta{u}(\beta))$ and we get $\ta{u}(\alpha) = \ta{u}(\beta)$ by injectivity of 
  $g$. 
\end{proof}

\TFF-compliance requires that if $\tff{u}(\alpha) = \tff{u}(\alpha')$ then
the observations of $u$ in state $\run{s_0}{\alpha}$ and in state
$\run{s_0}{\alpha'}$ are equal. The following 
lemma 
\iftr
(similar to Lemma~\ref{lem:ta-view}) 
\fi
shows that in fact
\TFF-compliance 
implies that if $\tff{u}(\alpha) = \tff{u}(\alpha')$ then
we have that
$\view{u}(\alpha) = \view{u}(\alpha')$.

We require a technical assumption for this result: 
say that an interpreted architecture  $(\archvar, \archint)$
is {\em non-conflating} if 
for all $u,v\in D$, if  $\flowstoin[f]{u}{v}$ with $f\neq \top$ 
then for all actions $a,b\in A$ with $\dom(a) = u$ and $\dom(b) = v$, and for all $\alpha, \beta \in A^*$
we have  $\intn(f)(\alpha, a) \neq (\tff{v}(\beta), b)$. 
Recall that by reflexivity,  $\flowstoin[f]{u}{v}$ with $f\neq \top$ implies $u\neq v$. 
Intuitively, the condition states that, in the context of the definition of $\tff{v}$, 
it is always possible for domain $v$ to distinguish 
the type of information $\intn(f)(\alpha, a)$ transmitted to it  from the 
type of information $(\tff{v}(\beta), b)$ transmitted by $v$ to itself. 
That is, $v$ can distinguish between the effects of its own actions on $\tff{v}$ 
and the effects of other domains' actions. Since, intuitively,  $v$ should be aware of its 
own actions, it is reasonable to expect that this is generally satisfied. 

\begin{lemma}\label{lem:tff-view}
  If $\sysvar$ is \TFF-compliant with 
  non-conflating
  interpreted architecture
 $(\archvar, \archint)$, 
  with  $\archvar = (D,\flowsto)$ and $\archint = (A,\dom,\intn)$, 
  then for all agents $u\in D$ and all $\alpha,
  \alpha' \in A^*$ such that $\tff{u}(\alpha) = \tff{u}(\alpha')$ we
  have $\view{u}(\alpha) = \view{u}(\alpha')$
\end{lemma}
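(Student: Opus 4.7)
The plan is to proceed by strong induction on $|\alpha| + |\alpha'|$, closely paralleling the proof of Lemma~\ref{lem:ta-view} but invoking the non-conflating assumption to handle the interaction between $\top$-labeled and $f$-labeled incoming edges at $u$. The base case ($\alpha = \alpha' = \epsilon$) is immediate, since both views equal $\obs_u(s_0)$. For the inductive step, I assume without loss of generality that $\alpha = \alpha_0 a$ is nonempty and split on whether $\flowstoin{\dom(a)}{u}$.

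If $\notflowstoin{\dom(a)}{u}$, then $\tff{u}(\alpha_0) = \tff{u}(\alpha_0 a) = \tff{u}(\alpha')$, so the inductive hypothesis yields $\view{u}(\alpha_0) = \view{u}(\alpha')$; combined with \TFF-compliance giving $\obs_u(\alpha_0 a) = \obs_u(\alpha_0)$, and reflexivity forcing $\dom(a) \neq u$, the absorptive concatenation reabsorbs the last observation and $\view{u}(\alpha_0 a) = \view{u}(\alpha_0) = \view{u}(\alpha')$.

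The substantive case is $\flowstoin{\dom(a)}{u}$, where $\tff{u}(\alpha_0 a) = \tff{u}(\alpha_0) \concat X$ with $X = \inff{\dom(a), u}(\alpha_0, a)$ nonempty. This forces $\alpha' \neq \epsilon$, say $\alpha' = \alpha'_0 b$; if $\notflowstoin{\dom(b)}{u}$ we reduce to the previous case with the two strings swapped, so assume $\flowstoin{\dom(b)}{u}$. Then $\tff{u}(\alpha') = \tff{u}(\alpha'_0) \concat Y$ with $Y = \inff{\dom(b), u}(\alpha'_0, b)$, and matching the two expansions gives $\tff{u}(\alpha_0) = \tff{u}(\alpha'_0)$ and $X = Y$; the inductive hypothesis then yields $\view{u}(\alpha_0) = \view{u}(\alpha'_0)$.

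The hard part is extracting the right constraint on the pair $(a,b)$ from $X = Y$, and this is exactly where non-conflation is used. I split on whether $\dom(a) = u$. If $\dom(a) = u$, reflexivity together with uniqueness of the edge label forces $X = (\tff{u}(\alpha_0), a)$, and non-conflation (applied with $v = u$) rules out $X = Y$ whenever the edge $\dom(b) \flowsto u$ is $f$-labeled with $f \neq \top$. Hence that edge must be $\top$-labeled, so $Y = (\tff{\dom(b)}(\alpha'_0), b)$; matching triples gives $a = b$ (and hence $\dom(b) = u$), and both actions enter the views, which extend identically on each side by \TFF-compliance on the final observation. If instead $\dom(a) \neq u$, a symmetric non-conflation argument rules out $\dom(b) = u$, so neither action enters the views, and \TFF-compliance provides the matching final observations needed for the absorptive concatenations to agree. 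This completes the induction.
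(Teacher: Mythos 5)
Your strategy is essentially the paper's: induction on $|\alpha|+|\alpha'|$, the swap trick when the last action of one string transmits nothing to $u$, and non-conflation invoked at exactly the point the paper invokes it, namely to keep filtered information $\intn(f)(\cdot,\cdot)$ apart from pairs $(\tff{u}(\cdot),b)$ transmitted along the reflexive $\top$-edge into $u$. Your split on $\dom(a)=u$ versus $\dom(a)\neq u$ in the main case is a reorganization of the paper's split on the labels of the two incoming edges, and the non-conflation reasoning in both sub-cases is sound (in the second sub-case one also needs the observation that a $\top$-labelled edge from $\dom(a)\neq u$ cannot match the self-pair, since matching would force $a=b$ and hence $\dom(a)=u$; your ``symmetric argument'' compresses this but it is correct).

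There is, however, one step that fails as written: when $\flowstoin{\dom(a)}{u}$ you assert that $X=\inff{\dom(a),u}(\alpha_0,a)$ is nonempty. This is false in general, because on an edge labelled $f\neq\top$ the filter may return $\intn(f)(\alpha_0,a)=\epsilon$, in which case $\tff{u}(\alpha_0 a)=\tff{u}(\alpha_0)$ and your inference ``this forces $\alpha'\neq\epsilon$'' is unwarranted (that situation must instead be folded into your first case, using reflexivity and uniqueness of edge labels to get $\dom(a)\neq u$). The same omission undermines the matching step: from $\tff{u}(\alpha_0)\concat X=\tff{u}(\alpha'_0)\concat Y$ you extract $\tff{u}(\alpha_0)=\tff{u}(\alpha'_0)$ and $X=Y$, but this cancellation is only legitimate when both appended elements are nonempty; if, say, $Y=\intn(g)(\alpha'_0,b)=\epsilon$, then $\tff{u}(\alpha'_0 b)=\tff{u}(\alpha'_0)$ and the correct move is to swap the roles of $\alpha_0 a$ and $\alpha'_0 b$ and apply the induction hypothesis to the shorter pair, not to equate $X$ with $Y$. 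The paper devotes two explicit cases to precisely these possibilities (a filtered edge with empty output for the last action of $\alpha$, and the corresponding possibility for the last action of $\alpha'$). Once you add those cases, the rest of your argument goes through and coincides with the paper's proof.
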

\begin{proof}
  By induction on $|\alpha| + |\alpha'|$. The base case is
  trivial. Suppose that the result holds for all sequences of shorter
  combined length, and consider strings $\alpha a$ and $\alpha'$ such that 
  $\tff{u}(\alpha a) = \tff{u}(\alpha')$.
Note that by \TFF-compliance this implies that   $\obs_{u}(\alpha a) = \obs_{u}(\alpha')$.
We consider several cases. 

\begin{itemize}
\item If $\notflowstoin{\dom(a)}{u}$ then $\tff{u}(\alpha) =
  \tff{u}(\alpha a) = \tff{u}(\alpha')$, so $\view{u}(\alpha) =
  \view{u}(\alpha')$, by the induction hypothesis, and
  $\obs_{u}(\alpha a) = \obs_u(\alpha)$, by 
  \TFF-compliance. 
  By
  reflexivity of $\flowsto$, we have $\dom(a) \neq u$, so it follows
  that
  \begin{align*}
    \view{u}(\alpha a) &= \view{u}(\alpha) \aconcat \obs_{u}(\alpha a) \\
 &= \view{u}(\alpha)\\
&= \view{u}(\alpha')
  \end{align*}

\item If $\flowstoin[f]{\dom(a)}{u}$ and $f \ne \top$ and
  $\intn(f)(\alpha, a) = \epsilon$ then 
  $$ \tff{u}(\alpha) = \tff{u}(\alpha)\concat \intn(f)(\alpha, a) =\tff{u}(\alpha a) =  \tff{u}(\alpha') \mathpunct.$$ 
  Thus we have $\view{u}(\alpha) =
  \view{u}(\alpha')$, by the induction hypothesis, and
 $\obs_{u}(\alpha a) = \obs_u(\alpha)$, by   \TFF-compliance. 
  By reflexivity of $\flowsto$, we have $\dom(a) \neq u$. Therefore
  \begin{align*}
    \view{u}(\alpha a) &= \view{u}(\alpha) \aconcat \obs_{u}(\alpha a)\\
    &= \view{u}(\alpha) \\ 
     &= \view{u}(\alpha')
  \end{align*}

\item If $\flowstoin[f]{\dom(a)}{u}$ and $f \ne \top$ and $\intn(f)(\alpha, a) \ne \epsilon$
  then 
  $ \tff{u}(\alpha') = \tff{u}(\alpha a) = \tff{u}(\alpha) \concat \intn(f)(\alpha, a)$ 
  yields that $\alpha' \ne \epsilon$. 
Assume
  $\alpha' = \beta b$. 

\begin{itemize}
\item If either $\notflowstoin{\dom(b)}{u}$, or  $\flowstoin[g]{\dom(b)}{u}$ and $g \ne \top$ and $\intn(g)(\beta, b) = \epsilon$,
then we can swap the roles of $\alpha a$ and $\beta b$ and apply the argument from the previous two cases.

\item If $\flowstoin[g]{\dom(b)}{u}$ and $g \ne \top$ and
  $\intn(g)(\beta, b) \ne \epsilon$ then 
  $$\tff{u}(\alpha a) =
  \tff{u}(\alpha)\concat \intn(f)(\alpha, a) = \tff{u}(\beta)\concat \intn(g)(\beta,
  b) = \tff{u}(\beta b)\mathpunct,$$ and so
   $\tff{u}(\alpha) = \tff{u}(\beta)$, 
   since neither of the appended elements is $\epsilon$. 
     By the inductive
  hypothesis, $\view{u}(\alpha) = \view{u}(\beta)$.
  Since $f \ne \top$ and $g \ne \top$, by the reflexivity of
  $\flowsto$, we have $\dom(a)\ne u$ and $\dom(b) \ne u$. Thus
    \begin{align*}
      \view{u}(\alpha a) &= \view{u}(\alpha) \aconcat \obs_u(\alpha a) \\
      &= \view{u}(\beta) \aconcat \obs_u(\beta b) \\
      &= \view{u}(\beta b).
    \end{align*}

  \item If $\flowstoin[\top]{\dom(b)}{u}$ then 
  $$\tff{u}(\alpha) \concat \intn(f)(\alpha, a) = \tff{u}(\alpha a)= \tff{u}(\beta b)
    = \tff{u}(\beta)\concat 
    (\tff{u}(\beta), b) \mathpunct.$$ 
    Since $ \intn(f)(\alpha, a) \neq \epsilon$, it follows that 
    $\tff{u}(\alpha) = \tff{u}(\beta)$ and
    $ \intn(f)(\alpha, a)  =  (\tff{u}(\beta), b)$. 
    However, 
        by the assumption that the interpretation is non-conflating, 
    the latter
    implies that $\dom(b) \neq u$. Also, $\dom(a) \neq u$ by reflexivity. 
     By the induction hypothesis, we obtain from   $\tff{u}(\alpha) = \tff{u}(\beta)$ that 
     $\view{u}(\alpha)  =\view{u}(\beta)$. 
         Thus
    \begin{align*}
      \view{u}(\alpha a) &= \view{u}(\alpha) \aconcat \obs_u(\alpha a) \\
      &= \view{u}(\beta) \aconcat \obs_u(\beta b) \\
      &= \view{u}(\beta b).
    \end{align*}
\end{itemize}

\item If $\flowstoin[\top]{\dom(a)}{u}$ then 
$ \tff{u}(\alpha') = \tff{u}(\alpha a) =
  \tff{u}(\alpha)\concat (\tff{\dom(a)}(\alpha), a) $. This means
  that $\alpha' \ne \epsilon$.  Assume $\alpha' = \beta b$.

  \begin{itemize}
  \item 
  If either $\notflowstoin{\dom(b)}{u}$, or $\flowstoin[g]{\dom(b)}{u}$ and $g \ne \top$ 
  then we can swap the roles of $\alpha a$ and $\beta b$ and apply the argument 
    from the first 
    three 
    cases above.

  \item If $\flowstoin[\top]{\dom(b)}{u}$ then $\tff{u}(\alpha a) =
 \tff{u}(\alpha) \concat (\tff{\dom(a)}(\alpha), a) $
  and $\tff{u}(\beta b) =  \tff{u}(\beta)\concat (\tff{\dom(b)}(\beta), b)$.  Therefore, 
 we have $a=b$,
    and $\tff{u}(\alpha) = \tff{u}(\beta)$.  By the inductive
    hypothesis, we have $\view{u}(\alpha)= \view{u}(\beta)$.  
   If $\dom(a) = u$, then 
    \begin{align*}
      \view{u}(\alpha a) &= \view{u}(\alpha)\, a\, \obs_u(\alpha a) \\
      &= \view{u}(\beta) \,b\, \obs_u(\beta b) \\
      &= \view{u}(\beta b).
    \end{align*}
Alternately, if $\dom(a) \neq u$, then 
  \begin{align*}
      \view{u}(\alpha a ) &= \view{u}(\alpha) \aconcat \obs_{u}(\alpha a)\\
      &= \view{u}(\beta) \aconcat \obs_{u}(\beta b)\\
      &= \view{u}(\beta b).
    \end{align*}

\end{itemize}
\end{itemize}
\end{proof}

We note that this result does not hold for conflating interpreted architectures. 
For example, consider an interpreted architecture in which 
$\flowstoin[f]{\dom(a)}{\dom(b)}$ and $\intn(f)(\alpha, a) = (\epsilon,b)$
and a system in which $\obs_u(s) = \bot$ for all $u\in D$ and states $s$. 
This system is necessarily \TFF-compliant with the architecture, but 
we have $\tff{\dom(b)}(a) = (\epsilon,b) = \tff{\dom(b)}(b)$
but $\view{\dom(b)}(a) = \bot$ and $\view{\dom(b)}(b) = \bot b \bot$. 

However, it is always possible to convert an interpretation $\intn$ 
 into another equivalent non-conflating interpretation $\intn'$, 
by defining $\intn'(f)(\alpha, a) = (\intn(f)(\alpha, a),x)$ for some
 fixed value $x$ that is not in $A$. 
Note that $\intn$ and $\intn'$ are equivalent in the sense that  
 $\intn(f)(\alpha, a) = \intn(g)(\beta, b)$ if and only if  
  $\intn'(f)(\alpha, a) = \intn'(g)(\beta, b)$, so the 
 ``information content''  of 
 the values of 
 $\intn$ and $\intn'$ 
  are the same. We therefore assume in what follows
  that interpreted architectures are non-conflating.

\subsection{Example: Starlight Interactive Link}

\newcommand{\strltArch}{\ensuremath{\mathcal{SL}}}

The Starlight Interactive Link~\citep{starlight-interactive} provides
interactive access from a high-security network to a low-security
network. This allows a user on the high-security network to have
windows open 
 on her screen
at differing security levels, while
ensuring no high-security information goes to the low-security
network.

Starlight has both hardware and software components. The hardware
device is connected to both the high-security and low-security
networks, and has a keyboard and mouse attached. There is a switch
that can toggle between the high-security and low-security networks; input from
the mouse and keyboard are sent to the network currently selected by
the switch.  Starlight allows data from the low-security network to be
transferred to the high-security network, but not vice versa. The
Starlight software components include proxy window clients and servers
to allow the windowing environment to work in the presence of the
Starlight hardware.

The following diagram shows 
extended
architecture {\strltArch}, an
architecture for the Starlight Interactive Link. 
The architecture uses
a filter function to specify what information the Starlight
Interactive Link may send to the low-security network.

 \begin{center}
   \includegraphics[height=2.5\diagsize]{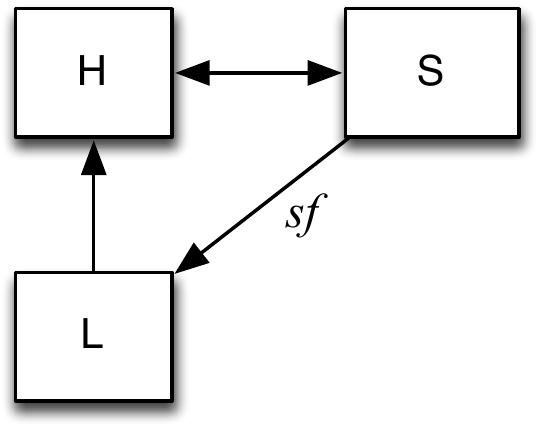}
 \end{center}

Domain $H$ represents the high-security network (including the user's
computer); domain $L$ represents the low-security network; domain $S$
represents the Starlight Interactive Link (including input devices),
which routes keyboard and mouse events to either the high-security or
low-security network.
Note that there is no edge from domain $L$ to domain $S$, as no information is sent directly from the low-security network to the Starlight Interactive Link. Instead, data from the low-security network (such as updates to the contents of a window) are sent to the high-security network, and thence to software components of the Starlight Interactive Link.

\newcommand{\strltFltr}{\mathit{sf}}
\newcommand{\strltToggle}{\mathtt{t}}

The edge labeled  $\strltFltr$
restricts what information is allowed to
flow 
from 
the Starlight Interactive Link to  low-security network
$L$. 
We
 present an architectural specification $\archspec_{\strltArch}$ based on {\strltArch}
that expresses a constraint on 
interpretations of $\strltFltr$. 
An interpretation $(A, \dom, \intn)$ is included in $\archspec_{\strltArch}$ if the 
following conditions hold. 
Let $A_S = \{a\in A~|~\dom(a) =S\}$ 
be the set of actions belonging to domain $S$. We
assume that there is a distinguished action $\strltToggle \in A_S$
that toggles which network is receiving the input events. 
Intuitively, $L$ 
is permitted to 
know about the occurrence of any
$t$ action, and the occurrence of any other action in $A_S$ (e.g.,
keyboard or mouse input) that happens while the low-security network
is selected (i.e., after an odd number of toggle actions).
We capture this by the following assumption on 
interpretation  $\intn$: 
\begin{align*}
 \intn(\strltFltr)(\alpha, a) &=
  \begin{cases}
    a &\text{if $a=\strltToggle$ or} \\
      &~ \text{$\numof_\strltToggle(\alpha)$ is odd and $a \in A_S$} \\
    \epsilon & \text{otherwise}
  \end{cases}
\end{align*}
where $\numof_\strltToggle(\alpha)$ is the number of occurrences of $t$ in $\alpha$. 
For example, with the $\alpha_i$ consisting of only $H$ and $L$ actions, and 
$a\neq\strltToggle$ an $S$ action, we have 
\begin{tabbing} 
xxxxxxx\=\kill \\
\> $\intn(\strltFltr)(\alpha_0, a) = \epsilon$,  \\
\> $\intn(\strltFltr)(\alpha_0\, a ,\strltToggle ) = t$,  \\
\> $\intn(\strltFltr)(\alpha_0\, a\, \strltToggle \,\alpha_1,  a) = a$, and \\
\> $\intn(\strltFltr)(\alpha_0\, a\, \strltToggle\, \alpha_1\,  a\, \strltToggle\, \alpha_2, a) = \epsilon$, etc. \\ 
\end{tabbing} 

It is straightforward to check 
for such an interpretation  \iftr that it is non-conflating and \fi
that $\intn(\strltFltr)$ is $\tff{S}$-compatible. 

\iftr
\subsubsection{Information security properties}
\fi %

The component $S$, corresponding to the Starlight Interactive Link, is
a trusted component, and the $\strltFltr$ filter is a local
constraint on the component. To verify that a system satisfies
specification $(\strltArch, \archspec_{\strltArch})$, we would need to
verify that $S$ appropriately filters information sent to $L$, and
that all other communication in the system complies with the
architecture, to wit, that $H$ cannot communicate directly with $L$, and $L$
cannot communicate directly with $S$.

If a system does satisfy
specification  $(\strltArch, \archspec_{\strltArch})$, 
then we can show that
domain $L$ never 
knows 
any
$H$-dependent propositions.
Indeed, we can show something stronger, that $L$ never 
knows 
any
proposition about $H$ actions and $S$ actions that occur while the
high-network is selected. 

\newcommand{\togH}{\ensuremath{\mathit{togH}}}
 \newcommand{\togL}{\ensuremath{\mathit{togL}}}

To fully capture this intuition requires a little care, since a proposition 
such as ``$H$ did action $a$ between the first and second $\strltToggle $ actions" 
should not be known to $L$, but refers both to something that should be 
hidden from $L$ and to something that $L$ is permitted to observe (the $\strltToggle $ actions). 
We handle this using an approach that is similar to the way we used 
$G$-dependent propositions above.

Let the 
{\em canonical form} of a sequence $\alpha \in A^*$ be the (unique) representation
$\alpha = \alpha_0 \strltToggle  \alpha_1 \strltToggle  \alpha_2 \ldots \strltToggle  \alpha_n$ where each $\alpha_i$ 
contains no $\strltToggle $ actions.  
\commentout{ 
Define the function $\togH: A^* \rightarrow A^*$ 
so that if $\alpha = \alpha_0 \strltToggle  \alpha_1 \strltToggle  \alpha_2 \ldots \strltToggle  \alpha_n$ is the canonical 
form then $$\togH(\alpha) = 
 (\alpha_0\restrict HS)\, \strltToggle \, (\alpha_1\restrict H) \, \strltToggle \, (\alpha_2 \restrict HS)\, \strltToggle  \ldots \strltToggle \, (\alpha_n\restrict G_n)$$
 where $G_n$ is $HS$ if $n$ is even and $G_n = H$ if $n$ is odd. 
 (Here we abbreviate the set  $\{H\}$ to $H$ and $\{H,S\}$ to $HS$.)  
Intuitively, $\togH(\alpha)$ is the subsequence of $\alpha$ consisting of events that 
$L$ is not permitted to know about, plus any $\strltToggle $ events. Formally, 
a proposition $X \subseteq A^*$ is {\em toggle-H dependent}  if 
for all $\alpha, \beta \in A^*$, if $\togH(\alpha) = \togH(\beta)$, then $\alpha \in X$ iff
$\beta \in X$. 

\newcommand{\domT}{T} 
To capture information that can be deduced from just the toggle events 
in a sequence (e.g., whether the number of $\strltToggle $ events is prime), we add to the logic a
new domain $\domT$ that sees only the 
$\strltToggle$ events. 
Formally, this domain 
corresponds to the view function defined by $\view{T}(\alpha) = \numof_\strltToggle(\alpha)$, 
and is associated with the knowledge operator $\knows{T}$ with semantics 
derived in the usual way from this view function. 

We can now formally  capture that the only toggle-$H$ dependent propositions that 
$L$ knows are those that can be derived just from the $t$ events that  $L$ sees, in the 
following result. 

\begin{theorem}  \label{thm:starlight-toggle-action-dependent}  
Suppose system $M$ is \TFF-compliant with  
  architectural specification  $(\strltArch, \archspec_{\strltArch})$.   
  If $\pi(p)$
is a  
toggle-$H$ dependent proposition, then
$ \interpsysvf{\sysvar}{\pi} \sat  \knows{L} p \rimp \knows{T}p $. 
\end{theorem}

\begin{proof} 
We first establish a fact about $\tff{L}$. Let $\togL:A^*\rightarrow A^*$ 
be the function such that if $\alpha = \alpha_0 \strltToggle  \alpha_1 \strltToggle  \alpha_2 \ldots \strltToggle  \alpha_n$
is the canonical form, then 
$$\togL(\alpha) =  (\alpha_0\restrict L)\, \strltToggle \, (\alpha_1\restrict LS) \, \strltToggle \, (\alpha_2 \restrict L)\, \strltToggle  \ldots \strltToggle \, (\alpha_n\restrict H_n)\mathpunct,$$ where $H_n = L$ if $n$ is even and $H_n = LS$ if $n$ is odd. 
Intuitively, this is the subsequence consisting of the events that $L$ is permitted to observe. 
Note that since $\numof_\strltToggle(\alpha) = \numof_\strltToggle(\togL(\alpha))$, 
we have  $\intn(\strltFltr)(\alpha, a) =  \intn(\strltFltr)(\togL(\alpha), a) $ when $\dom(a) = S$. 
}%
Define $\togL:A^*\rightarrow A^*$ 
be the function such that if $\alpha = \alpha_0 \strltToggle  \alpha_1 \strltToggle  \alpha_2 \ldots \strltToggle  \alpha_n$
is the canonical form, then 
$$\togL(\alpha) =  (\alpha_0\restrict L)\, \strltToggle \, (\alpha_1\restrict LS) \, \strltToggle \, (\alpha_2 \restrict L)\, \strltToggle  \ldots \strltToggle \, (\alpha_n\restrict H_n)\mathpunct,$$ where $H_n = L$ if $n$ is even and $H_n = LS$ if $n$ is odd. 
Intuitively, this is the subsequence consisting of the events that $L$ is permitted to observe. 
All other events in $\alpha$ are $H$ events, and $S$ events that occurred while the system was toggled to High. 
We say that a proposition $X$ is {\em toggle-High dependent at $\alpha$} if there exists a sequence $\beta \in A^*$ such that 
$\togL(\alpha) = \togL(\beta)$ but $\alpha\in X$ iff $\beta \not \in X$. Intuitively, this says that  changing  
i$\alpha$ by adding or deleting $H$ events and $S$ events that $L$ is not permitted to know about, we can change the value  
of the proposition $X$.%
\footnote{This definition of toggle-High-dependent differs from that used in the earlier version of
this work published at LAW12. The present formulation simplifies and strengthens 
 the corresponding result, Theorem ~\ref{thm:starlight-toggle-action-dependent}. 
 }  

\begin{theorem}  \label{thm:starlight-toggle-action-dependent}  
Suppose system $M$ is \TFF-compliant with  
  architectural specification  $(\strltArch, \archspec_{\strltArch})$.   
  If $\pi(p)$
is a  
toggle-$H$ dependent proposition at $\alpha$, then
$ \interpsysvf{\sysvar}{\pi}, \alpha \sat  \neg  \knows{L} p $.
\end{theorem}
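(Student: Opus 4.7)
The plan is to reduce the theorem to a structural fact about $\tff{L}$: namely, that $\tff{L}(\alpha)$ depends only on $\togL(\alpha)$. Once we have this, toggle-$H$ dependence immediately yields two runs that $L$ cannot tell apart, and the result follows from the Lemma~\ref{lem:tff-view} bridge from $\tff{L}$ to $\view{L}$.

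More concretely, I would first establish the key lemma: for every $\alpha \in A^*$, $\tff{L}(\alpha) = \tff{L}(\togL(\alpha))$. The proof is by induction on $|\alpha|$, with a case analysis on the domain of the last action $a$ in $\alpha a$. If $\dom(a) = H$, both $\tff{L}$ (since $H \notflowsto L$) and $\togL$ (since $H$ is not in any restriction set) are unchanged. If $\dom(a) = L$, both $\tff{L}$ and $\togL$ append a contribution determined by $a$ and the prior value. If $a = \strltToggle$, both append the action $t$. If $\dom(a) = S$ and $a \ne \strltToggle$, the behaviour splits on the parity of $\numof_\strltToggle(\alpha)$: when odd, both $\tff{L}$ and $\togL$ keep $a$; when even, both drop it. The critical point is that $\numof_\strltToggle(\alpha) = \numof_\strltToggle(\togL(\alpha))$, since $\togL$ preserves all toggle events, so the parity test inside $\intn(\strltFltr)$ gives the same answer on $\alpha$ as it would on $\togL(\alpha)$. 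Combined with the inductive hypothesis, each case closes.

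From the key lemma it follows immediately by transitivity that $\togL(\alpha) = \togL(\beta)$ implies $\tff{L}(\alpha) = \tff{L}(\beta)$. Now suppose $\pi(p)$ is toggle-$H$ dependent at $\alpha$. By definition there exists $\beta \in A^*$ with $\togL(\alpha) = \togL(\beta)$ and $\alpha \in \pi(p)$ iff $\beta \notin \pi(p)$. Hence $\tff{L}(\alpha) = \tff{L}(\beta)$. Since $M$ is \TFF-compliant with a (non-conflating) interpretation of $(\strltArch, \archspec_{\strltArch})$, Lemma~\ref{lem:tff-view} gives $\view{L}(\alpha) = \view{L}(\beta)$, i.e.\ $\alpha \approx_L \beta$. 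Since the truth value of $p$ differs on $\alpha$ and $\beta$, we have $\sysvar, \pi, \alpha \sat \neg \knows{L} p$, as required.

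The main obstacle is the inductive proof of the key lemma, specifically making sure the recursion on $\togL$ (which is defined via the canonical form) interacts cleanly with the one-step recursion of $\tff{L}$. The subtle point is verifying the parity invariant $\numof_\strltToggle(\alpha) = \numof_\strltToggle(\togL(\alpha))$ and using $\tff{L}$-compatibility of $\intn(\strltFltr)$ implicitly — the filter consults only the toggle count of its history argument, which is precisely the feature preserved by $\togL$. A secondary bookkeeping point is checking that, when $a$ is an $S$ event other than $\strltToggle$, the "current segment" $\alpha_n$ in the canonical form of $\alpha$ uses the same parity rule ($H_n = L$ for $n$ even, $LS$ for $n$ odd) that governs the filter, so both sides of the induction really do agree.
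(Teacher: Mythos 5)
Your proposal is correct and follows essentially the same route as the paper's own proof: the central step there is exactly your key lemma $\tff{L}(\alpha) = \tff{L}(\togL(\alpha))$, proved by induction with the same case split on $\dom(a)$ and the same observation that $\togL$ preserves the toggle count so the filter agrees on $\alpha$ and $\togL(\alpha)$, followed by the same appeal to toggle-$H$ dependence and Lemma~\ref{lem:tff-view} to get $\alpha \approx_L \beta$. (If anything, your explicit separate treatment of the $a = \strltToggle$ case is slightly more careful than the paper's.)
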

\begin{proof} 
We 
first 
show that $ \tff{L}(\alpha) = \tff{L}(\togL(\alpha))$ for all $\alpha \in A^*$. 
The proof is by induction on $\alpha$. The base case of $\alpha = \epsilon$ is trivial. 
Consider the sequence $\alpha a$ and assume that $ \tff{L}(\alpha) = \tff{L}(\togL(\alpha))$. 
Let the canonical form of $\alpha $ be $\alpha = \alpha_0 \strltToggle  \alpha_1 \strltToggle  \alpha_2 \ldots \strltToggle  \alpha_n$

If $\dom(a)  = L$ then $\togL(\alpha a ) = \togL(\alpha) a$ and 
\begin{align*} 
\tff{L}(\alpha a)  & = \tff{L}(\alpha )\concat (\tff{L}(\alpha ),a)  \\ 
& = \tff{L}(\togL(\alpha) )\concat (\tff{L}\togL((\alpha )),a) & \text{by induction}   \\ 
& = \tff{L}(\togL(\alpha) a ) \\ 
& = \tff{L}(\togL(\alpha a ) ) \\ 
\end{align*} 
as required.
If $\dom(a)  = S$ and $\numof_\strltToggle(\alpha)$ is odd then $\togL(\alpha a ) = \togL(\alpha) a$ 
 and 
\begin{align*} 
\tff{L}(\alpha a)  & = \tff{L}(\alpha )\concat \intn(\strltFltr)(\alpha ,a)  \\ 
 & = \tff{L}(\togL(\alpha) )\concat \intn(\strltFltr)(\alpha ,a)  & \text{by induction}\\ 
& = \tff{L}(\togL(\alpha) )\concat \intn(\strltFltr)(\togL(\alpha ),a) & \text{by observation above}   \\ 
& = \tff{L}(\togL(\alpha) a )  \\ 
& = \tff{L}(\togL(\alpha a))   \\ 
\end{align*} 
as required. 
If $\dom(a)  = S$ and $\numof_\strltToggle(\alpha)$ is even then $\togL(\alpha a ) = \togL(\alpha) $ 
 and $ \intn(\strltFltr)(\alpha ,a) = \epsilon$, so 
\begin{align*} 
\tff{L}(\alpha a)  & = \tff{L}(\alpha )\concat \intn(\strltFltr)(\alpha ,a)  \\ 
  & = \tff{L}(\alpha )  \\ 
 & = \tff{L}(\togL(\alpha) ) & \text{by induction}\\ 
 & = \tff{L}(\togL(\alpha a))   \\ 
\end{align*} 
as required. 
Finally, if $\dom(a)  = H$  then $\togL(\alpha a ) = \togL(\alpha) $ , 
 and since $\notflowstoin{H}{L}$, we have 
\begin{align*} 
\tff{L}(\alpha a)  & = \tff{L}(\alpha )  \\ 
 & = \tff{L}(\togL(\alpha) ) & \text{by induction}\\ 
 & = \tff{L}(\togL(\alpha a))   \\ 
\end{align*} 
as required. 
This completes the proof that $\tff{L}(\alpha) = \tff{L}(\togL(\alpha))$. 

We now prove the theorem. Suppose system $M$ is \TFF-compliant with  
  architectural specification  $(\strltArch, \archspec_{\strltArch})$ and let 
   $\pi(p)$ be a toggle-$H$ dependent proposition
   at $\alpha$.  
   \commentout{ 
   We prove 
$ \interpsysvf{\sysvar}{\pi} \sat  \knows{L} p \rimp \knows{T}p $ 
by considering the contrapositive. Suppose that  
$ \interpsysvf{\sysvar}{\pi},\alpha  \sat  \neg  \knows{T}p $ for  some $\alpha \in A^*$. 
We prove $ \interpsysvf{\sysvar}{\pi},\alpha  \sat  \neg  \knows{L}p $. 
By the semantics of $\knows{T}$, there exists $\alpha' \in A^*$ such that 
$\numof_\strltToggle(\alpha ) = \numof_\strltToggle(\alpha')=n$ and 
$ \interpsysvf{\sysvar}{\pi},\alpha'  \sat  \neg  p $. 
Let the canonical forms of these sequences be 
$\alpha = \alpha_0 \strltToggle  \alpha_1 \strltToggle  \alpha_2 \ldots \strltToggle  \alpha_n$
and 
$\alpha' = \alpha'_0 \strltToggle  \alpha'_1 \strltToggle  \alpha'_2 \ldots \strltToggle  \alpha'_n$.
Define the sequence $\beta = \beta_0 \strltToggle  \beta_1 \strltToggle  \ldots \strltToggle  \beta_n$, 
where $\beta_i =( \alpha_i\restrict L )( \alpha'_i\restrict HS )$ when $i$ is even and
  $\beta_i =( \alpha_i\restrict LS )( \alpha'_i\restrict H )$ when $i$ is odd. 

Note that $\togL(\beta) = \gamma_0 \strltToggle  \gamma_1 \strltToggle  \ldots \strltToggle  \gamma_n$ where 
$\gamma_i = (( \alpha_i\restrict L )( \alpha'_i\restrict HS ))\restrict L =  \alpha_i\restrict L$
when $i$ is even and
  $\gamma_i =(( \alpha_i\restrict LS )( \alpha'_i\restrict H ))\restrict LS
  =  \alpha_i\restrict LS   $ when $i$ is odd. Thus $\togL(\alpha) = \togL(\beta)$. 
  It follows from the fact shown above that 
  $\tff{L}(\alpha) = \tff{L}(\togL(\alpha)) = \tff{L}(\togL(\beta)) = 
  \tff{L}(\beta) $. Since $M$ is \TFF-compliant, it follows that 
$\view{L}(\alpha) = \view{L}(\beta)$. 

Moreover, $\togH(\beta) = \delta_0 \strltToggle  \delta_1 \strltToggle  \ldots \strltToggle  \delta_n$ where 
$\delta_i = (( \alpha_i\restrict L )( \alpha'_i\restrict HS ))\restrict HS =  \alpha'_i\restrict HS$
when $i$ is even and
  $\delta_i =(( \alpha_i\restrict LS )( \alpha'_i\restrict H ))\restrict H
  =  \alpha'_i\restrict H$ when $i$ is odd. Thus $\togH(\beta) = \togH(\alpha')$. 
  Since $\pi(p)$ is toggle-$H$ dependent and 
$ \interpsysvf{\sysvar}{\pi},\alpha'  \sat  \neg  p $, it follows that 
$ \interpsysvf{\sysvar}{\pi},\beta  \sat  \neg  p $. As shown above, 
$\view{L}(\alpha) = \view{L}(\beta)$, so it follows that 
$ \interpsysvf{\sysvar}{\pi},\alpha  \sat  \neg  \knows{L}p $,as required. 
}%
Them there exists $\beta\in A^*$ such that $\togL(\alpha) =\togL(\beta)$ and 
$\alpha \in \pi(p)$ iff $\beta \not \in \pi(p)$. 
By what was shown above, we have 
$\tff{L}(\alpha) = \tff{L}(\beta)$. Since $M$ is \TFF-compliant, it follows that 
$\view{L}(\alpha) = \view{L}(\beta)$, and we have $ \interpsysvf{\sysvar}{\pi}, \alpha \sat  \neg  \knows{L} p $.
\end{proof}

Note that we would not be able show this
security property in an architecture without filter functions, since
the domain $S$ can communicate with both $H$ and $L$. It is the filter
function that allows us to show that $S$'s communication with $L$
reveals nothing about $H$ actions, and 
only
limited information about $S$
actions.

\commentout{ 
\iftr
\iffixlater
\subsubsection{Feasibility}

XXX TODO LOW PRIORITY

In this section we show that the requirements on the Starlight
Interactive Link system are feasible, by presenting a system
$\sysvar$ that satisfies the architecture and the local semantic
assumptions.

\newcommand{\switch}{\mathit{switch}}
The state of domain $H$ is a sequence of actions. We assume that $H$
observes all $H$, $L$, and $S$ events, and thus the state (and
observations) of $H$ after sequence $\alpha \in A^*$ is simply
$\alpha$. We assume that the state of $S$ is the sequence of all $S$
and $H$ actions performed so far and a single bit indicating if $H$ or
$L$ is selected. Initially $H$ is selected, and the selection is
flipped by each $t$ (toggle) action. So the state of $S$ after
sequence $\alpha \in A^*$ is the pair $(\alpha \restrict \{S,H\}, \switch)$
where $\switch=H$ if an even number of $t$ actions occur in $\alpha$, and
$\switch=L$ otherwise. Finally, we assume that the state of $L$ is the
sequence of all $L$ actions, all $t$ actions performed by $S$, and any
$S$ actions performed while the selection bit of $S$ equals $L$.

\begin{theorem}
  System $\sysvar$ satisfies \TFF-security.
\end{theorem}
\begin{proof}

  We need to show that for all agents $u$ and all $\alpha,
    \alpha' \in A^*$, if $\tff{u}(\alpha) = \tff{u}(\alpha')$ then
    $\obs_u(\run{s_0}{\alpha}) = \obs_u(\run{s_0}{\alpha'})$.

    We proceed by induction on $|\alpha| + |\alpha'|$. The base case,
    $\alpha = \alpha' = \epsilon$ is trivial. Assume that the
    inductive hypothesis holds for all sequences of shorter combined
    length.

    Consider the sequences $\alpha a$ and $\alpha'$. Suppose
    $\tff{u}(\alpha a) = \tff{u}(\alpha')$. We consider cases based on
    possibilities for agent $u$.

    \begin{enumerate}
    \item $u = H$
      Regardless of the domain of action $a$, we have
      $\flowstoin{\dom(a)}{H}$, and so $\tff{H}(\alpha a) =
      \tff{H}(\alpha)\fdefault_{\dom(a)}(\alpha a)$. Since
      $\tff{u}(\alpha a) = \tff{u}(\alpha')$ we must have $\alpha' =
      \beta b$ and $\tff{H}(\beta b) =
      \tff{H}(\beta)((\view{\dom(b)}(\beta), b), \epsilon)$ where
      $\tff{H}(\alpha) = \tff{H}(\beta)$. Therefore, $a=b$, and by the
      inductive hypothesis, $\obs_H(\run{s_0}{\alpha}) =
      \obs_H(\run{s_0}{\beta})$. Since the state of $H$ is simply the
      sequence of all actions, $\alpha = \beta$, and so,
      $\obs_H(\run{s_0}{\alpha a}) = \alpha a = \beta b =
      \obs_H(\run{s_0}{\beta b})$ as required.

    \item $u = S$

      The state of $S$ is the sequence of $S$ actions performed so
      far, and a bit indicating whether an even or odd number of $t$
      actions have been performed. If $\dom(a) = L$, then
      $\tff{S}(\alpha a) = \tff{S}(\alpha)$, and $\obs_S(\alpha a) =
      \obs_S(\alpha)$. By the inductive hypothesis, $\obs_S(\alpha) =
      \obs_S(\alpha')$ and so $\obs_S(\alpha a) = \obs_S(\alpha')$. If
      $\dom(a) \in \{H,S\}$ then $\tff{S}(\alpha a) =
      \tff{S}(\alpha)((\view{\dom(a)}(\alpha), a), \epsilon)$. Since
      $\tff{u}(\alpha a) = \tff{u}(\alpha')$ we must have $\alpha' =
      \beta b$ and $\tff{S}(\beta b) =
      \tff{S}(\beta)((\view{\dom(b)}(\beta), b), \epsilon)$ where
      $\tff{H}(\alpha) = \tff{H}(\beta)$. Therefore, $a=b$, and by the
      inductive hypothesis, $\obs_S(\run{s_0}{\alpha}) =
      \obs_S(\run{s_0}{\beta})$. Since the state of $S$ is the
      sequence of all $S$ and $H$ actions, and a bit that is toggled
      with every $t$ action, we have $\alpha \restrict \{H,S\} = \beta
      \restrict \{H,S\}$, and $\obs_S(\run{s_0}{\alpha}) =
      \obs_S(\run{s_0}{\beta}) = (\alpha\restrict\{H,S\}, s)$. Thus,
      $\obs_H(\run{s_0}{\alpha a}) = (\alpha a\restrict \{H,S\}, s') =
      (\beta b \restrict \{H,S\}, s') = \obs_S(\run{s_0}{\beta b})$,
      where $s' = s$ if $a \ne t$, $s' =H$ if $s=L$ and $a=t$, and $s'
      = L$ otherwise.

    \item $u = L$

      The state of $L$ is the sequence of all $L$ actions, all $t$
      actions performed by $S$, and any $S$ actions performed while
      the selection bit of $S$ equals $L$. If
      $\dom(a) = H$, then $\tff{L}(\alpha a) = \tff{L}(\alpha)$, and
      $\obs_L(\alpha a) = \obs_L(\alpha)$. By the inductive
      hypothesis, $\obs_L(\alpha) = \obs_L(\alpha')$ and so
      $\obs_L(\alpha a) = \obs_L(\alpha')$. 

      If $\dom(a) = L$ then $\tff{L}(\alpha a) =
      \tff{L}(\alpha)((\view{L}(\alpha), a), \epsilon)$. Since
      $\tff{u}(\alpha a) = \tff{u}(\alpha')$ we must have $\alpha' =
      \beta b$ and $\tff{L}(\beta b) =
      \tff{L}(\beta)((\view{\dom(b)}(\beta), b), \epsilon)$ where
      $\tff{L}(\alpha) = \tff{L}(\beta)$. Therefore, $a=b$, and by the
      inductive hypothesis, $\obs_L(\run{s_0}{\alpha}) =
      \obs_L(\run{s_0}{\beta})$. Thus, $\obs_L(\run{s_0}{\alpha a}) =
      \obs_L(\run{s_0}{\alpha}) a = \obs_L(\run{s_0}{\beta}) b =
      \obs_L(\run{s_0}{\beta b})$ as required.

      If $\dom(a) = S$ then $\tff{L}(\alpha a) =
      \tff{L}(\alpha)(\strltFltr(\alpha a), \epsilon)$.  If $\strltFltr(\alpha a)
      = \epsilon$, then $\tff{L}(\alpha a) = \tff{L}(\alpha)$, and $a
      \ne t$ and $\alpha$ contains an even number of $t$
      actions. Since the bit in the state of $S$ is set to $L$ only
      when there is an odd number of $t$ actions, $\obs_L(\alpha a) =
      \obs_L(\alpha)$. By the inductive hypothesis, $\obs_L(\alpha) =
      \obs_L(\alpha')$, and so $\obs_L(\alpha a) = \obs_L(\alpha')$ as
      required.

      Otherwise, $\strltFltr(\alpha a) = a$ and $a = t$ or $\alpha$ contains
      an odd number of $t$ actions. Since the bit in the state of $S$
      is set to $L$ only when there is an odd number of $t$ actions,
      $\obs_L(\alpha a) = \obs_L(\alpha) a$.  Since $\tff{u}(\alpha a)
      = \tff{u}(\alpha')$ we must have $\alpha' = \beta b$ and
      $\tff{L}(\beta b) = \tff{L}(\beta)(a, \epsilon)$ where
      $\tff{L}(\alpha) = \tff{L}(\beta)$ and $\strltFltr(\beta b) = b =
      a$. Thus, either $b=t$ or $\beta$ contains an odd number of $t$
      actions. Either way, $\obs_L(\beta b) = \obs_L(\beta) b$.  By
      the inductive hypothesis, $\obs_L(\alpha) = \obs_L(\beta)$, and so
      $\obs_L(\alpha a) = \obs_L(\beta b)$ as required.

    \end{enumerate}

\end{proof}
\fi %
\fi %

}%

\subsection{Example: Downgrader}\label{sec:downgrader}

Many systems downgrade information, releasing confidential information
to untrusted entities. Because it is a sensitive operation,
downgrading is typically restricted to certain trusted components,
called downgraders.

\newcommand{\downarch}{\ensuremath{\mathcal{DG}}}

The following diagram shows
extended
architecture $\downarch$, containing
low-security domain $L$, 
high-security data store $H$ for two domains of high-security users $ C$ and $P$, 
and downgrader $D$.
\begin{center}
  \includegraphics[height=2.5\diagsize]{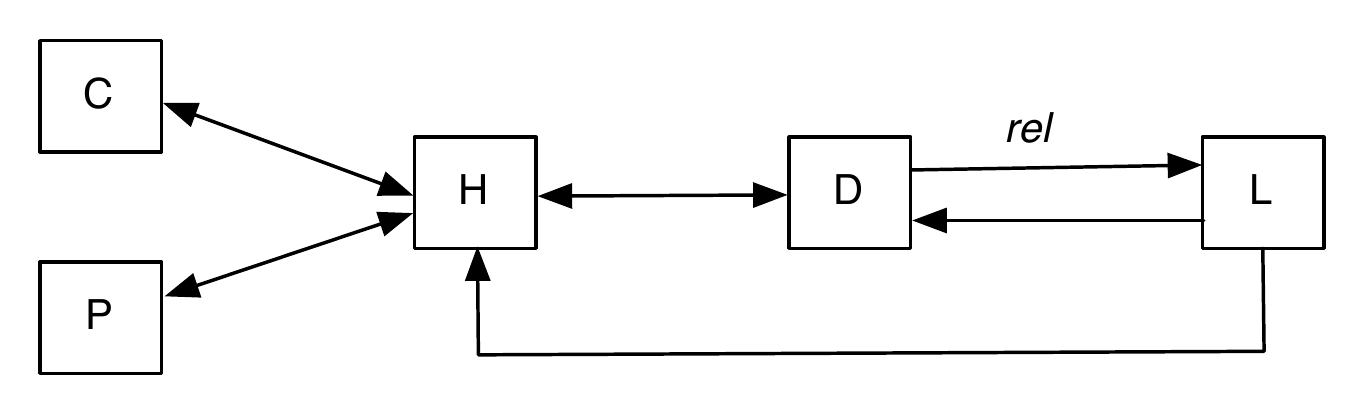}
\end{center}

This architecture represents a system where some information is
allowed to flow from the high-security domain $H$ to low-security
domain $L$, but only through the  downgrader $D$. In
particular, we assume that 
domain $L$ is permitted to 
know 
about the actions
of $P$ (via downgrader $D$), but should never 
know 
anything
about the actions of $ C$. This may be an appropriate model for a
government agency where some sensitive information may be released to
the public, but certain sensitive 
 data (e.g., the identity
 and activity of covert employees, represented by domain $C$) should never be released.

\newcommand{\downrelf}{\ensuremath{\mathit{rel}}} 

The filter function for $\downrelf$ restricts the
information that may be released from $D$ to $L$. 
It is a local constraint on the trusted component $D$.
The filter function
should ensure that nothing is revealed about the actions of $ C$. We
define an architectural specification $\archspec_\downarch$ to
restrict our attention to architectural interpretations with suitable
filter functions.

We first define an interpretation of the function $\downrelf$ that 
states that the information transmitted across this edge is the maximum 
information that $D$ would have if the domain $ C$ were completely cut off 
from the other domains. 
Let $\flowsto'$ be the information flow policy  
such that $(u,v,f) \in \flowsto'$ if and only if $f = \top$ and 
either $u= v= C$ or there exists an edge $(u,v,g)\in \flowsto$ and $\{u,v\} \subseteq \{P, H, D,L\}$.
This is the policy depicted in the following diagram, where, as usual, we 
omit reflexive edges: 
\begin{center}
  \includegraphics[height=2.5\diagsize]{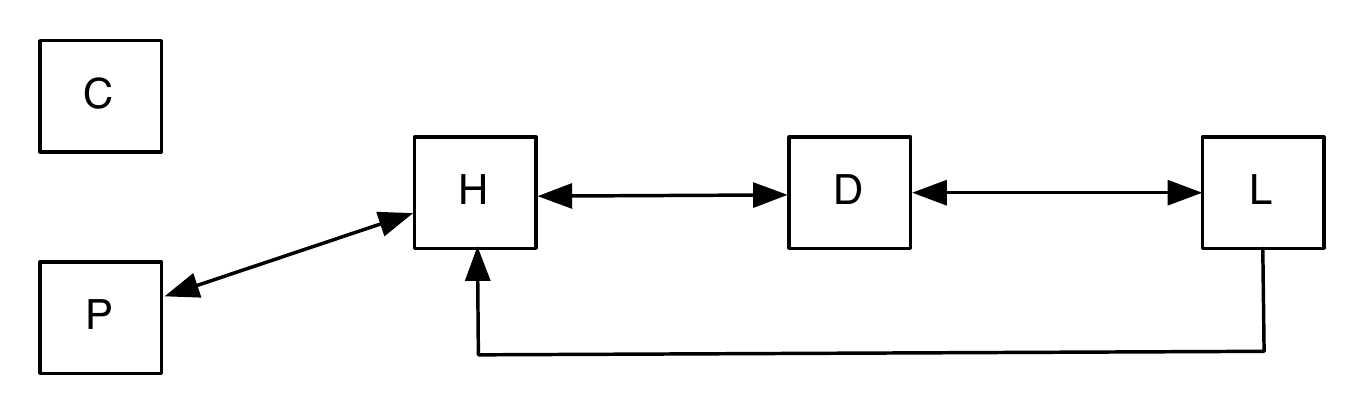}
\end{center}

Let $(A, \dom, \intn) \in \archspec_{\downarch}$ if and only if $(A,
\dom, \intn)$ is an 
interpretation of $\downarch$ such
that
 $\intn(\downrelf)(\alpha,a) = (\tff{D}^{\flowsto'}(\alpha), a)$
 for all $\alpha \in A^*$ and $a\in A$ with $\dom(a) = D$. 
Thus,
architectural interpretations in $\archspec_{\downarch}$ ensure that
$\tff{D}^{\flowsto'}$ is an upper-bound on information that may be
released from $D$ to $L$. 
It is straightforward to check that these architectural specifications are non-conflating. 
The required compatibility constraint is also satisfied, as shown in the following result. 

\begin{proposition} 
If $(A, \dom, \intn) \in \archspec_{\downarch}$ then 
$\intn(\downrelf)$ is $\tff{D}^{\flowsto}$-compatible. 
\end{proposition}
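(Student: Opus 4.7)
The plan is to reduce the compatibility claim to a purely information-theoretic statement about the two policies $\flowsto$ and $\flowsto'$. Unfolding the definition, what we need is: for all $\alpha, \beta \in A^*$ with $\tff{D}^{\flowsto}(\alpha) = \tff{D}^{\flowsto}(\beta)$, and for all $a\in A$ with $\dom(a) = D$, we have $\intn(\downrelf)(\alpha, a) = \intn(\downrelf)(\beta, a)$. Since $\intn(\downrelf)(\alpha, a) = (\tff{D}^{\flowsto'}(\alpha), a)$ and the second component is identical on both sides, this reduces to showing that $\tff{D}^{\flowsto}(\alpha) = \tff{D}^{\flowsto}(\beta)$ implies $\tff{D}^{\flowsto'}(\alpha) = \tff{D}^{\flowsto'}(\beta)$. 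Intuitively, this is plausible because $\flowsto'$ cuts $C$ off from the rest, so $\tff{D}^{\flowsto'}$ records strictly less than $\tff{D}^{\flowsto}$.

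To carry this out, I would strengthen the statement and prove by simultaneous induction on $|\alpha| + |\beta|$: for each $u$ in the set $\{P, H, D\}$ (the domains whose $\tff$-values are referenced recursively in the unfolding of $\tff{D}^{\flowsto'}$), $\tff{u}^{\flowsto}(\alpha) = \tff{u}^{\flowsto}(\beta)$ implies $\tff{u}^{\flowsto'}(\alpha) = \tff{u}^{\flowsto'}(\beta)$. The base case is trivial. For the inductive step, take $\alpha a$ and $\beta$ with $\tff{u}^{\flowsto}(\alpha a) = \tff{u}^{\flowsto}(\beta)$ and do case analysis on whether $\notflowstoin{\dom(a)}{u}$ in $\flowsto$ (in which case $\tff{u}^{\flowsto}(\alpha a) = \tff{u}^{\flowsto}(\alpha)$ and we apply the IH) or $\flowstoin[\top]{\dom(a)}{u}$ (in which case $\beta = \beta' b$ with $\dom(a) = \dom(b)$, $a = b$, $\tff{u}^{\flowsto}(\alpha) = \tff{u}^{\flowsto}(\beta')$, and $\tff{\dom(a)}^{\flowsto}(\alpha) = \tff{\dom(a)}^{\flowsto}(\beta')$, to which the IH applies componentwise). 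A key subcase is when $\dom(a) = C$ and $u = H$: in $\flowsto$ the sequence $\tff{H}^{\flowsto}(\alpha a)$ is extended, but in $\flowsto'$ we have $\tff{H}^{\flowsto'}(\alpha a) = \tff{H}^{\flowsto'}(\alpha)$, so we need to verify that the hypothesis still forces the required equality of $\flowsto'$-values. This falls out because in that subcase $\tff{H}^{\flowsto}(\alpha a)$ strictly extends $\tff{H}^{\flowsto}(\alpha)$, which (by reasoning about the shape of $\beta$) forces a parallel $C$-action at the end of $\beta$.

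The main obstacle, as just indicated, is exactly this asymmetry between how $\flowsto$ and $\flowsto'$ treat $C$-actions: the same sequence of actions produces extensions of $\tff{H}^{\flowsto}$ whenever $C$ acts, but leaves $\tff{H}^{\flowsto'}$ unchanged. The case analysis must therefore carefully align the last actions of $\alpha$ and $\beta$ that contribute to $\tff{u}^{\flowsto}$, and argue that $C$-actions that occur ``between'' these contributing actions can be freely added or removed without disturbing $\tff{u}^{\flowsto'}$. Handling all of the $\top$-edge cases uniformly, together with the non-conflating assumption on $\intn$, lets us conclude $\tff{D}^{\flowsto'}(\alpha) = \tff{D}^{\flowsto'}(\beta)$ and so $\intn(\downrelf)(\alpha, a) = \intn(\downrelf)(\beta, a)$, establishing $\tff{D}^{\flowsto}$-compatibility as required.
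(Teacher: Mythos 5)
Your reduction is exactly the paper's: compatibility of $\intn(\downrelf)$ comes down to showing that $\tff{D}^{\flowsto}(\alpha) = \tff{D}^{\flowsto}(\beta)$ implies $\tff{D}^{\flowsto'}(\alpha) = \tff{D}^{\flowsto'}(\beta)$. From there, though, the routes genuinely diverge. The paper never reasons about pairs of sequences: it defines explicit translation functions $F_u$ for $u\in\{P,H,D,L\}$ and proves, by a single induction on $\alpha$, that $F_u(\tff{u}^{\flowsto}(\alpha)) = \tff{u}^{\flowsto'}(\alpha)$, i.e.\ the $\flowsto'$-value is literally a function of the $\flowsto$-value (obtained by erasing $C$-contributions), after which the implication is immediate. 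You instead prove the implication directly by a simultaneous induction on $|\alpha|+|\beta|$ in the style of Lemma~\ref{lem:tff-view}, aligning the last contributing actions. This can be carried through: for $u\in\{P,H,D\}$ every incoming edge of $\flowsto$ is $\top$-labelled, so after stripping trailing actions whose domains do not flow to $u$ (harmless under both policies, since $\flowsto'$ relates no pair that $\flowsto$ does not), the matched tuples force $a=b$ and equal prefixes, and your key subcase collapses on both sides because $\notflowstoin[\flowsto']{C}{H}$. Two cautions: the domain set for the strengthened hypothesis must be closed under incoming $\flowsto$-edges among $\{P,H,D,L\}$, so if the architecture has any edge from $L$ into $H$ or $D$ you need $L$ in the set as well (unproblematic, since on the edge $\flowstoin[\downrelf]{D}{L}$ the transmitted value $(\tff{D}^{\flowsto'}(\alpha),a)$ is identical under both policies); and the non-conflating assumption does no real work here, as tuples originating in distinct domains already differ in their action component. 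The trade-off: your two-sequence induction is more elementary but repeats the alignment bookkeeping in every case, whereas the paper's $F_u$ construction isolates the single idea into a reusable functional-dependence claim, the same device that reappears in the proof of Theorem~\ref{thm:downr-refines-down}.
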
 

\begin{proof} 
For $u\in \{P,H, D, L\}$, define the functions $F_u$ inductively, by $F_u(\epsilon) = \epsilon$ and 
$$ 
\begin{array}{l} 
F_u(\sigma \concat (\delta, a)) \\
\quad\quad= \begin{cases} 
F_u(\sigma) & \text {if $dom(a) =  C$ } \\ 
F_u(\sigma) \concat (F_{\dom(a)}(\delta), a)& \text {if $dom(a) \neq  C$ and $\dom(a) \neq D$ and $u\neq L$}\\
F_u(\sigma) \concat (\delta, a)& \text {if  $\dom(a)= D$ (so $dom(a) \neq  C$) and $u =  L$}
\end{cases} 
\end{array} 
$$ 
We claim that for all $\alpha\in A^*$ and  $u\in \{P,H,D,L\}$, we have 
 $F_u(\tff{u}^{\flowsto}(\alpha)) = \tff{u}^{\flowsto'}(\alpha)$. 
It is then immediate that 
$\tff{D}^{\flowsto}(\alpha) = \tff{D}^{\flowsto}(\beta)$ implies $\tff{D}^{\flowsto'}(\alpha) = \tff{D}^{\flowsto'}(\beta)$, and it easily follows that  $\intn(\downrelf)$ is $\tff{D}^{\flowsto}$-compatible. 

To prove the claim we proceed by induction on $\alpha$. the base case of $\alpha = \epsilon$ is 
trivial. Assuming the claim holds for $\alpha$, 
let $u\in \{P,H,D,L\}$ and consider the sequence $\alpha a$. 
There are several cases: 
\begin{itemize} 
\item Case 1: $\flowstoin[\top]{\dom(a)}{u}$ and $\dom(a) \neq  C$.  
By definition of $\flowsto'$, we have $\flowstoinf[\top]{\dom(a)}{u}{\flowsto'}$. 
Thus, 
\begin{align*} 
F_u(\tff{u}(\alpha a) )& = F_u(\tff{u}(\alpha)\concat (\tff{\dom(a)}(\alpha),a)) \\ 
& = F_u(\tff{u}(\alpha)) \concat (F_{\dom(a)}(\tff{\dom(a)}(\alpha)),a) \\ 
& = \tff{u}^{\flowsto'}(\alpha) \concat (\tff{\dom(a)}^{\flowsto'}(\alpha),a) & \text{by induction}\\ 
& = \tff{u}^{\flowsto'}(\alpha a)\mathpunct. 
\end{align*} 

\item Case 2: $\flowstoin[\top]{\dom(a)}{u}$ and $\dom(a) =  C$.  
In this case,  $\flowstoinf{\dom(a)}{u}{\not \flowsto'}$. 
Thus, 
\begin{align*} 
F_u(\tff{u}(\alpha a) )& = F_u(\tff{u}(\alpha)\concat (\tff{\dom(a)}(\alpha),a)) \\ 
& = F_u(\tff{u}(\alpha))  \\ 
& = \tff{u}^{\flowsto'}(\alpha)  & \text{by induction}\\ 
& = \tff{u}^{\flowsto'}(\alpha a)\mathpunct. 
\end{align*}

\item Case 3: $\flowstoin[\downrelf]{\dom(a)}{u}$, so  $\dom(a)= D$ and  $u=L$.  
By definition of $\flowsto'$, we have $\flowstoinf[\top]{\dom(a)}{u}{\flowsto'}$. 
Thus, 
\begin{align*} 
F_u(\tff{u}(\alpha a) ) & = F_u(\tff{u}(\alpha)\concat \intn(\downrelf)(\alpha,a) \\ 
& = F_u(\tff{u}(\alpha)\concat (\tff{\dom(a)}^{\flowsto'}(\alpha),a)) \\ 
& = F_u(\tff{u}(\alpha)) \concat (\tff{\dom(a)}^{\flowsto'}(\alpha),a) & \text{by def. $F_u$}\\ 
& = \tff{u}^{\flowsto'}(\alpha) \concat (\tff{\dom(a)}^{\flowsto'}(\alpha),a) & \text{by induction}\\ 
& = \tff{u}^{\flowsto'}(\alpha a)\mathpunct. 
\end{align*}

\item Case 4: $\notflowstoin{\dom(a)}{u}$. 
In this case also $\flowstoinf{\dom(a)}{u}{\not \flowsto'}$. 
Thus, 
\begin{align*} 
F_u(\tff{u}(\alpha a) ) & = F_u(\tff{u}(\alpha)) \\
& = \tff{u}^{\flowsto'}(\alpha)  & \text{by induction}\\ 
& = \tff{u}^{\flowsto'}(\alpha a)\mathpunct. 
\end{align*} 
Thus, the claim holds in all cases. 
\end{itemize}

\end{proof}

\iftr
\iffixlater
\note{Should we talk about the fact that if a filter function is
  $\tff{D}^{\flowsto'}$-compatible, then it is $\tff{D}^{\flowsto}$
  compatible? I believe that this follows quite easily from some of
  the refinement results...}
\fi %
\fi %

The architectural specification 
imposes restrictions on what information
 $L$ 
 knows 
 and
when. Domain $L$ can only learn about other domains via
downgrader $D$.  The restriction on filter functions for $\downrelf$
ensures that $L$ cannot 
know 
anything about the actions of $ C$,
since $\notflowstoin[\flowsto']{ C}{H}$, and so the function
$\tff{D}^{\flowsto'}$ does not contain any information about the
actions of $ C$. However, information about the actions of $H$, $P$ and $D$ may
be released to $L$.
\iftr
Also, $L$
cannot learn about actions of other domains that occurred after the
last $D$ action.
The following two theorems express these restrictions.
\fi

\begin{theorem}\label{thm:down-hc-action-local}
  If $\sysvar$ is 
\TFF-compliant
 with architectural specification $(\downarch, \archspec_\downarch)$ and $\pi(p)$ 
depends on $G$ actions at $\alpha$, then 
  $\interpsysv, \alpha \sat \neg \knows {L} p$.
\end{theorem}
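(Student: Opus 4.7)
The plan is to follow the same strategy as for Theorem~\ref{thm:hl-h-action-local}, adapted to the extended architecture with filter functions. The intended reading is that $G$ is the ``covert'' group $\{C\}$ (the only domain whose actions $\archspec_\downarch$ forces to be hidden from $L$), so the target invariance is that $\tff{L}^{\flowsto}$ is unaffected by $G$-actions. Given this, the overall argument is: (i) since $\pi(p)$ depends on $G$-actions at $\alpha$, pick a witness $\beta \in A^*$ with $\alpha\restrict\overline{G} = \beta\restrict\overline{G}$ and $\alpha \in \pi(p)$ iff $\beta \not\in \pi(p)$; (ii) show $\tff{L}^{\flowsto}(\alpha) = \tff{L}^{\flowsto}(\beta)$; and (iii) apply \TFF-compliance together with Lemma~\ref{lem:tff-view} to conclude $\view{L}(\alpha) = \view{L}(\beta)$, i.e., $\alpha \approx_L \beta$, which yields $\interpsysv, \alpha \sat \neg \knows{L} p$.

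The technical heart is step (ii), which I split into two invariance lemmas. First, I will prove by induction on $\gamma$ that for every $u \in \{P, H, D, L\}$, $\tff{u}^{\flowsto'}(\gamma) = \tff{u}^{\flowsto'}(\gamma\restrict\overline{G})$. The point is that in $\flowsto'$ the domain $C$ has no outgoing edges to any $u \in \{P, H, D, L\}$, so a $C$-action $a$ satisfies $\notflowstoin[\flowsto']{\dom(a)}{u}$, leaving $\tff{u}^{\flowsto'}$ unchanged; the case of a non-$C$ action $a$ uses the inductive hypothesis simultaneously for $u$ and $\dom(a)$, and the append/IH steps go through as in Theorem~\ref{thm:hl-h-action-local}.

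Second, using the preceding lemma, I will prove $\tff{L}^{\flowsto}(\gamma) = \tff{L}^{\flowsto}(\gamma\restrict\overline{G})$ by induction on $\gamma$. For $\gamma a$, I case-split on $\dom(a)$: if $\dom(a) = C$, then $\notflowstoin{C}{L}$ gives $\tff{L}^{\flowsto}(\gamma a) = \tff{L}^{\flowsto}(\gamma)$ and $(\gamma a)\restrict\overline{G} = \gamma\restrict\overline{G}$, so the IH closes the case; if $\dom(a) \in \{H, P\}$, then $\notflowstoin{H}{L}$ and $\notflowstoin{P}{L}$ give the same simplification; if $\dom(a) = L$, the appended segment is $(\tff{L}^{\flowsto}(\gamma), a)$, which by IH agrees with $(\tff{L}^{\flowsto}(\gamma\restrict\overline{G}), a)$; and if $\dom(a) = D$, the appended segment is $\intn(\downrelf)(\gamma, a) = (\tff{D}^{\flowsto'}(\gamma), a)$, and the first lemma gives $\tff{D}^{\flowsto'}(\gamma) = \tff{D}^{\flowsto'}(\gamma\restrict\overline{G})$, so again the two appended segments agree. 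Applying this lemma to both $\alpha$ and $\beta$ yields $\tff{L}^{\flowsto}(\alpha) = \tff{L}^{\flowsto}(\alpha\restrict\overline{G}) = \tff{L}^{\flowsto}(\beta\restrict\overline{G}) = \tff{L}^{\flowsto}(\beta)$, as required.

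The main obstacle is identifying the right pair of invariance lemmas and carefully tracking the interplay between $\flowsto$ and $\flowsto'$: the only way information reaches $L$ under $\flowsto$ is via the $D$-edge, and the $\archspec_\downarch$-forced shape of $\intn(\downrelf)$ is precisely what routes this through $\tff{D}^{\flowsto'}$, under which $C$ is isolated. Once the $\flowsto'$-invariance lemma is in place, the rest of the proof mirrors the pattern of Theorem~\ref{thm:hl-h-action-local} with $\overline{G}$ playing the role of $\{L\}$.
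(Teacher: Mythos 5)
Your proposal is correct and matches the paper's own proof essentially step for step: the paper likewise resolves the theorem's $G$ as the $C$-dependence case (working with restriction to $\{P,H,D,L\}$), proves the same two invariance claims by induction — first $\tff{u}^{\flowsto'}(\gamma) = \tff{u}^{\flowsto'}(\gamma\restrict\{P,H,D,L\})$ for all $u\in\{P,H,D,L\}$, then $\tff{L}^{\flowsto}(\gamma) = \tff{L}^{\flowsto}(\gamma\restrict\{P,H,D,L\})$ using the forced shape $\intn(\downrelf)(\gamma,a)=(\tff{D}^{\flowsto'}(\gamma),a)$ — and concludes via \TFF-compliance and Lemma~\ref{lem:tff-view} exactly as you do. No gaps; your case analysis in both inductions coincides with the paper's.
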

\begin{proof}
Let   $G =  \overline{C} = \{P,H,D,L\}$.  
    We first claim that $\tff{u}^{\flowsto'}(\gamma \restrict  G) = \tff{u}^{\flowsto'}(\gamma)$ for any $\gamma\in A^*$ and 
    $u \in G$. 
  We proceed by induction on $|\gamma|$. 
The base
  case is trivial. Suppose that $\tff{u}^{\flowsto'}(\gamma \restrict
  G) = \tff{u}^{\flowsto'}(\gamma)$ for $u \in G$ and consider $\gamma
  a$. We consider several cases.
\begin{itemize}
\item If 
$\flowstoinf[f]{\dom(a)}{u}{\flowsto'}$  then $f= \top$ 
and $\dom(a) \in G$, by the definition of $\flowsto'$. Thus
\begin{align*}
  \tff{u}^{\flowsto'}((\gamma a) \restrict G) 
&=  \tff{u}^{\flowsto'}((\gamma \restrict G) a) \\
&= \tff{D}^{\flowsto'}(\gamma \restrict G) \concat (\tff{\dom(a)}(\gamma \restrict G), a) &\text{by defn $\tff{D}^{\flowsto'}$}\\
&= \tff{D}^{\flowsto'}(\gamma)\concat  (\tff{\dom(a)}(\gamma), a) &\text{by IH} \\
&= \tff{D}^{\flowsto'}(\gamma a)  &\text{by defn $\tff{D}^{\flowsto'}$} 
\end{align*}

\item If 
$\notflowstoin[\flowsto']{\dom(a)}{u}$ 
and $\dom(a) \in G$ then
\begin{align*}
  \tff{u}^{\flowsto'}((\gamma a) \restrict G) 
&=  \tff{u}^{\flowsto'}((\gamma \restrict G) a) \\
&=  \tff{u}^{\flowsto'}(\gamma \restrict G) &\text{by defn $\tff{u}^{\flowsto'}$} \\
&= \tff{u}^{\flowsto'}(\gamma)  &\text{by IH} \\
&= \tff{u}^{\flowsto'}(\gamma a)  &\text{by defn $\tff{u}^{\flowsto'}$} 
\end{align*}

\item If $\notflowstoin[\flowsto']{\dom(a)}{u}$ 
 and $\dom(a) \not\in G$ then
\begin{align*}
  \tff{u}^{\flowsto'}((\gamma a) \restrict G) 
&=  \tff{u}^{\flowsto'}(\gamma \restrict G) &\text{as $\dom(a) \not\in G$} \\
&= \tff{u}^{\flowsto'}(\gamma)  &\text{by IH} \\
&= \tff{u}^{\flowsto'}(\gamma a)  &\text{by defn $\tff{u}^{\flowsto'}$} 
\end{align*}
\end{itemize}
This completes the proof of the claim. 
In particular, we have $\tff{D}^{\flowsto'}(\gamma \restrict G) =
\tff{D}^{\flowsto'}(\gamma)$, for all $\gamma$. 
Note that this implies, for all $\gamma$ and $a\in A$ with $\dom(a) = D$, 
that $\intn(\downrelf)(\gamma \restrict G, a ) = 
(\tff{D}^{\flowsto'}(\gamma \restrict G), a) = 
(\tff{D}^{\flowsto'}(\gamma), a) =  \intn(\downrelf)(\gamma, a)$.

We now show that 
$\tff{L}(\gamma) = \tff{L}(\gamma \restrict G)$.
We proceed by
induction on $|\gamma|$. The base case is trivial. Consider $\gamma
a$, and assume that $\tff{L}(\gamma) = \tff{L}(\gamma \restrict G)$. 
We consider several cases.

\begin{itemize}
\item If $\dom(a) \in \{  C, P, H\}$ then 
$\notflowstoin{\dom(a)}{L}$, so 
$\tff{L}(\gamma a) =
  \tff{L}(\gamma)$ by definition of $\tff{L}$. 
  We have that $(\gamma a)\restrict G$ is either $\gamma\restrict G$ or $(\gamma\restrict G)a$, but in either case, 
 $\tff{L}((\gamma a)
  \restrict G) = \tff{L}(\gamma \restrict G)$
  by the definition of $\tff{L}$.  By the inductive hypothesis, we have
  $\tff{L}(\gamma a) = \tff{L}((\gamma a) \restrict G)$ as
  required.

  \item If $\dom(a) = L$ then
  $\gamma a\restrict G = (\gamma\restrict G)a $, so 
    \begin{align*}
      \tff{L}(\gamma a) &= \tff{L}(\gamma)\concat (\tff{L}(\gamma), a) \\
      &= \tff{L}(\gamma \restrict G) \concat (\tff{L}(\gamma \restrict G), a) &\text{by the IH} \\  
    & =   \tff{L}((\gamma \restrict G) a)  \\
   &= \tff{L}((\gamma a)\restrict G )
    \end{align*}

\item If $\dom(a) = D$ then
    \begin{align*}
      \tff{L}(\gamma a) &= \tff{L}(\gamma) \concat  \intn(\downrelf)(\gamma, a) \\
      &=  \tff{L}(\gamma) \concat  \intn(\downrelf)(\gamma \restrict G,a) &\text{as $\intn(\downrelf)(\gamma \restrict G, a) = \intn(\downrelf)(\gamma,a)$} \\   
      &=  \tff{L}(\gamma \restrict G) \concat \intn(\downrelf)(\gamma \restrict G,a) &\text{by the IH} \\
   &= \tff{L}((\gamma \restrict G) a ) \\
   &= \tff{L}((\gamma a)\restrict G ) & \text{since $\dom(a) = D\in G$}
    \end{align*}
\end{itemize}
This completes the proof that $\tff{L}(\gamma ) = \tff{L}(\gamma \restrict G)$ for all $\gamma$. 

Since $\pi(p)$ depends on $C$ actions at $\alpha$, 
there exists $\beta\in A^*$ such that $\alpha\restrict G = \beta\restrict G$ 
but $\alpha \in \pi(p) $ iff $\beta\not \in \pi(p)$. 
By the claim just proved, we have $\tff{L}(\alpha) = \tff{L}(\alpha\restrict G) = 
\tff{L}(\beta\restrict G) = \tff{L}(\beta)$, from which $\alpha \approx_L \beta$ follows by 
compliance. It follows that $M\,\pi, \alpha \models \neg \knows{L}(p)$. 
\end{proof}

Actions of the downgrader are the only means by which $L$ acquires
knowledge about other domains' actions. Thus, $L$ knows nothing about
any action of other domains that occur after the last $D$ action.
To state this formally, 
we say that proposition $X$
is \emph{about $ C$, $P$, and $H$ activity after the last $D$
  action} if there exists a set $Y \subseteq A^*$ such that for all
$\alpha \in A^*$,
$\alpha \in X \iff F(\alpha) \in Y$, where $F$ is
defined recursively by $F(\epsilon) = \epsilon$ and
\[
F(\beta a) = 
\begin{cases}
  \epsilon & \text{if $\dom(a)=D$} \\
  F(\beta) a & \text{if $\dom(a) \in \{ C, P, H\}$} \\
  F(\beta) & \text{otherwise} \\
\end{cases}
\]
Intuitively, $F$ returns
the $ C$, $P$, and $H$ actions that occur
after the last $D$ action, and so proposition $X$ is about $ C$,
$P$, and $H$ activity after the last $D$ action if membership in
$X$ depends only on the $ C$, $P$, and $H$ actions that occur after
the last $D$ action.

\begin{theorem}
  If $\sysvar$ is 
\TFF-compliant
 with architectural specification $(\downarch, \archspec_\downarch)$ and $\pi(p)$
  is a non-trivial proposition about $ C$, $P$, and $H$ activity
  after the last $D$ action, then $\interpsysv \sat \neg \knows {L}
  p$.
\end{theorem}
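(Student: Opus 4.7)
\medskip

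The plan is to show that for any $\alpha \in A^*$, we can construct a $\beta \in A^*$ with $\alpha \approx_L \beta$ such that $\alpha \in \pi(p)$ iff $\beta \not\in \pi(p)$; by TFF-compliance and Lemma~\ref{lem:tff-view}, it suffices to establish $\tff{L}(\alpha) = \tff{L}(\beta)$. The key observation is that actions of $C$, $P$, and $H$ do not flow to $L$ directly (all paths to $L$ pass through $D$), and that after the last $D$ action nothing in the downgrader's recorded information can be refreshed. So such actions occurring after the last $D$ action can be freely altered without changing $\tff{L}$.

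Concretely, I would first decompose $\alpha$ as $\alpha = \alpha_1 \alpha_2$, where $\alpha_2$ contains no $D$ action and either $\alpha_1$ ends in a $D$ action or $\alpha_1 = \epsilon$. Next, I would prove the key lemma: for any sequence $\sigma \in (A_C \cup A_P \cup A_H)^*$ of $C,P,H$ actions, we have
\[
  \tff{L}(\alpha_1 \alpha_2) \;=\; \tff{L}\bigl(\alpha_1 (\alpha_2\restrict \{L\})\,\sigma\bigr).
\]
This reduces to two sub-claims, each proved by a short induction on the length of the appended suffix: (i) removing or inserting a $C$, $P$, or $H$ action does not change $\tff{L}$ of the extended sequence, since $\notflowstoin{C}{L}$, $\notflowstoin{P}{L}$, and $\notflowstoin{H}{L}$ give $\tff{L}(\gamma a) = \tff{L}(\gamma)$ for such actions; and (ii) rearrangements that preserve the order of $L$ and $D$ actions and touch only the $C,P,H$ actions therefore preserve $\tff{L}$ as well, because whenever an $L$ action fires it appends $(\tff{L}(\gamma), a)$ whose value is unaffected by intervening $C,P,H$ actions.

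Since $\pi(p)$ is non-trivial and $\alpha \in \pi(p) \iff F(\alpha) \in Y$, there are sequences $\sigma_1, \sigma_2$ with $F(\sigma_1) \in Y$ and $F(\sigma_2) \not\in Y$. Noting that $F(\alpha_1 (\alpha_2\restrict \{L\})\,\sigma) = \sigma$ whenever $\sigma \in (A_C \cup A_P \cup A_H)^*$ (the prefix $\alpha_1$ ends in $D$, and $\alpha_2\restrict\{L\}$ contributes no $C,P,H$ actions), I would pick $\sigma$ to be either $F(\sigma_1)$ or $F(\sigma_2)$, whichever has opposite $Y$-membership from $F(\alpha)$. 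Setting $\beta = \alpha_1 (\alpha_2\restrict\{L\})\,\sigma$, the key lemma yields $\tff{L}(\alpha) = \tff{L}(\beta)$, and Lemma~\ref{lem:tff-view} (using that the induced interpretation is non-conflating) gives $\view{L}(\alpha) = \view{L}(\beta)$, hence $\alpha \approx_L \beta$; combined with the opposite $p$-value of $\beta$, this yields $M,\pi,\alpha \models \neg\knows{L}p$.

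The main technical obstacle is proving the key lemma cleanly: one must carefully handle the interleaving of $L$ actions with $C,P,H$ actions in $\alpha_2$ (so that the appended $(\tff{L}(\cdot),a)$ terms for $L$ actions are unaffected by surrounding modifications), and verify that the suffix $\sigma$ of fresh $C,P,H$ actions is truly invisible to $\tff{L}$. Everything else is routine unwinding of definitions and appeal to earlier results.
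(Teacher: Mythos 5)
Your proposal is correct and follows essentially the same route as the paper's proof: split $\alpha$ at the last $D$ action, replace the $D$-free suffix by its $L$-restriction followed by a tail of $C,P,H$ actions realizing the opposite $F$-value, show $\tff{L}$ is unchanged, and conclude via Lemma~\ref{lem:tff-view}. The only cosmetic difference is that the paper obtains the tail directly as $F(\alpha')$ from a single non-triviality witness $\alpha'$, whereas you select it from two witnesses with a case split; the underlying argument is identical.
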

\begin{proof}
  Let $\alpha \in A^*$. Since $\pi(p)$ is non-trivial, there is an
  $\alpha'$ such that $\alpha \in \pi(p) \iff \alpha' \not\in
  \pi(p)$. Let $\alpha_1$ be the longest suffix of $\alpha$ such that
  $\alpha_1$ does not contain any $D$ actions. Then $\alpha = \alpha_0
  \alpha_1$, where either $\alpha_0 = \epsilon$ or the last action of
  $\alpha_0$ is a $D$ action.
  Let $\beta =  \alpha_0 (\alpha_1 \restrict \{L\})
  F(\alpha')$.  Note that $F(\beta) = F(\alpha')$, and so $\alpha \in
  \pi(p) \iff \beta \not\in \pi(p)$.
  We will show that $\tff{L}(\alpha) = \tff{L}(\beta)$.

  First, note that $\tff{L}(\beta) = \tff{L}(\alpha_0 (\alpha_1
  \restrict \{L\}) F(\alpha')) = \tff{L}(\alpha_0 (\alpha_1 \restrict
  \{L\}))$ since $F(\alpha')$ contains only actions $a$ such that
  $\notflowstoin{\dom(a)}{L}$.
  Note also that $\tff{L}(\alpha) = \tff{L}(\alpha_0 \alpha_1) =
  \tff{L}(\alpha_0 (\alpha_1 \restrict \{L\}))$ since $\alpha_1$
  contains only $L$ actions and actions $a$ such that
  $\notflowstoin{\dom(a)}{L}$. 

  Therefore, $\tff{L}(\beta) = \tff{L}(\alpha)$ as required.  Since
  $\sysvar$ is \TFF-compliant, by Lemma~\ref{lem:tff-view} we have
  $\view{L}(\alpha) = \view{L}(\beta)$, and so
  $\alpha \approx_L \beta$, which immediately
  implies the result.
\end{proof}

\subsection{Example: Electronic election} ~\label{sec:election} 
\newcommand{\announce}{\ensuremath{\mathit{results}}}
\newcommand{\elecarch}{\ensuremath{\mathcal{EE}}}
\newcommand{\elecauth}{\ensuremath{\mathit{ElecAuth}}}

The following diagram shows architecture $\elecarch$, an electronic
election for $n$ voters, coordinated by 
an election authority \elecauth.

\begin{center}
  \includegraphics[height=2.5\diagsize]{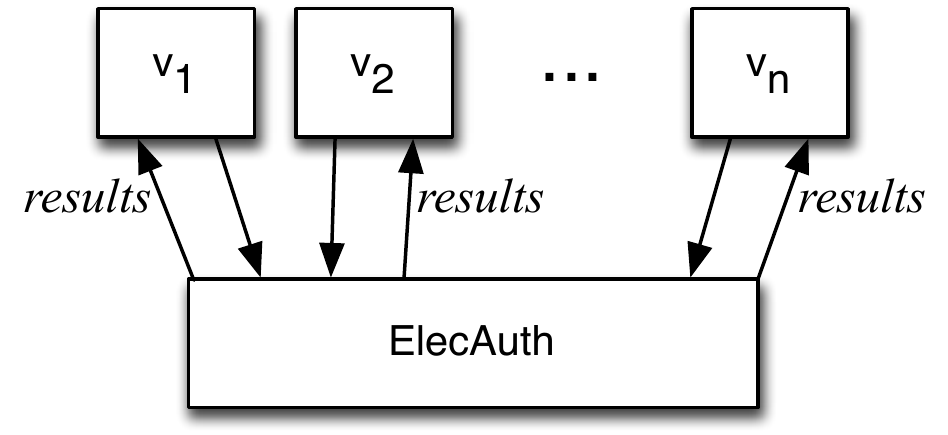}
\end{center}

There is one domain $v_i$ for each voter, and a single domain
$\elecauth$ for the election authority. The architecture permits
arbitrary information to flow from a voter to the election authority,
but uses filter function $\announce$ to restrict what information may
flow from the election authority to each voter.

In many elections, the behavior of individual voters is confidential
information: it should not be known how voters voted. (Elections have
several 
security requirements. For example, the final
result should be correctly computed from the votes---an integrity
requirement. We focus here on the confidentiality requirement for
voters.)

By specifying an additional local constraint on the election
authority, we can show that this architecture enforces anonymity on
the identity of the voters, thus satisfying the confidentiality
requirement about the behavior of voters. The election authority's
compliance with this local constraint might be assured by means of a
careful verification of its implementation, or carefully designed
cryptographic protocols (which may remove some or all of the trust
required to be placed in the election authority).

We first assume that voters are homogenous in that they have the same
set of actions. Let $(\elecarch, \archint)$ be an interpreted
architecture where $\archint = (A, \dom, \intn)$. We say that architecture interpretation
$\archint$ is \emph{voter homogenous} if for any voter 
$v$, the set
of possible actions $A_{v} = \{ a \in A \mid \dom(a) = v\}$ equals
$\{a^{v} \mid a \in A_V\}$, where $A_V$ is the set of 
action types 
available to voters. 
Intuitively, $a^v$ represent the action of type $a$ when performed by voter $v$. 

We define a voter permutation $P$ as a permutation over the set of
voters $V$. Since all voters have the same set of actions in a
voter-homogenous interpretation, we can apply a permutation $P$ to
sequence $\alpha \in A^*$, written $P(\alpha)$, as 
follows: 
\begin{align*}
  P(\epsilon) &= \epsilon \\
  P(\alpha a) &= P(\alpha) a &&\text{if } \dom(a) \not\in V \\
  P(\alpha a^{v}) &= P(\alpha) a^{P(v)} &&\text{if } 
  a \in A_V 
  ~.
\end{align*}
We apply permutation $P$ to proposition $X \subseteq A^*$ by applying
$P$ to each sequence $\alpha \in X$, i.e.,    
\[ P(X) = \{ P(\alpha) \mid \alpha \in X \}~. \] 

Using voter permutations, we can now state the local constraint that
election results do not depend on (and thus do not reveal) the identity
of any voter.
\begin{itemize} 
\item[A1.]  Election results have the following
  \emph{identity-oblivious} property: Given voter-homogenous interpreted architecture
  $(\elecarch, \archint)$ where $\archint = (A, \dom, \intn)$, for all
  voter permutations $P$, 
  sequences $\alpha \in A^*$,  and actions $a\in A$ with 
  $\dom(a) = \elecauth$, we have $\intn(\announce)(\alpha, a) = \intn(\announce)(P(\alpha), a)$.
\end{itemize}

There are several possible interpretations of 
$\announce$ that satisfy this constraint, such as a function that
returns each candidate 
and  her vote tally, or
a function that returns the total number of votes submitted.  However,
a function that returns a ballot and the identity of
the voter that submitted it doesn't satisfy constraint A1.

We define architecture specification $\archspec_{\elecarch}$ such that
 $\archint  \in \archspec_{\elecarch}$ if and only if $\archint$ is
a voter-homogenous 
interpretation of $\elecarch$ that
satisfies constraint A1.

Given 
the local constraint A1, 
we can show that if voter $v$
believes that some proposition $X$ may be satisfied, then $v$ also
believes that $P(X)$ may be satisfied, for 
any 
voter permutation $P$
with $P(v) =v$. 
For
example, if Alice considers it possible that Bob voted for Obama and
Charlie voted for Romney, then Alice considers it possible that
Charlie voted for Obama and Bob voted for Romney.

\iftr
Before stating and proving 
this result,
we first prove a useful lemma: a voter's view of an
election does not depend on the identity of other voters. That is, for
any permutation $P$ such that $P(v) = v$, the sequences $\alpha$
and $P(\alpha)$ appear the same to voter $v$.

\begin{lemma}\label{lem:voter-identity}
  Let system $\sysvar$ be \TFF-compliant with $(\elecarch,
  \archspec_{\elecarch})$. For all voters $v$, voter permutations
  $P$ such that $P(v) = v$, and sequences $\alpha$, $\alpha
  \approx_{v} P(\alpha)$.
\end{lemma}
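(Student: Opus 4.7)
The plan is to prove the lemma by showing $\tff{v}(\alpha) = \tff{v}(P(\alpha))$ for all $\alpha \in A^*$, and then invoking Lemma~\ref{lem:tff-view} together with TFF-compliance to conclude $\view{v}(\alpha) = \view{v}(P(\alpha))$, which is by definition $\alpha \approx_v P(\alpha)$. So the real content is the equality of the filtered transmission sequences.

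I would prove $\tff{v}(\alpha) = \tff{v}(P(\alpha))$ by induction on $|\alpha|$. The base case $\alpha = \epsilon$ is immediate since $P(\epsilon) = \epsilon$. For the inductive step, consider $\alpha a$ and split on $\dom(a)$:
\begin{itemize}
\item If $\dom(a) = v$, then $a = a^v$ and since $P(v) = v$ we have $P(\alpha a) = P(\alpha)\, a^v$. Using $\flowstoin[\top]{v}{v}$, both $\tff{v}(\alpha a)$ and $\tff{v}(P(\alpha) a)$ unfold to the relevant appended tuple, and the induction hypothesis $\tff{v}(\alpha) = \tff{v}(P(\alpha))$ finishes this case.
\item If $\dom(a) = v'$ for some voter $v' \neq v$, then since $P$ fixes $v$ we have $P(v') \neq v$, so $\dom(a^{P(v')}) \neq v$. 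Because $\notflowstoin{v'}{v}$ and $\notflowstoin{P(v')}{v}$, the appended piece is empty on both sides and the induction hypothesis applies directly.
\item If $\dom(a) = \elecauth$, then by the definition of $P$ we have $P(\alpha a) = P(\alpha)\, a$. Unfolding $\tff{v}$ along the $\flowstoin[\announce]{\elecauth}{v}$ edge gives $\tff{v}(\alpha a) = \tff{v}(\alpha) \concat \intn(\announce)(\alpha, a)$ and similarly for $P(\alpha)a$. The identity-oblivious condition A1 yields $\intn(\announce)(\alpha, a) = \intn(\announce)(P(\alpha), a)$, and the induction hypothesis handles the prefix.
\end{itemize}

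The main obstacle is the $\elecauth$ case: this is the only place where the architectural specification (not merely the structure of $\flowsto$) is actually used, and getting the appealed facts to line up requires noting (i) that $P$ leaves $\elecauth$ actions unchanged, which follows from $\elecauth \not\in V$, and (ii) that the voter-homogeneity assumption makes A1 applicable to the action $a$ regardless of which sequence we prepend. Once $\tff{v}(\alpha) = \tff{v}(P(\alpha))$ is established, the conclusion follows in one line from Lemma~\ref{lem:tff-view} and the definition of $\approx_v$.
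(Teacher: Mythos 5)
Your proposal is correct and follows essentially the same route as the paper's own proof: an induction showing $\tff{v}(\alpha) = \tff{v}(P(\alpha))$ with the same three-way case split on $\dom(a)$ (using reflexivity for $v$'s own actions, $\notflowstoin{v'}{v}$ for other voters, and A1 for $\elecauth$), followed by an appeal to Lemma~\ref{lem:tff-view}. The only cosmetic difference is that the paper explicitly fixes an interpretation $\archint \in \archspec_{\elecarch}$ witnessing \TFF-compliance before the induction, which your argument uses implicitly.
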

\begin{proof}
   Since $\sysvar$ is \TFF-compliant with $(\elecarch,
  \archspec_{\elecarch})$, there is an $\archint = (A, \dom, \intn)\in \archspec_{\elecarch}$ such
  that $\sysvar$ is \TFF-compliant with interpreted architecture
  $(\elecarch,\archint)$.

   We show, by induction on the length of $\alpha$, that
  $\tff{v}(\alpha) = \tff{v}(P(\alpha))$.  The base case is trivial. Consider sequence
  $\alpha a$, and assume  that 
  $\tff{v}(\alpha) = \tff{v}(P(\alpha))$.
  We show that $\tff{v}(\alpha a) = \tff{v}(P(\alpha a))$.

  If $\dom(a) = v$ then 
  \begin{align*}
\tff{v}(P(\alpha a)) &=  \tff{v}(P(\alpha) a) & \text{as $P(a) = a$}\\
& = \tff{v}(P(\alpha))\concat (\tff{v}(P(\alpha)), a) & \text{by defn $\tff{v}$} \\
& = \tff{v}(\alpha) \concat (\tff{v}(\alpha), a) & \text{by IH} \\
& = \tff{v}(\alpha a) &  \text{by defn $\tff{v}$}     
  \end{align*}

  If $\dom(a) \in V\setminus \{v\}$ then $\notflowstoin{\dom(a)}{v}$, so 
  \begin{align*}
    \tff{v}(P(\alpha a)) &= \tff{v}(P(\alpha) b) & 
    \text{for some $b \in A$, $\dom(b) \in V\setminus\{v\}$}\\
    & = \tff{v}(P(\alpha)) & \text{by defn $\tff{v}$} \\
  &   = \tff{v}(\alpha)  & \text{by IH} \\
 &   = \tff{v}(\alpha a)  & \text{by defn $\tff{v}$} 
  \end{align*}

  If $\dom(a) = \elecauth$ then 
  \begin{align*}
\tff{v}(P(\alpha a)) &=  \tff{v}(P(\alpha) a) & \text{as $P(a) = a$}\\
& = \tff{v}(P(\alpha))\concat  \intn(\announce)(P(\alpha), a) & \text{by defn $\tff{v}$} \\
& = \tff{v}(\alpha) \concat \intn(\announce)(P(\alpha), a)) & \text{by IH} \\
& = \tff{v}(\alpha)\concat  \intn(\announce)(\alpha, a) & \text{by A1} \\
& = \tff{v}(\alpha a) &  \text{by defn $\tff{v}$}        
  \end{align*}

Thus, if all cases, $\tff{v}(\alpha) = \tff{v}(P(\alpha))$. Therefore, by Lemma~\ref{lem:tff-view}, $\alpha
  \approx_{v} P(\alpha)$. 
\end{proof}
\fi %

\begin{theorem}\label{thm:elections}
  Let system $\sysvar$ be \TFF-compliant with 
  architectural specification
  $(\elecarch,
  \archspec_{\elecarch})$. Let $v$ be a voter. For all voter permutations $P$
  such that $P(v) = v$, if 
$\pi(q) = P(\pi(p))$,  then $ \interpsysf{\sysvar}{\pi}{\alpha} \sat \neg\knows {v} \neg p
\rimp \neg\knows {v} \neg q$ for all $\alpha \in A^*$.  
\end{theorem}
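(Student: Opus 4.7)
The plan is to unfold the semantics of $\neg\knows{v}\neg p$ to obtain a witness world $\alpha'$ indistinguishable from $\alpha$ to $v$ where $p$ holds, apply the permutation $P$ to obtain a world $P(\alpha')$ where $q$ holds, and then use Lemma~\ref{lem:voter-identity} to argue $P(\alpha')$ is still indistinguishable from $\alpha$ to voter $v$.

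More concretely, I would first assume $\interpsysf{\sysvar}{\pi}{\alpha}\sat \neg\knows{v}\neg p$. By the semantics of the epistemic modality, this gives an $\alpha'\in A^*$ with $\alpha\approx_v \alpha'$ and $\alpha'\in \pi(p)$. Since $\pi(q)=P(\pi(p))$, we immediately get $P(\alpha')\in \pi(q)$, so it suffices to show $\alpha\approx_v P(\alpha')$.

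For this I would invoke Lemma~\ref{lem:voter-identity}: because $P$ fixes $v$, we have $\alpha'\approx_v P(\alpha')$. Combining this with $\alpha\approx_v \alpha'$ by transitivity of $\approx_v$ (which holds because it is defined as equality of views), we conclude $\alpha\approx_v P(\alpha')$. Hence $P(\alpha')$ witnesses $\interpsysf{\sysvar}{\pi}{\alpha}\sat \neg\knows{v}\neg q$, completing the proof.

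There is no real obstacle here: the only nontrivial content lies in Lemma~\ref{lem:voter-identity}, which has already been established using the identity-oblivious condition A1 on the $\announce$ filter together with \TFF-compliance and Lemma~\ref{lem:tff-view}. The theorem itself is essentially a one-step corollary of that lemma combined with the definition of applying a permutation to a proposition.
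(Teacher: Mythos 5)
Your proposal is correct and matches the paper's own proof essentially step for step: both extract a witness $\beta$ (your $\alpha'$) with $\alpha\approx_v\beta$ and $\beta\in\pi(p)$, apply Lemma~\ref{lem:voter-identity} to get $\beta\approx_v P(\beta)$, and conclude via $P(\beta)\in P(\pi(p))=\pi(q)$ that $P(\beta)$ witnesses $\neg\knows{v}\neg q$. The transitivity step you make explicit is implicit in the paper's chaining of view equalities, so there is no substantive difference.
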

\begin{proof}
Suppose   $\interpsysf{\sysvar}{\pi}{\alpha} \sat \neg\knows {v} \neg p$. Then  
there is some sequence $\beta$ such that
  $\view{v}(\alpha) = \view{v}(\beta)$ and $\beta \in \pi(p)$.
  By Lemma~\ref{lem:voter-identity}, $\view{v}(\beta) =
  \view{v}(P(\beta))$, so $\view{v}(\alpha) =
  \view{v}(P(\beta))$. Since $P(\beta) \in P(\pi(p))$, we have
$\interpsysf{\sysvar}{\pi}{P(\beta)} \sat q$, so $\interpsysf{\sysvar}{\pi}{\alpha} \sat \neg\knows {v} \neg q$ 
as required.
\end{proof}

Note that Theorem~\ref{thm:elections} does not imply that voter $v$
learns nothing about other voters. For example, if the election
results reveal that all voters voted for Obama, then voter $v$ knows
how every other voter voted. Also, if there are only 2 voters, $v$ and
$w$, then the results may reveal to $v$ exactly how $w$ voted.
However, Theorem~\ref{thm:elections} provides anonymity: given the
results, voter $v$ cannot distinguish the behavior of other voters.

\iftr
\iffixlater
\subsection{Feasibility}
XXX TODO LOW PRIORITY

In this section we show that the requirements on an election system are
feasible, by presenting a system $\sysvar$ that satisfies the architecture
and the requisite semantic properties.

\newcommand{\bidaction}[2]{\mathit{Bid}_{#1}(#2)}
\newcommand{\announceAction}{\mathit{hiBid}}
Let the set of actions $A$ be defined as 
\begin{align*}
  A &= \{ \bidaction{b}{n} \mid b \text{ is a voter}, n \in \Nat \}\\
    & \quad \cup \{ \announceAction \}
\intertext{where}
\dom(\bidaction{b}{n}) &= b \\
\dom(\announceAction) &= \elecauth \\
\end{align*}

Voters may issue a bid action whenever they desire. 
The
election authority observes all actions by voters. We use $V$ to
denote the set of voters, and require that there are at least 3
voters.

The auctioneer may issue a $\announceAction$ action at any time. The
effect of a $\announceAction$ action is to post to the bulletin board
the highest bid since the last $\announceAction$.

Each voter is able to observe its history of bids, and the state of
the bulletin board. Thus, a voter's state contains its history of
bids, and the state of the bulletin board. The bulletin board records
the announcements of the auctioneer, and so the bulletin board state
is simply a sequence of integers, representing the highest bids.  The
auctioneer observes the occurrence of each bid and its own
$\announceAction$ actions, and the auctioneer's state is a sequence of
actions.

The function $\announce(\alpha)$ returns the highest bid made between the last two auctioneer actions (or, if only a single auctioneer action was made, before the last auctioneer action).
\begin{align*}
  \announce(\epsilon) &= \epsilon \\
  \announce(\alpha a) &= 
  \begin{cases}
    \announce(\alpha) &\text{if $\dom(a) \ne \elecauth$}\\    
    \announce'(\alpha) &\text{otherwise}\\    
  \end{cases} \\
  \announce'(\epsilon) &= 0 \\
  \announce'(\alpha~\announceAction) &= 0 \\
  \announce'(\alpha~ \bidaction{b}{n}) &= \max(n, \announce'(\alpha)) \\
\end{align*}
It returns a pair of the maximum bid that occurred between the last
two $\announceAction$ actions, and the number of $\announceAction$
actions that have previously occurred.

\begin{theorem}
  System $\sysvar$ satisfies conditions A1--A4, BB1--BB3, and B1.
\end{theorem}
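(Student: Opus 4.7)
The plan is to verify each listed condition against the explicit machine model $\sysvar$ defined just above: the action set built from $\bidaction{b}{n}$ and $\announceAction$, the per-domain state definitions (voter states holding bid history plus bulletin-board contents; the election authority's state being the full action sequence; the bulletin board storing the sequence of announced high bids), and the derived observation functions. For each condition I would unfold the definitions and check it by direct inspection or by a short induction on the length of the action sequence.

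For condition A1, the identity-oblivious property of $\intn(\announce)$, the strategy is induction on $\alpha$ to show $\announce(P(\alpha)) = \announce(\alpha)$ for every voter permutation $P$. The base case $\alpha = \epsilon$ is immediate. For the inductive step on $\alpha a$: if $a = \announceAction$, the outer clause in the definition of $\announce$ reduces to $\announce'(\alpha)$, and a nested induction on $\announce'$ shows it depends only on the multiset of bid magnitudes between $\announceAction$ events---which is invariant under $P$ since $P(\bidaction{b}{n}) = \bidaction{P(b)}{n}$ preserves the integer $n$; if $a$ is a bid action or any non-$\elecauth$ action, both sides recurse to the inductive hypothesis. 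This yields A1.

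For the remaining conditions A2--A4, BB1--BB3, and B1 (which concern the bulletin board and bidder/voter behavior), the template is the same: unfold the state representation, the transition function $\step$, and the observations, and discharge each constraint by a short case analysis on the last action. The genuinely interesting work---and the main obstacle---will be the \TFF-compliance obligation implicit in several of these conditions: showing for each domain $u$ that $\tff{u}(\alpha) = \tff{u}(\beta)$ implies $\obs_u(\alpha) = \obs_u(\beta)$. For the election authority this is easy because every incoming edge is $\top$-labelled, so $\tff{\elecauth}$ already carries the full action history that determines $\obs_{\elecauth}$. For voters, one must show that the voter's state---its own bid history together with the bulletin-board sequence---is reconstructible from $\tff{v}(\alpha)$, which by construction records exactly the filtered $\announce$ values in order, interleaved with $v$'s own actions. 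The argument is a routine induction on $|\alpha|$ but requires careful bookkeeping of the three distinct observation functions to ensure nothing observable leaks that $\tff{v}$ does not already determine.
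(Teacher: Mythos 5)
There is a genuine gap, on two fronts. First, you have pulled the wrong obligation into this theorem: the statement only asserts that the concrete auction-style system satisfies the local conditions A1--A4, BB1--BB3 and B1, and the paper's proof is purely an unfold-and-check of the $\announce$ function, the state representations, and the observation functions. The claim that the system is \TFF-compliant (``$\tff{u}(\alpha)=\tff{u}(\beta)$ implies $\obs_u(\alpha)=\obs_u(\beta)$'') is the subject of a separate theorem immediately afterwards, with its own substantial induction; so what you identify as ``the main obstacle'' is out of scope here, while the work this theorem actually requires is left undone. (A smaller labelling issue: the identity-oblivious property of $\announce$ is one of the listed conditions, and your induction on $\alpha$ showing $\announce(P(\alpha))=\announce(\alpha)$ --- permutations preserve bid amounts --- is essentially the paper's argument for it; that part is fine.)

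Second, the conditions that carry the real content are dismissed as ``a short case analysis on the last action,'' and that template would fail for them. They are not single-sequence invariants but properties quantifying over \emph{pairs} of action sequences: one condition asserts that if $\beta$ and $\beta'$ have the same per-voter restrictions then appending an auctioneer action yields the same announcement (an argument about the maximum bid over an interval, not about the last action), and the hardest one is an existential ``splicing'' property: given two sequences producing the same announcement, you must \emph{construct} a third sequence that agrees with one on voter $v_i$'s actions and with the other on voter $v_j$'s actions and still yields the same announcement. The paper does this by taking $\gamma = \beta\restrict(B-\{v_j\})$, appending $\beta'\restrict v_j$, and, when the maximum bid has been lost, inserting an extra bid $\bidaction{v_k}{n}$ by a third voter $v_k$ --- which is exactly where the ``at least 3 voters'' hypothesis is used. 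No last-action case split produces such a witness, and your proposal gives no substitute construction, so as written it does not establish the theorem.
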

\begin{proof}
  \begin{itemize}
  \item[A1.]  As there is only a single auctioneer action, A1 is satisfied
  trivially. 

  \item[A2.]
  Let $\alpha$ and $\alpha'$ be either $\epsilon$ or a sequence of
  actions in $(A_{v_1}\cup \ldots\cup A_{v_n})^*\cdot A_{\elecauth}$
  such that $\announce(\alpha) = \announce(\alpha')$. Let
  $\beta,\beta'\in (A_{v_1}\cup \ldots\cup A_{v_n})^*$ such that
  $\beta\restrict v_i = \beta'\restrict v_i$ for all $i=1\ldots n$.
  The function $\announce$ returns the maximum bid that occurred
  between the last auctioneer actions, and so $\announce(\alpha\beta
  \announceAction)$ is the maximum bid occurring in $\beta$, and
  $\announce(\alpha' \beta' \announceAction)$ is the maximum bid
  occurring in $\beta'$. Since $\beta\restrict v_i = \beta'\restrict
  v_i$ for all $i=1\ldots n$, this maximum is the same, and so
  $\announce(\alpha\beta \announceAction) = \announce(\alpha' \beta'
  \announceAction)$ as required.

\item[A3.] Let $\alpha$ and $\alpha'$ be either $\epsilon$ or a
  sequence of actions in $(A_{v_1}\cup \ldots\cup A_{v_n})^*\cdot
  A_{\elecauth}$.  Let $\beta,\beta'\in (A_{v_1}\cup \ldots\cup
  A_{v_n})^*$ and $\dom(a) = \elecauth$. Let $v_i,v_j$ and $v_k$ be
  distinct voters, and let $V$ be the set of all voters.  Suppose
  $n = \announce(\alpha\beta a) = \announce(\alpha\beta' a)$. Then the
  maximum bid in $\beta$ is equal to the maximum bid in $\beta'$.

  Let $\gamma = \beta \restrict(B-\{v_j\})$. If the maximum bid in
  $\gamma$ equals $n$, then let $\beta'' = \gamma \cdot (\beta' \restrict
  v_j)$. 

  Otherwise, the maximum bid in $\gamma$ must be less than $n$. Let
  $\beta'' = \gamma \cdot (\beta' \restrict v_j) \cdot \bidaction{v_k}{n}$.

  Note that in both cases, $\beta''\restrict v_i= \beta \restrict v_i$
  and $\beta''\restrict v_j = \beta' \restrict v_j$ and
  $\announce(\alpha\beta a) = \announce(\alpha\beta'' a) =
  \announce(\alpha\beta' a) = n$ as required.

  \item[A4.] For all voter permutations $P$ and sequences $\alpha$,
  $\announce(\alpha) = \announce(P(\alpha))$, since
  $\announce(\alpha)$ depends on the amounts bid, not the identity of
  the voter.

  \item[BB1.] The bulletin board has no actions.
  \item[BB2.] The state of the bulletin board is a sequence of integers
    representing the high bids announced by the auctioneer. Thus, each
    auctioneer action increases the length of the sequence by one
    element, and so $\obs_{BB}(\alpha \announceAction) \ne
    \obs_{BB}(\alpha)$.
  \item[BB3.] Suppose that $\alpha$ and $\alpha'$ are sequences such
    that $\obs_{BB}(\alpha \announceAction) = \obs_{BB}(\alpha'
    \announceAction)$. The state of the bulletin board is a sequence
    of integers representing the high bids announced by the
    auctioneer. Thus, the most recent announcement by the auctioneer
    in $\alpha \announceAction$ is the same as the announcement in
    $\alpha' \announceAction$, that is, $\announce(\alpha
    \announceAction) = \announce(\alpha' \announceAction)$.

  \item[B1.] All voters have the same set of actions, as the set of
    possible actions for voter $b$ is $\{\bidaction{b}{n}\mid b \in
    \Nat\}$.

\end{itemize}
\end{proof}

\begin{theorem}
  System $\sysvar$ satisfies \TFF-security.
\end{theorem}
\begin{proof}
  \begin{enumerate}
  \item We need to show that for all agents $u$ and $v$ and all $s,t
    \in S$ if $\obsflowstoin{u}{v}$ and $\obs_v(s) = \obs_v(t)$ then
    $\obs_u(s) = \obs_u(t)$.  From the architecture, if
    $\obsflowstoin{u}{v}$, then $u = BB$, and $v$ is a voter. Since
    voter's observations are defined to include the bulletin board's
    observations, the result holds trivially.

  \item We need to show that for all agents $u$ and all $\alpha,
    \alpha' \in A^*$, if $\tff{u}(\alpha) = \tff{u}(\alpha')$ then
    $\obs_u(\run{s_0}{\alpha}) = \obs_u(\run{s_0}{\alpha'})$.

    We proceed by induction on $|\alpha| + |\alpha'|$. The base case,
    $\alpha = \alpha' = \epsilon$ is trivial. Assume that the
    inductive hypothesis holds for all sequences of shorter combined
    length.

    Consider the sequences $\alpha a$ and $\alpha'$. Suppose
    $\tff{u}(\alpha a) = \tff{u}(\alpha')$. We consider cases based on
    possibilities for agent $u$.

    \begin{enumerate}
    \item $u = BB$
      
      Since $BB$ has no actions, either $\dom(a) = \elecauth$ or
      $\dom(a) \in B$.  If $\dom(a) \in B$ then $\tff{BB}(\alpha a) =
      \tff{BB}(\alpha)$, and by the inductive hypothesis,
      $\obs_{BB}(\run{s_0}{\alpha}) =
      \obs_{BB}(\run{s_0}{\alpha'})$. Since $BB$ only observes the
      auctioneer's actions, $\obs_{BB}(\run{s_0}{\alpha}) =
      \obs_{BB}(\run{s_0}{\alpha a})$, and the result holds.

      If $\dom(a) = \elecauth$, then $a = \announceAction$ and 
      \begin{align*}
        \tff{BB}(\alpha a) &= \tff{BB}(\alpha) (\announce(\alpha a), \epsilon) 
      \end{align*}
      Since $\tff{u}(\alpha a) = \tff{u}(\alpha')$, $\alpha'$ cannot
      be $\epsilon$. Let $\alpha' = \beta b$. If $\dom(b) \in B$ then
      $\tff{BB}(\beta b) = \tff{BB}(\beta)$, and by the inductive
      hypothesis, $\obs_{BB}(\run{s_0}{\alpha a}) =
      \obs_{BB}(\run{s_0}{\beta})$, and, since $BB$ only observes the
      auctioneer's actions, $\obs_{BB}(\run{s_0}{\beta}) =
      \obs_{BB}(\run{s_0}{\beta b})$, and the result holds. So assume
      that $\dom(b) = \elecauth$, and so $b = a = \announceAction$
      and $\announce(\alpha a) = \announce(\beta b)$. Since
      $\tff{BB}(\alpha) = \tff{BB}(\beta)$, by the inductive
      hypothesis $\obs_{BB}(\run{s_0}{\alpha}) =
      \obs_{BB}(\run{s_0}{\beta})$. Since $BB$ observes the sequence
      of the function evaluations $\announce$, and $\announce(\alpha
      a) = \announce(\beta b)$, we have $\obs_{BB}(\run{s_0}{\alpha
        a}) = \obs_{BB}(\run{s_0}{\beta b})$, as required.
 
    \item $u = \elecauth$

      Either $\dom(a) = \elecauth$ or $\dom(a) \in B$. If $\dom(a) =
      \elecauth$, then $\tff{\elecauth}(\alpha a) =
      \tff{\elecauth}(\alpha) (f(\alpha a), \epsilon)$.  Since
      $\tff{u}(\alpha a) = \tff{u}(\alpha')$, $\alpha'$ cannot be
      $\epsilon$. Let $\alpha' = \beta b$. Note that
      $\obsflowstoin{\dom(b)}{\elecauth}$, and so let
      $\tff{\elecauth}(\beta b) = \tff{\elecauth}(\beta) (g(\beta
      b), \epsilon)$. By the assumption on architecture \note{refer to
        this?}, $b = a$. Since $\tff{\elecauth}(\alpha) =
      \tff{\elecauth}(\beta)$, by the inductive hypothesis
      $\obs_{\elecauth}(\run{s_0}{\alpha}) =
      \obs_{\elecauth}(\run{s_0}{\beta})$. Since the auctioneer's
      state is a sequence of the voter and auctioneer actions, and
      $a=b$, we have $\obs_{\elecauth}(\run{s_0}{\alpha a}) =
      \obs_{\elecauth}(\run{s_0}{\beta b})$, as required.

    \item $u = v_i$ for $v_i \in B$

      Either $\dom(a) = \elecauth$ or $\dom(a) \in B$. If $\dom(a) =
      \elecauth$ then $\tff{v_i}(\alpha a) = \tff{v_i}(\alpha)
      (\epsilon, \obs_{BB}(\run{s_0}{\alpha a}))$.  Since
      $\tff{u}(\alpha a) = \tff{u}(\alpha')$, $\alpha'$ cannot be
      $\epsilon$. Let $\alpha' = \beta b$. If $\tff{v_i}(\beta b) =
      \tff{v_i}(\beta)$ then $b \in B-\{v_i\}$ and $\obs_{v_i}(\beta
      b) = \obs_{v_i}(\beta)$, and the result holds by the inductive
      hypothesis. So assume $\tff{v_i}(\beta b) \ne
      \tff{v_i}(\beta)$. By Lemma~\ref{lem:tff-well-founded}, $\dom(b)
      \in \{\elecauth, v_i\}$. If $\dom(b) = v_i$, then
      $\tff{v_i}(\beta b) \ne \tff{v_i}(\alpha a)$, a
      contradiction. So $\dom(b) = \elecauth$, and, since
      $\tff{BB}(\alpha a) = \tff{BB}(\beta b)$,
      $\obs_{BB}(\run{s_0}{\alpha a}) = \obs_{BB}(\run{s_0}{\beta
        b})$. By the inductive hypothesis,
      $\obs_{v_i}(\run{s_0}{\alpha})=
      \obs_{v_i}(\run{s_0}{\beta})$. Since the observations of $v_i$
      are simply the sequence of its actions and the states of the
      bulletin board, $\obs_{v_i}(\run{s_0}{\alpha a})=
      \obs_{v_i}(\run{s_0}{\beta b})$ as required.

      Assume that $\dom(a) \in B$. If $\dom(a) \ne v_i$ then
      $\tff{v_i}(\alpha a) = \tff{v_i}(\alpha)$ and by
      Lemma~\ref{lem:tff-well-founded}, $\obs_{v_i}(\run{s_0}{\alpha
        a}) = \obs_{v_i}(\run{s_0}{\alpha})$ and the result holds by
      the inductive hypothesis. Assume that $\dom(a) = v_i$. Then
      $\tff{v_i}(\alpha a) = \tff{v_i}(\alpha) (f(\alpha a),
      \epsilon)$.  Since $\tff{u}(\alpha a) = \tff{u}(\alpha')$,
      $\alpha'$ cannot be $\epsilon$. Let $\alpha' = \beta b$.  If
      $\dom(b) \in B - \{v_i\}$ then $\tff{v_i}(\beta b) =
      \tff{v_i}(\beta)$, and $\obs_{v_i}(\beta b) =
      \obs_{v_i}(\beta)$, and the result holds by the inductive
      hypothesis. Let $\tff{v_i}(\beta b) = \tff{v_i}(\beta) (g(\beta
      b), o)$. Since $\tff{v_i}(\alpha a)= \tff{v_i}(\beta b)$, we
      have $o = \epsilon$, and $g(\beta b) = f(\alpha a)$, and so by
      the assumption on architecture \note{refer to this?}, $b=a$.
      Since the observations of $v_i$ are simply the sequence of its
      action and the states of the bulletin board,
      $\obs_{v_i}(\run{s_0}{\alpha a})= \obs_{v_i}(\run{s_0}{\beta
        b})$ as required.
    \end{enumerate}

  \end{enumerate}
\end{proof}
\fi %
\fi %

\section{Architectural refinement}\label{sec:refine}

During the process of system development, architectural designs are
often refined by decomposing components into
sub-components. We investigate the preservation of
information security properties with respect to architectural
refinement, including refinement of extended architectures.

\Citet{arch-refinement} defines architectural refinement for
information-flow architectures as follows: 
architecture 
$\archvar_1=(D_1, \flowsto_1)$ is a refinement of architecture
$\archvar_2=(D_2, \flowsto_2)$ if there is a function $r:D_1
\rightarrow D_2$ such that $r$ is onto $D_2$ and for all $u,v\in D_1$,
if ${u}\flowsto_1{v}$ then 
${r(u)}\flowsto_2{r(v)}$.  
We write
$\archvar_1 \refines_r \archvar_2$ if $\archvar_1$ refines
$\archvar_2$ via refinement function $r$.

Intuitively, if $\archvar_1$ refines $\archvar_2$, then $\archvar_1$
provides more detail than $\archvar_2$, as it is a finer-grain
specification of security domains and the information flows between
them.  
Refinement function $r$ indicates how domains in $D_2$ are
decomposed into subdomains in $D_1$: for all $u \in D_1$, $u$ is a
subdomain of $r(u)\in D_2$.  
(Also, we say that $r(u)$ is the superdomain of domain $u$.)
The refinement function $r$ ensures that
information flows between subdomains in $\archvar_1$ are in accordance
with information flows between domains in $\archvar_2$.

\begin{figure}
\centerline{\includegraphics[height=6cm]{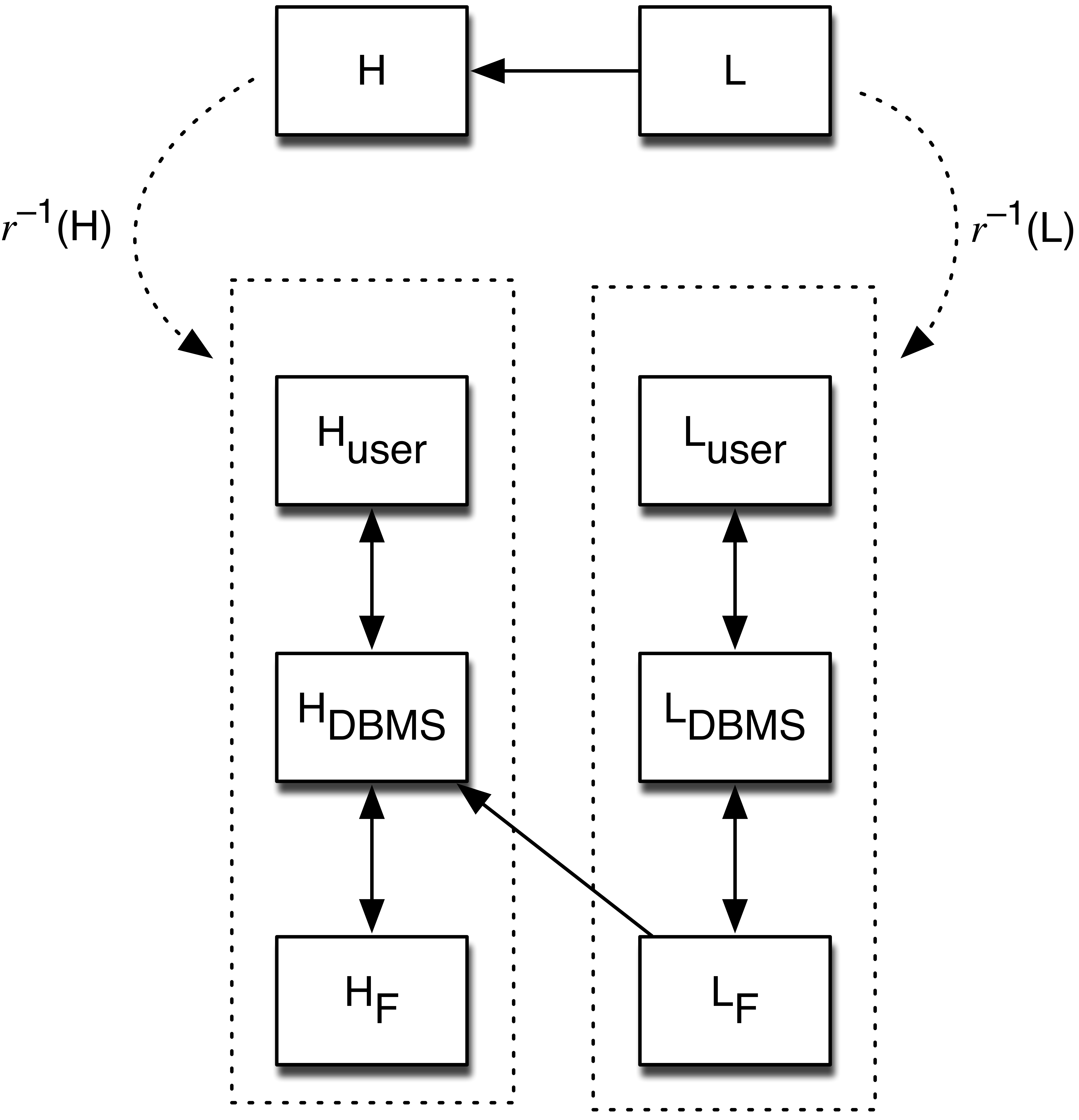}}
  \caption{The Hinke-Schaefer architecture {\hsarch} refines architecture
{\hlarch}. The dashed lines and rectangles indicate how the refinement
function maps domains in {\hsarch} to the $H$ and $L$ domains of {\hlarch}.}
  \label{fig:hs-refines-hl}
\end{figure} 

\begin{example} 
The Hinke-Schaefer architecture {\hsarch} refines architecture
{\hlarch} via refinement function $r$ that maps $\Huser$,
$\Hdbms$, and $H_F$ to $H$, and maps $\Luser$, $\Ldbms$, and
$L_F$ to $L$. All information flows within $\hsarch$ are permitted by
$\hlarch$, including the flow from $L_{F}$ to $\Hdbms$, which is
permitted as $\flowstoin{L}{H}$. Figure~\ref{fig:hs-refines-hl} show the
architectural refinement graphically.
\end{example} 

It is shown by \citet{arch-refinement}  that TA-compliance is preserved by
refinement. That is, if machine 
$\sysvar =\langle S, s_0, A,$ $D_1,$ $\step,$ $\obs, \dom\rangle$ 
is TA-compliant with  architecture
$\archvar_1$, and $\archvar_1 \refines_r \archvar_2$, then
$r(\sysvar)$ is TA-compliant with  $\archvar_2$.  
(Recall that $r(\sysvar)$ is the system $\sysvar$ with the domains abstracted 
via the function $r$, see Section~\ref{sec:infosec}.)

\subsection{Architectural refinement for interpreted extended architectures}
We
generalize 
architectural refinement to 
extended architectures.
Since the meaning of an extended architecture relies upon the interpretation of 
filter functions provided in an architectural interpretation, the generalization is
stated semantically, at the level of interpreted extended architectures. 
However, we later develop some more syntactic sufficient conditions, which 
may be more convenient for giving proofs that this semantic refinement relation holds. 

Let $\intarch_1= (\archvar_1, \archint_1)$ and $\intarch_2= (\archvar_2, \archint_2)$ 
be  interpreted extended architectures, where 
$\archvar_i=(D_i, \flowsto_i)$, $\archint_i= (A, \dom_i, \intn_i)$ for $i=1,2$.
(We assume that the  interpretations have the same set of actions $A$---we
do not attempt to deal with action refinement.)

\begin{definition} \label{def:sem-refinement-mapping}
We say that a mapping 
$r: D_1\rightarrow D_2$ 
is a {\em semantic refinement mapping} from 
$\intarch_1$ to $\intarch_2$, and write $\intarch_1 \srefines_r \intarch_2$, 
if $r$ is onto $D_2$, 
we have 
$\dom_2 = r\compose \dom_1$, and for all $u\in D_1$ 
and sequences $\alpha, \beta \in A^*$, 
if $\tff{r(u)}^{\intarch_2}(\alpha) = \tff{r(u)}^{\intarch_2}(\beta)$, then 
$\tff{u}^{\intarch_1}(\alpha) = \tff{u}^{\intarch_1}(\beta)$. 
\end{definition}  

This definition directly implies preservation of \TFF-compliance: 

\begin{theorem} \label{thm:sref} 
If $\intarch_1 \srefines_r \intarch_2$, then for all machines $\sysvar$, if 
$\sysvar$ is \TFF-compliant with $\intarch_1$, then $r(\sysvar)$ is \TFF-compliant with 
$\intarch_2$. 
\end{theorem}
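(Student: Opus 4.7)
The plan is to unfold the definition of \TFF-compliance for $r(\sysvar)$ and to reduce the required observational indistinguishability at the abstract level to \TFF-compliance of $\sysvar$ at the refined level, using Definition~\ref{def:sem-refinement-mapping} as the bridge between $\tff{u}^{\intarch_2}$ and the family of $\tff{v}^{\intarch_1}$ for $v \in r^{-1}(u)$.

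First I will dispatch the bookkeeping part of \TFF-compliance: $r(\sysvar)$ has domain set $D_2$ and action set $A$ by construction, and its domain function is $r \compose \dom_1$, which equals $\dom_2$ by the semantic refinement hypothesis. It then remains to verify the observational condition, so I fix an arbitrary $u \in D_2$ and sequences $\alpha, \alpha' \in A^*$ with $\tff{u}^{\intarch_2}(\alpha) = \tff{u}^{\intarch_2}(\alpha')$, and aim to show $\obs'_u(\alpha) = \obs'_u(\alpha')$.

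The key observation is that, by the definition of $r(\sysvar)$ given in Section~\ref{sec:infosec}, $\obs'_u(s) = \langle \obs_v(s)\rangle_{v \in r^{-1}(u)}$, so it suffices to prove $\obs_v(\alpha) = \obs_v(\alpha')$ for each $v \in r^{-1}(u)$. Since $r$ is onto, this set is nonempty; pick any such $v$. Because $r(v) = u$, applying Definition~\ref{def:sem-refinement-mapping} to $v$ transports the hypothesis $\tff{u}^{\intarch_2}(\alpha) = \tff{u}^{\intarch_2}(\alpha')$ into $\tff{v}^{\intarch_1}(\alpha) = \tff{v}^{\intarch_1}(\alpha')$, and \TFF-compliance of $\sysvar$ with $\intarch_1$ then yields $\obs_v(\alpha) = \obs_v(\alpha')$, as required.

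The theorem is essentially immediate once the right notion of refinement is in place: the substantive content has been deliberately packaged into Definition~\ref{def:sem-refinement-mapping}, which asserts exactly the implication on $\tff{}$ needed to push indistinguishability from the abstract system down to the refined one. Consequently, the real difficulty is not in proving this theorem but in verifying the semantic refinement condition for any particular mapping $r$; I expect this is why the paper promises to develop more syntactic sufficient conditions that are easier to check in concrete cases.
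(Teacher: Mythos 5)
Your proof is correct and follows essentially the same route as the paper's: use the semantic refinement condition to transfer $\tff{u}^{\intarch_2}$-equality to $\tff{v}^{\intarch_1}$-equality for every $v \in r^{-1}(u)$, invoke \TFF-compliance of $\sysvar$ to get equal observations for each such $v$, and conclude via the definition of $\obs'$. The added bookkeeping about domains, actions and $\dom_2 = r \compose \dom_1$ is a harmless (indeed slightly more careful) elaboration of the same argument.
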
 

\begin{proof} 
Suppose $\intarch_1 \srefines_r \intarch_2$. 
Then for all $u_2 \in D_2$, if we have
  $\tff{u_2}(\alpha) = \tff{u_2}(\alpha')$, then $\tff{v_1}(\alpha)
  = \tff{v_1}(\alpha')$ for all $v_1 \in r^{-1}(u_2) \subseteq
  D_1$. Since $\sysvar$ is \TFF-compliant with $\intarch_1$, this will imply that
  $\obs_{v_1}(\alpha) = \obs_{v_1}(\alpha')$ for all $v_1 \in
  r^{-1}(u_2)$, which by definition of $\obs'$ will imply that
 $\obs'_{u_2}(\alpha) = \obs'_{u_2}(\alpha')$, 
thus showing that
  $r(\sysvar)$ is \TFF-compliant with respect to $\intarch_2$. 
\end{proof} 

\subsubsection{Sufficient conditions for semantic refinement}

While semantic refinement therefore has the key property 
that we  want from a notion of architectural refinement, 
establishing semantic refinement requires reasoning about 
the functions $\tff{u}$, which have a complicated 
inductive definition that, for a given domain $u$,  
potentially ranges over the  entire set of domains. 
It would be advantageous if refinements could be proved in a more local way. 
We therefore develop a number of sufficient conditions for semantic refinement 
that have this property. 

It proves to be helpful to restrict attention to interpreted architectures satisfying 
a stronger version of non-conflation than that introduced above. Say that an interpreted architecture 
is {\em strongly non-conflating}, if for all $u,v,w \in D$ such that 
 $\flowstoin[f]{u}{w}$ with $f \neq \top$ and $\flowstoin[\top]{v}{w}$, 
 we have that for all sequences $\alpha,\beta \in A^*$ 
 and actions $a,b\in A$ with $\dom(a) =u$ and  $\dom(b) = v$ that 
 $\intn(f)(\alpha, a) \neq (\tff{v}(\beta),b)$. 
Intuitively, this  says that $w$ is able to distinguish between information received from $u$ and $v$, 
or alternately, that domain $v$, which is unconstrained in the information it is able to send to 
$w$, is moreover able to authenticate its messages to $w$. Note that the special case where 
$v=w$ gives that a strongly non-conflating architecture is also non-conflating.

As a first step towards a local definition of refinement, 
we note that it suffices to focus on the information that is 
``transmitted" when an action occurs, as represented by the functions 
$\inff{u,v}^{\intarch}$. 
Let $\intarch_1= (\archvar_1, \archint_1)$ and $\intarch_2= (\archvar_2, \archint_2)$ 
be fully filtered  interpreted extended architectures, where 
$\archvar_i=(D_i, \flowsto_i)$, and $\archint_i= (A, \dom_i, \intn_i)$ for $i=1,2$.
(As above, we require  that the  interpretations have the same set of actions $A$.)
Formally, define a refinement mapping from $\intarch_1$ to $\intarch_2$
to be a function 
$r:D_1 \rightarrow D_2$ such that the following conditions hold.
\begin{enumerate}[T1]
\item \label{tref-cond-1} 
The function $r$ is onto $D_2$, and $\dom_2 = r\compose \dom_1$. 

\item \label{tref-cond-2} For all $u,v\in D_1$,
$\alpha \in A^*$ and $a \in A$ with 
$\dom_1(a) = u$,  if $\inff{r(u),r(v)} ^{\intarch_2}(\alpha, a) = \epsilon$, then $\inff{u,v}^{\intarch_1}(\alpha, a) =
  \epsilon$.

\item \label{tref-cond-3} 
 For all $u\in D_2$ and $v\in D_1$, 
  $\alpha, \beta \in A^*$ 
and $a,b \in A$,  
 if 
 $$\inff{\dom_2(a),r(v)} ^{\intarch_2}(\alpha, a)= \inff{\dom_2(b),r(v)} ^{\intarch_2}(\beta, b) \neq \epsilon$$
 then 
 either (i) $a=b$  and $\flowstoin[\top]{\dom_2(a)}{r(v)}$ 
 or (ii) 
 $\inff{\dom_1(a),v}^{\intarch_1}(\alpha, a) =\inff{\dom_1(b),v}^{\intarch_1}(\beta, b)$. 

\end{enumerate} 
We write $\intarch_1 \trefines_r \intarch_2$ when these conditions are satisfied. 

Intuitively, condition T\ref{tref-cond-2} says that in situations where no information is 
permitted to flow, by the abstract architecture $\intarch_2$, 
between superdomains of the concrete domains $u,v$, 
no information is permitted to flow between the domains $u,v$. 
It can be seen to be a generalization of the refinement condition 
(for architectures without filter functions) 
considered by  \citet{arch-refinement}. Stated in the contrapositive, 
T\ref{tref-cond-2} says that  if $\inff{u,v}^{\intarch_1}(\alpha, a) \neq \epsilon$
then 
$\inff{r(u),r(v)} ^{\intarch_2}(\alpha, a) \neq \epsilon$. 
In particular, in the case where $\flowstoin[\top]{u}{v}$, we have 
 $\inff{u,v}^{\intarch_1}(\alpha, a) = (\tff{u}(\alpha),a) \neq \epsilon$, 
 so $\inff{r(u),r(v)} ^{\intarch_2}(\alpha, a) \neq \epsilon$. This requires that there
 exists an edge $\flowstoin[g]{r(u)}{r(v)}$.

Condition T\ref{tref-cond-3}  is somewhat technical, but intuitively states that if 
it is possible for a superdomain $r(v)$ to detect the occurrence of both 
action $a$ performed after $\alpha$,  and action $b$ performed after $\beta$, 
but is not able to distinguish these actions, then it should also not be possible for the
subdomain $v$ to distinguish these actions, as stated in case (ii) of the condition. 
We separate out case (i) for technical reasons.  It covers one situation, easily checked by 
inspection of the architecture, where it can be shown that $v$ cannot distinguish the two actions
(we prove this in the context of Theorem~\ref{thm:tref} below).
  
We now show that this more localized notion of refinement provides a sufficient condition 
for semantic refinement. In what follows, to lighten the notation, 
we drop the superscripts $\intarch_1$ and $\intarch_2$
from terms like $\tff{u}^{\intarch_1}(\alpha)$ and $\inff{\dom_2(a),r(u)}^{\intarch_2}(\alpha,a)$, since 
it will always remain clear from the domains (here $u\in D_1$ or $\dom_2(a), r(u)\in D_2$)
which architecture is intended.  

\begin{theorem} \label{thm:tref}
Suppose that $\intarch_1$ and $\intarch_2$ are strongly non-conflating interpreted architectures. 
If $\intarch_1 \trefines_r \intarch_2$ then  $\intarch_1 \srefines_r \intarch_2$. 
\end{theorem}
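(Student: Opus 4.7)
The plan is to prove the conclusion $\tff{u}(\alpha) = \tff{u}(\beta)$ from $\tff{r(u)}(\alpha) = \tff{r(u)}(\beta)$ by induction on $|\alpha| + |\beta|$, uniformly in $u \in D_1$. The base case is immediate. For the inductive step I can assume (without loss of generality, by swapping) that $\alpha$ is nonempty, so $\alpha = \alpha' a$. The heart of the argument will be a case analysis on whether the ``last contribution'' $\inff{r(\dom_1(a)), r(u)}(\alpha', a)$ to $\tff{r(u)}(\alpha)$ is empty or not, and then symmetrically for $\beta$.

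If $\inff{r(\dom_1(a)), r(u)}(\alpha', a) = \epsilon$, then $\tff{r(u)}(\alpha) = \tff{r(u)}(\alpha')$, and condition T\ref{tref-cond-2} applied contrapositively gives $\inff{\dom_1(a), u}(\alpha', a) = \epsilon$, so $\tff{u}(\alpha) = \tff{u}(\alpha')$; the induction hypothesis on $(\alpha', \beta)$ then closes the case. If instead $\inff{r(\dom_1(a)), r(u)}(\alpha', a) \neq \epsilon$, then $\tff{r(u)}(\alpha)$ is a strict extension of $\tff{r(u)}(\alpha')$, forcing $\beta = \beta' b$ with $\tff{r(u)}(\beta)$ extending $\tff{r(u)}(\beta')$. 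If $\inff{r(\dom_1(b)), r(u)}(\beta', b) = \epsilon$, the same T\ref{tref-cond-2} argument handles it symmetrically. Otherwise both appended items are nonempty, and because the $\concat$ operation appends a single element, I can uniquely factor the equality $\tff{r(u)}(\alpha) = \tff{r(u)}(\beta)$ into $\tff{r(u)}(\alpha') = \tff{r(u)}(\beta')$ together with $\inff{r(\dom_1(a)), r(u)}(\alpha', a) = \inff{r(\dom_1(b)), r(u)}(\beta', b)$. The induction hypothesis then yields $\tff{u}(\alpha') = \tff{u}(\beta')$.

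At this point I invoke condition T\ref{tref-cond-3}. In its case (ii) we immediately have $\inff{\dom_1(a), u}(\alpha', a) = \inff{\dom_1(b), u}(\beta', b)$, and appending to the equal prefixes yields $\tff{u}(\alpha) = \tff{u}(\beta)$. In case (i), we have $a = b$ and $\flowstoin[\top]{\dom_2(a)}{r(u)}$ abstractly, so the shared appended item has the form $(\tff{r(\dom_1(a))}(\alpha'), a) = (\tff{r(\dom_1(a))}(\beta'), a)$, which gives $\tff{r(\dom_1(a))}(\alpha') = \tff{r(\dom_1(a))}(\beta')$; by the inductive hypothesis (applied now at the domain $\dom_1(a) \in D_1$ to the shorter pair $(\alpha', \beta')$) we conclude $\tff{\dom_1(a)}(\alpha') = \tff{\dom_1(a)}(\beta')$. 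I then split on the concrete edge from $\dom_1(a)$ to $u$ in $\intarch_1$: if $\notflowstoin{\dom_1(a)}{u}$ both contributions are $\epsilon$; if the edge is $\top$, then the contributions are $(\tff{\dom_1(a)}(\alpha'), a)$ and $(\tff{\dom_1(a)}(\beta'), a)$, equal by what was just shown; and if the edge is some filter $f$, the compatibility of $\intn_1(f)$ with $\tff{\dom_1(a)}$ gives equality of the filter outputs. In each subcase the contributions at the concrete level match and the induction closes.

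I expect the main obstacle to be the bookkeeping in the $\concat$ decomposition, specifically ensuring that equality of nonempty-extended sequences really does split cleanly into equality of prefixes plus equality of the appended element. This is where the strong non-conflation hypothesis plays its role: it guarantees that the structural form of an appended item (a pair from a $\top$-edge versus a filter output) cannot be confused across different edge types into the same target, so that T\ref{tref-cond-3}'s case (i) is genuinely the ``$\top$-edge'' case and case (ii) covers the rest unambiguously. A secondary subtlety is that the induction on $|\alpha'|+|\beta'|$ must be performed for a domain $\dom_1(a)$ that is in general different from $u$, so I should state the inductive claim universally over all domains in $D_1$ from the start.
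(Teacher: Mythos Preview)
Your proof is correct and follows the same route as the paper's: induction on $|\alpha|+|\beta|$ uniformly over $u\in D_1$, a case split on whether the last abstract contribution is $\epsilon$, unique factoring of the nonempty-appended sequences, and then condition T\ref{tref-cond-3}, with case~(i) handled by a second application of the induction hypothesis at the domain $\dom_1(a)$ followed by the three-way split on the concrete edge (no edge, $\top$-edge, filtered edge using compatibility). One small note: the paper's proof does not actually invoke strong non-conflation at any step---the $\concat$-factoring is pure sequence reasoning (equal sequences with nonempty final appends have equal prefixes and equal last elements), and T\ref{tref-cond-3} is simply assumed as a disjunction---so your closing remark slightly misattributes the role of that hypothesis.
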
 

\begin{proof} 
Suppose $\intarch_1 \trefines_r \intarch_2$. 
We need to show that for all domains $u\in D_1$ and sequences $\alpha,\beta\in A^*$, 
if $\tff{r(u)}(\alpha) =  \tff{r(u)}(\beta)$  
then $\tff{u}(\alpha) =  \tff{u}(\beta)$. 
We show this by induction on the combined  length of $\alpha$ and $\beta$. 
The case of $\alpha = \beta = \epsilon$ is trivial. 
Consider sequences $\alpha a$ and $\beta$, such that 
$\tff{r(u)}(\alpha a) =  \tff{r(u)}(\beta)$, 
where the claim holds for shorter sequences. 
We consider two cases, depending on whether $\inff{\dom_2(a),r(u)}(\alpha,a) = \epsilon$. 

If $\inff{\dom_2(a),r(u)}(\alpha,a) = \epsilon$, then by condition~T\ref{tref-cond-2}, 
we have $\inff{\dom_1(a),u}(\alpha,a) = \epsilon$.
In this case, we also have, by definition, that 
$$\tff{r(u)}(\alpha ) =
 \tff{r(u)}(\alpha )\concat \inff{\dom_2(a),r(u)}(\alpha,a)= 
  \tff{r(u)}(\alpha a) =  \tff{r(u)}(\beta)\mathpunct.$$ 
 Thus, by the induction hypothesis, we have   
$\tff{u}(\alpha) =  \tff{u}(\beta)$. 
Since $\inff{\dom_1(a),u}(\alpha,a) = \epsilon$, 
we obtain that 
$$\tff{u}(\alpha a) =
 \tff{u}(\alpha )\concat \inff{\dom_1(a),u}(\alpha,a)= 
  \tff{u}(\alpha) =  \tff{u}(\beta)\mathpunct,$$
as required. 

Alternately, if $\inff{\dom_2(a),r(u)}(\alpha,a) \neq \epsilon$,
then since 
$$\tff{r(u)}(\beta ) =\tff{r(u)}(\alpha a) =
 \tff{r(u)}(\alpha )\concat \inff{\dom_2(a),r(u)}(\alpha,a)\mathpunct, $$
 it follows that $\beta$ is not $\epsilon$. We may therefore write $\beta = \beta' b$, where $b\in A$. 
 We now have two further cases. If $\inff{\dom_2(b),r(u)}(\beta',b) = \epsilon$, then 
 we may apply the argument above with the roles of $\alpha a$ and $\beta' b$ switched. 
 Otherwise,  $\inff{\dom_2(b),r(u)}(\beta',b) \neq \epsilon$.  
 Since $\tff{r(u)}(\alpha a) =\tff{r(u)}(\beta' b)$
 states that 
$$
 \tff{r(u)}(\alpha )\concat \inff{\dom_2(a),r(u)}(\alpha,a) 
 =  
 \tff{r(u)}(\beta')\concat \inff{\dom_2(b),r(u)}(\beta',b)$$
and neither of the appended elements is $\epsilon$, it follows  that 
$ \tff{r(u)}(\alpha ) =\tff{r(u)}(\beta')$
 and 
 $\inff{\dom_2(a),r(u)}(\alpha,a) =  \inff{\dom_2(b),r(u)}(\beta',b)$. 
 By the induction hypothesis, we obtain that 
 $ \tff{u}(\alpha ) =\tff{u}(\beta')$. Also, by 
  condition T\ref{tref-cond-3} we  have 
  either (i) $a=b$  and $\flowstoin[\top]{\dom_2(a)}{r(u)}$ 
 or (ii) 
 $\inff{\dom_1(a),u}(\alpha,a) =  \inff{\dom_1(b),u}(\beta',b)$. 
We claim that (ii) also holds in case (i).  
It then  follows, by definition, that $\tff{u}(\alpha a) =\tff{u}(\beta' b)$, 
as required.  

To prove the claim, note that in case (i), we have 
 $$\inff{\dom_2(a),r(u)}(\alpha,a) =  \inff{\dom_2(b),r(u)}(\beta',b) 
 =  \inff{\dom_2(a),r(u)}(\beta',a) 
 $$ and since we have an edge $\flowstoin[\top]{\dom_2(a)}{r(u)}$ this  states 
 that 
 $(\tff{\dom_2(a)}(\alpha),a) =  (\tff{\dom_2(a)}(\beta'),a)$.
 In particular, we have $\tff{\dom_2(a)}(\alpha) =  \tff{\dom_2(a)}(\beta')$.
  By induction,  
 it follows that  $\tff{\dom_1(a)}(\alpha)=  \tff{\dom_1(a)}(\beta')$.  
 There are now several possibilities, depending on the nature of the edge, if any, 
 from $\dom_1(a)= \dom_1(b)$ to $u$ in $\intarch_1$. 
\begin{enumerate} 
 \item if $\notflowstoin{\dom_1(a)}{u}$, then 
 by definition, $ \inff{\dom_1(a),u}(\alpha,a)= \epsilon = 
 \inff{\dom_1(b),u}(\beta',b)$. 
 \item If $\flowstoin[\top]{\dom_1(a)}{u}$, 
 then $ \inff{\dom_1(a),u}(\alpha,a)= 
 (\tff{\dom_1(a)}(\alpha),a) = 
  (\tff{\dom_1(b)}(\beta'),b) = 
 \inff{\dom_1(b),u}(\beta',b)$. 

 \item If $\flowstoin[f_1]{\dom_1(a)}{u}$ with $f_1\neq \top$, then 
 it follows from $\tff{\dom_1(a)}(\alpha)=  \tff{\dom_1(a)}(\beta')$,
 using compatibility of $\intn_1(f_1)$, 
 that
$ \inff{\dom_1(a),u}(\alpha,a)= 
\intn_1(f_1)(\alpha, a) = \intn_1(f_1)(\beta', a) = \intn_1(f_1)(\beta', b) = 
 \inff{\dom_1(b),u}(\beta',b)$. 
 \end{enumerate} 
 Thus, 
 in all cases, we have 
 $\inff{\dom_1(a),u}(\alpha,a) =  \inff{\dom_1(b),u}(\beta',b)\mathpunct.$
\end{proof} 

The refinement relation $\trefines$ is quite general and captures the essence of 
the relation between semantic architectural refinement and refinement of the information that 
flows over the edges of an architecture. However, to establish that the conditions hold 
may still require reasoning about the complex ``global" functions $\tff{u}$. 
For example, in the case where an edge $\flowstoin[f]{r(u)}{r(v)}$ with $f\neq \top$ is 
refined by an edge  $\flowstoin[\top]{u}{v}$, we need to prove a relationship 
between the ``local'' function $\intn_2(f)$,  and the recursive
 function $\tff{u}$. We therefore develop a more specific 
set of conditions that do support a more local approach to proof of a refinement, 
in which we need to  consider only the edges not labelled by $\top$. 

Define a {\em strict refinement mapping} from $\intarch_1$ to $\intarch_2$
to be a function 
$r:D_1 \rightarrow D_2$ such that the following conditions hold.
\begin{enumerate}[S1]
\item \label{stref-cond-1} 
The function $r$ is onto $D_2$, and $\dom_2 = r\compose \dom_1$. 

\item \label{stref-cond-2}
For $u,v \in D_1$, if $\flowstoin[\top]{u}{v}$ then $\flowstoin[\top]{r(u)}{r(v)}$.

\item \label{stref-cond-3} 
 For all $u,v\in D_1$, if $\flowstoin[f_1]{u}{v}$ with $f_1\neq \top$ 
 then there exists an edge  $\flowstoin[f_2]{r(u)}{r(v)}$ with 
 either $f_2 = \top$  or $f_2 \neq \top$  and 
 for all $\alpha\in A^*$ and $a\in A$ with $\dom_1(a) = u$
 we have $\intn_2(f_2)(\alpha, a) = \epsilon$ implies $\intn_1(f_1)(\alpha, a) = \epsilon$. 
 
 \item \label{stref-cond-4} 
 For all $u\in D_1$ and actions $a,b\in A$, if 
 $\flowstoin[f_2]{\dom_2(a)}{r(u)}$ and   $\flowstoin[g_2]{\dom_2(b)}{r(u)}$
 where $f_2 \neq \top$ and $g_2 \neq \top$, 
 then either 
 \begin{enumerate} 
 \item 
 $\notflowstoin{\dom_1(a)}{u}$ and   $\notflowstoin{\dom_1(b)}{u}$, 
 or
 \item 
 there exists an edge 
 $\flowstoin[f_1]{\dom_1(a)}{u}$ with $f_1 \neq \top$ but  $\notflowstoin{\dom_1(b)}{u}$,
 and for all $\alpha, \beta \in A^*$ such that 
 $\intn_2(f_2)(\alpha, a) =\intn_2(g_2)(\beta, b)\neq \epsilon $ 
 we have  $\intn_1(f_1)(\alpha, a ) = \epsilon$, or 
 
 \item 
  $\notflowstoin{\dom_1(a)}{u}$ but 
 there exists an edge  $\flowstoin[g_1]{\dom_1(b)}{u}$ with $g_1\neq \top$, 
 and for all $\alpha, \beta \in A^*$  such that 
 $\intn_2(f_2)(\alpha, a) =\intn_2(g_2)(\beta, b)\neq \epsilon$ 
 we have  $\intn_1(g_1)(\beta, b) = \epsilon$, or

 \item 
  there exist edges 
 $\flowstoin[f_1]{\dom_1(a)}{v}$ and $\flowstoin[g_1]{\dom_2(b)}{v}$
with $f_1\neq \top$ and $g_1 \neq \top$, and 
 for all $\alpha, \beta \in A^*$, such that 
$\intn_2(f_2)(\alpha, a) =\intn_2(g_2)(\beta, b)\neq \epsilon$ 
 we have 
$\intn_1(f_1)(\alpha, a)   = \intn_1(g_1)(\beta, b)$. 
\end{enumerate} 
\end{enumerate}

Note that in the case where all edges in the two architectures are labelled by $\top$, 
conditions S\ref{stref-cond-1}-\ref{stref-cond-2} amount to the notion of architectural refinement of 
\citet{arch-refinement}. For edges labelled with non-$\top$ labels, 
Condition S\ref{stref-cond-3} states that whenever information is not permitted to flow 
between two superdomains, it is not permitted to flow between their subdomains. 

Condition S\ref{stref-cond-4}, like condition T\ref{tref-cond-3}, is intended to capture 
that if a superdomain cannot distinguish two actions $a$ and $b$, then neither can its subdomains. 
However, S\ref{stref-cond-4} restricts the statement of this property to filtered edges, 
i.e., edges not labelled $\top$. 
The essence of the property is most easily visible in condition S\ref{stref-cond-4}(d), which 
corresponds to a situation  where both $\dom_1(a)$ and $\dom_1(b)$ have 
filtered edges to a subdomain. The remaining cases deal with all the other possible configurations
that are consistent with  S\ref{stref-cond-2}. 

Note that cases (b) and (c) of S\ref{stref-cond-4} say that where a superdomain $r(u)$ is 
not able to distinguish the actions $a$ and $b$,  it is not 
permitted for the  subdomain $u$ to receive information flow 
as a result of one action but not the other.
Intuitively, this would imply that the 
subdomain can deduce which action occurred, giving it more 
information than its superdomain, whereas our intuition for refinement is that
it should not increase the amount of information flow. 

We remark that S\ref{stref-cond-4} does not need to consider the situation where 
$\flowstoin[f_2]{\dom_2(a)}{r(u)}$ with $f_2 \neq \top$ and   $\flowstoin[\top]{\dom_2(b)}{r(u)}$
because in strongly  non-conflating architectures, it is impossible that 
$\inff{\dom_2(a),r(u)}(\alpha, a) = \inff{\dom_2(b),r(u)}(\beta,b)$. 

\begin{theorem} \label{thm:sref-imp-tref}
If 
$\intarch_2$ is 
strongly non-conflating and 
$\intarch_1$ strictly refines  $\intarch_2$ by function $r$, 
then  $\intarch_1 \trefines_r  \intarch_2$.  
\end{theorem}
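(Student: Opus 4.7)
The plan is to verify the three conditions T\ref{tref-cond-1}, T\ref{tref-cond-2}, T\ref{tref-cond-3} defining $\trefines_r$ directly from the strict refinement conditions S\ref{stref-cond-1}--S\ref{stref-cond-4} and the strong non-conflation of $\intarch_2$. Condition T\ref{tref-cond-1} is immediate from S\ref{stref-cond-1}, since the two conditions coincide. For T\ref{tref-cond-2}, I will fix $u,v\in D_1$, $\alpha\in A^*$ and $a\in A$ with $\dom_1(a)=u$, and assume $\inff{r(u),r(v)}(\alpha,a)=\epsilon$. Unfolding the definition of $\inff{\cdot,\cdot}$ at the abstract level, this forces either $\notflowstoin{r(u)}{r(v)}$ or $\flowstoin[f_2]{r(u)}{r(v)}$ with $f_2\ne\top$ and $\intn_2(f_2)(\alpha,a)=\epsilon$ (the top case is ruled out because $(\tff{r(u)}(\alpha),a)\ne\epsilon$). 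In the first subcase, the contrapositives of S\ref{stref-cond-2} and S\ref{stref-cond-3}, together with the uniqueness of the edge label between any two domains, force $\notflowstoin{u}{v}$, so $\inff{u,v}(\alpha,a)=\epsilon$. In the second subcase, the corresponding edge from $u$ to $v$ in $\intarch_1$ cannot be $\top$-labelled (by S\ref{stref-cond-2} and uniqueness of edges, since $f_2\ne\top$), so is either absent or labelled $f_1\ne\top$; in the latter case S\ref{stref-cond-3} applied with this pairing of labels gives $\intn_2(f_2)(\alpha,a)=\epsilon\Rightarrow\intn_1(f_1)(\alpha,a)=\epsilon$, again yielding $\inff{u,v}(\alpha,a)=\epsilon$.

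For condition T\ref{tref-cond-3}, I will fix $v\in D_1$, $\alpha,\beta\in A^*$, $a,b\in A$ and assume the common value $k:=\inff{\dom_2(a),r(v)}(\alpha,a)=\inff{\dom_2(b),r(v)}(\beta,b)$ is nonempty. I will case-split on the shape of the edge from $\dom_2(a)$ to $r(v)$. If it is labelled $\top$, then $k=(\tff{\dom_2(a)}(\alpha),a)$; strong non-conflation of $\intarch_2$ rules out the possibility that the edge from $\dom_2(b)$ to $r(v)$ is labelled by some $g_2\ne\top$, so this edge must also be $\top$-labelled, and comparing the two tuple representations of $k$ forces $a=b$, establishing case~(i) of T\ref{tref-cond-3}. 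If the edge from $\dom_2(a)$ to $r(v)$ is labelled $f_2\ne\top$, then a symmetric application of strong non-conflation rules out a $\top$-labelled edge from $\dom_2(b)$ to $r(v)$, so the latter edge must be labelled by some $g_2\ne\top$ with $\intn_2(f_2)(\alpha,a)=\intn_2(g_2)(\beta,b)\ne\epsilon$.

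At this point I am exactly in the hypotheses of S\ref{stref-cond-4} applied to $u:=v$ and the actions $a,b$, so one of the four sub-cases (a)--(d) of S\ref{stref-cond-4} must hold. I will verify that each of these sub-cases directly yields $\inff{\dom_1(a),v}(\alpha,a)=\inff{\dom_1(b),v}(\beta,b)$ — either both are $\epsilon$ (cases (a), (b), (c), using that the hypothesis $\intn_2(f_2)(\alpha,a)=\intn_2(g_2)(\beta,b)\ne\epsilon$ matches the premise of the respective clause) or both equal the common value of the non-$\top$ filter interpretations (case (d)). The cases where any edge from $\dom_1(a)$ or $\dom_1(b)$ to $v$ might be $\top$-labelled are impossible, since by S\ref{stref-cond-2} and edge uniqueness a concrete $\top$ edge would force a $\top$ edge at the abstract level, contradicting $f_2\ne\top$ or $g_2\ne\top$; I will note this at the start of the sub-case analysis.

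The main obstacle I anticipate is bookkeeping rather than a deep step: the proof is a fairly mechanical case split, but one must repeatedly invoke edge uniqueness and the interplay between S\ref{stref-cond-2}, S\ref{stref-cond-3}, and strong non-conflation to eliminate the spurious combinations (most importantly the ``mixed'' configurations with one $\top$ and one non-$\top$ edge, which are precisely what strong non-conflation is designed to exclude). Organising the T\ref{tref-cond-3} argument so that the four sub-cases of S\ref{stref-cond-4} are invoked in a single clean step, after strong non-conflation has reduced the situation to two simultaneously non-$\top$ filtered edges, will be the cleanest way to present the proof.
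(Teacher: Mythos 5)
Your proposal is correct and follows essentially the same route as the paper's proof: T\ref{tref-cond-1} from S\ref{stref-cond-1}, T\ref{tref-cond-2} via S\ref{stref-cond-2}/S\ref{stref-cond-3} and edge uniqueness (you argue it directly where the paper argues the contrapositive, which is only a cosmetic difference), and T\ref{tref-cond-3} by using strong non-conflation of $\intarch_2$ to eliminate the mixed $\top$/non-$\top$ configuration, handling the $\top$/$\top$ case by tuple comparison to get case (i), and discharging the non-$\top$/non-$\top$ case through the four sub-cases of S\ref{stref-cond-4}. No gaps worth noting.
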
 

\begin{proof} 
Assume that $\intarch_1$ and $\intarch_2$ are strongly non-conflating
and that conditions S\ref{stref-cond-1}--\ref{stref-cond-4} hold. 
We prove conditions T\ref{tref-cond-1}-\ref{tref-cond-3}. 
T\ref{tref-cond-1} is  trivially identical to S\ref{stref-cond-1}. 

For condition T\ref{tref-cond-2}, we consider that contrapositive. 
Suppose that $u,v \in D_1$ and $\alpha  \in A^*$ and $a\in A$ with 
$\dom_1(a) = u$ and $\inff{u,v}(\alpha,  a)\neq  \epsilon$. 
We need to prove that  $\inff{r(u),r(v)}(\alpha,  a) \neq \epsilon$. 
From $\inff{u,v}(\alpha,  a)\neq  \epsilon$ we obtain that 
there exists an edge $\flowstoin[f_1]{u}{v}$. 
If $f_1 = \top$, then by S\ref{stref-cond-2}, we have 
that $\flowstoin[\top]{r(u)}{r(v)}$, and since $\dom_2(a) = r(\dom_1(a)) = r(u)$
by S\ref{stref-cond-1}, this implies that $\inff{r(u),r(v)}(\alpha,  a) 
= (\tff{r(u)}(\alpha),  a) \neq \epsilon$. 
Alternately, if $f_1\neq \top$, then 
by S\ref{stref-cond-3},  
there exists an edge $\flowstoin[f_2]{r(u)}{r(v)}$, 
with either $f_2 = \top$ or 
 $f_2 \neq  \top$ and $\intn_2(f_2)(\alpha, a) = \epsilon$ implies 
$\intn_1(f_1)(\alpha, a) = \epsilon$.
In the case $f_2 = \top$ we argue exactly as above. In case 
 $f_2 \neq  \top$, since we have 
 $\intn_1(f_1)(\alpha, a) = \inff{u,v}(\alpha  a)
 \neq  \epsilon$, we obtain 
 $\epsilon \neq \intn_2(f_2)(\alpha, a) = \inff{r(u),r(v)}(\alpha , a)$, again as 
 required. 
 
 For condition  T\ref{tref-cond-3}, suppose 
 $\inff{\dom_2(a),r(v)}(\alpha,  a) = \inff{\dom_2(b),r(v)}(\beta, b) \neq \epsilon$. 
 We have to show that either (i) $a=b$ and $\flowstoin[\top]{\dom_2(a)}{r(v)}$, or (ii) 
 $\inff{\dom_1(a),v}(\alpha,  a)= \inff{\dom_1(b),v}(\beta, b)$. 
 From $\inff{\dom_2(a),r(v)}(\alpha,  a) = \inff{\dom_2(b),r(v)}(\beta, b) \neq \epsilon$
 it follows that there exist edges 
$\flowstoin[f_2]{\dom_2(a)}{r(v)}$ and $\flowstoin[g_2]{\dom_2(a)}{r(v)}$. 
We consider several cases, depending on whether these edges are labelled $\top$ or not. 

If $f_2 = g_2 = \top$, then 
$\inff{\dom_2(a),r(v)}(\alpha,  a)  = (\tff{\dom_2(a)}(\alpha),  a)$  
and $\inff{\dom_2(b),r(v)}(\beta, b)  = (\tff{\dom_2(b)}(\beta ),  b)$, and it follows that  $a=b$ and
we have (i). 

The case that $f_2 = \top$ and $g_2 \neq \top$ is not possible, by the assumption that 
the architecture $\intarch_2$ is strongly non-conflating. 

If $f_2 \neq \top$ and $ g_2 \neq  \top$, then by S\ref{stref-cond-4}, we have one of 
four possibilities. 

\begin{itemize}
\item[(a)] 
$\notflowstoin{\dom_1(a)}{v}$ and $\notflowstoin{\dom_1(b)}{v}$. 
In this case,  $\inff{\dom_1(a),v}(\alpha  a)= \epsilon = \inff{\dom_1(b),v}(\beta b)$. 

\item[(b)] 
There exists an edge $\flowstoin[f_1]{\dom_1(a)}{v}$ with $f_1\neq \top$ 
but $\notflowstoin{\dom_1(b)}{v}$, 
and $\intn_2(f_2)(\alpha, a) = \intn_2(g_2)(\beta, b)\neq \epsilon$ implies $\intn_1(f_1)(\alpha, a) = \epsilon$. 
In this case, since 
$$\intn_2(f_2)(\alpha, a) = \inff{\dom_2(a),r(v)}(\alpha, a) =
\inff{\dom_2(b),r(v)}(\beta, b)  = \intn_2(g_2)(\beta, b)\neq \epsilon\mathpunct,$$ 
we obtain that $\intn_1(f_1)(\alpha, a) = \epsilon$. 
Thus, $\inff{\dom_1(a),v}(\alpha, a) = \intn_1(\alpha, a) = \epsilon = \inff{\dom_1(b),v}(\beta, b)$. 

\item[(c)] This case is identical to case (b) with the roles of $\alpha a$ and $\beta b$ reversed. 

\item[(d)] 
There exists edges $\flowstoin[f_1]{\dom_1(a)}{v}$ and  $\flowstoin[g_1]{\dom_1(b)}{v}$
with $f_1\neq \top$ and $g_1\neq \top$ and 
$\intn_2(f_2)(\alpha, a) = \intn_2(g_2)(\beta, b)\neq \epsilon$ implies $\intn_1(f_1)(\alpha, a) = \intn_1(g_1)(\beta, b)$. 
Since  $
\intn_2(f_2)(\alpha, a) = 
\inff{\dom_2(a),r(v)}(\alpha, a) =  
\inff{\dom_2(b),r(v)}(\beta, b) = 
 \intn_2(g_2)(\beta, b) \neq \epsilon
$ 
we obtain 
$\intn_1(f_1)(\alpha, a) = \intn_1(g_1)(\beta, b)$, which is identical to 
$\inff{\dom_1(a),v}(\alpha, a) = \inff{\dom_1(b),v}(\beta, b)$. 
\end{itemize}
Thus, in each case, we have
$\inff{\dom_1(a),v}(\alpha, a) = \inff{\dom_1(b),v}(\beta, b)$, as required. 
\end{proof}

We remark that although condition~S\ref{stref-cond-4} is somewhat complex, under a reasonable 
assumption it can be replaced by the following much simpler condition~S\ref{stref-cond-5}, 
which  states more transparently that at least as much information is permitted to 
flow along abstract edges as is permitted to flow along any corresponding 
concrete edges. 

\begin{enumerate}[S1]
\setcounter{enumi}{4}
\item  \label{stref-cond-5} 
For $\alpha , \beta \in A^*$ and $a,b \in A$ and $v\in D_1$ such that $\dom_1(a) =\dom_1(b)$, 
and $\flowstoinf[f_1]{\dom_1(a)}{v}{\flowsto_1}$ and $\flowstoinf[f_2]{\dom_2(a)}{r(v)}{\flowsto_2}$, 
where $f_1\neq \top$ and $f_2\neq \top$, 
if  $\intn_2(f_2)(\alpha, a) = \intn_2(f_2)(\beta, b)\neq \epsilon$, then 
$\intn_1(f_1)(\alpha, a) =  \intn_1(f_1)(\beta, b)$. 
\end{enumerate}

Say that  {\em  messages are source-identifying with respect to $r$},  if there 
exists a function $\source$ mapping the union of the ranges of the filter functions $\intn_2(f)$ of 
$\intarch_2$ to the set of domains $D_1$  of  $\intarch_1$, such that 
for all $\alpha\in A^*$ and $a\in A$, if $\flowstoinf[f_2]{r(\dom_1(a))}{w}{\flowsto_2}$ 
and $\intn_2(f_2)(\alpha, a) \neq \epsilon$, 
then $\source(\intn_2(f_2)(\alpha, a)) = \dom_1(a)$. 
One example of when this condition can be 
met is when $\intn_2(f_2) (\alpha, a) = (g(\alpha), a)$ for some function $g$, since 
then the function $\source((x,a)) = \dom_1(a)$ obviously has the  required property. 

\begin{lemma} \label{lem:simpref} 
If messages are source-identifying with respect to $r$ 
and  $r$ satisfies conditions~S\ref{stref-cond-1},~S\ref{stref-cond-2} and ~S\ref{stref-cond-5} then 
$r$ satisfies condition~S\ref{stref-cond-4}. 
\end{lemma}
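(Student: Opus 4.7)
The plan is to use source-identification to force $\dom_1(a) = \dom_1(b)$ whenever the premise ``$\intn_2(f_2)(\alpha,a) = \intn_2(g_2)(\beta,b) \neq \epsilon$'' of cases (b)--(d) in S\ref{stref-cond-4} is witnessed by some $\alpha,\beta$, and then to eliminate three of the four structural options in S\ref{stref-cond-4} by a small case analysis.

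First I would set up the hypotheses of S\ref{stref-cond-4}: fix $u \in D_1$ and $a,b \in A$ with $\flowstoin[f_2]{\dom_2(a)}{r(u)}$ and $\flowstoin[g_2]{\dom_2(b)}{r(u)}$, where $f_2 \neq \top$ and $g_2 \neq \top$. Using S\ref{stref-cond-2} together with the uniqueness clause for edges in extended architectures, I would first show that neither $\dom_1(a)$ nor $\dom_1(b)$ can have a $\top$-labelled edge to $u$ in $\flowsto_1$: such an edge would be mapped by $r$ to a $\top$-edge from $\dom_2(a)$ (resp.\ $\dom_2(b)$) to $r(u)$, contradicting the existence of the filtered edges $f_2$, $g_2$. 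Hence each of $\dom_1(a)$ and $\dom_1(b)$ is either disconnected from $u$ in $\flowsto_1$ or connected by a non-$\top$ edge. This gives exactly the four structural configurations (a)--(d) of S\ref{stref-cond-4}.

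Next I would verify the quantified clause in each relevant case. The key observation is that if $\alpha,\beta$ satisfy $\intn_2(f_2)(\alpha,a) = \intn_2(g_2)(\beta,b) \neq \epsilon$, then applying the source function $\source$ to both sides (using that messages are source-identifying with respect to $r$) gives $\dom_1(a) = \source(\intn_2(f_2)(\alpha,a)) = \source(\intn_2(g_2)(\beta,b)) = \dom_1(b)$. In case (a) there is nothing further to check. In case (b) (one side has an edge, the other does not) the equality $\dom_1(a) = \dom_1(b)$ would force both sides to have the same edge structure to $u$, a contradiction; hence no such $\alpha,\beta$ exist and the ``for all'' clause is vacuous. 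Case (c) is symmetric. In case (d) the equality $\dom_1(a) = \dom_1(b)$ yields $\dom_2(a) = \dom_2(b) = r(\dom_1(a))$, and by edge uniqueness also $f_2 = g_2$ and $f_1 = g_1$; the premise of condition S\ref{stref-cond-5} is then exactly satisfied at $v = u$, delivering $\intn_1(f_1)(\alpha,a) = \intn_1(g_1)(\beta,b)$ as needed.

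The only place that requires care, and is the main potential obstacle, is the interaction between the edge-uniqueness requirement built into extended architectures and the case analysis: one must consistently use uniqueness both to exclude $\top$-edges in $\flowsto_1$ (given that $f_2, g_2 \neq \top$) and, in case (d), to identify $f_2$ with $g_2$ and $f_1$ with $g_1$ so that S\ref{stref-cond-5} applies verbatim. Once these uniqueness invocations are kept straight, the rest is a direct bookkeeping argument.
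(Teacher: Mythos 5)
Your proposal is correct and follows essentially the same route as the paper's proof: use source-identification to force $\dom_1(a)=\dom_1(b)$ whenever the premise of S\ref{stref-cond-4} is witnessed, rule out $\top$-edges in $\flowsto_1$ via S\ref{stref-cond-2} and edge uniqueness, dispose of cases (b) and (c) as vacuous, and settle case (d) by identifying the filter labels and invoking S\ref{stref-cond-5}. Your explicit note that $f_2=g_2$ (needed because S\ref{stref-cond-5} quantifies over a single abstract label) is a small bookkeeping point the paper leaves implicit, but not a different argument.
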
 

\begin{proof} 
Suppose $\flowstoinf[f_2]{\dom_2(a)}{r(v)}{\flowsto_2}$,   
 $\flowstoinf[g_2]{\dom_2(b)}{r(v)}{\flowsto_2}$ with $f_2\neq \top$ and $g_2\neq \top$. Let $u\in D_1$. 
By condition~S\ref{stref-cond-2}, we cannot have $\flowstoin[\top]{\dom_1(a)}{u}$ or 
$\flowstoin[\top]{\dom_1(b)}{u}$. This leaves four possibilities, depending on whether there is 
an non-$\top$-labelled  edge between $\dom_1(a)$ or $\dom_1(b)$ and $u$ or not. 
In case there are no such edges, we have condition~S\ref{stref-cond-4}(a). 
We consider the three other possibilities. For each case, observe that if 
there exists $\alpha, \beta\in A^*$ with $\intn_2(f_2)(\alpha, a) = \intn_2(g_2)(\beta, b) \neq \epsilon$, 
then  $\dom_1(a) = \source(\intn(f_2)(\alpha, a)) = \source(\intn(g_2)(\beta, b)) = \dom_1(b)$, 
by the assumption that messages are source-identifying with respect to $r$. 
Conversely, if $\dom_1(a) \neq \dom_1(b)$, then $\intn_2(f_2)(\alpha, a) = \intn_2(g_2)(\beta, b) \neq \epsilon$
cannot be satisfied. 
\begin{itemize}
\item Suppose there exists an edge $\flowstoin[f_1]{\dom_1(a)}{u}$ with $f_1\neq \top$ and  
$\notflowstoin{\dom_1(b)}{u}$. In this case, we must have $\dom_1(a) \neq \dom_1(b)$, 
so by the observation above, $\intn_2(f_2)(\alpha, a) = \intn_2(g_2)(\beta, b) \neq \epsilon$
cannot be satisfied. Thus, condition S\ref{stref-cond-4}(b) is vacuously satisfied. 
\item Suppose  $\notflowstoin{\dom_1(b)}{u}$ but
there exists an edge $\flowstoin[g_1]{\dom_1(a)}{u}$ with $g_1\neq \top $. This case is similar
to the previous one, yielding S\ref{stref-cond-4}(c). 
\item Suppose there exist edges $\flowstoin[f_1]{\dom_1(a)}{u}$  and $\flowstoin[g_1]{\dom_1(a)}{u}$ with $f_1,g_1\neq \top $. If $\intn_2(f_2)(\alpha, a) = \intn_2(g_2)(\beta, b) \neq \epsilon$, then 
as observed above, we have $\dom_1(a) = \dom_1(b)$, and hence 
also $f_1 = g_1$ by the fact that there is at
most one edge between any two domains. Thus, 
by  S\ref{stref-cond-5}, we have $\intn(f_1)(\alpha, a) = \intn(f_1)(\beta, b) = 
\intn(g_1)(\beta, b)$. This establishes condition S\ref{stref-cond-4}(d). 
 \end{itemize} 
Thus, in all cases, we have proved condition S\ref{stref-cond-4}. 
\end{proof} 

\subsubsection{Properties of architectural refinement}

Architectural refinement provides a design methodology in which we may prove certain 
security properties at a high level of abstraction, and preserve the validity of those properties 
as details of the architectural design are specified. 
Since an interpreted architecture $\intarch$ specifies a set of actions $A$,
any interpretation function $\pi$ mapping atomic propositions to subsets of $A^*$ can 
be treated as an interpretation for any system $M$ that is \TFF-compliant with $\intarch$. 
For a formula $\phi$, we write $\intarch, \pi \models \phi$ if 
$M, \pi \models \phi$ for all systems $M$ that are \TFF-compliant with $\intarch$. 

Recall that $\intarch_1 \srefines_r \intarch_2$ 
entails that $\intarch_1$ and $ \intarch_2$ have the same set of actions $A$. 
Since  interpretations $\pi$ map each propositional constant to a subset of $A^*$, the
two interpreted architectures also have the same interpretations. 
The following result shows that a property that has been shown to follow from 
compliance with $\intarch_2$ can be translated to a property that follows
from compliance with  $\intarch_1$.

\begin{theorem}\label{thm:knowledge-refine}
Suppose $\intarch_1$ and $\intarch_2$ are 
interpreted extended architectures 
such that $\intarch_1  \srefines_r \intarch_2$. 
Let $\pi$ be an interpretation for these architectures. 
Suppose $\phi$ is a formula for the domains of $\intarch_2$, and $\intarch_2, \pi  \sat  \phi$. 
Then $\intarch_1, \pi  \sat  r^{-1}(\phi)$. 
\end{theorem}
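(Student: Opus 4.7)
The plan is to derive this theorem by stringing together two results already in hand: Theorem~\ref{thm:sref}, which tells us that semantic refinement preserves \TFF-compliance in the direction $\sysvar \mapsto r(\sysvar)$, and Theorem~\ref{thm:Kpullback}, which says that satisfaction of a formula $\phi$ in $r(\sysvar)$ is equivalent to satisfaction of the pulled-back formula $r^{-1}(\phi)$ in $\sysvar$. The hypothesis $\intarch_2, \pi \sat \phi$ is a statement about \emph{all} machines \TFF-compliant with $\intarch_2$, and the two pieces above let us convert a machine for $\intarch_1$ into one for $\intarch_2$ and then pull the resulting truth value back.

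Concretely, I would proceed as follows. Fix a machine $\sysvar$ that is \TFF-compliant with $\intarch_1$ and a sequence of actions $\alpha \in A^*$; it suffices to show $\sysvar, \pi, \alpha \sat r^{-1}(\phi)$. First, apply Theorem~\ref{thm:sref}: since $\intarch_1 \srefines_r \intarch_2$, the abstracted machine $r(\sysvar)$ (as defined in Section~\ref{sec:infosec}) is \TFF-compliant with $\intarch_2$. Second, invoke the hypothesis $\intarch_2, \pi \sat \phi$ at the machine $r(\sysvar)$ and the sequence $\alpha$, which yields $r(\sysvar), \pi, \alpha \sat \phi$. Third, apply Theorem~\ref{thm:Kpullback} to this satisfaction: since $\phi$ is a formula over the domains $D_2$ of $\intarch_2$ (and, per the hypothesis of Theorem~\ref{thm:Kpullback}, does not contain distributed-knowledge operators), we get the equivalence $r(\sysvar), \pi, \alpha \sat \phi$ iff $\sysvar, \pi, \alpha \sat r^{-1}(\phi)$, and hence the desired conclusion.

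A couple of small checks are needed to make this chain rigorous. One is that $\alpha$ is a legitimate sequence of actions in both $\sysvar$ and $r(\sysvar)$; this is immediate because the definition of $r(\sysvar)$ keeps the action set $A$ unchanged and only relabels the domain function by composing with $r$. Another is that the interpretation $\pi$ can be used with both systems; this is again immediate since $\pi$ maps atomic propositions to subsets of $A^*$, which is the same for both. The only genuine hypothesis to keep an eye on is that $\phi$ does not use the distributed-knowledge operator $\knows[D]{G}$, since the counterexample in the paragraph following Theorem~\ref{thm:Kpullback} shows that that case fails; here the statement restricts $\phi$ to the standard knowledge fragment, so there is no issue.

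I do not expect any step to be a serious obstacle: the work has been packaged up in Theorems~\ref{thm:sref} and~\ref{thm:Kpullback}, and the proof amounts to an application of both in sequence. If anything, the subtlest point is simply noticing that one must move along the refinement in the ``forward'' direction ($\sysvar \leadsto r(\sysvar)$ gains \TFF-compliance with the coarser architecture), while the formula is pulled back in the ``reverse'' direction ($\phi$ for $D_2$ becomes $r^{-1}(\phi)$ for $D_1$); the two directions fit together correctly precisely because of how Theorem~\ref{thm:Kpullback} is stated.
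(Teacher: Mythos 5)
Your proposal is correct and follows essentially the same route as the paper's own proof: apply Theorem~\ref{thm:sref} to transfer \TFF-compliance from $\sysvar$ to $r(\sysvar)$, instantiate the hypothesis $\intarch_2, \pi \sat \phi$ at $r(\sysvar)$, and then pull the formula back via Theorem~\ref{thm:Kpullback}. The additional sanity checks you note (shared action set, shared interpretation $\pi$, exclusion of distributed-knowledge operators) are exactly the points the paper relies on implicitly.
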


\begin{proof}
Suppose that   system $\sysvar$ is 
  \TFF-compliant with   $\intarch_1$. 
  Since $\intarch_1  \srefines_r \intarch_2$, it follows by Theorem~\ref{thm:sref} that
  $r(\sysvar)$ is \TFF-compliant with $\intarch_2$. 
  Hence $r(\sysvar), \pi \sat \phi$. 
By   Theorem~\ref{thm:Kpullback} 
we obtain that $\sysvar, \pi \models r^{-1}(\phi)$. 
  \end{proof}

The level of abstraction used to model a system may affect the success
or efficiency of a proof of a given global security property.
Theorem~\ref{thm:knowledge-refine} can facilitate proofs of security
properties that are preserved under refinement: simpler, more
abstract, architectures can be used to reason about the security
property, and preservation under refinement ensures that the security
property will hold of systems satisfying a more refined architecture.
We demonstrate this in Sections \ref{sec:hink-schaefer-refinement} and
\ref{sec:downgrader-refine} below, using more abstract architectures
to prove security results about the Hinke-Schaefer architecture and a
refinement of the downgrader architecture.

\subsection{Specification Refinement} 

We can also define refinement at the level of architectural specifications. 
Suppose that $\archvar_1=(D_1, \flowsto_1)$ and 
$\archvar_2=(D_2, \flowsto_2)$ are architectures, and  $r$ is a function mapping $D_1$ onto $D_2$. 
If $\archspec_1$ and $\archspec_2$ are architectural specifications for 
$\archvar_1$ and $\archvar_2$, respectively, then we write  
$(\archvar_1,\archspec_1) \srefines_r (\archvar_2, \archspec_2)$
when for all interpretations $\archint_1\in \archspec_1$ there exists an 
interpretation $\archint_2\in \archspec_2$ such that 
$(\archvar_1, \archint_1) \srefines_r (\archvar_2, \archint_2)$.

Architectural specifications may allow multiple architectural interpretations, 
and these may allow the set of actions $A$ in the systems being
specified to vary. In order to interpret the logic of knowledge, we need
an interpretation $\pi$ that maps each propositional constant to a subset of $A^*$. 
When $A$ varies, we need $\pi$ to vary correspondingly. To ensure proper coordination 
between interpretations $\pi$ and sets of actions $A$, we work with  sets ${\cal M}$ of 
{\em interpreted systems}, 
i.e., pairs $(M,\pi)$ where $M$ is a system and $\pi$ is an interpretation.  
Such a set ${\cal M}$ may express weak conditions such as 
``$\pi(p)$ is a 
$G$-dependent
proposition". We write 
${\cal M},  (\archvar, \archspec) \models \phi$ when $M, \pi \models \phi$
for all $(M,\pi) \in {\cal M}$ such that $M$ \TFF-complies with $(\archvar, \archspec)$. 
We also write 
$r({\cal M})$ for $\{(r(M), \pi)~|~(M,\pi)\in {\cal M}\}$.

\begin{theorem} 
Let ${\cal M}_1, {\cal M}_2$ be sets of interpreted systems for architectures $\archvar_1$, $\archvar_2$, respectively, 
and let  $\phi$ be a formula for the domains of $\archvar_2$. 
Suppose that $(\archvar_1,\archspec_1) \srefines_r (\archvar_2, \archspec_2)$ and 
$r({\cal M}_1) \subseteq {\cal M}_2$ and ${\cal M}_2,(\archvar_2, \archspec_2) \models \phi$. 
Then  we have ${\cal M}_1,(\archvar_1, \archspec_1) \models r^{-1}(\phi)$. 
\end{theorem}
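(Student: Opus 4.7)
The plan is to imitate the structure of Theorem~\ref{thm:knowledge-refine}, lifting it from interpreted architectures to architectural specifications, using the fact that specification refinement is defined pointwise on interpretations.

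First I would unfold the definition of ${\cal M}_1,(\archvar_1, \archspec_1) \models r^{-1}(\phi)$. So I would pick an arbitrary interpreted system $(M,\pi) \in {\cal M}_1$ such that $M$ is \TFF-compliant with $(\archvar_1, \archspec_1)$, and aim to conclude $M,\pi \models r^{-1}(\phi)$. By definition of \TFF-compliance with a specification, there exists an interpretation $\archint_1 \in \archspec_1$ such that $M$ is \TFF-compliant with the interpreted architecture $(\archvar_1, \archint_1)$.

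Next, I would invoke $(\archvar_1,\archspec_1) \srefines_r (\archvar_2, \archspec_2)$ to obtain, for this particular $\archint_1$, some $\archint_2 \in \archspec_2$ with $(\archvar_1, \archint_1) \srefines_r (\archvar_2, \archint_2)$. Applying Theorem~\ref{thm:sref} yields that $r(M)$ is \TFF-compliant with $(\archvar_2, \archint_2)$, and hence (witnessed by $\archint_2$) with the specification $(\archvar_2, \archspec_2)$. The assumption $r({\cal M}_1) \subseteq {\cal M}_2$ then gives $(r(M), \pi) \in {\cal M}_2$, so from the hypothesis ${\cal M}_2,(\archvar_2, \archspec_2) \models \phi$ we conclude $r(M), \pi \models \phi$. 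Finally, applying Theorem~\ref{thm:Kpullback} pulls this back to $M, \pi \models r^{-1}(\phi)$, completing the argument.

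There is no real obstacle here: the theorem is essentially the composition of (i) the specification-level refinement (which supplies the needed $\archint_2$), (ii) preservation of \TFF-compliance under semantic refinement (Theorem~\ref{thm:sref}), and (iii) the pullback property for epistemic formulas under the abstraction map $r$ (Theorem~\ref{thm:Kpullback}). The only mild care required is bookkeeping: ensuring that the sets of actions for $\intarch_1$ and $\intarch_2$ coincide (guaranteed by the definition of $\srefines_r$), so that the interpretation $\pi$ used on both sides refers to the same set $A^*$ and the propositional content of $\phi$ is unambiguous. Once that is noted, the proof is a straightforward chain of applications of previously established results.
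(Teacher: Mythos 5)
Your proposal is correct and follows essentially the same route as the paper's own proof: pick a compliant $(M,\pi)\in{\cal M}_1$, extract a witnessing $\archint_1\in\archspec_1$, use specification refinement to obtain $\archint_2\in\archspec_2$ with $(\archvar_1,\archint_1)\srefines_r(\archvar_2,\archint_2)$, apply Theorem~\ref{thm:sref} together with $r({\cal M}_1)\subseteq{\cal M}_2$ to get $r(M),\pi\models\phi$, and pull back via Theorem~\ref{thm:Kpullback}. Nothing further is needed.
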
 

\begin{proof} 
Let $(M,\pi) \in {\cal M}_1$ and suppose $M$ is  \TFF-compliant with $(\archvar_1, \archspec_1)$. 
We need to show that $M, \pi \models r^{-1}(\phi)$. 
By definition, there exists 
an interpretation $\archint_1$ such that $M$ is \TFF-compliant with $(\archvar_1, \archint_1)$. 
Since $(\archvar_1,\archspec_1) \srefines_r (\archvar_2, \archspec_2)$, it follows that there exists an 
interpretation $\archint_2$ such that $(\archvar_1,\archint_1) \srefines_r (\archvar_2, \archint_2)$. 
By Theorem~\ref{thm:sref}, it follows that $r(M)$ is \TFF-compliant with 
$(\archvar_2, \archint_2)$, and hence also with $(\archvar_2, \archspec_2)$. 
Moreover, since $r({\cal M}_1)\subseteq {\cal M}_2$, we have $(r(M) ,\pi ) \in {\cal M}_2$. 
Thus $r(M), \pi \models \phi$. 
By Theorem~\ref{thm:Kpullback} we obtain that $M, \pi \models r^{-1}(\phi)$, as required. 
\end{proof} 

We may similarly define $(\archvar_1,\archspec_1) \refines_r (\archvar_2, \archspec_2)$
to hold  when for all interpretations $\archint_1\in \archspec_1$ there exists an 
interpretation $\archint_2\in \archspec_2$ such that 
$(\archvar_1, \archint_1) \refines_r (\archvar_2, \archint_2)$. 
Additionally, we may define $(\archvar_1,\archspec_1)$ to be a strict refinement of $(\archvar_2, \archspec_2)$ by function $r$ 
when for all interpretations $\archint_1\in \archspec_1$ there exists an 
interpretation $\archint_2\in \archspec_2$ such that 
$(\archvar_1, \archint_1)$ is a strict refinement of $(\archvar_2, \archint_2)$.

\begin{theorem} 
If $(\archvar_1,\archspec_1)$ is a strict refinement of $(\archvar_2, \archspec_2)$ by function $r$ 
then $(\archvar_1,\archspec_1) \trefines_r (\archvar_2, \archspec_2)$, and if $(\archvar_1,\archspec_1) \trefines_r (\archvar_2, \archspec_2)$
then $(\archvar_1,\archspec_1) \srefines_r (\archvar_2, \archspec_2)$.
\end{theorem}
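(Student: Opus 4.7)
The plan is to prove both implications by unfolding the specification-level refinement definitions and then invoking the corresponding interpreted-architecture-level theorems pointwise. Since $(\archvar_1, \archspec_1) \trefines_r (\archvar_2, \archspec_2)$ (and likewise the strict and $\srefines_r$ versions) are defined to mean that every $\archint_1 \in \archspec_1$ admits some $\archint_2 \in \archspec_2$ with the corresponding relation at the interpreted level, the specification-level results should reduce directly to the interpreted-level results already established.

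For the first implication, I would begin by assuming that $(\archvar_1, \archspec_1)$ is a strict refinement of $(\archvar_2, \archspec_2)$ via $r$, and fix an arbitrary $\archint_1 \in \archspec_1$. By hypothesis, there exists $\archint_2 \in \archspec_2$ such that $(\archvar_1, \archint_1)$ is a strict refinement of $(\archvar_2, \archint_2)$ by $r$. Invoking Theorem~\ref{thm:sref-imp-tref} on this pair yields $(\archvar_1, \archint_1) \trefines_r (\archvar_2, \archint_2)$. Since $\archint_1$ was arbitrary, this establishes $(\archvar_1, \archspec_1) \trefines_r (\archvar_2, \archspec_2)$.

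For the second implication, I would proceed analogously: assume $(\archvar_1, \archspec_1) \trefines_r (\archvar_2, \archspec_2)$, fix $\archint_1 \in \archspec_1$, extract the witnessing $\archint_2 \in \archspec_2$ with $(\archvar_1, \archint_1) \trefines_r (\archvar_2, \archint_2)$, and apply Theorem~\ref{thm:tref} to obtain $(\archvar_1, \archint_1) \srefines_r (\archvar_2, \archint_2)$. Universal quantification over $\archint_1$ then gives $(\archvar_1, \archspec_1) \srefines_r (\archvar_2, \archspec_2)$.

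The only subtlety, and the one point that deserves explicit comment, is that Theorems~\ref{thm:tref} and~\ref{thm:sref-imp-tref} carry strong non-conflation hypotheses on the interpreted architectures involved. Since the paper has adopted the standing convention (stated earlier) that all interpreted architectures are assumed to be non-conflating, and since the stronger condition can be imposed without essential loss via the same padding trick used earlier to pass from conflating to non-conflating interpretations, these hypotheses are automatically in force for any $\archint_1 \in \archspec_1$ and corresponding $\archint_2 \in \archspec_2$. I would note this assumption explicitly at the start of the proof to justify the applicability of the two theorems; after that observation, the proof is essentially a one-line unfolding of definitions for each direction.
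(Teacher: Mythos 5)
Your proof is correct and takes essentially the same route as the paper, whose entire argument is ``Straightforward using Theorem~\ref{thm:tref} and Theorem~\ref{thm:sref-imp-tref}'': unfold the pointwise definition of specification-level refinement and apply those two theorems to each witnessing pair of interpretations. Your explicit remark about the strong non-conflation hypotheses is a point the paper silently glosses over, and flagging it (with the padding construction as justification) only makes the argument more careful.
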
 

\begin{proof} 
Straightforward using Theorem~\ref{thm:tref} and Theorem~\ref{thm:sref-imp-tref}. 
\end{proof}

\subsection{Example: Hinke-Schaefer}\label{sec:hink-schaefer-refinement}

Since the Hinke-Schaefer architecture $\hsarch$ refines architecture
$\hlarch$, we can apply the information security result for $\hlarch$,
Theorem~\ref{thm:hl-h-action-local}, to $\hsarch$: since domain $L$
never 
knows 
any $H$-dependent 
proposition, the domains $\Luser$,
$\Ldbms$, and $L_{F}$ never 
know 
any $u$-dependent 
proposition,
for $u \in \{\Huser, \Hdbms,H_{F}\}$. This information security
property was stated as Theorem~\ref{thm:hinke-schaefer}, in
Section~\ref{sec:hinke-schaefer}. We give a simple proof for it here.

\begin{proofof}{Theorem~\ref{thm:hinke-schaefer}}
  Follows easily from Theorem~\ref{thm:hl-h-action-local},
  Lemma~\ref{lem:prop-refine}, and Theorem~\ref{thm:knowledge-refine},
  since $\hsarch \refines_r \hlarch$ for refinement function $r$ such
  that $r(\Luser) =r(\Ldbms) =r(L_{F}) = L$ and $r(\Huser)
  =r(\Hdbms) =r(H_{F}) = H$.
\end{proofof}

Thus, we were able to prove an information security property about
$\hsarch$ by proving an appropriate policy in the much simpler
architecture $\hlarch$.

\subsection{Example: Downgrader}\label{sec:downgrader-refine}

\newcommand{\downarchr}{\ensuremath{\mathcal{DGR}}}

In the architecture $\downarch$ of Section~\ref{sec:downgrader}, 
there is a filter function on the edge from downgrader $D$ to Low security domain $L$, 
specifying that $D$ should not release to $L$ any information about $C$.  
This makes $D$ a trusted component in the system: in any implementation, we would need 
to verify that $D$ correctly enforces this information flow constraint. 
However, information about $C$ may become co-mingled with information about $P$
in the data store $H$, so it is not immediately clear how $D$ could, on its own, 
guarantee enforcement of the constraint. Thus, it seems that the architecture 
implies constraints on other components. One approach that could be pursued to 
implement this architecture is to ensure that the data store $H$ maintains secure provenance
information, which $D$ can use to check that information being released is not tainted with 
information from $C$. In this section, we pursue another approach, which is to move the 
trust boundary in such a way that $D$ is prevented from obtaining information about $C$. 
We develop an architecture $\downarchr$ that has this property, and show it to be a 
refinement of $\downarch$. It follows that any system compliant with $\downarchr$ 
is also complaint with $\downarch$. Since it is much clearer how architecture 
$\downarchr$ could be implemented with just  local verification of its trusted components, 
this moves us closer to a practical implementation of $\downarch$. 

\newcommand{\fT}{f}

The following diagram shows architecture $\downarchr$, 
with grouping of its domains (indicated by dashed rectangles) 
indicating a refinement mapping $r$ to the architecture $\downarch$. 
In $\downarchr$, domain $H$ of $\downarch$ is decomposed into 
two domains $H_C$ and $H_P$ corresponding respectively to 
data stores of information about $C$ and $P$. Domain $P$ is 
decomposed into two domains $T$ and $U$, corresponding to 
trusted and untrusted users within this domain. Thus, the 
mapping $r$ from the domains of $\downarchr$ to the domains
of $\downarch$ is given by 
$r(H_P) = r(H_C) = H$, $f(T) = r(U) = P$ and $r(u) = u$ for $u\in \{C,D,L\}$.

\centerline{\includegraphics[height=6cm]{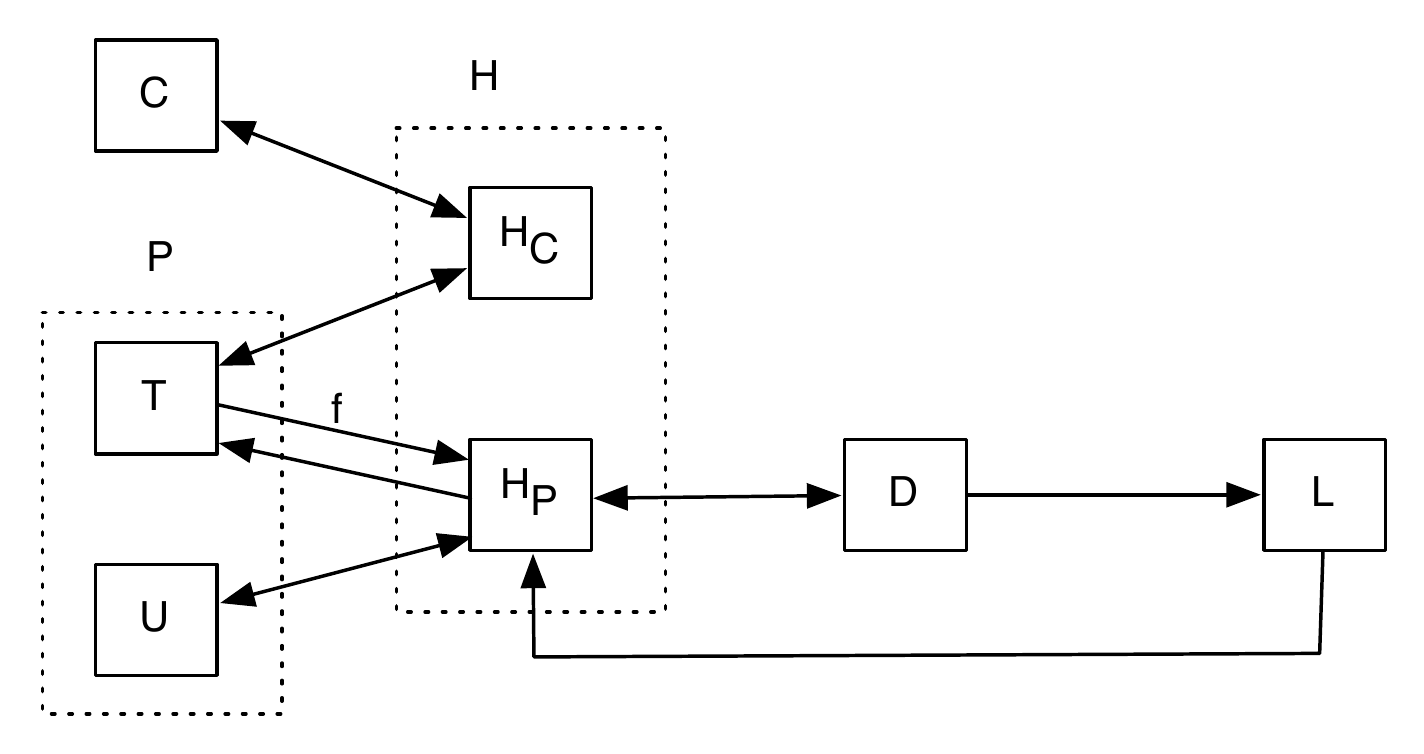}}

In $\downarchr$,  $T$ is a trusted component, since this domain is required to 
enforce an information flow constraint, represented by the 
edge $\flowstoin[\fT]{T}{H_P}$. On the other hand, note that 
whereas in $\downarch$, domain $D$ is trusted, 
because of the edge $\flowstoin[\downrelf]{D}{L}$, in 
$\downarchr$ the edge from $D$ to $L$ is labelled $\top$, 
so $D$ is no longer a trusted component in $\downarchr$. 

We define architectural specification $\archspec_{\downarchr}$ so that $(A, \dom,
\intn) \in \archspec_{\downarchr}$ if and only if $(A, \dom, \intn)$ is
an interpretation of $\downarchr$ such that $\intn(\fT)$ is
defined by $\intn(\fT)(\alpha,a ) = a$ for all $\alpha\in A^*$ and $a\in A$ with $\dom(a) = T$, 
and $\intn(\fT)(\alpha,a ) = \epsilon$ otherwise. 
Intuitively, this states that $\fT$ permits information about $T$
actions to flow from $T$ to $H_P$, but nothing more. It is trivial to check that this 
interpretation is $\tff{T}$-compatible.

\begin{theorem} \label{thm:downr-refines-down}
$(\downarchr,\archspec_{\downarchr}) \srefines_r (\downarch,\archspec_{\downarch})$ 
\end{theorem}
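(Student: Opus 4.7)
The plan is, given $\archint_1 = (A, \dom_1, \intn_1) \in \archspec_{\downarchr}$, to construct $\archint_2 \in \archspec_{\downarch}$ and prove $(\downarchr, \archint_1) \srefines_r (\downarch, \archint_2)$. I would take $\archint_2 = (A, r \compose \dom_1, \intn_2)$ with $\intn_2(\downrelf)(\alpha, a) = (\tff{D}^{\flowsto'}(\alpha), a)$ whenever $\dom_2(a) = D$; this is the unique choice placing $\archint_2$ in $\archspec_{\downarch}$, and its $\tff{D}^{\flowsto}$-compatibility is supplied by the Proposition preceding Theorem~\ref{thm:down-hc-action-local}. No circularity arises, since $\flowsto'$ is filter-free, so $\tff{D}^{\flowsto'}$ does not depend on $\intn_2$.

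Strict refinement cannot be used directly, because condition S\ref{stref-cond-2} fails: the edge $D \to L$ is $\top$-labeled in $\downarchr$ but $\downrelf$-labeled in $\downarch$. I would instead establish $\intarch_1 \trefines_r \intarch_2$ and invoke Theorem~\ref{thm:tref}. Strong non-conflation of both interpreted architectures is immediate, since the unique non-$\top$ filter outputs $\intn_1(\fT)(\alpha,a)=a$ and $\intn_2(\downrelf)(\alpha,a)=(\tff{D}^{\flowsto'}(\alpha),a)$ cannot coincide with any $\top$-produced tuple (the action components belong to incompatible domains). T\ref{tref-cond-1} holds by construction. T\ref{tref-cond-2} reduces, since $\intn_2(\downrelf)$ never returns $\epsilon$, to the observation that every edge of $\downarchr$ maps under $r$ to an edge of $\downarch$, confirmed by a direct check.

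For T\ref{tref-cond-3}: when both $a$ and $b$ flow along $\top$-edges of $\downarch$, the equality of $\inff{}$-values forces $a=b$ and clause (i) holds; the mixed case is excluded by strong non-conflation. The only remaining case is when both flow along $\flowstoin[\downrelf]{D}{L}$ in $\downarch$. Unfolding $\intn_2(\downrelf)$ and noting that $\flowstoin[\top]{D}{L}$ in $\downarchr$, clause (ii) reduces to the key claim
\begin{align*}
\tff{D}^{\flowsto'}(\alpha) = \tff{D}^{\flowsto'}(\beta) \implies \tff{D}^{\intarch_1}(\alpha) = \tff{D}^{\intarch_1}(\beta).
\end{align*}

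The hard part is the key claim. The plan is to construct, simultaneously for each $u\in D_{\downarchr}$, a function $G_u$ and verify by joint induction on $|\alpha|$ that $G_u(\tff{r(u)}^{(\flowsto',\archint_2)}(\alpha)) = \tff{u}^{\intarch_1}(\alpha)$. Each $G_u$ discards the abstract information unavailable at $u$ in $\downarchr$. Two local checks drive the induction: (i) since $H_C$ has no outgoing edge to $D$ in $\downarchr$, $G_{H_P}$ drops $H_C$-action records when descending from $\tff{H}^{\flowsto'}$; and (ii) since $\intn_1(\fT)(\alpha,a)=a$, the contribution of a $T$-action across $T\to H_P$ is merely $a$, so $G_{H_P}$ strips $T$-action entries in $\tff{P}^{\flowsto'}$ down to the bare action while preserving full tuples for $U$-actions. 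Instantiating at $u=D$ yields the key claim, discharging T\ref{tref-cond-3} and, via Theorem~\ref{thm:tref}, the theorem.
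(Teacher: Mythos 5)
Your route is essentially the paper's: the same (unique) choice of $\archint_2$, the same observation that strict refinement is blocked by S\ref{stref-cond-2} on the two $D\to L$ edges, verification of T\ref{tref-cond-1}--T\ref{tref-cond-3} followed by Theorem~\ref{thm:tref}, reduction of T\ref{tref-cond-3} to the key claim $\tff{D}^{\flowsto'}(\alpha)=\tff{D}^{\flowsto'}(\beta)\Rightarrow \tff{D}^{\intarch_1}(\alpha)=\tff{D}^{\intarch_1}(\beta)$, and a proof of that claim by inductively defined translation functions from $\tff{r(u)}^{\flowsto'}$ to $\tff{u}^{\intarch_1}$. However, the key step as you state it would fail: functions $G_u$ with $G_u(\tff{r(u)}^{(\flowsto',\archint_2)}(\alpha)) = \tff{u}^{\intarch_1}(\alpha)$ do \emph{not} exist for every $u\in D_{\downarchr}$, so the ``simultaneous induction over all concrete domains'' cannot be carried out. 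The problem is that $\flowsto'$ is not the abstract policy of $\downarch$ but the policy with $C$ cut off, so for the domains on the $C$-side the concrete function carries \emph{more} information than the abstract one, not less. Concretely, $\downarchr$ has edges $\flowstoin{C}{H_C}$, $\flowstoin{H_C}{C}$ and $\flowstoin{H_C}{T}$ (these are exactly the edges that AOI$'$ forces on the access-control implementation of $\downarchr$ in Section~\ref{sec:access-control}); hence for a $C$-action $c$ and an $H_C$-action $h$ we get $\tff{C}^{\intarch_1}(h)\neq\tff{C}^{\intarch_1}(\epsilon)$ while $\tff{C}^{\flowsto'}(h)=\tff{C}^{\flowsto'}(\epsilon)$, and similarly $\tff{T}^{\intarch_1}(c\,h)\neq\tff{T}^{\intarch_1}(h)$ while $\tff{P}^{\flowsto'}(c\,h)=\tff{P}^{\flowsto'}(h)$. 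So no $G_C$, $G_T$ (or $G_{H_C}$) of the required kind exists, and an induction whose hypothesis quantifies over all of $D_{\downarchr}$ asserts something false.

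The repair is what the paper does: run the joint induction only over $\{U, H_P, D, L\}$. This set contains $D$, and it is closed under $\top$-labelled predecessors in $\downarchr$; the only other incoming edge, $\flowstoin[\fT]{T}{H_P}$, is filtered with $\intn_1(\fT)(\alpha,a)=a$, so its contribution is the bare action and no translation function for $T$ is ever invoked in the recursion. With that restriction your two ``local checks'' are indeed the decisive clauses of the definition---though note that what licenses $G_{H_P}$ to drop $H_C$-action records is the absence of an edge $\notflowstoin{H_C}{H_P}$ (the absence of $\notflowstoin{H_C}{D}$ is what matters for $G_D$, not for $G_{H_P}$). Once the index set is corrected, instantiating at $u=D$ gives the key claim and the remainder of your argument coincides with the paper's proof; your explicit check of strong non-conflation (which Theorem~\ref{thm:tref} needs, and which the paper handles by arguing the mixed case directly) is a welcome addition rather than a divergence.
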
 

\begin{proof} 
We need to work with three distinct architectures in this proof. 
To distinguish them, we use $\flowsto_1$ to refer to the information flow policy of $\downarchr$, 
 $\flowsto_2$ to refer to the information flow policy of $\downarch$, 
 and $\flowsto'$ to refer to the policy used in the definition of 
 $\intn(\downrelf)$ (see Section~\ref{sec:downgrader}).  
Let $\intarch_1 = (\downarchr, \archint_1)$ where 
$\archint_1 = (A,\dom_1,\intn_1) \in \archspec_{\downarchr}$. 
We need to show that there exists $\archint_2  \in \archspec_{\downarchr}$
with $\intarch_1 \srefines \intarch_2$. 
In fact, there exists a unique architectural interpretation $\archint_2  = (A,\dom_2, \intn_2)\in \archspec_{\downarchr}$ with the same set $A$ of actions and $\dom_2 = r\circ \dom_1$, 
so let $\archint_2$ be this interpretation. Recall from  Section~\ref{sec:downgrader} that 
$\intn_2$ is defined by reference to the information flow policy $\flowsto'$ once 
$A$ and $\dom_2$ are fixed, by $\intn_2(\downrelf)(\alpha,a) = (\tff{D}{\flowsto'}(\alpha),a)$
when $\alpha\in A^*$ and $\dom(a) = D$.  

We show that refinement mapping $r$ satisfies conditions T\ref{tref-cond-1}-T\ref{tref-cond-3}. 
(Note that we cannot apply strict refinement, because $r(D) = D$, $r(L) = L$ and  the edges
$\flowstoinf[\top]{D}{L}{\flowsto_1}$ and $\flowstoinf[\downrelf]{D}{L}{\flowsto_2}$ violate 
S\ref{stref-cond-2}.)
T\ref{tref-cond-1} is immediate from the definitions above. 
For T\ref{tref-cond-2}, note the only ways that we could have 
$\inff{\dom_2(a),r(v)}^{\intarch_2}(\alpha,a) = \epsilon$ for $\alpha \in A^*$ and $a\in A$ is 
when $\flowstoinf{\dom_2(a)}{r(v)}{\not\flowsto_2}$ or 
$\dom_2(a) = D$ and $r(v)=L$ and $\intn(\downrelf)(\alpha,a) = \epsilon$. 
In case $\flowstoinf{\dom_2(a)}{r(v)}{\not \flowsto_2}$, we also have 
$\flowstoinf{\dom_1(a)}{v}{\not\flowsto_1}$, so also 
$\inff{\dom_1(a),v}^{\intarch_1}(\alpha,a) = \epsilon$. On the other hand, 
the situation $\dom_2(a) = D$ and $r(v)=L$ and $\intn(\downrelf)(\alpha,a) = \epsilon$
is impossible, since $\intn_2(\downrelf)(\alpha,a) = (\tff{D}(\alpha),a)\neq \epsilon$. 
Thus T\ref{tref-cond-2} holds. 

For T\ref{tref-cond-3}, we need to show that 
for all domains $u$ of $\downarchr$  and $v$ of $\downarch$,   
$\alpha \in A^*$ and $a,b \in A$,   if 
 $\inff{\dom_2(a),r(v)} ^{\intarch_2}(\alpha, a)= \inff{\dom_2(b),r(v)} ^{\intarch_2}(\beta, b) \neq \epsilon$
 then 
 either (i) $a=b$  and $\flowstoinf[\top]{\dom_2(a)}{r(v)}{\flowsto_2}$ 
 or (ii) 
 $\inff{\dom_1(a),v}^{\intarch_1}(\alpha, a) =\inff{\dom_1(b),v}^{\intarch_1}(\beta, b)$. 

Suppose first that $r(v) \in \{C,P,H,D\}$. Note that the only incoming edges for these 
domains in $\downarch$ are labelled $\top$. Hence, if  
$\inff{\dom_2(a),r(v)} ^{\intarch_2}(\alpha, a)= \inff{\dom_2(b),r(v)} ^{\intarch_2}(\beta, b) \neq \epsilon$
it follows that 
$\flowstoinf[\top]{\dom_2(a)}{r(v)}{\flowsto_2}$ and   $\flowstoinf[\top]{\dom_2(b)}{r(v)}{\flowsto_2}$ 
and $(\tff{\dom_2(a)}(\alpha), a) = (\tff{\dom_2(b)}(\beta), b)  $.  Thus $a=b$, and we have (i). 

 Alternately,  suppose that $r(v) = L$. If $\dom_2(a) = \dom_2(b) = L$ then 
 the previous argument also applies. If $\dom_2(a) = L$ and $\dom_2(b) = D$
 (or {\em vice versa}),  then  
$\inff{\dom_2(a),r(v)} ^{\intarch_2}(\alpha, a)= \inff{\dom_2(b),r(v)} ^{\intarch_2}(\beta, b)$
states $(\tff{\dom_2(a)}^{\flowsto_2}(\alpha), a) = (\tff{\dom_2(b)}^{\flowsto'}(\beta), b)  $, 
which is impossible since we cannot have $a=b$ when these actions are in different domains. 

Thus, the only case remaining to be considered is when $r(v) = L$ and $\dom_2(a) = \dom_2(b) = D$. 
Here we have 
$\inff{\dom_2(a),r(v)} ^{\intarch_2}(\alpha, a)= (\tff{D}^{\flowsto'}(\alpha), a)$ and 
$\inff{\dom_2(b),r(v)} ^{\intarch_2}(\beta, b)= (\tff{D}^{\flowsto'}(\beta), b)$, 
so equality of these terms implies that $a=b$ and 
$\tff{D}^{\flowsto'}(\alpha) = \tff{D}^{\flowsto'}(\beta)$. It is not the case that 
$\flowstoinf[\top]{D}{L}{\flowsto_2}$, so we need to establish 
that  
 $\inff{\dom_2(a),r(v)} ^{\intarch_2}(\alpha, a) = \inff{\dom_2(b),r(v)} ^{\intarch_2}(\beta, b)$, 
 which, in light of the edge $\flowstoinf[\top]{D}{L}{\flowsto_1}$, 
 amounts to 
$(\tff{D}^{\flowsto_1}(\alpha),a) = (\tff{D}^{\flowsto_1}(\beta),b)$. 
As we already have $a=b$, it suffices to show $\tff{D}^{\flowsto_1}(\alpha) = \tff{D}^{\flowsto_1}(\beta)$. 

To obtain the required result, we claim that for all $\alpha, \beta\in A^*$, we have that 
$\tff{D}^{\flowsto'}(\alpha) = \tff{D}^{\flowsto'}(\beta)$ implies 
$\tff{D}^{\flowsto_1}(\alpha) = \tff{D}^{\flowsto_1}(\beta)$. 
For this, we prove that there exist functions $F_u$ for $u\in \{T,U, H_U,D,L\}$
such that  $F_u(\tff{r(u)}^{\flowsto'}(\alpha)) = \tff{u}^{\flowsto_1}(\alpha)$. 
The claim is then immediate from the case $u=D$. We define the 
$F_u$ inductively by $F_u(\epsilon) = \epsilon$ and 
$$ F_u(\sigma\concat (\delta,a)) 
=\begin{cases} 
F_u(\sigma) \concat (F_{\dom_1(a)}(\delta), a) & \text{if $\flowstoinf[\top]{\dom_1(a)}{u}{\flowsto_1}$} \\
F_u(\sigma) \concat a & \text{if $\dom_1(a) = T$ and $u=H_U$}\\
F_u(\sigma) & \text{if $\flowstoinf{\dom_1(a)}{u}{\not\flowsto_1}$~.} 
\end{cases} 
$$
Note that in case $\flowstoinf[\top]{\dom_1(a)}{u}{\flowsto_1}$ and $u\in \{U,H_U,D,L\}$, 
we must have $\dom_1(a)\in \{U,H_U,D,L\}$, so the recursion in the first case is well defined. 

We prove by induction on $\alpha\in A^*$ 
that   $F_u(\tff{r(u)}^{\flowsto'}(\alpha)) = \tff{u}^{\flowsto_1}(\alpha)$
for $u\in \{U, H_U,D,L\}$. The base case of $\alpha = \epsilon$ is trivial. 
Consider $\alpha a$, where the statement holds for $\alpha$. 
There are several possibilities: 

\noindent 
Case 1: $\flowstoinf[\top]{\dom_1(a)}{u}{\flowsto_1}$. 
In this case, $\flowstoinf[\top]{\dom_2(a)}{r(u)}{\flowsto'}$. 
Thus, 
\begin{align*} 
F_u(\tff{r(u)}^{\flowsto'}(\alpha a))  
& =  F_u(\tff{r(u)}^{\flowsto'}(\alpha)\concat  (\tff{\dom_2(a)}^{\flowsto'}(\alpha),a))  \\ 
& =  F_u(\tff{r(u)}^{\flowsto'}(\alpha))\concat  (F_u(\tff{\dom_2(a)}^{\flowsto'}(\alpha)),a)  \\ 
& =  \tff{u}^{\flowsto_1}(\alpha))\concat  (\tff{\dom_1(a)}^{\flowsto_1}(\alpha)),a)  & \text{by induction}\\ 
& = \tff{u}^{\flowsto_1}(\alpha a)\mathpunct . 
\end{align*} 

\noindent 
Case 2: $\dom_1(a) = T$ and $u= H_U$. 
In this case, $\flowstoinf[\fT]{\dom_1(a)}{r(u)}{\flowsto_1}$ and 
$\dom_2(a) = P$, $r(u) = H$, and $\flowstoinf[\top]{P}{H}{\flowsto'}$. 
Thus, 
\begin{align*} 
F_u(\tff{r(u)}^{\flowsto'}(\alpha a))  
& =  F_u(\tff{r(u)}^{\flowsto'}(\alpha)\concat  (\tff{\dom_2(a)}^{\flowsto'}(\alpha),a))  \\ 
& =  F_u(\tff{r(u)}^{\flowsto'}(\alpha))\concat a \\ 
& =  \tff{u}^{\flowsto_1}(\alpha))\concat a & \text{by induction}\\ 
& =  \tff{u}^{\flowsto_1}(\alpha))\concat \intn_1(f)(\alpha,a) \\ 
& = \tff{u}^{\flowsto_1}(\alpha a)\mathpunct . 
\end{align*} 

\noindent 
Case 3: $\flowstoinf{\dom_1(a)}{u}{\not \flowsto_1}$ and  $\flowstoinf{\dom_2(a)}{r(u)}{\not \flowsto'}$. 
In this case, 
\begin{align*} 
F_u(\tff{r(u)}^{\flowsto'}(\alpha a))  
& =  F_u(\tff{r(u)}^{\flowsto'}(\alpha))  \\ 
& =  \tff{u}^{\flowsto_1}(\alpha) & \text{by induction}\\ 
& = \tff{u}^{\flowsto_1}(\alpha a)\mathpunct . 
\end{align*} 

\noindent 
Case 4: $\flowstoinf{\dom_1(a)}{u}{\not \flowsto_1}$ and  $\flowstoinf[\top]{\dom_2(a)}{r(u)}{\flowsto'}$. 
Then 
\begin{align*} 
F_u(\tff{r(u)}^{\flowsto'}(\alpha a))  
& =  F_u(\tff{r(u)}^{\flowsto'}(\alpha)\concat  (\tff{\dom_2(a)}^{\flowsto'}(\alpha),a))  \\ 
& =  F_u(\tff{r(u)}^{\flowsto'}(\alpha)) \\ 
& =  \tff{u}^{\flowsto_1}(\alpha)\concat a & \text{by induction}\\ 
& = \tff{u}^{\flowsto_1}(\alpha a)\mathpunct . 
\end{align*}

\end{proof}

Any machine $\sysvar$ that is compliant with architectural
specification $({\downarch}, \archspec_\downarch)$  does
not reveal any 
$C$-dependent proposition to
$L$. Since 
 $\archspec_\downarchr$ refines $\archspec_\downarch$,
the same property holds
for any machine $\sysvar$ that is compliant with $({\downarchr},
\archspec_\downarchr)$.
We thus prove an information security property about architecture $\downarchr$ by reference to
the more abstract architecture $\downarch$.

\begin{theorem}
  If $\sysvar$ is \TFF-compliant with architectural specification $({\downarchr}, \archspec_\downarchr)$ and $\pi(p)$
  depends on $C$ actions at $\alpha$ then $\interpsysv , \alpha \sat \neg \knows {L} p$.
\end{theorem}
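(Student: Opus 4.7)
The plan is to derive this as an immediate corollary of the corresponding result for $(\downarch,\archspec_\downarch)$, namely Theorem~\ref{thm:down-hc-action-local}, by transporting it across the refinement mapping $r$ established in Theorem~\ref{thm:downr-refines-down}. The key observation is that under the refinement $r$ used there, the domain $C$ is fixed: $r(C) = C$ and no other domain of $\downarchr$ is mapped to $C$, so $r^{-1}(\{C\}) = \{C\}$.

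First I would use $(\downarchr,\archspec_\downarchr) \srefines_r (\downarch,\archspec_\downarch)$ together with the specification-level version of Theorem~\ref{thm:sref} to conclude that $r(\sysvar)$ is \TFF-compliant with $(\downarch,\archspec_\downarch)$. Next, I would apply Lemma~\ref{lem:prop-refine}: since $\pi(p)$ depends on $C$ actions at $\alpha$ in $\sysvar$, and $r^{-1}(\{C\}) = \{C\}$, the proposition $\pi(p)$ also depends on $\{C\}$ actions at $\alpha$ when viewed in $r(\sysvar)$.

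Having established these two facts, Theorem~\ref{thm:down-hc-action-local} applies directly to $r(\sysvar)$, giving $r(\sysvar), \pi, \alpha \sat \neg \knows{L} p$. Finally, I would use Theorem~\ref{thm:Kpullback} to pull this back to $\sysvar$: since $r^{-1}(L) = \{L\}$, we have $r^{-1}(\neg \knows{L} p) = \neg \knows{L} p$, and the theorem therefore gives $\sysvar, \pi, \alpha \sat \neg \knows{L} p$, which is exactly the desired conclusion.

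There is no real obstacle here — all the machinery (refinement preservation, the abstraction theorem for knowledge, the lemma on dependent propositions under refinement) has already been set up, and the choice of $r$ makes the inverse images of the domains $C$ and $L$ trivial. The proof is essentially a three-line composition: refine system, refine proposition, apply the abstract theorem, then pull back via Theorem~\ref{thm:Kpullback}. Indeed this mirrors exactly the structure of the Hinke-Schaefer proof given in the excerpt just above (the proof of Theorem~\ref{thm:hinke-schaefer}), which combines Theorem~\ref{thm:hl-h-action-local}, Lemma~\ref{lem:prop-refine}, and Theorem~\ref{thm:knowledge-refine} in the same way.
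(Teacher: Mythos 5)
Your proposal is correct and takes essentially the same route as the paper: the paper's proof is exactly this composition, citing Theorem~\ref{thm:down-hc-action-local} and Theorem~\ref{thm:knowledge-refine} (which itself packages Theorem~\ref{thm:sref} and Theorem~\ref{thm:Kpullback}) together with the refinement $\archspec_\downarchr \refines_r \archspec_\downarch$ from Theorem~\ref{thm:downr-refines-down}. Your explicit invocation of Lemma~\ref{lem:prop-refine} and the observation that $r^{-1}(\{C\})=\{C\}$ and $r^{-1}(\{L\})=\{L\}$ just spells out details the paper leaves implicit, mirroring its Hinke--Schaefer proof.
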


\iftr 
\begin{simpleproof}
  Follows easily from Theorem~\ref{thm:down-hc-action-local} and
Theorem~\ref{thm:knowledge-refine}, since 
$\archspec_\downarchr \refines_r \archspec_\downarch$.
\end{simpleproof}
\fi

\newcommand{\ComTC}{\mathtt{ComTC }}
The proof of Theorem~\ref{thm:downr-refines-down} uses the conditions T\ref{tref-cond-1}-T\ref{tref-cond-3}, 
and requires a somewhat laborious induction on the length of $\alpha$ to prove condition T\ref{tref-cond-3}. 
The simpler and more local conditions for strict refinement cannot be used in this case, 
because $r(D) = D$, $r(L) = L$ and  the edges
$\flowstoinf[\top]{D}{L}{\flowsto}$ in $\downarchr$ and
$\flowstoinf[\downrelf]{D}{L}{\flowsto}$ in $\downarchr$ violate condition
S\ref{stref-cond-2}. To illustrate the application of strict refinement, we consider an 
architectural specification that varies $\archspec_{\downarchr}$ by varying the 
allowed interpretations of the filter function $f$. According to $\archspec_{\downarchr}$, 
{\em every}  action of domain $T$ is permitted to have an effect on domain $H_P$. 
In practice, some of the actions of domain $T$ will have the purpose of communicating
information to domain $C$, and it would not be desirable for such actions to be recorded in $H_P$. 
Let $\ComTC$ be the set of such communications actions. 
We specify that information about such actions is not permitted to flow from $T$ to $H_P$. 
Let $\archspec_{\downarchr}'$ be the architectural specification  so that $(A, \dom,
\intn') \in \archspec_{\downarchr}'$ if and only if $(A, \dom, \intn')$ is
an interpretation of $\downarchr$ such that $\intn'(\fT)$ is
defined by $\intn'(\fT)(\alpha,a ) = a$ for all $\alpha\in A^*$ and $a\in A\setminus \ComTC$ with $\dom(a) = T$, 
and $\intn'(\fT)(\alpha,a ) = \epsilon$ otherwise. In particular, we now have, for the 
case where $\dom(a) = T$ and $a \in \ComTC$, that $\intn'(\fT)(\alpha,a ) = \epsilon$, 
whereas in the corresponding interpretation in $\archspec_{\downarchr}$ we would have $\intn(\fT)(\alpha,a ) = a$
in this case. 

\begin{theorem} 
$(\downarchr,\archspec_{\downarchr}')$  strictly refines $(\downarchr,\archspec_{\downarchr})$ by the identity function $r$.%
\footnote{We remark that this result implicitly also requires the technical side condition that $\epsilon \not \in A$ for any set of actions
$A$ in an interpretation.} 
\end{theorem}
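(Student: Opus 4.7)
The plan is to verify conditions S\ref{stref-cond-1}--\ref{stref-cond-4} for the identity refinement mapping $r$, which by Theorem~\ref{thm:sref-imp-tref} and Theorem~\ref{thm:sref} will give the result. Fix an arbitrary $\archint_1' = (A, \dom, \intn') \in \archspec_{\downarchr}'$. Since the definition of $\archspec_{\downarchr}$ uniquely determines the interpretation $\intn$ of $\fT$ once $(A,\dom)$ is fixed, take $\archint_2 = (A, \dom, \intn)$ to be the corresponding element of $\archspec_{\downarchr}$; I would also  briefly remark that the required compatibility and strong non-conflation conditions are straightforward from the definitions of $\intn$ and $\intn'$. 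Since $r$ is the identity and both interpretations use the same architecture $\downarchr$, $\dom$ and $A$, conditions S\ref{stref-cond-1} and S\ref{stref-cond-2} hold trivially.

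For S\ref{stref-cond-3}, the only filtered edge in $\downarchr$ is $\flowstoin[\fT]{T}{H_P}$, so it suffices to consider $f_1 = f_2 = \fT$. I would show: if $\intn(\fT)(\alpha, a) = \epsilon$, then $\intn'(\fT)(\alpha, a) = \epsilon$. By definition of $\intn$, the hypothesis forces $\dom(a) \neq T$ (using the side assumption $\epsilon \notin A$, so that $\intn(\fT)(\alpha, a) = a$ is never $\epsilon$ when $\dom(a) = T$). But then $\intn'(\fT)(\alpha, a) = \epsilon$ by definition of $\intn'$.

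For S\ref{stref-cond-4}, once again the only filtered edge being $\flowstoin[\fT]{T}{H_P}$ forces the hypotheses to reduce to the single case $f_2 = g_2 = \fT$, $\dom(a) = \dom(b) = T$, $r(u) = u = H_P$. Since $r$ is the identity, the corresponding concrete edges from $\dom_1(a), \dom_1(b)$ to $u$ are both $\flowstoin[\fT]{T}{H_P}$ as well, so we are in case (d) with $f_1 = g_1 = \fT$. The key observation is that if $\intn(\fT)(\alpha, a) = \intn(\fT)(\beta, b) \neq \epsilon$, then by definition of $\intn$ we have $a = \intn(\fT)(\alpha, a) = \intn(\fT)(\beta, b) = b$. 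Consequently $\intn'(\fT)(\alpha, a) = \intn'(\fT)(\beta, a) = \intn'(\fT)(\beta, b)$ (the value of $\intn'(\fT)$ depends only on $a$, not on the preceding sequence), establishing case (d).

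The main (minor) obstacle is the proper bookkeeping in S\ref{stref-cond-4}: since there is only one filtered edge and $r$ is the identity, all four sub-cases degenerate to case (d), and the proof collapses to the simple observation that equality of $\intn(\fT)$-values forces equality of actions, whereupon $\intn'$ (being action-determined) agrees as well. Everything else is bookkeeping that follows directly from unpacking the definitions of $\archspec_{\downarchr}$ and $\archspec_{\downarchr}'$.
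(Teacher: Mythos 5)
Your proposal is correct, and it follows the paper's overall structure—fix an arbitrary $\archint_1'\in\archspec_{\downarchr}'$, pair it with the uniquely determined $\archint_2\in\archspec_{\downarchr}$ over the same $(A,\dom)$, and check the strict-refinement conditions, with S\ref{stref-cond-1}, S\ref{stref-cond-2} immediate and S\ref{stref-cond-3} reduced to the single filtered edge $\flowstoin[\fT]{T}{H_P}$ (where, as in the paper, the side condition $\epsilon\notin A$ makes the premise $\intn(\fT)(\alpha,a)=\epsilon$ unproblematic). The one genuine divergence is in how you discharge S\ref{stref-cond-4}: the paper does not verify it directly but instead verifies the simpler condition S\ref{stref-cond-5} together with the fact that messages are source-identifying with respect to $r$ (via $\source(x)=\dom_1(x)$, since $\intn_2(\fT)(\alpha,a)=a$), and then invokes Lemma~\ref{lem:simpref}; you instead note that the hypotheses of S\ref{stref-cond-4} force $f_2=g_2=\fT$, $\dom(a)=\dom(b)=T$, $u=H_P$, so only case (d) is in play, and that $\intn_2(\fT)(\alpha,a)=\intn_2(\fT)(\beta,b)\neq\epsilon$ forces $a=b$, whence $\intn'(\fT)$, being determined by the action alone (either $a$ or $\epsilon$ according to whether $a\in\ComTC$), agrees on the two sides. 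Your direct check is more elementary and self-contained for this particular pair of specifications, since both the abstract and concrete architectures coincide and have a single filtered edge; the paper's route through S\ref{stref-cond-5} and Lemma~\ref{lem:simpref} is slightly longer here but buys reusability—it illustrates the intended local proof discipline (check only S\ref{stref-cond-5} plus source-identification) that scales to refinements where several filtered edges and non-identity refinement maps would make an exhaustive case analysis of S\ref{stref-cond-4} tedious. One cosmetic remark: the theorem only asserts strict refinement, i.e., conditions S\ref{stref-cond-1}--S\ref{stref-cond-4} for each interpretation, so your opening appeal to Theorems~\ref{thm:sref-imp-tref} and~\ref{thm:sref} is unnecessary for the statement itself (they are only needed if you want to conclude $\trefines$ or $\srefines$, for which strong non-conflation would then have to be checked, as you note in passing).
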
 

\begin{proof} 
Let $\archint_1= (A, \dom_1, \intn_1) \in \archspec_{\downarchr}'$. We may define an architectural interpretation 
for $\downarchr$ by $\archint_2= (A, \dom_2, \intn_2)$ where $\dom_2 = \dom_1$, and 
 the interpretation $\intn_2$ for $\downarchr$ is defined by 
$\intn_2(\fT)(\alpha,a ) = a$ for all $\alpha\in A^*$ and $a\in A$ with $\dom_2(a) = T$, 
and $\intn_2(\fT)(\alpha,a ) = \epsilon$ otherwise. 

Let $\intarch_1 = (\downarchr, \archint_1)$ and $\intarch_2 = (\downarchr, \archint_2)$. 
The domains of these two interpreted architectures are the same,  so the 
identity function $r$ on the set of domains $D_1 = D_2 = \{C,T,U, H_C, H_P, D,L\}$ has the right type to be a refinement mapping. 
We show that $\intarch_1$ strictly refines $\intarch_2$ by $r$, from which we may conclude that 
$(\downarchr,\archspec_{\downarchr}')$  strictly refines $(\downarchr,\archspec_{\downarchr})$ by $r$.  
For the proof, we use conditions~S\ref{stref-cond-1}-S\ref{stref-cond-3} and~S\ref{stref-cond-5}, and invoke Lemma~\ref{lem:simpref} to obtain~S\ref{stref-cond-4}. 

Conditions S\ref{stref-cond-1} and S\ref{stref-cond-2} are trivial  from the fact that the architectures in 
$\intarch_1$ and $\intarch_2$ are the same and $r$ is the identity function. 
For condition S\ref{stref-cond-3}, suppose that  $u,v\in D_1$ with $\flowstoin[f_1]{u}{v}$ with $f_1\neq \top$. 
Then $f_1= f$, $u = T$ and $v= H_P$. We need to show that 
there exists an edge  $\flowstoin[f_2]{r(u)}{r(v)}$ with  either $f_2 = \top$  or $f_2 \neq \top$  and 
 for all $\alpha\in A^*$ and $a\in A$ with $\dom_1(a) = u$
 we have $\intn_2(f_2)(\alpha, a) = \epsilon$ implies $\intn_1(f_1)(\alpha, a) = \epsilon$. Plainly, the edge 
 $\flowstoin[f]{T}{H_P}$ provides the required edge witness, with $f_2 = f \neq \top$, 
 so it remains to check that for all $\alpha\in A^*$ and $a\in A$ with $\dom_1(a) = T$
 we have $\intn_2(f)(\alpha, a) = \epsilon$ implies $\intn_1(f)(\alpha, a) = \epsilon$.
 This holds trivially, because $\dom_1(a) = T$ implies $\dom_2(a) = T$, 
 in which case $\intn_2(f)(\alpha, a) = a \neq \epsilon$. 
 
 For condition~S\ref{stref-cond-5},  let $a,b \in A$ and $v\in D_1$ such that $\dom_1(a) =\dom_1(b)$, 
and $\flowstoinf[f_1]{\dom_1(a)}{v}{\flowsto_1}$ and $\flowstoinf[f_2]{\dom_2(a)}{r(v)}{\flowsto_2}$, 
where $f_1\neq \top$ and $f_2\neq \top$. Then we must have $\dom_1(a) = \dom_2(b) = T$, $v = H_P$ and $f_1 = f_2 = f$. 
We need to show that for $\alpha , \beta \in A^*$, 
if  $\intn_2(f)(\alpha, a) = \intn_2(f)(\beta, b)\neq \epsilon$, then 
$\intn_1(f)(\alpha, a) =  \intn_1(f)(\beta, b)$. But 
$\intn_2(f)(\alpha, a) = \intn_2(f)(\beta, b)\neq \epsilon$ implies that $a = b \in A\setminus \ComTC$, 
so $\intn_1(f)(\alpha, a) = a = b=  \intn_1(f)(\beta, b)$, as required. 

To obtain condition~S\ref{stref-cond-4} using 
 Lemma~\ref{lem:simpref}, we need also that messages are source-identifying with respect to $r$. 
 For this, define $\source(x) = \dom_1(x)$. 
 Let $\alpha\in A^*$ and $a\in A$, and suppose $\flowstoinf[f_2]{r(\dom_1(a))}{w}{\flowsto_2}$ 
and $\intn_2(f_2)(\alpha, a) \neq \epsilon$. Then we have $f_2 = f$, $\dom_1(a) = T$ and $w = H_P$. 
Thus, $\intn_2(f_2)(\alpha, a) =\intn_2(f)(\alpha, a) = a$, and  
$\source(\intn_2(f_2)(\alpha, a)) = \dom_1(a)$, as required to establish that messages are source-identifying. 
\end{proof}

\section{Implementing Architectures using Access Control}\label{sec:access-control} 

One of the mechanisms that might be used to enforce compliance with an information 
flow architecture is access control restrictions on the ability of domains to read and write
objects. This idea was already implicit in  the   \citet{BLP} 
 approach of enforcing that high level information should not flow to low level domains 
through a ``no read up" and ``no write down" access control policy. 
The idea was given a more semantically well-founded expression by  \citet{Rushby_92}, 
who established a  formal relation between access control systems and 
a  theory of information flow based on intransitive noninterference 
policies. Rushby's  ``reference monitor conditions"
give semantics to the notion of reading and writing, which was absent in the work of Bell and 
La Padula. Rushby's formulation was sharpened and shown to be closely related to 
TA-security by \citet{wiini}.  

In this section, we present a generalization of van der Meyden's formulation 
of access control, and show how enforcement of an access control policy  together with 
local verification of trusted components can be used to assure that
a system is compliant with an  extended architecture. 

\newcommand{\Obj}{\mathtt{Obj}}
\newcommand{\oceq}{\approx^{oc}}

We first recall some definitions and results from \citet{wiini}. 
The system model we have used to this point does 
not require the states of a system to be  equipped with any internal structure. 
In practice, systems typically will be constructed as an assembly of components. 
To capture this, \citet{Rushby_92} introduced the notion of 
a \emph{system with structured state},  which is a system $M$ (with states $S$ and domains $D$) together
with 
\begin{enumerate} 
\item a set $\Obj$ of \emph{objects}, 
\item a set $V$ of \emph{values}, and functions
\item $\contents: S\times \Obj\rightarrow V$, with $\contents(s,n)$
  interpreted as the value of object $n$ in state $s$, and 
\item $\observe,\alter:D \rightarrow {\cal P}(\Obj)$, with $\observe(u)$
  and $\alter(u)$ interpreted as the set of objects that domain $u$
  can observe (or read) and alter (or write), respectively.
\end{enumerate} 
For brevity, we write $s(x)$ for $\contents(s,x)$. We call the pair
$(\observe, \alter)$ the \emph{access control} table of the machine.
For each domain $u$, we define an equivalence relation of 
``observable content equivalence" on states $s,t\in S$ by $s\oceq_u t$ if 
$s(x) = t(x)$ for all $x\in \observe(u)$.

Rushby introduced \emph{reference monitor conditions} on such machines
in order to capture formally the intuitions associated with the pair
$(\observe, \alter)$ being an access control table that restricts the
ability of the actions to ``read'' and ``write'' the objects $\Obj$. 
\Citet{wiini} sharpened these conditions to the following (the difference is in RM2): 
\begin{enumerate} 
\item[RM1.] If $s\oceq_{u}t$ then $\obs_u(s) = \obs_u(t)$ . 

\item[RM2.]
For all actions $a\in A$, states $s,t\in S$ and 
objects  $x\in \alter(dom(a))$, 
if $s\oceq_{\dom(a)}t$ and $s(x) = t(x)$ 
then $(s\cdot a)(x) = (t\cdot a)(x)$. 

\item[RM3.] If  $x\not \in \alter(dom(a))$ then  $s(x) = (s\cdot a)(x)$
 \end{enumerate} 
Intuitively, RM1 states that a domain's observation
depends only on the values of the objects that it can observe (or read). 
RM2 states that if action $a$ is performed in a domain $u$ that 
is permitted to alter an object $x$, then the new value of the object
after the action depends only on its old value and the values of objects 
that domain $u$ is permitted to observe. The final conditions RM3 
says that if action $a$ is performed in a domain that is not permitted
to alter (or write) an object $x$, then the value of $x$ does not change.  

We note that the terminology ``reference monitor conditions" points to the 
fact that these conditions can be enforced by a reference monitor that 
mediates all attempts to perform an action, simply by denying requests by a domain $u$ to 
read an object not in $\observe(u)$ or write to an object not in $\alter(u)$. 

In addition to the reference monitor assumptions, Rushby 
considers a condition stating that if there is an object that
may be altered by domain $u$ and observed by domain $v$, 
then the information flow policy should permit flow of
information from $u$ to $v$. (Obviously, the 
object $x$ provides a channel for information to flow
from $u$ to $v$.) 

\begin{enumerate}
\item[AOI.] If $\alter(u)\cap \observe(v)\neq \emptyset$ then $\flowstoin{u}{v}$.
\end{enumerate}  

\Citet{wiini} shows the following, strengthening a result of \citet{Rushby_92}. 

\begin{theorem} \label{thm:ac-ta}
If $M$ is a system with structured state satisfying RM1-RM3 and AOI  with respect to  
$\flowsto$ then  $M$ is TA-secure with respect to $\flowsto$. 
\end{theorem}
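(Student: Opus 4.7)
The plan is to prove TA-security via an unwinding argument that mirrors the structure of the definition of $\ta{u}$, but strengthened so that the induction carries enough information about the actual contents of observable objects, not just observations. Specifically, the strengthened claim I would prove simultaneously for all domains $u$ is: for all $\alpha, \alpha' \in A^*$, if $\ta{u}(\alpha) = \ta{u}(\alpha')$, then (i) $(s_0\cdot \alpha)(x) = (s_0\cdot \alpha')(x)$ for every $x \in \observe(u)$, and (ii) $\obs_u(\alpha) = \obs_u(\alpha')$. Clause (ii) is immediate from (i) by RM1, so the real work is proving (i).

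I would prove (i) by induction on $|\alpha|+|\alpha'|$, closely following the case split used in the proof of Lemma~\ref{lem:ta-view}. The base case is trivial. For the inductive step consider $\alpha a$ and $\alpha'$ with $\ta{u}(\alpha a) = \ta{u}(\alpha')$. If $\notflowstoin{\dom(a)}{u}$, then $\ta{u}(\alpha) = \ta{u}(\alpha')$, AOI gives $\alter(\dom(a)) \cap \observe(u) = \emptyset$, so for $x \in \observe(u)$ we have $x \notin \alter(\dom(a))$; RM3 then gives $(s_0\cdot \alpha a)(x) = (s_0\cdot \alpha)(x)$, and the induction hypothesis finishes the case. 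Otherwise $\flowstoin{\dom(a)}{u}$, forcing $\alpha' = \beta b$ with (WLOG) $\flowstoin{\dom(b)}{u}$ and decoding $\ta{u}(\alpha a) = \ta{u}(\beta b)$ yields $a = b$, $\ta{u}(\alpha) = \ta{u}(\beta)$, and $\ta{\dom(a)}(\alpha) = \ta{\dom(a)}(\beta)$. For $x \in \observe(u)$, if $x \notin \alter(\dom(a))$ we finish again with RM3 and the induction hypothesis for $u$; if $x \in \alter(\dom(a))$, I apply RM2 to $s = s_0\cdot \alpha$ and $t = s_0\cdot \beta$, which requires $s \oceq_{\dom(a)} t$ and $s(x) = t(x)$.

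The former is supplied by the induction hypothesis applied to $\dom(a)$: $\ta{\dom(a)}(\alpha) = \ta{\dom(a)}(\beta)$ gives equality of $s$ and $t$ on $\observe(\dom(a))$. The latter $s(x) = t(x)$ is where I must be careful: this is the only place one really needs that $x \in \observe(u)$, since here the induction hypothesis applied to $u$ at the \emph{shorter} sequences $\alpha$ and $\beta$ (whose combined length is strictly less than $|\alpha a| + |\beta b|$) gives $(s_0\cdot \alpha)(x) = (s_0\cdot \beta)(x)$. RM2 then yields $(s_0\cdot \alpha a)(x) = (s_0\cdot \beta b)(x)$, closing the case.

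The main obstacle is precisely getting the simultaneous induction hypothesis right: trying to prove only clause (ii) directly would fail at the RM2 step, because RM2 reasons about values of objects rather than about observations. The strengthening to clause (i), together with the simultaneous quantification over \emph{all} domains $u$, is what makes the argument go through, since RM2's premise draws on equality of observable objects of $\dom(a)$ rather than of $u$, forcing the induction hypothesis to be invoked at two different domains in a single case. Beyond this, the remaining work is routine bookkeeping, closely paralleling the structure of the proof of Lemma~\ref{lem:ta-view}.
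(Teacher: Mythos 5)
Your proof is correct and takes essentially the same route the paper uses for its generalization (Theorem~\ref{thm:acimp}): strengthen the claim to ``$\ta{u}(\alpha)=\ta{u}(\alpha')$ implies $s_0\cdot\alpha \oceq_u s_0\cdot\alpha'$'' for all domains simultaneously, induct on $|\alpha|+|\alpha'|$, handle the blocked case with AOI plus RM3 and the permitted case with RM2 (invoking the hypothesis at both $u$ and $\dom(a)$) together with RM3 for unaltered objects, and conclude with RM1. The paper cites Theorem~\ref{thm:ac-ta} from prior work rather than reproving it, but its proof of the filtered variant is exactly this argument, so your proposal matches the intended proof.
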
 

We now develop a generalization of this result to extended architectures. 
As a first step, note that in extended architectures, the situation where the
information flow policy potentially permits flow of information from domain $u$ to domain $v$ 
corresponds to the existence of an edge $\flowstoin[f]{u}{v}$ for some label $f$ (possibly $\top$). 
This motivates the following variant of condition AOI: 

\begin{enumerate}
\item[AOI$'$.] If $\alter(u)\cap \observe(v)\neq \emptyset$ then $\flowstoin[f]{u}{v}$ for some $f$.
\end{enumerate} 

Next, we develop a set of conditions that check that information flow constraints
of the form $\flowstoin[f]{u}{v}$ with $f \neq \top$ have been correctly implemented in a 
system. Let $\archint = (A,\dom,\intn)$ be an interpretation of architecture $\archvar= (D, \flowsto)$. 
Consider the following constraints in a system $M$ with actions $A$, domains $D$ and 
domain function $\dom$: 

\begin{enumerate} 
\item[I1.] If $\flowstoin[f]{\dom(a)}{u}$ for $f\neq \top$ and $\intn(f)(\alpha, a) = \epsilon$ and 
$x\in \observe(u)\cap \alter(\dom(a))$ then $(s_0\cdot \alpha a)(x) = (s_0\cdot \alpha)(x)$.

\item[I2.] If $\flowstoin[f]{\dom(a)}{u}$ with $f\neq \top$ 
and 
$\flowstoin[g]{\dom(b)}{u}$ with  $f\neq \top$ 
and $\intn(f)(\alpha, a) = \intn(g)(\beta,b)\neq \epsilon$ and 
$x\in \observe(u)\cap (\alter(\dom(a))\cup \alter(\dom(b)))$
and $(s_0\cdot \alpha) (x) = (s_0\cdot\beta)(x)$
then $(s_0\cdot \alpha a)(x) = (s_0\cdot \beta b)(x)$. 
\end{enumerate}

Condition I1 ensures that if filter function $f$ restricts how the
domain of action $a$ may interact with domain $u$
(i.e., $\flowstoin[f]{\dom(a)}{u}$ with $f\neq \top$), and the filter
function interpretation does not allow any information flow ($\intn(f)(\alpha, a) = \epsilon$), then the action does not
change the state of any
object $x$ that domain $u$ may observe and domain 
$\dom(a)$ 
is allowed
to alter. Condition I2 states that if an action $a$ in domain 
$\dom(a)$
may alter an object
$x$ that is observable by domain $u$ and  $\flowstoin[f]{\dom(a)}{u}$
then the new state of object $x$ is determined by the interpretation
of filter function $f$.

We note that verification of these constraints requires consideration 
only of domains that are trusted, in the sense that they have outgoing edges not 
labelled $\top$, and the objects that such domains are permitted to alter. 
Thus verification of these constraints can be localized to the trusted domains. 
The following result states that such local verification, together with 
enforcement of an access control policy consistent with the information flow 
policy via a mechanism satisfying the reference monitor constraints, suffices to 
assure that an information flow policy has been satisfied:

\begin{theorem} \label{thm:acimp}
Let $\intarch$ be a strongly non-conflating interpreted architecture. 
Suppose that $M$ is a system with structured state satisfying RM1-RM3, AOI$'$ and I1-I2. 
Then $M$ is  \TFF-compliant with $\intarch$. 
\end{theorem}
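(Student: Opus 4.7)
The plan is to prove the stronger auxiliary statement: for every domain $u \in D$ and all sequences $\alpha, \alpha' \in A^*$, if $\tff{u}(\alpha) = \tff{u}(\alpha')$, then $s_0 \cdot \alpha \oceq_u s_0 \cdot \alpha'$ (i.e.\ they agree on every $x \in \observe(u)$). Once this is established, RM1 immediately delivers $\obs_u(\alpha) = \obs_u(\alpha')$, which is exactly \TFF-compliance. This mirrors the strategy van der Meyden uses to prove Theorem~\ref{thm:ac-ta}, adapted to the filtered setting.

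I would proceed by induction on $|\alpha| + |\alpha'|$, with the induction hypothesis ranging over all domains simultaneously (this is essential, since the reasoning at domain $u$ will invoke the hypothesis at $\dom(a)$). The base case is immediate. For the step, consider $\alpha a$ and $\alpha'$ with $\tff{u}(\alpha a) = \tff{u}(\alpha')$ and fix $x \in \observe(u)$. Split on the nature of the edge from $\dom(a)$ to $u$: (i) if $\notflowstoin{\dom(a)}{u}$, then AOI$'$ forces $x \notin \alter(\dom(a))$, so RM3 gives $(s_0 \cdot \alpha a)(x) = (s_0 \cdot \alpha)(x)$, while $\tff{u}(\alpha) = \tff{u}(\alpha')$ lets the IH finish; (ii) if $\flowstoin[f]{\dom(a)}{u}$ with $f \neq \top$ and $\intn(f)(\alpha,a) = \epsilon$, the same conclusion follows from I1 (when $x$ is altered by $\dom(a)$) or RM3 (otherwise); (iii) if $\flowstoin[\top]{\dom(a)}{u}$ or $\flowstoin[f]{\dom(a)}{u}$ with $\intn(f)(\alpha,a) \neq \epsilon$, then the appended element is non-empty, so $\alpha' = \beta b$ for some $b$, and we perform a symmetric split on the edge from $\dom(b)$ to $u$.

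The heart of the argument lies in case (iii). When both appended elements are non-empty, matching them forces both $\tff{u}(\alpha) = \tff{u}(\beta)$ and the specific ``new'' contributions to agree. The strongly non-conflating assumption is exactly what rules out the cross-type comparisons: an element $(\tff{v}(\beta), b)$ arising from a $\top$-labelled edge cannot coincide with $\intn(f)(\alpha,a)$ arising from a filtered edge, so the remaining subcases are either both $\top$ or both filtered. In the ``both $\top$'' subcase, we also obtain $a = b$ and $\tff{\dom(a)}(\alpha) = \tff{\dom(a)}(\beta)$; if $x \notin \alter(\dom(a))$ use RM3, otherwise invoke RM2 using the IH at $u$ (for $(s_0\cdot\alpha)(x) = (s_0\cdot\beta)(x)$) and at $\dom(a)$ (for $s_0\cdot\alpha \oceq_{\dom(a)} s_0\cdot\beta$). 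In the ``both filtered'' subcase, the IH at $u$ gives $(s_0 \cdot \alpha)(x) = (s_0 \cdot \beta)(x)$, and then either RM3 or condition I2 (whose hypotheses are precisely what we have collected: equality of the two filter outputs and agreement of $x$ before the action) yields $(s_0 \cdot \alpha a)(x) = (s_0 \cdot \beta b)(x)$.

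The main obstacle will be bookkeeping the case analysis cleanly: the step case branches on the edge shape out of $\dom(a)$ and, when the appended element is non-empty, on the edge shape out of $\dom(b)$, giving a $3 \times 3$ matrix of subcases, several of which are eliminated only by appeal to strongly non-conflating. Care is also needed to state the IH for all domains at once so that the RM2 argument in the $\top/\top$ case has access to $s_0\cdot\alpha \oceq_{\dom(a)} s_0\cdot\beta$. Once the matrix is laid out, each surviving cell is discharged by exactly one of RM2, RM3, I1, or I2, and the conclusion follows via RM1.
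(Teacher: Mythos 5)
Your proposal is correct and follows essentially the same route as the paper: it proves the same auxiliary claim that $\tff{u}(\alpha) = \tff{u}(\alpha')$ implies $s_0\cdot\alpha \oceq_u s_0\cdot\alpha'$ by induction on combined length, with the same case split on the edge shapes out of $\dom(a)$ and $\dom(b)$, the same use of strong non-conflation to eliminate the mixed $\top$/filtered comparisons, and the same discharge of each surviving case by AOI$'$ with RM3, RM2, I1, or I2, finishing via RM1. Your explicit sub-split on whether $x \in \alter(\dom(a))$ in the $\top$/$\top$ case is slightly more careful bookkeeping than the paper's write-up, but the argument is the same.
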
 

\begin{proof} 
We show that $\tff{u}(\alpha) = \tff{u}(\beta)$ implies that 
$s_0\cdot \alpha \oceq_u s_0\cdot \beta$, for all domains $u$ and $\alpha, \beta \in A^*$. 
Note that it then follows using RM1 that $\tff{u}(\alpha) = \tff{u}(\beta)$ implies 
$\obs_u(\alpha) = \obs_u(\beta)$, which shows that $M$ is \TFF-compliant with $\intarch$. 

The proof proceeds by induction on the combined length of $\alpha$ and $\beta$. 
The base case of $\alpha = \beta = \epsilon$ is trivial.
Suppose that $\tff{u}(\alpha a) = \tff{u}(\beta)$, where the statement holds for 
sequences of shorter combined length. We consider several cases: 

Case 1: $\notflowstoin{\dom(a)}{u}$. In this case
$\tff{u}(\alpha) = \tff{u}(\alpha a) = \tff{u}(\beta)$, so by induction we
have $s_0\cdot \alpha \oceq_u s_0\cdot \beta$. We need to 
show that $s_0\cdot \alpha a \oceq_u s_0\cdot \beta$, i.e., 
that $(s_0\cdot \alpha a)(x)  =  (s_0\cdot \beta)(x)$ for all $x\in \observe(u)$. 
Let $x\in \observe(u)$. Since $\notflowstoin{\dom(a)}{u}$, it follows using 
AOI$'$ that $x\not \in \alter(\dom(a))$. Thus, by RM3, we obtain
that $(s_0\cdot \alpha a)(x)  = 
((s_0\cdot \alpha)\cdot a)(x)  = (s_0\cdot \alpha)(x)  =  (s_0\cdot \beta)(x)$, 
as required.

Case 2: $\flowstoin[\top]{\dom(a)}{u}$. In this case
$\tff{u}(\beta) = \tff{u}(\alpha a)  = \tff{u}(\alpha) \concat (\tff{\dom(a)}(\alpha), a) $, 
so $\beta \neq \epsilon$. Let $\beta = \gamma b$. 
If $\notflowstoin{\dom(b)}{u}$, then we can swap the roles of $\alpha a$ and $\gamma b$ 
and apply Case 1. If $\flowstoin[f]{\dom(b)}{u}$ with $f\neq \top$, then 
$\tff{u}(\gamma b)  = \tff{u}(\gamma) \concat \intn(f)(\gamma,b)$, 
and we obtain that $\intn(f)(\gamma,b) = (\tff{\dom(a)}(\alpha), a)$. 
Since the architecture is strongly non-conflating, this case is not possible. 
We are left with the case that $\flowstoin[\top]{\dom(b)}{u}$. 
Here $\tff{u}(\gamma b)  = \tff{u}(\gamma) \concat (\tff{\dom(b)}(\gamma),b)$, 
and we conclude that $\tff{u}(\gamma)= \tff{u}(\gamma)$ and  
$\tff{\dom(a)}(\alpha) = \tff{\dom(b)}(\gamma)$ and $a=b$. 
By the induction hypothesis, we obtain that 
$s_0\cdot \alpha \oceq_u s_0\cdot \gamma$ and 
$s_0\cdot \alpha \oceq_{\dom(a)} s_0\cdot \gamma$. 
The former states that for all objects $x\in \observe(u)$, 
we have $(s_0\cdot \alpha )(x) = (s_0\cdot\gamma)(x)$. 
Thus, by RM2 and the fact that $a=b$, it follows that 
$(s_0\cdot \alpha a)(x) =((s_0\cdot \alpha)\cdot  a)(x) =
((s_0\cdot \gamma)\cdot  b)(x) =
 (s_0\cdot\gamma b)(x)$ for all $x\in \observe(u)$. This shows that $(s_0\cdot \alpha a)  \oceq_u 
 (s_0\cdot \gamma b)$, as required. 

Case 3:  $\flowstoin[f]{\dom(a)}{u}$.
In this case we have 
$$\tff{u}(\beta) = \tff{u}(\alpha a)  = \tff{u}(\alpha) \concat \intn(f)(\alpha,a)\mathpunct.$$
There are several possibilities: 
\begin{itemize} 
\item Case 3a: $\intn(f)(\alpha,a) = \epsilon$. 
In this case we obtain $ \tff{u}(\alpha) = \tff{u}(\beta)$, 
so $s_0\cdot \alpha \oceq_u s_0\cdot \beta$ by induction. 
By RM3, we have $(s_0\cdot \alpha a)(x) = ((s_0\cdot \alpha)\cdot a)(x) = (s_0\cdot \alpha)(x)$ for 
$x\in \observe(u)$ such that $x\not \in \alter(\dom(a))$. 
Further, by I1, we have 
$(s_0\cdot \alpha a)(x) = (s_0\cdot \alpha)(x)$ for $x\in \observe(u)\cap \alter(\dom(a))$. 
Thus, $s_0\cdot \alpha a \oceq_u s_0\cdot \alpha$, and we 
conclude that $s_0\cdot \alpha a \oceq_u s_0\cdot \beta$, as required. 

\item Case 3b: $\intn(f)(\alpha,a) \neq \epsilon$. 
Here we must have $\beta \neq \epsilon$, so let $\beta =\gamma b$. 
Arguing as above, we cannot have $\flowstoin[\top]{\dom(b)}{u}$ since that
contradicts the assumption that the architecture is strongly non-conflating.  
The case that $\notflowstoin{\dom(b)}{u}$ can be handled using Case 1 above. 
We are left with the possibility that $\flowstoin[g]{\dom(b)}{u}$ for some $g\neq \top$. 
Here $\tff{u}(\gamma b )  = \tff{u}(\gamma) \concat \intn(g)(\gamma,b)$. 
The situation where $\intn(g)(\gamma,b)=\epsilon$ can be handled using the 
argument of Case 3a. Thus, we may assume that  $\intn(g)(\gamma,b)\neq \epsilon$. 
We then obtain that 
$\tff{u}(\alpha )  = \tff{u}(\gamma)$ and $\intn(f)(\alpha, a) = \intn(g)(\gamma,b) \neq \epsilon$. 
By the induction hypothesis, 
we have $s_0\cdot \alpha \oceq_u s_0\cdot \gamma$, 
so $(s_0\cdot \alpha)(x) = (s_0\cdot \gamma)(x)$ for all $x \in \observe(u)$. 
We need to show that 
$(s_0\cdot \alpha a)(x) = (s_0\cdot \gamma b )(x)$ for all $x \in \observe(u)$. 
If $x\in \observe(u) \cap (\alter(\dom(a) \cup \alter(\dom(b))$, then this 
follows using I2. On the other hand, if 
$x\not \in \alter(\dom(a))$  and $x\not \in \alter(\dom(b))$, then 
by  RM3 we get 
$(s_0\cdot \alpha a)(x) = ((s_0\cdot \alpha)\cdot a)(x) = (s_0\cdot \alpha)(x)
= (s_0\cdot \gamma)(x) = ((s_0\cdot \gamma)\cdot b)(x) = (s_0\cdot \gamma b)(x)$, 
so the desired conclusion holds in either case.  
\end{itemize} 
\end{proof}

Conditions I1, I2 are still somewhat non-local because of the reference to $\alpha$. 
Since architectural specifications are stated in terms of these state sequences, references
to them cannot be completely eliminated. However, it is often 
convenient to factor this reference via properties of the state 
of the system. Suppose that for the edge labels $f$  there exist functions $F(f)$ with domain $S\times A$, 
satisfying the following constraints:  
\begin{enumerate} 
\item[E1.] $\intn(f)(\alpha,a) =\intn(g)(\beta, b)$ implies 
$F(f)(s_0\cdot \alpha, a) = F(g)(s_0\cdot \beta,b)$
\item[E2.] $\intn(f)(\alpha,a)=\epsilon$ implies $F(f)(s_0\cdot \alpha, a) = \epsilon$. 
\end{enumerate} 
Intuitively, $F$  gives a state-based encoding of  an approximation to $\intn$.

The following conditions use $F$ to give a variant of  the conditions I1, I2 that is 
stated with respect to the states of the machine: 
\begin{enumerate} 
\item[I1$'$.] If $\flowstoin[f]{\dom(a)}{u}$ for $f\neq \top$ and $F(f)(s, a) = \epsilon$ and 
$x\in \observe(u)\cap \alter(\dom(a))$ then $(s\cdot  a)(x) = (s)(x)$.

\item[I2$'$.] If $\flowstoin[f]{\dom(a)}{u}$ with $f\neq \top$ 
$\flowstoin[g]{\dom(b)}{u}$ with  $g\neq \top$ 
and $F(f)(s, a) = F(g)(t,b)\neq \epsilon$ and 
$x\in \observe(u)\cap (\alter(\dom(a))\cup \alter(\dom(b)))$ 
and $s(x) = t(x)$ 
then $(s\cdot  a)(x) = (t\cdot  b)(x)$. 
\end{enumerate}
The following result states that this state-factored representation 
implies the conditions I1-I2. 

\begin{theorem} 
Conditions E1-E1 and I1$'$-I2$'$ imply I1 and I2.
\end{theorem}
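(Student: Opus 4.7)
The plan is to derive I1 and I2 by reducing each to its state-based counterpart I1$'$ and I2$'$, using the encoding properties E1 and E2 to translate conditions on $\intn$ into conditions on $F$. Throughout, I would take $s = s_0\cdot\alpha$ and $t = s_0\cdot\beta$, so that the hypotheses on $\alpha, a$ become hypotheses on $s, a$ of the right shape to invoke I1$'$ and I2$'$.

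For I1, the argument is direct. Given the hypothesis $\intn(f)(\alpha, a) = \epsilon$, E2 immediately yields $F(f)(s_0\cdot\alpha, a) = \epsilon$. Applying I1$'$ with $s = s_0\cdot\alpha$ then gives $(s\cdot a)(x) = s(x)$ for $x \in \observe(u)\cap \alter(\dom(a))$, which is precisely the conclusion of I1.

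For I2, I would apply E1 to the hypothesis $\intn(f)(\alpha,a) = \intn(g)(\beta,b)$ to obtain $F(f)(s_0\cdot\alpha,a) = F(g)(s_0\cdot\beta,b)$, and then case split on whether this common value is $\epsilon$. When it is non-$\epsilon$, I2$'$ applies directly with $s = s_0\cdot\alpha$ and $t = s_0\cdot\beta$, giving the required equality on $x$. When the common value is $\epsilon$, neither I2$'$ nor E1/E2 let us derive the conclusion in one stroke; here I would apply I1$'$ to each side separately to get $(s_0\cdot\alpha a)(x) = (s_0\cdot\alpha)(x)$ when $x \in \alter(\dom(a))$, and $(s_0\cdot\beta b)(x) = (s_0\cdot\beta)(x)$ when $x \in \alter(\dom(b))$, then chain these through the hypothesis $(s_0\cdot\alpha)(x) = (s_0\cdot\beta)(x)$.

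The main subtlety, and the only real obstacle, is this second subcase of I2: when $x$ lies in only one of $\alter(\dom(a))$ and $\alter(\dom(b))$, I1$'$ pins down one side of the chain but not the other. To bridge the gap I need RM3 to conclude that the action whose domain does not alter $x$ leaves $x$ unchanged. This is fine under the ambient assumption of the surrounding theorem (systems with structured state satisfying RM1--RM3), and I would flag this dependence explicitly in the write-up. The remaining cases are essentially bookkeeping with E1 and E2.
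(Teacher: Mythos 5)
Your proposal follows essentially the same route as the paper's proof: I1 via E2 followed by I1$'$, and I2 by applying E1 and splitting on whether the common $F$-value is $\epsilon$, invoking I2$'$ in the non-$\epsilon$ case and I1$'$ on each side otherwise. The only divergence is the subcase you flag, where the $F$-values are $\epsilon$ but $x$ lies in only one of $\alter(\dom(a))$ and $\alter(\dom(b))$: the paper silently applies I1$'$ to both sides there, whereas you correctly observe that the side whose domain cannot alter $x$ needs RM3 (available in the structured-state setting of Theorem~\ref{thm:acimp}, where this result is used), so your treatment is, if anything, slightly more careful than the paper's own.
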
 

\begin{proof} 
Assume E1-E1 and I1$'$-I2$'$. For I1, 
 suppose that $\flowstoin[f]{\dom(a)}{u}$ for $f\neq \top$ and $\intn(f)(\alpha, a) = \epsilon$ and 
let $x\in \observe(u)\cap \alter(\dom(a))$.  
We need  to show that $(s_0\cdot \alpha a)(x) = (s_0\cdot \alpha)(x)$. 
From $\intn(f)(\alpha, a) = \epsilon$, we get by E2 that $F(s_0\cdot \alpha,a) = \epsilon$. 
By I1$'$ with $s= s_0\cdot\alpha$, we the conclude that $(s_0\cdot \alpha a)(x) = (s_0\cdot \alpha)(x)$, 
as required.

For I2, suppose that  $\flowstoin[f]{\dom(a)}{u}$ with $f\neq \top$ 
and 
$\flowstoin[g]{\dom(b)}{u}$ with  $f\neq \top$ 
and $\intn(f)(\alpha, a) = \intn(g)(\beta,b)\neq \epsilon$ and 
$x\in \observe(u)\cap (\alter(\dom(a))\cup \alter(\dom(b)))$
and $(s_0\cdot \alpha) (x) = (s_0\cdot\beta)(x)$. 
We need to show that $(s_0\cdot \alpha a)(x) = (s_0\cdot \beta b)(x)$. 
By E1, we have $F(\alpha,a) = F(\beta,b)$. 
If both values are equal to $\epsilon$, then we use I1$'$ to conclude that 
$(s_0\cdot \alpha a)(x) = (s_0\cdot \alpha)(x)$ 
and $(s_0\cdot \beta b)(x) = (s_0\cdot \beta)(x)$. It then  
follows that $(s_0\cdot \alpha a)(x) = (s_0\cdot \beta b)(x)$, as required. 
Alternately, if $F(\alpha,a) = F(\beta,b)\neq \epsilon$, then by I2$'$, 
with $s = s_0\cdot \alpha$ and $t = s_0\cdot \beta$, 
we conclude that 
$(s_0\cdot \alpha a ) (x) = (s\cdot a)(x) =  (t\cdot b)(x) = (s_0\cdot\beta b)(x)$, 
as required.
\end{proof} 

To illustrate the application of these results, we consider the examples 
of architectural specifications introduced above, and show how some particular systems
may be proved to be implementations of these specifications. 

\subsection{Starlight Architecture} 

\newcommand{\togpos}{\mathit{togpos}}  
\newcommand{\logL}{\mathit{logL}}  
\newcommand{\logH}{\mathit{logH}}  
\newcommand{\getH}{\mathtt{get}}  
\newcommand{\putL}{\mathtt{put}} 
\newcommand{\ES}{\mathcal{K}}   
\newcommand{\EL}{\mathit{E_L}}  
\newcommand{\EH}{\mathit{E_H}}

We present an implementation of the Starlight architectural specification $(\strltArch, \archspec_{\strltArch})$
as a system  with structured state.  We first select an architectural interpretation 
$\archint = (A,\dom, \intn)$ in $\archspec_{\strltArch}$. 
The actions $A$ of the interpretation are given with their associated domain as follows: 
\begin{itemize} 
\item The actions of domain $S$ consist of the toggle action $\strltToggle$, plus
actions $k$ from a set $\ES$, which intuitively represents the set of 
keyboard actions. 

\item The actions of domain $H$ consist of the action $\getH$, plus actions $h$ drawn from some set 
$\EH$ representing possible High level events. 

\item The actions of domain $L$ consist of the action $\putL$, plus actions $l$  drawn from some set 
$\EL$ representing the possible Low level events. 
\end{itemize} 
As required by $\archspec_{\strltArch}$, we assume that $\intn$ satisfies
\begin{align*}
 \intn(\strltFltr)(\alpha, a) &=
  \begin{cases}
    a &\text{if $a=\strltToggle$ or} \\
      &~ \text{$\numof_\strltToggle(\alpha)$ is odd and $a \in A_S$} \\
    \epsilon & \text{otherwise}
  \end{cases}
\end{align*}
where $\numof_\strltToggle(\alpha)$ is the number of occurrences of $t$ in $\alpha$. 

Next, we construct a system $M$ for the interpreted architecture $(\strltArch, \archint)$. 
The system $M$ is based on the set of objects 
$\Obj = \{ \logH, \logL,\togpos\}$. 
Intuitively,  $\logH$ and $\logL$ are logs of actions observable  to $H$ and $L$, respectively, 
and $\togpos$ indicates the position of the toggle switch.
The objects $\logH$ and $\logL$ have values in $A^*$, and $\togpos$ has 
a value in $\{H,L\}$. A state $s$ is just an assignment of value of the given type 
to each of the objects, and we have $\contents(s,x) = s(x)$. 
The initial state $s_0$ is defined by 
$s(\logH)=s(\logL) = \epsilon$ and $s(\togpos) = H$. Transitions are defined by the 
following code associated to each of the actions: 
\begin{itemize} 
\item for domain $S$: 
\begin{itemize} 
\item[] $\strltToggle$:  if $\togpos=H$ then $togpos := L$ else $\togpos := H$
\item[] $k$: if $\togpos=H$ then $\logH := \logH \concat k$ else $\logL := \logL \concat k$  
\end{itemize} 
That is, the toggle action $\strltToggle$ changes the position of the toggle, and $k$ is recorded in either the 
High level log or the Low level log, depending on the position of the toggle. 

\item for domain $H$: 
\begin{itemize} 
\item[] $h$:  $\logH := \logH \concat h$
\item[] $\getH$:  $\logH := \logH \concat \logL$
\end{itemize} 
Thus, the events $h$ are recorded in the High level log, and $\getH$ fetches a 
copy of the low level log.

\item for domain $L$: 
\begin{itemize} 
\item[] $l$:  $\logL := \logL \concat l$
\item[] $\putL$:  $\logH := \logH \concat \logL$
\end{itemize} 
Similarly the events $l$ are recorded in the Low level log, and $\putL$ pushes a 
copy of the low level log into the High level log.  

\end{itemize} 

The access control functions $\alter$ and $\observe$ are defined by the following 
access control table. 
\begin{table}[h]
\centerline{ 
\begin{tabular}{|c||c|c|c|} 
\hline
              & $H$ & $L$ & $S$ \\ 
 \hline 
$\logH$ &  a,o   &  a        & a        \\ 
$\logL$ &    o    &   a,o      & a        \\ 
$\togpos$ &          &         &   a,o      \\ 
\hline
\end{tabular} }
\end{table} 
Here, for an object $x$ and a domain $u$, we have 
an entry $a$ (or $o$) in the corresponding cell of the table just when 
$x\in \alter(u)$ (respectively, $x \in \observe(u)$).  We define observations
in the system by allowing each domain to observe the values of all of its observable 
objects. That is, for each domain $u$ and state $s$, we define $\obs_u(s) = s\restrict \observe(u)$. 

\begin{proposition} 
The system $M$ is \TFF-compliant with architectural specification $(\strltArch, \archspec_{\strltArch})$.
\end{proposition}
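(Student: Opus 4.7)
The plan is to apply Theorem~\ref{thm:acimp}, which says it suffices to show that $(\strltArch, \archint)$ is strongly non-conflating and that $M$ satisfies RM1--RM3, AOI$'$, and I1--I2. Strong non-conflation is immediate: values of $\intn(\strltFltr)$ lie in $A_S\cup\{\epsilon\}$, which are atomic, so they can never coincide with a pair $(\tff{v}(\beta),b)$ arising from a $\top$-labeled edge. RM1 is immediate from the definition $\obs_u(s)=s\restrict\observe(u)$. RM2 and RM3 can be checked by inspection of the six action rules: each rule writes only to objects listed in $\alter(\dom(a))$ (matching RM3), and the written value is computed from objects the acting domain may observe (matching RM2)---for example, the $\getH$ rule writes only $\logH$ using $\logL$, and both are in $\observe(H)$. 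AOI$'$ is a direct comparison of the access-control table with the edges of $\strltArch$: every nonempty intersection $\alter(u)\cap\observe(v)$ has a corresponding edge (reflexive edges cover $H$--$H$, $L$--$L$, $S$--$S$; $L\to H$ is the $\top$-edge; $S\to H$ is a $\top$-edge; $S\to L$ is the $\strltFltr$-edge), and one checks that $L\to S$ and $H\to L$ do not arise because $\alter(L)\cap\observe(S)=\emptyset$ and $\alter(H)\cap\observe(L)=\emptyset$.

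The only real work is I1 and I2 for the $\strltFltr$-edge, where $\observe(L)\cap\alter(S)=\{\logL\}$. I propose to reduce to the state-factored conditions I1$'$--I2$'$ using the encoding
\[
F(\strltFltr)(s,a)=
\begin{cases}
a & \text{if $a=\strltToggle$, or $s(\togpos)=L$ and $a\in A_S\setminus\{\strltToggle\}$}\\
\epsilon & \text{otherwise.}
\end{cases}
\]
To verify E1 and E2 I would prove the invariant $(s_0\cdot\alpha)(\togpos)=L$ iff $\numof_\strltToggle(\alpha)$ is odd, by a routine induction on $\alpha$ using the fact that $\strltToggle$ is the only action that mutates $\togpos$. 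Given the invariant, $F(\strltFltr)(s_0\cdot\alpha,a)$ agrees with $\intn(\strltFltr)(\alpha,a)$, so E1 and E2 follow.

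It then remains to check I1$'$ and I2$'$ locally on $\logL$. For I1$'$, when $F(\strltFltr)(s,a)=\epsilon$ we have either $a\in A_H\cup A_L$ (not an $S$-action, so I1$'$ is vacuous) or $a\in A_S\setminus\{\strltToggle\}$ with $s(\togpos)=H$; in the latter case the $S$-rule for $k\in\ES$ writes $\logH$, not $\logL$, so $(s\cdot a)(\logL)=s(\logL)$. For I2$'$, when $F(\strltFltr)(s,a)=F(\strltFltr)(t,b)\neq\epsilon$ one reads off that $a=b$ (since $F$ returns the action itself), and either $a=\strltToggle$ (which touches only $\togpos$, not $\logL$) or $a\in A_S\setminus\{\strltToggle\}$ with $s(\togpos)=t(\togpos)=L$; in the latter case the rule appends the same $k$ to $\logL$, and from $s(\logL)=t(\logL)$ we obtain $(s\cdot a)(\logL)=(t\cdot b)(\logL)$.

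The main obstacle is the bookkeeping around the $S$-actions and the parity invariant: I1 and I2 must be verified specifically for the $\strltFltr$-edge, and without the invariant linking $\togpos$ to $\numof_\strltToggle$ one cannot bridge the action-sequence-indexed filter $\intn(\strltFltr)(\alpha,a)$ with the state-indexed condition on $\logL$ writes. Once the invariant is in hand, all remaining verifications are direct inspections of the action rules.
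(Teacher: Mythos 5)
Your proposal is correct and follows the same skeleton as the paper's proof: invoke Theorem~\ref{thm:acimp} after checking RM1--RM3, AOI$'$ and the filter-edge conditions, with the parity invariant linking $\numof_\strltToggle(\alpha)$ to the value of $\togpos$ doing the real work. The one genuine difference is how you discharge I1--I2: the paper verifies them directly against $\intn(\strltFltr)(\alpha,a)$, using the invariant at the point where the action-sequence condition must be matched to the state, whereas you factor through the state-based encoding $F(\strltFltr)$ and the conditions E1--E2 and I1$'$--I2$'$, exactly the machinery the paper itself deploys only for the election example. Your route is slightly more modular (the invariant is quarantined in the E1--E2 check, and I1$'$--I2$'$ become purely local inspections of the $S$-rules over arbitrary states), at the cost of introducing $F$; the paper's direct route is shorter but mixes the reachability argument into the I1/I2 case analysis. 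You also explicitly verify strong non-conflation, which Theorem~\ref{thm:acimp} requires and which the paper's proof leaves to an earlier remark, so that is a small improvement. One caution: your stated criterion for RM2 (``the written value is computed from objects the acting domain may observe'') is literally too strong --- the $\putL$ rule writes $\logH\not\in\observe(L)$ using the old value of $\logH$ itself --- but RM2 tolerates dependence on the prior value of the written object via its hypothesis $s(x)=t(x)$, so the conclusion still holds; just phrase the inspection as ``depends only on the action, the old value of the object written, and observable objects,'' as the paper does.
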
 

\begin{proof} 
We show that $M$ satisfies RM1-RM3, AOI$'$ and I1-I2 with respect to $(\strltArch, \archint)$, 
 and invoke Theorem~\ref{thm:acimp}
to conclude \TFF-compliance. 
RM1 is immediate from the fact that we have defined $\obs_u(s) = s\restrict \observe(u)$.

For RM2, we consider each of the actions $a$ and $x\in \alter(\dom(a))$ in turn, 
and show that if $s\oceq_{\dom(a)} t$ and $s(x) = t(x)$ then $(s\cdot a) (x) = (t\cdot a)(x)$. 
Note that we need to consider only those $x\in \alter(\dom(a))$ whose value could be changed by action $a$, since otherwise 
$(s\cdot a) (x) = (t\cdot a)(x)$ is immediate from $s(x) = t(x)$. (Since the only way an action could change an object, given the 
type of code we have used to define the transition relation is by use of the assignment operator, for each action 
we can take the set of objects that it could change to be the set of objects to which the code for the action
makes an assignment.) 
\begin{itemize} 
\item {\em Actions $a$ with $\dom(a) = S$.}  Suppose $s\oceq_{S} t$. 
Since $\observe(S) = \{\togpos\}$, this means $s(\togpos)=t(\togpos)$. 
Because $\alter(S) = \Obj$, we need to consider  all objects $x \in \Obj$ that could be changed by $a$, 
assuming that $s(x) = t(x)$.  

The only object in $\alter(S)$ that could be changed by 
action $\strltToggle$ is $\togpos$. If $s(\togpos) = t(\togpos) = H$, 
 we have $(s\cdot \strltToggle)(\togpos) = L = (t\cdot \strltToggle)(\togpos)$, as required, 
 and similarly if $s(\togpos) = t(\togpos) = L$.  
 
 For action $k$, the only objects $x$ that could be changed by $k$ are $\logH$ and $\logL$. 
 We consider the case of $\logL$. Assume $s(\logL) = t(\logL) $. 
 If $s(\togpos) = t(\togpos) = H$ then $(s\cdot k)(\logL) = s(\logL) = t(\logL) = (t\cdot k)(\logL)$, 
 as required. If $s(\togpos) = t(\togpos) = L$, then $(s\cdot k)(\logL) = s(\logL)\concat k = t(\logL)\concat k = (t\cdot k)(\logL)$.  
  The argument for $\logH$ is similar.
 
 \item  {\em Actions $a$ with $\dom(a) = H$.}  Assume $s\oceq_{H} t$. 
By the access control table, this means that $s(\logH) = t(\logH)$ and $s(\logL) = t(\logL)$. 
 For actions $h$ and $\getH$,  note that the only object these actions could change is $\logH$. 
We have $(s\cdot h)(\logH) = s(\logH) \concat k = 
t(\logH) \concat k  = (t\cdot h)(\logH)$. 
Similarly,  $(s\cdot h)(\logH) = s(\logH) \concat s(\logL) = 
t(\logH) \concat t(\logL) = (t\cdot h)(\logH)$. 

\item  
{\em Actions $a$ with $\dom(a) = L$.}  Assume $s\oceq_{L} t$. 
By the access control table, this means that  $s(\logL) = t(\logL)$. 

Action $l$ could change only $\logL$. 
We have $(s\cdot l)(\logL) = s(\logL) \concat l = 
t(\logL) \concat l  = (t\cdot l)(\logL)$. 

Action $\putL$ could change only $\logH$. Assume that 
$s(\logH) = t(\logH)$. Then 
$(s\cdot \putL)(\logH) = s(\logH) \concat s(\logL) = 
t(\logH) \concat  t(\logL) = (t\cdot \putL)(\logH)$. 
\end{itemize} 
This completes the proof that RM2 is satisfied. 

For RM3, we need to show that if $x\not \in \alter(\dom(a))$, then $(s\cdot a)(x) = s(x)$. 
This can be done by inspection of the code for each action. 
In the case of $\dom(a) = S$, we have $\Obj \setminus \alter(\dom(a))= \emptyset$, so the 
claim is trivial. For $\dom(a) = H$, we have $\Obj \setminus \alter(\dom(a))= \{\logL, \togpos\}$,
but the only object assigned to by $H$ actions is $\logH$, so the values of objects in this set are 
not changed by $H$ actions. For $\dom(a) = L$, we have $\Obj \setminus \alter(\dom(a))= \{\togpos\}$,
and $L$ actions do not assign to $\togpos$, so again there is no change to the value of this object. 

To show AOI$'$, we consider the possible values for $x \in \alter(u) \cap \observe(v)$, and check that there
exists an edge $\flowstoin[f]{u}{v}$ in each case. 
\begin{itemize} 
\item For $x = \logH$, the only case of $x \in \observe(v)$ is $v =H$. Since there is an edge from every 
domain $v$ to $H$ (in the case of $v=H$ there is the implicit reflexive edge labelled $\top$), 
AOI$'$ is immediate in this case. 

\item For $x = \logL$, we have that $x\in \alter(u)$ just when $u\in \{L,S\}$, 
and $x \in \observe(v)$ just when $v\in \{L,H\}$. For each of these possible values of $u,v$
there exists an edge from $u$ to $v$. 

\item for $x = \togpos$, we have $x \in \alter(u) \cap \observe(v)$ just when $u=v=S$, 
in which case we have the reflexive edge. 
\end{itemize} 

Before considering I1-I2, we establish the following invariant on the states of $M$: 
\begin{equation} 
\text{For all $\alpha \in A^*$, the number $\numof_\strltToggle(\alpha)$ is even iff $(s_0\cdot \alpha)(\togpos) = H$.} \label{sltinvt}
\end{equation}  
It is worth remarking that the proof of this invariant itself benefits from the 
property RM3 that we have just established. Note that 
$\togpos \in \alter(u)$ iff $u = S$. Thus, using RM3, we have that 
$(s_0\cdot \alpha)(\togpos) = s_0\cdot (\alpha\restrict \{S\}) (\togpos)$. 
Moreover, inspection of the code shows that the actions $k$ of domain $S$ 
do not change the value of $\togpos$. Thus, in fact 
$(s_0\cdot \alpha)(\togpos) = s_0\cdot (\alpha\restrict \{\strltToggle\}) (\togpos)$, where we 
interpret $\alpha\restrict B$ where $B$ is a set of actions to mean the subsequence of 
$\alpha$ of actions in $B$. A straightforward induction on sequences $\beta \in \{\strltToggle\}^*$ shows that 
$(s_0\cdot \beta)(\togpos) = H$ if $\numof_\strltToggle(\beta)$ is even and 
$(s_0\cdot \beta)(\togpos) = L$ otherwise. Since  
$\numof_\strltToggle(\alpha) = \numof_\strltToggle(\alpha\restrict \{\strltToggle\})$, we have 
$\numof_\strltToggle(\alpha)$ even iff $\numof_\strltToggle(\alpha\restrict \{\strltToggle\})$ is even 
iff $(s_0\cdot \alpha\restrict \{\strltToggle\})( \togpos) = H$ 
iff $(s_0\cdot \alpha)( \togpos) = H$, as required.

We now prove I1. 
The only case where $\flowstoin[f]{\dom(a)}{u}$ for $f\neq \top$ is $\dom(a) = S$ and $u = L$ and $f= \strltFltr$. 
 Suppose that $\intn(\strltFltr)(\alpha, a) = \epsilon$ and 
$x\in \observe(L)\cap \alter(S)$.  We need to show that
$(s_0\cdot \alpha a)(x) = (s_0\cdot \alpha)(x)$. 
Now $\observe(L)\cap \alter(S) = \{\logL\}$, so we have $x = \logL$. 
Further, if $\dom(a) = S$ then $\intn(\strltFltr)(\alpha,a) = \epsilon$ just when $a \neq \strltToggle$
and $\numof_\strltToggle(\alpha)$ is even. It therefore suffices to consider the case of $a = k$. 
By the invariant~(\ref{sltinvt}), since  $\numof_\strltToggle(\alpha)$ is even, 
we have $(s_0\cdot \alpha)(\togpos) = H$, so the code for $a=k$ takes the else branch, 
which does not change $x=\logL$. Hence $(s_0\cdot \alpha a)(x) = (s_0\cdot \alpha)(x)$, 
as required. 

For I2, the only case where $\flowstoin[f]{\dom(a)}{u}$ with $f\neq \top$ 
and  $\flowstoin[g]{\dom(b)}{u}$ with  $g \neq \top$ is 
$\dom(a) = \dom(b) = S$ and $u = L$ and $f =g= \strltFltr$.
Suppose that  $\intn(f)(\alpha, a) = \intn(g)(\beta,b)\neq \epsilon$ and 
$x\in \observe(u)\cap (\alter(\dom(a))\cup \alter(\dom(b)))$
and $(s_0\cdot \alpha) (x) = (s_0\cdot\beta)(x)$. We need to show that 
 $(s_0\cdot \alpha a)(x) = (s_0\cdot \beta b)(x)$. 
In the relevant case,  $\observe(u)\cap (\alter(\dom(a))\cup \alter(\dom(b))) = \observe(L) \cap \alter(S) = 
\{\logL\}$, so we have $x = \logL$. 
Now,  if $\intn(f)(\alpha, a) = \intn(g)(\beta,b)\neq \epsilon$ 
then $a = b = \strltToggle$ or both $\numof_\strltToggle(\alpha)$ and $\numof_\strltToggle(\beta)$ 
are odd and $a=b \in A_S \setminus \{\strltToggle\}$. In the former case, the 
code for $a = b = \strltToggle$ does not change the value of $x= \logL$, 
so it is immediate from $(s_0\cdot \alpha) (x) = (s_0\cdot\beta)(x)$ that 
$(s_0\cdot \alpha a)(x) = (s_0\cdot \beta b)(x)$. In the latter case, we have by the invariant~(\ref{sltinvt})
that $(s_0\cdot \alpha)(\togpos) = (s_0\cdot \beta)(\togpos) = L$, so the code for $a = k$ 
takes the else branch from both states and we have 
$(s_0\cdot \alpha a )(\logL) = 
(s_0\cdot \alpha )(\logL) \concat a =  
(s_0\cdot \beta )(\logL) \concat a =  
(s_0\cdot \beta a)(\logL)$, as required. 
\end{proof} 

Intuitively, the proof of RM2 can be summarized to be 
a consequence of the fact that for each action $a$, the code for action $a$ ``reads"  only 
objects in $\observe(\dom(a))$, and ``writes" to objects in $\alter(\dom(a))$ in 
ways that depend only on the values read and the values of the object written to. 
Note that the action $\putL$ assigns a value to the $H$ object $\logH$ that depends on the 
value of the same object $\logH$, which $L$ is 
not permitted to observe/read. However, this is not a violation of RM2. 
We expect that it is possible to develop a static analysis that would be sound for 
checking RM1-3, but we do not pursue this here.

\subsection{Downgrader Architecture}

\newcommand{\buff}{\mathit{buff}}
\newcommand{\logP}{\mathit{logP}}

Next, we give an example that illustrates the role that code structure can play in the
enforcement of an architecture. 
We sketch a class of  implementations of the refined downgrader architectural specification 
$(\downarchr, \archspec_{\downarchr})$ of Section~\ref{sec:downgrader-refine}, 
in the form of systems with structured state. 
Let $\archint = (A, \dom,\intn)$ be an architectural interpretation in $\archspec_{\downarchr}$. 
The set of actions $A_u$ for each domain $u$ is some arbitrary set, 
and $\intn(\fT)$ is
defined by $\intn(\fT)(\alpha,a ) = a$ for all $\alpha\in A^*$ and $a\in A$ with $\dom(a) = T$,
and $\intn(\fT)(\alpha,a ) = \epsilon$ otherwise. 

We describe the system $M$ abstractly: for each  domain $u$ 
 we assume that some set of objects $\Obj_u$ is given. 
For the domain $H_P$, we take $\Obj_{H_P} = \{\buff,\logP\}$. 
Intuitively, $\logP$ is a log of events that are observable to the domain $H_P$, 
and $\buff$ is a buffer storing declassification requests from $L$. 
The other domains $C,T,U, H_C,D$ and $L$ are associated with an arbitrary set of objects. 
The set of all objects $\Obj$ is the union of these sets $\Obj_u$.

The access control table on these objects is given in the following ``capability" formatted table. 
\begin{table}[h] 
\begin{tabular}{|c|c|c|c|c|c|c|c|} 
\hline
                    &    $C$       &  $T$          &  $U$          &  $H_C$            &  $H_P$     &  $D$                &  $L$ \\ 
                    \hline 
$\observe$  &  $\Obj_C$ &  $\Obj_T$ &  $\Obj_U$  & $\Obj_{H_C}$  &  $\logP$    & $\buff,\Obj_D$ & $\Obj_L$ \\ 
\hline 
$\alter$        &  $\Obj_C$     & $\Obj_T$    &  $\Obj_U$  & $\Obj_{H_C}$ & $\Obj_{T}$,$\Obj_{U}$ & $\Obj_{D}$,$\Obj_{L}$ & $\Obj_{L}$\\ 
                    &  $\Obj_{H_C}$ & $\Obj_{H_C}$&   $\logP$    &  $\Obj_{T}$   & $\Obj_{D}, \logP$           & $\buff, \logP$                 & $\buff$ \\ 
                    &                         &  $\logP$         &                   & $\Obj_{C}$     &                                       &                                       &  \\ 
\hline 
\end{tabular} 
\end{table} 
\\
For example, this table says that an object $x \in \Obj_U$ can be observed by domain $U$ only, and can be altered by domains 
$U$ and $H_P$. 

 States are defined to be assignments of a value in some set for each $x\in \Obj$. 
In case of $\buff$, the value $s(\buff)$ is assumed to be in the set $A_L^*$ 
of $L$ action sequences.  In case of $\log_P$, the value is assumed to be 
in the set $(A_T\cup A_U \cup A_{H_P})^*$, i.e., this log records actions in 
the  domains $T,U$ and $H_P$. We assume each domain observes all its observable objects, i.e., 
$\obs_u(s) = s\restrict \observe(u)$.

In order to focus on the semantic conditions associated to the 
edge $\flowstoin[f]{T}{H_P}$, we assume that the actions for all domains 
$T$ are given semantics in such a way as to satisfy the reference monitor
conditions RM2-RM3. For actions $a\in A_T$, we assume that the 
code for these actions  (which determines the state transition when the 
action is performed) has the following structure: 
$$ \logP :=\logP \concat a ; P_a$$
where $P_a$ is a program that reads only objects in 
$\Obj_T$ and 
writes only objects in $\Obj_T\cup \Obj_{H_C}$. 

We now argue that a system within this class of systems satisfies conditions
RM1-RM3, AOI$'$ and I1-I2. Condition RM1 is trivial from the definition of
observations. Conditions RM2-RM3 have been assumed for all actions
except $a \in A_T$. We argue that the latter also satisfy RM2. 
Given the allowed code pattern for such actions $a$, 
the only objects $x$ that are altered by $a$ 
are in $\{\logP \} \cup \Obj_T\cup \Obj_{H_C}$. The object $\logP$ is altered
by appending $a$. This gives a result that depends only on the previous value of 
$\logP$ (and the action $a$ being performed), which satisfies RM2. 
Other objects $x\in \Obj_T\cup \Obj_{H_C}$ are altered by the code 
in ways that depend only on the action $a$ and the value of objects in $\Obj_T = \observe(T)$, 
which again satisfies RM2. 

For RM3 in the case of actions $a\in A_T$, note that the 
complement of $\alter(T)$ is $\Obj_C\cup \Obj_U\cup   \Obj_D\cup \Obj_L \cup \{\buff\}$. 
Since none of these objects are written by the code pattern for $a$, 
RM3 is satisfied in this case. 

Satisfaction of AOI$'$ can be checked by inspection, considering
all possibilities for $x\in \alter(u) \cap \observe(v)$ in turn and 
verifying that there is an edge from $u$ to $v$ in the policy in each case. 
For example, we have $\logP \in \alter(T)\cap \observe(H_C)$, 
but we also have the edge $\flowstoin[f]{T}{H_C}$. 

To check conditions I1-I2, we need only consider the edge $\flowstoin[f]{T}{H_C}$, i.e.,
we consider actions $a$ with $\dom(a) = T$ and let $u = H_C$. 
Condition I1 has precondition $\intn(f)(\alpha,a) = \epsilon$. 
Since always $\intn(f)(\alpha,a) = a \neq \epsilon$, the precondition for I1
is always false and this condition is trivially true. 

For condition I2, suppose that $\dom(a) = \dom(b) = T$
and $\intn(f)(\alpha,a) = \intn(f)(\beta,b) \neq \epsilon$
and $x\in \observe(H_P) \cap \alter(T) $ and $(s_0\cdot\alpha)(x) = (s_0\cdot\beta)(x)$. 
By definition of $\intn(f)$ we have $a=b$. 
Since $\observe(H_P) \cap \alter(T) = \{\logP\}$, we have $x = \logP$
and $(s_0\cdot\alpha)(\logP) = (s_0\cdot\beta)(\logP)$. 
Since the code $P_a = P_b$ does not alter $\logP$, 
we obtain that  $(s_0\cdot\alpha a)(\logP) = 
(s_0\cdot\alpha)(\logP)\concat a = (s_0\cdot\beta)(\logP)\concat b = 
(s_0\cdot\beta b)(\logP)$, as required. 

This completes the argument that a system constructed as described
\TFF-complies with the refined downgrader architectural specification
$(\downarchr, \archspec_{\downarchr})$. We note that by Theorem~\ref{thm:downr-refines-down}
and Theorem~\ref{thm:sref}, it follows that such a system also \TFF-complies
with the architectural specification $(\downarch, \archspec_{\downarch})$. 

\subsection{Electronic Election}

\newcommand{\vyes}{\mathtt{yes}}
\newcommand{\vno}{\mathtt{no}}
\newcommand{\tally}{\mathtt{tally}}
\newcommand{\bb}{\mathit{bb}}

We describe a system that implements the  architectural specification $(\elecarch,
\archspec_{\elecarch})$ for an electronic election of Section~\ref{sec:election}. 

We first select a particular architectural interpretation 
$\archint = (A, \dom, \intn)$  that satisfies this specification. 
We suppose that the election is a referendum with the voters voting either
``yes" or ``no", and the decision determined by a majority of the voters.  
We take the set of domains to be $V \cup \{\elecauth\}$, where 
$V = \{V_1, \ldots,V_n\}$ represents the set of voters, and  $\elecauth$
is the election authority. The set of actions $A = \{\vyes^v~|~v\in V\} \cup 
\{\vno^v~|~v\in V\} \cup \{\tally\}$, with associated domains
given by $\dom(\vyes^v) = \dom(\vno^v) = v$ for all $v\in V$ and 
$\dom(\tally) =  \elecauth$. Note that these actions are voter-homogeneous, as required
by $\archspec_{\elecarch}$. 

The architecture has just one edge label $\announce$. 
Thus, for the interpretation $\intn$, we need to define $\intn(\announce)$. 
Given a sequence $\alpha$, define the {\em latest action of voter $v$}
to be the action $a\in \{\vyes^v, \vno^v\}$ such that 
$\alpha = \alpha_0 a \alpha_1$ and $\alpha_1$ contains no action $b$ with 
$\dom(b) = v$, if such a decomposition exists, or $\bot$ otherwise. 
We now define $\intn(\announce)(\alpha, \tally)$
for  $\alpha\in A^*$ to be the number of voters $v$ whose latest action is $\vyes^v$. 
(Since the election authority has only the one action $\tally$, this is all that is required to
specify $\intn$.) 
That is, in this interpretation, the information that the election authority is permitted
to reveal is the number of voters who have voted yes in the latest round of voting. 
(We assume that a round consists of the events between two consecutive $\tally$ actions,
but that if a voter does not vote in round, their vote defaults to their
vote in a previous round, if any.)  It is easily seen that this interpretation is 
identity-oblivious. Thus $\archint \in \archspec_{\elecarch}$.  

Next, we describe a system $M$ with structured state and argue that it \TFF-complies with 
interpreted architecture $(\elecarch,\archint)$. 
We take the set of objects of $M$ to be 
the set $\Obj = \{v_1, \ldots, v_n, \bb\}$, where $v_i$ represents the election authority's 
record of the vote of voter $i$ and $b$ represents a bulletin board where the results of the 
election are broadcast to the voters. The objects $v_i$ and $\bb$ take a value in $\{\bot,Y,N\}$, with $\bot$ indicating 
that no vote has yet been made by the voter, or in the case of $\bb$, that the election authority has not yet 
announced a result. In the initial state, all objects take value $\bot$.

Access control on these objects is captured by the following ``capability" formatted table:
\begin{table}[h] 
\centerline{ 
\begin{tabular}{|c|c|c|} 
\hline
                    &    $V_i$       &  $\elecauth$       \\ 
                    \hline 
$\observe$  &  $\bb$ &  $v_1, \ldots, v_n, \bb$  \\ 
\hline 
$\alter$        &  $v_i$       & $\bb$    \\ 
\hline 
\end{tabular} }
\end{table} 
\\
It is straightforward to verify that this table satisfies condition AOI$'$ for 
policy $\elecarch$. We let each domain's observation consist of the values 
of its observable objects, i.e., $\obs_u(s) = s\restrict \observe(u)$, 
so that RM1 is satisfied trivially. 

The effect of the actions on the state is given by the following code: 
\begin{itemize} 
\item[] $\vyes^{V_i}$ : 
 $v_i := Y$ 
\item[] $\vno^{V_i}$ : 
$v_i := N$
\item[] $\tally$ : if $| \{ i~|~ v_i = Y\} | \geq n/2$ then $\bb:=Y$ else $\bb:= N$
\end{itemize} 
These definitions satisfy conditions RM2 and RM3. 
For RM2, note that the actions $\vyes^{V_i}$ and $\vno^{V_i}$ 
change $v_i$ in a way that depends only on the action. 
The action $\tally$ changes $b$ in a way that depends on $v_1\ldots, v_n$, 
but all of these objects are observable to $\elecauth$. 
For RM3, we have that the complement of $\alter(V_i)$ is $\Obj\setminus\{v_i\}$, 
but $V_i$'s actions change only $v_i$, and  the complement of $\alter(\elecauth)$ is 
$\Obj\setminus\{\bb\}$, and none of the $v_i$ are changed by $\elecauth$'s action $\tally$. 

To show that this system \TFF-complies with the interpreted architecture
$(\elecarch,\archint)$, we use conditions E1-E2 and I1$'$-I2$'$.
Let the functions $F$ be given by
$F(\announce)(s,\tally) =  (|\{ i~|~s(v_i) = Y\}| \geq n/2)$, i.e. , the output of $F$ is a 
boolean value that indicates whether the majority of the $v_i$ have value $Y$.  

We first show that these satisfy E1 and E2. Condition E2 is trivial, 
since it is never the case that $F(\announce)(s,a) = \epsilon$. 
For E1, we first claim that for all sequences $\alpha \in A^*$, 
we have that $(s_0\cdot \alpha)(v_i) =Y$ iff the latest action of 
$V_i$ in $\alpha$ is $\vyes$. The proof of this benefits from RM3: 
since $v_i \in \alter(\dom(a))$ iff $\dom(a) = V_i$, 
we have by a straightforward induction that 
$(s_0\cdot \alpha)(v_i) = (s_0\cdot \alpha\restrict \{V_i\})(v_i)$. 
The latest action of domain $V_i$ in $\alpha$ is the 
final action (if any) in  $\alpha\restrict \{V_i\}$, and $(s_0\cdot \alpha\restrict \{V_i\})(v_i) = Y$ just in case this 
action exists and equals $\vyes$. The claim now follows.

We now verify the conditions I1$'$ and I2$'$. 
Condition I1$'$ is trivial since we never have $F(\announce)(s,\tally) = \epsilon$. 
For condition  I2$'$, suppose that 
$\flowstoin[f]{\dom(a)}{u}$ with $f\neq \top$  and 
$\flowstoin[g]{\dom(b)}{u}$ with  $g\neq \top$. 
Then we must have $f=g = \announce$ and  $\dom(a) = \elecauth$, 
hence $a = \tally$ and $u= V_i$ for some $i$. 
Suppose additionally that 
$F(f)(s, a) = F(g)(t,b)\neq \epsilon$ and 
$x\in \observe(u)\cap (\alter(\dom(a))\cup \alter(\dom(b)))$ 
and $s(x) = t(x)$. 
Then $x\in \observe(V_i) \cap \alter(\elecauth)$, so $x = \bb$. 
By definition of $F$, we have $F(\announce)(s,\tally) = 
(|\{ i~|~s(v_i) = Y\}| \geq n/2)
= ( |\{ i~|~t(v_i) = Y\}| \geq n/2)
= F(\announce)(t,\tally)$. 
In case this boolean value is true, 
we have   $(s\cdot  a)(x) = Y = (t\cdot a)(x)$, 
otherwise $(s\cdot  a)(x) = N  = (t\cdot a)(x)$.  
In either case, $(s\cdot  a)(x)  = (t\cdot a)(x)$, as required. 

We remark that in this argument, we have used a function $F(\announce)$ that is not  an equivalent 
state-based encoding of the interpretation $\intn(\announce)$, but which is weaker than this interpretation. 
Correspondingly, in the implementation, the election authority reveals less information to the 
voters than the architectural interpretation permits. The architectural interpretation 
permits the election authority to reveal the  {\em number} of voters who have voted ``yes", but 
in the implementation the election authority only reveals whether this number is at least $n/2$. 
The notion of \TFF-compliance with an architecture allows this kind of weakening of information flows in the 
implementation.

\section{Enforcing Architectures on Concrete Platforms}\label{sec:enforce}

The access control model presented in Section~\ref{sec:access-control}
gives an abstract view of how an extended architecture might be enforced. 
It identifies a set of conditions whose satisfaction suffices
to ensure that a concrete system is compliant with the architecture. 
The process for verification of these conditions in specific settings
is likely to be dependent on the particulars of the implementation platform(s) being used.  
We briefly discuss a few of the possible platforms and the techniques that might be 
used to show that the access control conditions hold.

\Citet{BDRF08} survey
techniques to achieve \emph{separation} of components, that is, to
ensure that communication between components is in accordance with the
architecture.
In the case of unextended architectures, a very common technique to ensure 
compliance with the architecture is to map information flow edges to 
physical causality and use physical separation where there is no edge. 
Thus, the architecture $\hlarch$ is commonly enforced in military settings 
by mapping $H$ and $L$ to distinct processors and/or networks and 
using trusted devices (data diodes) to ensure a one-way information flow from 
$L$ to $H$. The Starlight Interactive Link \citep{starlight-interactive} is a trusted device that 
can be added to such an implementation to extend it to an implementation of the architecture 
 {\strltArch}. 

One of the longstanding objectives of research on military-grade security 
has been 
to avoid the redundancy and consequent expense of such physical implementations, 
through the use of implementations that enable different security levels to share
resources such as memory, processors and networks. 
An implementation technique that forms the basis for much work in MILS security is
the use of separation kernels, which are highly simplified operating systems with the sole 
functionality of enforcing an information flow policy. 

The key mechanism used to achieve this is typically enforcement of an 
access control policy by careful management of hardware access control 
settings and processor modes to ensure that when a process runs, it may read and write 
only memory regions authorized by the access control policy. 
Attempts to read or write memory regions that violate the policy
are denied by the hardware access control measures. This 
ensures that the reference monitor conditions RM2 and RM3 are satisfied. 
RM1 could be guaranteed by ensuring that the hardware access control 
setting ensures that all peripheral devices with which a user may interact 
are mapped to memory regions that are associated to the domain of that user.  
Use of separations kernels introduces the risk that there are covert channels, but 
 much progress has been made in recent years towards
formal proofs that  separation kernels enforce 
an information flow policy (e.g., \citet{GreveWV03,HeitmeyerALM06,MurrayMBGBSLGK13}). 

An alternative to the use of hardware access control to ensure satisfaction of the 
reference monitor conditions is to verify, e.g., using static analysis methods, 
that the code in each domain (e.g., the code describing how the actions of the domain
affect the state) reads and writes only locations that are permitted by the policy. 
Once this has been done, this code can safely run free 
of hardware access control. This approach is taken in the  Singularity system \citep{HuntL07}. 

The specifics of the static analysis techniques 
to be used to enforce the reference monitor conditions will be very language dependent, but can benefit from
programming language techniques including type safety and encapsulation constructs
including objects, object ownership, and aspects \citep{Kiczales96}.
Some of the abstract reasoning in the examples above is already suggestive of such techniques. 
For example, in the election example, the Election Authority can  be viewed as owning the 
objects $v_i$ and voter $V_i$ can be viewed as having a capability to call a method on object $v_i$. 
The code pattern used in the downgrader example for actions $a\in A_T$ could be enforced
using aspect-oriented techniques.

Once a basic (unextended) architecture has been shown to be enforced by the 
implementation, it remains to demonstrate that the trusted components in the 
architecture satisfy their local constraints  (e.g., constraints I1 and I2). 
As these local constraints are
application specific, and implementation dependent, it seems unlikely
that a single methodology will suffice. 
We expect that theorem proving, model checking, and 
language-based  information
flow techniques \citep{sm-jsac}
may all be used to
provide assurance of 
satisfaction of the 
local constraints  and  the filtering requirements introduced by our extended
architectures. 

We note that our framework is highly expressive, and it 
may not always be possible to show compliance with a filtering requirement 
using only local information. 
For example, 
directly implementing the filtering requirements  of the downgrader architectural
specification $(\downarch,\archspec_\downarch)$ by means of controls at the  downgrader component 
$D$ would seem to require the cooperation 
of High level components  (e.g., provision of secure provenance information) 
to ensure that the downgrader does not release information concerning  $ C$. 
(The refined architectural specification $(\downarchr,\archspec_{\downarchr})$ avoids the need for this 
cooperation by shifting the trust boundary.)
Identifying sufficient conditions for local verification of compliance is an interesting topic for future research.

\section{Related Work} 
\label{sec:related}

The most closely related work is that of \citet{wiini,arch-refinement}, who defines TA-security and considers refinement of architectures on the basis of this semantics.  Our contribution is to show that the definition of TA-security supports derivation of global information security properties, to extend TA-security to architectures that include filter functions, and to develop an account of architectural refinement for these extended architectures. The extension provides a way to specify the behavior of trusted components in a system: intuitively, if a component is the source of an edge in the architecture labeled by a filter function, then the component is trusted to ensure that the interpretation of the filter function limits the information that may flow along the edge. We have presented a number of examples that show that interesting information-theoretic global security properties can be derived from the very abstract statement that a system complies with such an extended architecture and a set of additional local properties. We note that these global properties are more general and application specific than the very particular property ``Low does not know any High secrets'' that is most often considered in the literature.

Other work has sought to formally describe system architecture (e.g., 
\citet{aadlv2} and Acme \citep{Garlan00AcmeChapter}), and
reason about the properties of systems conforming to a given
architecture. There are
many software engineering concerns that can be reasoned about in
architectural design, such as maintainability, 
and reliability.
This work focuses on reasoning about the information security of systems, and, as such, our architectures specify local constraints on information that may be communicated between components. The local constraints allow the inference of global information security properties. This work is complementary to work on other aspects of system design.

Relatively little theoretical work takes an architectural perspective on information security. 
One interesting line of work \citep{HanssonFM08} that takes a similar perspective to ours is conducted
in the context the architectural modeling framework AADL.
This work is based on the Bell La Padula model~\citep{BLP}. 
In a similar spirit are works on Model Driven Security, 
which extend UML  with security modeling notations. 
\Citet{BasinDL06} focus on a UML extension for role-based 
access control policies and model transformations to implementation infrastructures such 
as Enterprise Java Beans or .NET.  
\Citet{Jurjens05} extends UML with a focus on
reasoning about secrecy in distributed applications employing 
security protocols.
None of these approaches use the 
application-specific abstract non-interference-style semantics that underpins our 
contribution, nor do they target the type of reasoning envisaged in the MILS community 
for development of high-assurance systems built on infrastructure such as separation kernels.

Standard notions of refinement reduce the possible behaviors of a system. However, arbitrary behavioral refinements of a system will, in general, not preserve information security properties \citep{Jacob89}. Previous work has investigated restricting behavioral refinements to preserve information security properties \citep{GCS,OHalloran,rosco_95,BFPR03,
Morgan06}. Architectural refinement as used in this paper allows only refinements that reduce the information communicated between system components. As shown in Section~\ref{sec:refine}, this notion of refinement preserves certain information security properties. Information security can also be preserved under refinement by modifying the refined system \citep{Jacob89,Mantel01}. Other work distinguishes nondeterminism of a system specification from nondeterminism inherent in the system, and allows refinement only of specification nondeterminism \citep{SS06,Jurjens05,Bibighaus}. These works typically consider only the  simple policy $L \flowsto H$, rather than the more general intransitive policies considered here.

Most work on architectural refinement is not directly concerned with information security.  \Citet{ZA}  propose architecture refinement patterns for Multi-Level Secure systems development, but do not provide formal semantics. A series of papers \citep{moriconi2,moriconi1,moriconi3} express architectural designs as logical theories and refinement as a mapping from an abstract theory to a concrete theory. This approach is used to establish security properties in variants of the X/Open Distributed Transaction Processing architecture, using the Bell La Padula model \citep{BLP}, which lacks the kind of information flow semantics that we have studied here. It is not clear if this approach is sufficiently expressive to represent architectural refinement as used in this paper, and reason about the preservation of information security under such refinement.

Preservation of information flow properties under system composition has, like refinement, been considered problematic. In general, the composition of two secure systems is not guaranteed to be secure. The reason is essentially the same as for refinement: composition reduces the set of possible behaviors of a system, enabling an observer to make additional deductions.

A
number of approaches have been developed that 
allow security of a composed system to be derived from security of its
components.  These include use of a stronger 
definitions of security such as restrictiveness \citep{McCullough90} or bisimulation-based nondeducibility on compositions \citep{FocardiG94}.
\citet{McLean96} proposes a framework for specifying and reasoning
about  system composition, and the preservation of possiblistic
security properties.
In this work, we are not concerned with showing that security properties that hold of components also hold of a composite system. Instead we are concerned with proving global security properties, and identifying local constraints that components must satisfy.
The literature on preservation of information flow security under composition has 
also largely limited itself to the simple policy $L \flowsto H$.

To some extent, process algebraic operations can be viewed as 
expressing architectural structure. For example, one could 
take the view that a process constructed as the parallel composition of two  processes $P$ and $Q$, 
with actions $A$ of the composition then hidden, corresponds to 
an architecture in which $P$ and $Q$ are permitted to 
interact, but 
which the environment is not permitted to influence through the set of 
actions $A$. However, the semantics of these operators usually do 
not preserve this view: typically this composition  is understood 
in terms of its possible behaviors with respect to the actions 
that have not been hidden, and the fact that the system has been 
composed out of two components permitted to 
interact in a particular
way is lost in the meaning of the composition.

Downgrading has historically been one of the motivations for 
generalizing the notion of noninterference to 
intransitive policies. \Citet{RoscoeG99} argued against the 
ipurge-based semantics for noninterference on the grounds that the meaning it gives to the  
downgrader policy $H \flowsto D\flowsto L$ is too permissive. According to this semantics, 
any action  by the downgrader $D$ ``opens the floodgates,'' in the sense that it allows all information about the 
High security domain $H$ to flow to the low security domain $L$.  
\Citeauthor{RoscoeG99} proposed to deal with this issue by making the semantics of intransitive 
noninterference significantly more restrictive, in effect reverting to the purge-based definition of \citet{gm82,gm84}. Our approach to downgrading, using a filter function, provides an alternative approach 
that enables explicit specification of the information permitted to be released by the downgrader.

More recent work on downgrading has concentrated on downgrading in the setting of language-based security. \Citet{ss05} briefly survey recent work on downgrading in lang\-uage-based settings, and propose several dimensions of downgrading, and prudent principles for downgrading. They regard intransitive noninterference as specifying \emph{where} (in the security levels) downgrading may occur. 
Since we interpret security levels as system components, our architectures specify where in the system downgrading may occur. The filter functions that we propose in this work specify \emph{what} information can be downgraded and \emph{when} this may occur. Thus, our work combines the \emph{what}, \emph{where},  and \emph{when} dimensions of downgrading.
Recent work also considers multiple dimensions of downgrading, including \citet{BartheCR08}, \citet{bnr08}, \citet{MantelR07} and \citet{AskarovS07b}.

Recent work \citep{AskarovS07,bnr08} considers ``knowledge-based'' approaches to downgrading in language-based settings. However, they do not reason about security properties as general and application specific as used in this paper.  \citet{oneill-thesis} uses epistemic logic to specify many information security properties, but does not directly consider downgrading.

\section{Conclusion}\label{sec:conclusion}

Through the examination of 
a number of 
examples, we have shown that
strong information security properties can be proven about a system
from a high-level architectural description of the system.  Any system
that complies with the architecture will satisfy the information
security properties that can be proven about the architecture.

We extended the notion of 
system 
architecture to allow finer-grain
specification of what information may be sent between components.
This enables the proof of stronger security properties, while
continuing to providing the benefits of using a high-level
architectural description.
We generalized the notion of architectural refinement
\citep{arch-refinement} for the extended architectures. Certain
security properties are preserved by architectural refinement.

The MILS vision is to build high-assurance systems with well-understood
security properties by composition of COTS infrastructure and trusted
components. This work brings us closer to that goal by demonstrating
that it is possible to compositionally derive strong,
application-specific,
information-flow
security properties from high-level system specifications.

\bibliography{bib}

\begin{thebibliography}{50}
\providecommand{\natexlab}[1]{#1}
\providecommand{\url}[1]{\texttt{#1}}
\expandafter\ifx\csname urlstyle\endcsname\relax
  \providecommand{\doi}[1]{doi: #1}\else
  \providecommand{\doi}{doi: \begingroup \urlstyle{rm}\Url}\fi

\bibitem[{AADL}()]{aadlv2}
{AADL}.
\newblock Architecture analysis and design language ({AADL}).
\newblock {SAE} Standard AS5506/A, Jan. 2009.

\bibitem[Alves-Foss et~al.(2006)Alves-Foss, Harrison, Oman, and Taylor]{AHOT05}
J.~Alves-Foss, W.~Harrison, P.~Oman, and C.~Taylor.
\newblock The {MILS} architecture for high-assurance embedded systems.
\newblock \emph{International Journal of Embedded Systems}, 2\penalty0
  (3/4):\penalty0 239--47, Feb. 2006.

\bibitem[Anderson et~al.(1996)Anderson, North, Griffin, Milner, Yesberg, and
  Yiu]{starlight-interactive}
M.~Anderson, C.~North, J.~Griffin, R.~Milner, J.~Yesberg, and K.~Yiu.
\newblock Starlight: Interactive link.
\newblock \emph{Annual Computer Security Applications Conference}, pages
  55--63, 1996.

\bibitem[Askarov and Sabelfeld(2007{\natexlab{a}})]{AskarovS07}
A.~Askarov and A.~Sabelfeld.
\newblock Gradual release: Unifying declassification, encryption and key
  release policies.
\newblock In \emph{Proceedings of the IEEE Symposium on Security and Privacy},
  pages 207--221. IEEE Computer Society, 2007{\natexlab{a}}.

\bibitem[Askarov and Sabelfeld(2007{\natexlab{b}})]{AskarovS07b}
A.~Askarov and A.~Sabelfeld.
\newblock Localized delimited release: combining the what and where dimensions
  of information release.
\newblock In \emph{Proceedings of the 2007 Workshop on Programming Languages
  and Analysis for Security}, pages 53--60. ACM Press, 2007{\natexlab{b}}.

\bibitem[Banerjee et~al.(2008)Banerjee, Naumann, and Rosenberg]{bnr08}
A.~Banerjee, D.~A. Naumann, and S.~Rosenberg.
\newblock Expressive declassification policies and modular static enforcement.
\newblock In \emph{Proceedings of the IEEE Symposium on Security and Privacy}.
  IEEE Computer Society, May 2008.

\bibitem[Barthe et~al.(2008)Barthe, Cavadini, and Rezk]{BartheCR08}
G.~Barthe, S.~Cavadini, and T.~Rezk.
\newblock Tractable enforcement of declassification policies.
\newblock In \emph{Proceedings of the 21st {IEEE} Computer Security Foundations
  Symposium}. IEEE Computer Society, June 2008.

\bibitem[Basin et~al.(2006)Basin, Doser, and Lodderstedt]{BasinDL06}
D.~Basin, J.~Doser, and T.~Lodderstedt.
\newblock Model driven security: From uml models to access control
  infrastructures.
\newblock \emph{ACM Transactions on Software Engineering and Methodology},
  15\penalty0 (1):\penalty0 39--91, 2006.

\bibitem[Bell and {La~Padula}(1976)]{BLP}
D.~Bell and L.~{La~Padula}.
\newblock Secure computer system: unified exposition and multics
  interpretation.
\newblock Technical Report ESD-TR-75-306, Mitre Corporation, Bedford, M.A.,
  March 1976.

\bibitem[Bibighaus(2006)]{Bibighaus}
D.~Bibighaus.
\newblock \emph{Applying the doubly labeled transition system to the refinement
  paradox}.
\newblock PhD thesis, Naval Postgraduate School, Monterey, 2006.

\bibitem[Boettcher et~al.(2008)Boettcher, DeLong, Rushby, and Sifre]{BDRF08}
C.~Boettcher, R.~DeLong, J.~Rushby, and W.~Sifre.
\newblock The {MILS} component integration approach to secure information
  sharing.
\newblock In \emph{Proceedings of the 27th IEEE/AIAA Digital Avionics Systems
  Conference}, pages 1.C.2--1--1.C.2--14, Oct. 2008.

\bibitem[Bossi et~al.(2003)Bossi, Focardi, Piazza, and Rossi]{BFPR03}
A.~Bossi, R.~Focardi, C.~Piazza, and S.~Rossi.
\newblock Refinement operators and information flow security.
\newblock In \emph{Proceedings of the International Conference on Software
  Engineering and Formal Methods}, pages 44--53, 2003.

\bibitem[Fagin et~al.(1995{\natexlab{a}})Fagin, Halpern, Moses, and
  Vardi]{fhmvbook}
R.~Fagin, J.~Halpern, Y.~Moses, and M.~Vardi.
\newblock \emph{Reasoning About Knowledge}.
\newblock MIT-Press, 1995{\natexlab{a}}.

\bibitem[Fagin et~al.(1995{\natexlab{b}})Fagin, Halpern, Moses, and
  Vardi]{fhmv95}
R.~Fagin, J.~Y. Halpern, Y.~Moses, and M.~Y. Vardi.
\newblock \emph{Reasoning about Knowledge}.
\newblock MIT Press, Cambridge, MA, 1995{\natexlab{b}}.

\bibitem[Focardi and Gorrieri(1994)]{FocardiG94}
R.~Focardi and R.~Gorrieri.
\newblock A classification of security properties for process algebras.
\newblock \emph{Journal of Computer Security}, 3\penalty0 (1):\penalty0 5--33,
  1994.

\bibitem[Garlan et~al.(2000)Garlan, Monroe, and Wile]{Garlan00AcmeChapter}
D.~Garlan, R.~T. Monroe, and D.~Wile.
\newblock Acme: Architectural description of component-based systems.
\newblock In G.~T. Leavens and M.~Sitaraman, editors, \emph{Foundations of
  Component-Based Systems}, pages 47--68. Cambridge University Press, 2000.

\bibitem[Goguen and Meseguer(1982)]{gm82}
J.~A. Goguen and J.~Meseguer.
\newblock Security policies and security models.
\newblock In \emph{Proceedings of the IEEE Symposium on Security and Privacy},
  pages 11--20. IEEE Computer Society, Apr. 1982.

\bibitem[Goguen and Meseguer(1984)]{gm84}
J.~A. Goguen and J.~Meseguer.
\newblock Unwinding and inference control.
\newblock In \emph{Proceedings of the IEEE Symposium on Security and Privacy},
  pages 75--86. IEEE Computer Society, Apr. 1984.

\bibitem[Graham-Cunning and Sanders(1991)]{GCS}
J.~Graham-Cunning and J.~Sanders.
\newblock On the refinement of noninterference.
\newblock In \emph{Proceedings of the IEEE Computer Security Foundations
  Workshop}, pages 35--42, 1991.

\bibitem[Greve et~al.(2003)Greve, Wilding, and Vanfleet]{GreveWV03}
D.~Greve, M.~Wilding, and W.~M. Vanfleet.
\newblock A separation kernel formal security policy.
\newblock In \emph{Proceedings of the Fourth International Workshop on the
  {ACL2} Prover and Its Applications}, July 2003.

\bibitem[Haigh and Young(1987)]{HaighY87}
J.~T. Haigh and W.~D. Young.
\newblock Extending the noninterference version of {MLS} for {SAT}.
\newblock \emph{IEEE Transactions on Software Engineering}, 13\penalty0
  (2):\penalty0 141--150, 1987.

\bibitem[Hansson et~al.(2008)Hansson, Feiler, and Morley]{HanssonFM08}
J.~Hansson, P.~H. Feiler, and J.~Morley.
\newblock Building secure systems using model-based engineering and
  architectural models.
\newblock \emph{CrossTalk: The Journal of Defense Software Engineering},
  21\penalty0 (9), Sept. 2008.

\bibitem[Heitmeyer et~al.(2006)Heitmeyer, Archer, Leonard, and
  McLean]{HeitmeyerALM06}
C.~L. Heitmeyer, M.~Archer, E.~I. Leonard, and J.~McLean.
\newblock Formal specification and verification of data separation in a
  separation kernel for an embedded system.
\newblock In \emph{Proceedings of the 13th ACM conference on Computer and
  communications security}, pages 346--355, New York, NY, USA, 2006. ACM.

\bibitem[Hinke and Schaefer(1975)]{HinkeS75}
T.~H. Hinke and M.~Schaefer.
\newblock Secure data management system.
\newblock Technical Report RADC-TR-75-266, System Development Corporation, Nov.
  1975.

\bibitem[Hunt and Larus(2007)]{HuntL07}
G.~C. Hunt and J.~R. Larus.
\newblock Singularity: rethinking the software stack.
\newblock \emph{Operating Systems Review}, 41\penalty0 (2):\penalty0 37--49,
  2007.

\bibitem[Jacob(1989)]{Jacob89}
J.~Jacob.
\newblock On the derivation of secure components.
\newblock In \emph{Proceedings of the IEEE Symposium on Security and Privacy},
  pages 242--247, 1989.

\bibitem[J{\"{u}}rjens(2005)]{Jurjens05}
J.~J{\"{u}}rjens.
\newblock \emph{Secure Systems Development with UML}.
\newblock Springer, 2005.

\bibitem[Kiczales(1996)]{Kiczales96}
G.~Kiczales.
\newblock Aspect-oriented programming.
\newblock \emph{ACM Comput. Surv.}, 28\penalty0 (4es):\penalty0 154, 1996.

\bibitem[Mantel(2001)]{Mantel01}
H.~Mantel.
\newblock Preserving information flow properties under refinement.
\newblock In \emph{Proceedings of the IEEE Symposium Security and Privacy},
  pages 78--91, 2001.

\bibitem[Mantel and Reinhard(2007)]{MantelR07}
H.~Mantel and A.~Reinhard.
\newblock Controlling the what and where of declassification in language-based
  security.
\newblock In R.~D. Nicola, editor, \emph{Proceedings of the 16th European
  Symposium on Programming}, volume 4421 of \emph{Lecture Notes in Computer
  Science}, pages 141--156. Springer, 2007.

\bibitem[McCullough(1990)]{McCullough90}
D.~McCullough.
\newblock A hookup theorem for multilevel security.
\newblock \emph{IEEE Transactions on Software Engineering}, 16\penalty0
  (6):\penalty0 563--568, 1990.

\bibitem[McLean(1996)]{McLean96}
J.~McLean.
\newblock A general theory of composition for a class of "possibilistic"
  properties.
\newblock \emph{IEEE Transactions on Software Engineering}, 22\penalty0
  (1):\penalty0 53--67, 1996.

\bibitem[Morgan(2006)]{Morgan06}
C.~Morgan.
\newblock {\it The Shadow Knows: } refinement of ignorance in sequential
  programs.
\newblock In \emph{Mathematics of Program Construction}, pages 359--378, 2006.

\bibitem[Moriconi and Qian(1994)]{moriconi2}
M.~Moriconi and X.~Qian.
\newblock Correctness and composition of software architectures.
\newblock In \emph{Proceedings of the 2nd ACM SIGSOFT Symposium on Foundations
  of Software Engineering}, pages 164--174, 1994.

\bibitem[Moriconi et~al.(1995)Moriconi, Qian, and Riemenschneider]{moriconi1}
M.~Moriconi, X.~Qian, and R.~Riemenschneider.
\newblock Correct architecture refinement.
\newblock \emph{IEEE Transactions on Software Engineering}, 21\penalty0
  (4):\penalty0 356--372, April 1995.

\bibitem[Moriconi et~al.(1997)Moriconi, Qian, Riemenschneider, and
  Gong]{moriconi3}
M.~Moriconi, X.~Qian, R.~A. Riemenschneider, and L.~Gong.
\newblock Secure software architectures.
\newblock In \emph{Proceedings of the IEEE Symposium on Security and Privacy},
  pages 84--893, 1997.

\bibitem[Murray et~al.(2013)Murray, Matichuk, Brassil, Gammie, Bourke,
  Seefried, Lewis, Gao, and Klein]{MurrayMBGBSLGK13}
T.~C. Murray, D.~Matichuk, M.~Brassil, P.~Gammie, T.~Bourke, S.~Seefried,
  C.~Lewis, X.~Gao, and G.~Klein.
\newblock sel4: From general purpose to a proof of information flow
  enforcement.
\newblock In \emph{IEEE Symposium on Security and Privacy}, pages 415--429,
  2013.

\bibitem[O'Halloran(1992)]{OHalloran}
C.~O'Halloran.
\newblock Refinement and confidentiality.
\newblock In \emph{Proceedings of the Fifth Refinement Workshop}, pages
  119--139. British Computer Society, 1992.

\bibitem[O'Neill(2006)]{oneill-thesis}
K.~R. O'Neill.
\newblock \emph{Security and Anonymity in Interactive Systems}.
\newblock PhD thesis, Cornell University, Aug. 2006.

\bibitem[Roscoe(1995)]{rosco_95}
A.~Roscoe.
\newblock {CSP} and determinism in security modelling.
\newblock In \emph{Proceedings of the IEEE Symposium on Security and Privacy},
  pages 114--221, 1995.

\bibitem[Roscoe and Goldsmith(1999)]{RoscoeG99}
A.~W. Roscoe and M.~H. Goldsmith.
\newblock What is intransitive noninterference?
\newblock In \emph{Proceedings of the 12th {IEEE} Computer Security Foundations
  Workshop}. IEEE Computer Society, 1999.

\bibitem[Rushby(1992)]{Rushby_92}
J.~Rushby.
\newblock Noninterference, transitivity, and channel-control security policies.
\newblock Technical Report {CSL-92-02}, SRI International, Dec 1992.

\bibitem[Sabel\-feld and Sands(2005)]{ss05}
A.~Sabel\-feld and D.~Sands.
\newblock Dimensions and principles of declassification.
\newblock In \emph{Proceedings of the 18th {IEEE} Computer Security Foundations
  Workshop}, pages 255--269. IEEE Computer Society, June 2005.

\bibitem[Sabelfeld and Myers(2003)]{sm-jsac}
A.~Sabelfeld and A.~Myers.
\newblock Language-based information-flow security.
\newblock \emph{IEEE Journal on Selected Areas in Communications}, 21\penalty0
  (1):\penalty0 5--19, Jan. 2003.

\bibitem[Seehusen and Stolen(2006)]{SS06}
F.~Seehusen and K.~Stolen.
\newblock Information flow property preserving transformation of {UML}
  interaction diagrams.
\newblock In \emph{Proceedings of the ACM Symposium on Access Control Models
  and Technologies}, pages 150--159, 2006.

\bibitem[Thuraisingham(2005)]{Thuraisingham05}
B.~M. Thuraisingham.
\newblock \emph{Database and Applications Security: Integrating Information
  Security and Data Management}.
\newblock CRC Press, 2005.

\bibitem[van~der Meyden(2007)]{wiini}
R.~van~der Meyden.
\newblock What, indeed, is intransitive noninterference?
\newblock In \emph{Proceedings of the 12th European Symposium On Research In
  Computer Security}, volume 4734 of \emph{Lecture Notes in Computer Science},
  pages 235--250. Springer, Sept. 2007.

\bibitem[van~der Meyden(2012)]{arch-refinement}
R.~van~der Meyden.
\newblock Architectural refinement and notions of intransitive noninterference.
\newblock \emph{Formal Aspects of Computing}, 24\penalty0 (4-6):\penalty0
  769--792, 2012.

\bibitem[Vanfleet et~al.(2005)Vanfleet, Beckworth, Calloni, Luke, Taylor, and
  Uchenick]{vBCLTU}
W.~Vanfleet, R.~Beckworth, B.~Calloni, J.~Luke, C.~Taylor, and G.~Uchenick.
\newblock {MILS}:architecture for high assurance embedded computing.
\newblock \emph{Crosstalk: The Journal of Defence Engineering}, pages 12--16,
  Aug. 2005.

\bibitem[Zhou and Alves-Foss(2006)]{ZA}
J.~Zhou and J.~Alves-Foss.
\newblock Architecture-based refinements for secure computer system design.
\newblock In \emph{Proceedings of the Policy, Security and Trust}, Nov 2006.

\end{thebibliography}

\end{document}